 \let\proof\@undefined
 \let\endproof\@undefined
\def\ps@headings{%
\def\@oddhead{\mbox{}\scriptsize\rightmark \hfil \thepage}%
\def\@evenhead{\scriptsize\thepage \hfil \leftmark\mbox{}}%
\def\@oddfoot{}%
\def\@evenfoot{}}
\newcommand{\beqa}{\begin{eqnarray}}
 \newcommand{\eeqa}{\end{eqnarray}}
\newcommand{\beqano}{\begin{eqnarray*}}
\newcommand{\eeqano}{\end{eqnarray*}}
\newcommand{\beit}{\begin{itemize}}
\newcommand{\eeit}{\end{itemize}}
\newtheorem{definition}{Definition}
\newtheorem{lemma}{Lemma}
\newtheorem{proposition}{Proposition}
\newtheorem{assumption}{Assumption}
\newcommand{\al}{\alpha}
\newcommand{\be}{\beta}
\newcommand{\eps}{\epsilon}
\newcommand{\varep}{\varepsilon}
\newcommand{\setT}{\mathcal{T}}
\newcommand{\setN}{\mathcal{N}}
\newcommand{\setL}{\mathcal{L}}
 \newcommand{\setE}{\mathcal{E}}
\newcommand{\setI}{\mathcal{I}}
\newcommand{\setA}{\mathcal{A}}
\newcommand{\setC}{\mathcal{C}}
\newcommand{\setB}{\mathcal{B}}
\newcommand{\setH}{\mathcal{H}}
\newcommand{\setD}{\mathcal{D}}
\newcommand{\retheta}{\boldsymbol{\theta}_{\mathbf{e}=\mathbf{0},\mathbf{B}=\mathbf{0}}}
\newcommand{\bftheta}{\boldsymbol{\theta}_{L}}
\newcommand{\bfe}{\mathbf{e}}
\newcommand{\bfI}{\mathbf{I}}
\newcommand{\bfx}{\mathbf{x}}
\newcommand{\bfA}{\mathbf{A}}
\newcommand{\bfB}{\mathbf{B}}
 \newcommand{\bfV}{\mathbf{V}}
\newcommand{\vz}{\mathbf{z}}
\newcommand{\bfy}{\mathbf{y}}
\newcommand{\ev}{\mathds{E}}
\begin{document}

\title{ Throughput-Optimal Random Access with Order-Optimal Delay}

\author{Mahdi Lotfinezhad and Peter Marbach \\
Department of Computer Science, University of Toronto  \\
E-mail: \{mahdi,marbach\}@cs.toronto.edu \\
\vspace{-.1in}}


\maketitle

\begin{abstract}
In this paper, we consider CSMA policies for scheduling of multihop wireless networks with one-hop
traffic. The main contribution of this paper is to propose Unlocking CSMA (U-CSMA) policy that enables to obtain 
high throughput with low (average) packet delay for large wireless networks. In particular, the delay 
under U-CSMA policy becomes \emph{order-optimal}. For one-hop traffic, delay is
 defined to be order-optimal if
it is $O(1)$, i.e., it stays bounded, as the 
network-size increases to infinity. Using mean field theory techniques, we analytically show that 
for torus (grid-like) interference topologies with one-hop traffic, to achieve a network load of $\rho$, the delay under U-CSMA policy becomes $O(1/(1-\rho)^{3})$ as the network-size increases, and hence, delay becomes order optimal. We conduct simulations for 
general random geometric interference topologies under U-CSMA policy combined with 
congestion control to maximize
a network-wide utility. These simulations 
confirm that order optimality holds, and that we can use U-CSMA policy jointly with congestion control to operate close to the 
optimal utility with a low packet delay in arbitrarily large random geometric topologies. To the best
of our knowledge, it is for the first time that a simple distributed scheduling policy is proposed that in 
addition to throughput/utility-optimality exhibits delay order-optimality.

\end{abstract}


\section{Introduction}  

One of the most intriguing challenges in the context of wireless networking is the design of a
scheduling policy that has the following properties:
\begin{itemize}
\item [a)] throughput-optimality,
\item [b)] low packet delay\footnote{Throughout this paper, by delay we mean average packet 
delay.}, and
\item [c)] simple and fully distributed implementation.
\end{itemize}

From a complexity theoretic viewpoint, unless $\bf{NP\subseteq BPP} $ or $\bf{P=NP}$, 
there does not exist \cite{Shah09hardness} a universal scheduling policy
that has the above three properties for all possible network topologies. However, it is still
 possible to design a policy that has the above properties for a subset class of network topologies. 
 This seems to be true for geometric networks \cite{mazumdar:shroff:06,joo:shroff:infocom:08}, in which only links that are geometrically 
close interfere with each other. These networks closely approximate a wide range 
of practical wireless networks, and yet are known to admit
Polynomial-Time Approximation Scheme (PTAS) for 
several NP-hard optimization problems (see e.g., \cite{Hunt1998, mazumdar:shroff:06}). For this reason, our focus in this paper is to design a scheduling policy for large geometric wireless networks. 

There are two main approaches into the design of scheduling policies in wireless networking: either through matching policies 
\cite{tassiulas:ephremides:jac:92,buche:kushber:04,eryilmaz:sirkant:ton:05,neely:modiano:infocom:05,
neely:jsac05,mazumdar:shroff:06,dimakis:warland:06, lin:shroff:ton06,neely:info08, joo:shroff:infocom:08, zussman:modiano:08, yi:proutiere:08,Le:modiano:09,gupta:shroff:09}, or through random access policies \cite{abramson:70,haejk:82,kelly:macphee:87, hastad:96, bianchi:00,goldberg:00,Baccelli06analoha, marbach:CDC:07,lee:chiang:07, stolyar_dynamic_2008, durvy:08,garetto:08, Jiang:09,ni-2009,rajag:09,mahdi:peter:al09,rad:09,jiang:mobihoc:09, proutiere:10,ghaderi:10}. 
Despite the past efforts that have significantly
advanced our understanding of these policies and their performance, to the best of our knowledge,
there is no instance of these policies that realizes all of the three properties mentioned
 earlier, even for geometric networks.

On one hand, we have matching policies that can be throughput-optimal 
\cite{tassiulas:ephremides:jac:92, buche:kushber:04,eryilmaz:sirkant:ton:05,neely:jsac05} and can provide 
order-optimal low delay \cite{Le:modiano:09}. However, these optimalities are obtained assuming that an NP-hard
problem can be solved in each scheduling round. At the same time, reducing the complexity of
matching policies, in general, comes at the price of losing throughput-optimality of these
policies \cite{lin:shroff:ton06,dimakis:warland:06,zussman:modiano:08,neely:info08} or 
a large delay \cite{Shah09hardness}. This leaves the design of a matching policy with all the three properties as an 
open research challenge (see Section~\ref{sec:related} for further discussion on matching policies).

On the other hand, we have random access policies that are naturally simple and can be
implemented in distributed manner. 
Among these policies, the classical CSMA policy, see \cite{marbach:CDC:07,durvy:08,Jiang:09,ni-2009,rajag:09} for variants of this policy, is the one that features throughput-optimality in a wide range of settings \cite{marbach:CDC:07,Jiang:09,ni-2009,rajag:09}. This encourages to use the classical CSMA policy for scheduling of
wireless networks. However, as will be discussed shortly, the delay performance under this policy can be very poor. 
As a result, the current random access policies do not possess all of the three properties mentioned
earlier. 


As a motivating example, 
consider an $n\times n$ torus (see Fig.~\ref{fig:lattice_g})
\emph{interference graph} \cite{mazumdar:shroff:06,Jiang:09,ni-2009,rajag:09} with $L=n^{2}$ nodes where each node interferes with the four closest neighbouring nodes. 
Suppose packet arrival rate is uniform, i.e., it is the same for all nodes, and that is equal to $\lambda$. 
Let $\rho$ be the corresponding load\footnote{In the limit of large toruses, the maximum
uniform throughput is $0.5$, and load $\rho$ in the limit becomes $\frac{\lambda}{0.5}$. See Section~\ref{sec:mod:lattice:uaa} for the definition of $\rho$.}, and define
$$\eps=1-\rho.$$
For this simple topology, a mixing-time analysis \cite{rajag:09} upperbounds the packet delay under the classical CSMA policy as 
$$O\Big(\Big[ \frac{1}{\eps}\Big]^{c_{u}L}\Big),$$
where $c_{u}>1$ is a constant.
For small $\eps$, a similar analysis \cite{borgs_torpid_1999} lowerbounds the packet 
delay under the classical CSMA policy 
as
$$\Omega(e^{c_{l}L/(logL)^{2}}),$$
for some constant $c_{l}>0$.

The above delay-bounds show that the classical CSMA policy exhibits
 a \emph{threshold behaviour} in the sense in order to achieve 
a high throughput, i.e., to make $\eps$ small, one has to tolerate a delay that
 \emph{exponentially grows}
 with the network-size $L$. The threshold behaviour and the exponential growth are related to the
 \emph{phase transition}
phenomenon\footnote{Phase transition has also been reported as the cause of border
 effects that persist in 2D under the classical CSMA policy \cite{durvy:08}.} in the hard-core lattice gas model \cite{gaunt:65,aldous:prob:book}. 
Due to such threshold behaviours, even in mid-sized simple topologies, the classical CSMA policy cannot support
 a high throughput with low delay (see Section~\ref{sec:motivation}).


In this paper, we propose Unlocking CSMA (U-CSMA) as a new CSMA policy that overcomes the 
threshold behaviour of the classical 
CSMA policy. While being simple and distributed, U-CSMA policy has the following properties for
 geometric networks with one-hop traffic \cite{mazumdar:shroff:06, joo:shroff:infocom:08}.
 \begin{itemize}
 \item [a)] It enables to achieve a high throughout/utility arbitrarily close to the optimal
 with a low (average) packet delay.
\item [b)] The (average) packet delay under this policy is order-optimal, i.e., it stays bounded
as the network-size increases to infinity.
 \end{itemize}

 We provide analytical results for the torus interference topology with uniform 
packet arrival rate as considered earlier, and show that for large network-size $L$,
the average delay under U-CSMA policy is order-optimal and is
\begin{align}
  O\Big(\Big[ \frac{1}{\eps}\Big]^{3}\Big). \label{eq:dbound:ucsma}
\end{align}
It is important to note that the above delay bound is independent of the network size $L$, 
in sheer contrast to
 the delay under the classical CSMA
policy that exponentially increases with the network-size $L$. This means that U-CSMA 
policy does not suffer from the threshold behaviour and is indeed able to provide high throughput with low delay for arbitrarily large torus
topologies.

In our simulation study, we use U-CSMA policy jointly with a congestion control algorithm 
to maximize a network-wide utility in large random geometric networks. We show that
using U-CSMA, we can assign packet arrival rates closely to the optimal
 with a low packet delay that stays bounded as the network-size increases, and hence, a delay
that exhibits order-optimality. As far as we are aware, it is for 
the first time that a simple distributed scheduling policy is proposed that 
can operate close to the optimal with order-optimal low packet delay.

We believe that the design principle of U-CSMA policy and the novel approach taken to study
its performance open up a new direction into the design and study of scheduling policies 
for large-scale wireless networks. The main significance of our study in this paper
is that it realizes 
the possibility of having large-scale wireless multihop networks that can be maintained
in a simple distributed manner and that can provide high throughput/utility, arbitrarily
close to the optimal, with 
order-optimal low packet delay.

A key step to obtain the delay bound in \eqref{eq:dbound:ucsma} is where we show that the
schedule under the classical CSMA policy quickly converges to 
a maximum schedule in geometric networks. Using techniques from mean field theory \cite{Leboudec:08}, 
we show that for large torus and lattice topologies with large uniform attempt-rates, the \emph{distance} (see Section~\ref{sec:result:1}) to the maximum
schedules as a function of time $t$ drops as $\frac{1}{\sqrt{t}}.$
To the best of our knowledge, our result is the first that analytically characterizes the fast 
convergence behaviour of the classical CSMA policy. As this convergence is independent of network-size $L$, 
it is fundamentally different
 than the convergence time to the steady-state (i.e., the mixing time) of the dynamics of the classical CSMA policy, 
which can be exponentially large in $L$ \cite{borgs_torpid_1999}.

The rest of the paper is organized as follows. In the next section, we briefly review the related work. In Section~\ref{sec:model}, we present the network model and the classical CSMA policy model. 
In Section~\ref{sec:overview}, we provide an overview of our main results, including the 
description of U-CSMA policy and simulation
results. In Section~\ref{sec:dis:impl}, we provide one example to implement U-CSMA 
policy in a fully distributed and asynchronous manner.
In section~\ref{sec:main}, we provide a formal statement of our analytical results
in this paper. In Section~\ref{sec:dynamics:of:csma}, 
we elaborate on the dynamics of schedules under the classical CSMA policy whose
characterization is required to derive analytical results. In 
Section~\ref{sec:assumptionss}, we formally state two assumptions that allow the formal analysis
developed in this paper. In Section~\ref{sec:flow:control}, we provide details on how to
 use U-CSMA policy jointly with a congestion control algorithm for general topologies. Finally, we conclude the paper in Section~\ref{sec:con}.

\section{Related Work}\label{sec:related}

 In this section, we provide a brief, by no means exhaustive, overview of the work in the area of wireless scheduling that
is closest to ours in this paper. We consider two main classes, i.e., the matching policies
and random access policies.

\underline{\emph{Matching Policies:}}
Maximum Weight Matching (MWM) policy was first proposed in the seminal work in \cite{tassiulas:ephremides:jac:92}. This policy 
is perhaps the first policy that is throughput-optimal in a wide range of settings \cite{tassiulas:ephremides:jac:92,buche:kushber:04,eryilmaz:sirkant:ton:05,neely:jsac05}. MWM policy at any timeslot maximizes a weighted summation of queue-sizes in the network, 
which can be an NP-hard optimization problem \cite{mazumdar:shroff:06}. Despite its complexity,
simulations \cite{gupta:shroff:09} show that MWM policy is close to the optimal in terms of delay for one-hop traffic. 
For multihop traffic, the delay under MWM policy is $O(\frac{L}{\eps})$, and for 
one-hop traffic is order-optimal as $O(\frac{1}{\eps})$, under certain conditions \cite{Le:modiano:09} that hold for geometric
 networks. The delay bound in our paper for one-hop traffic is $O([\frac{1}{\eps}]^{3})$, which 
includes a multiplicative factor of $[\frac{1}{\eps}]^{2}$ as well as $\frac{1}{\eps}$. This factor can be interpreted as 
the scheduling-time needed to find schedules that are $\eps$ close to the optimality. However, we note that 
the delay performance in \cite{gupta:shroff:09} and
the $O(\frac{1}{\eps})$ bound in \cite{Le:modiano:09} are obtained assuming that the NP-hard problem of MWM policy can be solved at every timeslot. 

Greedy Maximal Matching (GMM) policy is a simple and distributed alternative for MWM policy, see e.g., \cite{lin:shroff:ton06,mazumdar:shroff:06}.
 While GMM policy is not 
throughput-optimal in general, a number of local pooling results \cite{dimakis:warland:06, joo:shroff:infocom:08 ,zussman:modiano:08}
indicate that for a noticeable subset of topologies, GMM
policy is indeed throughput-optimal. However, GMM requires message 
passing, and it is an open area to investigate the delay performance of GMM policy. Maximal Matching (MM)
policy is simpler than GMM policy and has order-optimal delay of $O(\frac{1}{\eps})$ for one-hop traffic \cite{neely:info08}.
However, this policy is not throughput-optimal and is guaranteed to stabilize only half of the capacity region. In \cite{sanghavi:skrikant:07}, a matching policy is proposed that can stabilize arbitrarily close
to $100\%$ of the capacity region in expense of increasing an overhead that is constant in network-size. However, this
policy is limited for networks with primary interference. See \cite{yi:proutiere:08} for a comparison of different matching policies.


\underline{\emph{Random Access Policies:}}
Random access policies started with the classical Aloha protocol \cite{abramson:70}, for which an 
optimality result was first established in \cite{haejk:82}. The capacity of random access 
policies under collision detections, acknowledgements, or backoff schemes have been studied in 
\cite{kelly:macphee:87, hastad:96, goldberg:00}. The recent work in \cite{stolyar_dynamic_2008} 
chooses access probabilities in an Aloha-like policy
 based on queue backlogs to achieve the capacity region of slotted Aloha. 
In \cite{lee:chiang:07,rad:09}, distributed protocols are proposed that assign
 access probabilities to maximize a network utility under an Aloha-like protocol.
 Due to their simplicity, Aloha-like protocols
have been also used in mobile networks \cite{Baccelli06analoha}. These protocols however are not 
 throughput-optimal \cite{stolyar_dynamic_2008}.


CSMA policies are a special class of random access policies that assume nodes can sense whether 
their neighbours are transmitting. Performance of these policies as defined in 802.11 standard for a specific network setup is studied in \cite{bianchi:00}. For an interesting but special class of networks with primary interference, it is known that 1) CSMA polices are throughput-optimal
\cite{marbach:CDC:07}, and 2) for a subclass of these networks such as the $n\times n$ switch, the delay to access
 the channel becomes memoryless under CSMA policies, leading to an $O(\frac{1}{\eps})$ (normalized) 
packet delay
\cite{mahdi:peter:al09}. 

 Throughput-optimality of CSMA policies extends to networks with arbitrary interference
 graphs \cite{Jiang:09,ni-2009,rajag:09}. The throughput-optimal CSMA policies in \cite{Jiang:09,ni-2009,rajag:09} are based on a continuous time Markov chain that prevents collisions. 
This is addressed by considering contention resolution \cite{jiang:mobihoc:09,ni-2009}.

Both in \cite{Jiang:09} and \cite{ni-2009}, it is assumed
that there is a time-scale separation and, hence, CSMA dynamics quickly converges to its steady-state
faster than the rate by which queues change over time. The authors of \cite{rajag:09} and later those of \cite{ghaderi:10}
show that as long as attempt rates of nodes change sufficiently slowly, throughput optimality can be achieved. A related work 
\cite{proutiere:10} divides the time axis into frames, and updates  
parameters of CSMA policy only at the beginning of each frame. 
However, delay performance under the above throughput-optimal schemes is not investigated, and
 the upperbound on the delay
inferred from these papers increases with the network-size.

Before concluding this section, we note that there are numerous results that study link starvation under CSMA policies, e.g., see
\cite{garetto:08} and references therein. In particular, the work in \cite{durvy:08} 
shows that in 2D, the phase transition phenomenon makes the CSMA policy \emph{lock into} a 
certain similar set of states for a long time, causing large
packet delays. Using this insight, we propose U-CSMA policy that benefits from a novel unlocking mechanism. 
In cotrast to previous matching or random access policies, U-CSMA is a simple 
and distributed policy that provides high throughput with low delay that features order-optimality.

\section{Network and Classical CSMA Policy Model}\label{sec:model}

In this section, we introduce the network and classical CSMA policy model that we use in this paper.
\subsection{Network Model}\label{sec:netw:mod}
We consider a fixed wireless network consisted of a set $\setN$ of nodes, and
a set $\setL$ of links with cardinality $L.$ We refer to $L$ as the \emph{network size}. A
link $l= (n, m)\in\setL$ indicates that transmitter node $n$ and receiver node $m$ 
are within transmission range of each other and can exchange data packets. Each link $l=(n,m)$
corresponds to a \emph{queue} that is maintained by its transmitter node $n$.

We model the contention between links by an \emph{interference graph} $G(\setL, \setE)$ 
\cite{mazumdar:shroff:06,Jiang:09,ni-2009,rajag:09,proutiere:10}, where $\setL$ is the set of links  and $\setE$ is the set of edges. An edge $e = (l,l') \in \setE$ in the graph $G(\setL, \setE)$ indicates that the two links $l$ and $l'$, $l,l' \in \setL$ interfere with each other. In the following, we will refer to $\setL$ as the node set of the interference graph, and to the set $\setE$ as its edge set. We define a 
\emph{geometric interference graph} \cite{Hunt1998,mazumdar:shroff:06, joo:shroff:infocom:08} to be a 
graph whose vertices can be considered as points on the plane, and where two
vertices are connected
by an edge if and only if the distance between them is less than the \emph{interference range}
 $r$ where $r>0$. We define a \emph{geometric network} as a network with
geometric interference graph. We define a \emph{random geometric network} as
a geometric network for which the vertices of its interference graph are points that are 
distributed according to a uniform stochastic process
over a convex region in the plane.

We define a \emph{valid schedule} to be a subset of links in $\setL$ no two of which interfere with each other. We define a \emph{maximum schedule} to 
be a valid schedule with the largest number of links in $\setL$.
We also define a link to be \emph{active} at time $t$, if the link is transmitting at time $t$. We define a 
\emph{scheduling policy} to be an algorithm, randomized or deterministic, that determines which links are active at any given time.

Throughout the paper, we assume that traffic is one-hop. Let $\lambda_l$ be the packet arrival
 rate for transmission over link $l$, which corresponds to a queue in the network, and let
$$\boldsymbol{\lambda} = (\lambda_{l})_{l \in \setL}$$
 be the arrival rate vector for a given network. We assume that the rate of transmission is the same for all links, and
it takes one unit of time to transmit any one packet.

To characterize the arrival process in further detail, for $t_{2}>t_{1}\geq 0$, let $A_{l}(t_{1},t_{2})$ be the number of packets that arrive for transmission to 
link $l$ in the time interval $(t_{1},t_{2}]$. We assume that the number of packets that arrive in a unit time interval
to any link $l$ is bounded by a constant $A_{max}$, i.e.,
\begin{align}
  A_{l}(t,t+1)\leq A_{max}. \label{ineq:max:arrival:l}
\end{align}
Moreover, we assume for any $\eps>0$, there exists an integer $k_{\eps}\geq 1$ such that 
for $t_{2}-t_{1}\geq k_{\eps}$ and for all $l\in \setL$, we have
\beqa
\left|\ev\bigg[\frac{A_{l}(t_{1},t_{2})}{t_{2}-t_{1}}  \ \Big | \setH_{s}(t_{1}) \bigg]- \lambda_{l} \right|<\eps
\label{eq:def:arrival:exp}
\eeqa
where $\setH_{s}(t_{1})$ is the system history up to and including time $t_{1}$. The above intuitively means that the expected time-average number of packets
that arrive to a link $l$ converges to its arrival rate $\lambda_{l}$.

\subsection{Classical CSMA Policy}\label{sec:mod:ideal:csma}

For our analysis, we define the classical CSMA policy as follows, similar to the ones
 presented in~\cite{durvy:08,Jiang:09,ni-2009,rajag:09}. Given a wireless network with interference
graph $G(\setL, \setE)$, every link $l \in \setL$ independently of others
senses transmissions of any conflicting link in the interference graph $G(\setL, \setE)$, i.e. of any link $l'$ such that the
edge $e = (l,l')$ is contained in the edge set $\setE$. A link $l$ senses the channel as \emph{idle} at time $t$ if all 
of its conflicting (interfering) links are not active and not transmitting at time $t$. 
If link $l$ senses that any of its interfering links is transmitting, then it waits until all of its interfering links become silent. Once this happens, link $l$ sets a backoff timer with a value that is exponentially distributed with mean $1/z_{l}$, $z_{l}>0$, and starts to reduce the backoff timer. If the timer reaches
zero before any of its interfering links start a transmission, then link $l$ starts a transmission. Otherwise, link $l$ simply
waits until all of its interfering links become silent again,  and repeats the above process. 
We define $z_{l}$ to be the transmission \emph{attempt-rate} of link $l$.
We assume that all
transmission times are independently and exponentially distributed with unit mean.

The above models an \emph{idealized CSMA policy} in which 1) any link can always sense
transmissions of all of its interfering links, and 2) there is no hidden-terminal
 problem that can create packet collisions as in \cite{durvy:08,Jiang:09, rajag:09}.
These assumptions can be removed using the methods of \cite{jiang:mobihoc:09,ni-2009}. Hence, we continue assuming that 
the above two assumptions hold.

We characterize a classical CSMA policy by the vector $\vz = (z_l)_{l \in \setL}$
 where $z_l$ is the transmission attempt-rate of link $l$.
Given vector $\vz$, the network dynamics as which links are active over time 
can be represented by a Markov process \cite{Jiang:09}. Using this, we can define 
$\mu_l(\vz)$, $l \in \setL$, as the service rate of link $l$ under $\vz$, i.e., $\mu_l(\vz)$ is
 the fraction of time that link $l$ is active under the CSMA policy $\vz$.

We say that the classical CSMA policy $\vz$ \emph{stabilizes} the
network for a given packet arrival rate vector $\boldsymbol{\lambda}$ if \cite{tassiulas:ephremides:jac:92}
\begin{align}
\lambda_l < \mu_l(\vz), \qquad  l\in\setL . \label{eq:stab}  
\end{align}
This commonly used stability criteria~\cite{tassiulas:ephremides:jac:92} requires that for each link $l \in \setL$, the link service rate $\mu_l(\vz)$ is larger than the arrival rate $\lambda_l$. Given a fixed network, we then define the \emph{achievable rate region} $\setC$ of the classical CSMA policy as 
$$\setC=\{\boldsymbol{\lambda}: \exists \vz \text{ s.t. \eqref{eq:stab} holds.} \},$$
i.e., as the set of all rate vectors
$\boldsymbol{\lambda}$ for which there exists a vector $\vz$ that stabilizes the network for $\boldsymbol{\lambda}$.

It is well-known that the classical CSMA policy is \emph{throughput optimal} \cite{marbach:CDC:07,Jiang:09,ni-2009,rajag:09}, i.e., the set
$\setC$ contains all arrival rate vectors $\boldsymbol{\lambda}$ that are inside the capacity region $\Gamma$, where
$\Gamma$ is the set of all $\boldsymbol{\lambda}$'s
that can be stabilized by any scheduling policy, CSMA or not, including those with the full knowledge of future packet arrivals.


\subsection{Lattice and Torus Interference Graphs with Uniform Attempt  and Packet Arrival Rates}\label{sec:mod:lattice:uaa}

To obtain analytical results, we consider wireless networks with grid-like interference graphs.
In particular, we consider the \emph{lattice interference graph}
 $G_{L}=G_L(\setL,\setE)$ and the \emph{torus interference graph} $\setT_{L}=\setT_{L}(\setL,\setE_{\setT})$.
In both cases, the set $\setL$ is the set of all links where each link $l \in \setL$ can be
 represented by coordinates $(i,j), \ i,j \in \{0,...,n\},$
on the plane.
See Fig.~\ref{fig:lattice_g} for an illustration. Hence, the network-size, i.e., the total number of links, is given by 
$L=(n+1)^{2}.$

It remains to specify which links interfere with each other. For the lattice interference graph $G_{L}$, we assume that there exists an edge $e \in \setE$ 
between any two links $l = (i,j)$ and $l'=(i',j')$, $l,l'\in\setL$, iff link $l$ and link $l'$ differ in exactly one coordinate, i.e., we have that
$$| i - i'| + |j - j'|  =1.$$

For the torus interference graph $\setT_{L}$, the edge set $\setE_{\setT}$ contains all edges 
defined for the lattice interference graph $G_{L}$. In addition,
the set $\setE_{\setT}$ contains an edge between link $l=(i,0)$ and link $l'=(i,n)$, for $0\leq i \leq n$, and also contains an edge 
between link $l=(0,j)$ and link $l'=(n,j)$, for $0\leq j \leq n$. As a result, the torus interference graph $\setT_{L}$ is the same as $G_{L}$ with additional
edges around the boundary of $G_{L}$ so that every link has exactly four interfering links.

\begin{figure}[tp]
\centering
\includegraphics[width=.35\textwidth]{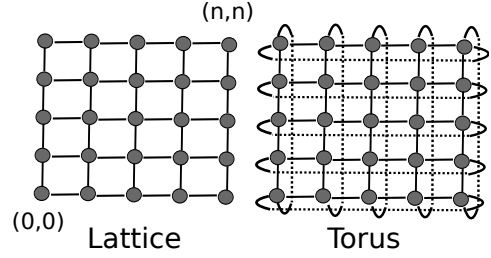}
\caption{Lattice and torus interference graphs. Each dark circle represents a link, and an edge between 
two dark circles shows that their corresponding links interfere with each other.}
\label{fig:lattice_g} 
\vspace{-.2in}
\end{figure}

 Given a lattice or torus interference graph, we define a link $l=(i,j)\in \setL$ as an {\em even link} iff $i+j$ is an even number. We
define $\setL^{(e)}$ as the set of all such even links. Similarly, we define a link $l=(i,j)\in \setL$ as an {\em odd link} iff $i+j$ is an odd
number, and define $\setL^{(o)}$ as the set of all odd links. 


For the lattice and torus interference graphs $G_{L}$ and $\setT_{L}$, we focus on CSMA policies $\{\vz\}$ with uniform transmission attempt-rates so that
$$z_l = z, \qquad  \ l \in \setL,$$
for some $z>0$. In addition, we focus on the case of uniform packet arrival rates, i.e., we let
\beqa
\lambda_{l} = \lambda , \qquad 0< \lambda<\mu_{max}(L), \ l \in \setL.   \label{eq:def:unif:lambda}
\eeqa
where $\mu_{max}(L)$ is the \emph{maximum uniform-throughput}, i.e., the maximum throughput
that can be provided for \emph{all} links by any policy in the network. 
For lattice interference graph $G_{L}$, we have that 
$$\mu_{max}(L)=0.5.$$
This throughput can be achieved, for instance, by alternating between two valid schedules $\setL^{(o)}$ and $\setL^{(e)}$ every unit of time, 
which allows every link to be active half of the time. 
For torus interference graph $\setT_{L}$, due to boundaries being wrapped around, $\setL^{(o)}$ and $\setL^{(e)}$ are not valid
schedules, but we can show that
$$\lim_{L\to \infty} \mu_{max}(L)=0.5.$$

Having defined $\mu_{max}(L)$, for a given lattice or torus interference graph with $L$ links, we define the network \emph{load factor} or simply 
\emph{load} $\rho$ as
\begin{align}
  \rho=\rho(\lambda)=\frac{\lambda}{\mu_{max}(L)}. \label{def:loadf}
\end{align}
We also define $\eps$ to be the distance to maximum load of $\rho=1$:
\begin{align}
  \eps=\eps(\lambda)=1-\rho(\lambda). \label{eq:def:eps:load}
\end{align}

We next provide an overview of our main results. 

\section{Overview of Main Results}\label{sec:overview}
In this section, we provide an overview of our main results. We first investigate the performance of the
classical CSMA policy as defined in Section~\ref{sec:mod:ideal:csma}, and explain why under
 this policy it is impractical to obtain both high throughput and low delay. We explain that
a \emph{locking-in} behaviour leads to an exponentially increasing delay that for the torus interference graph is upperbounded as
$$O\Big( \Big[\frac{1}{\eps}\Big]^{c_{u}L}\Big) $$
where $\eps$ is defined in \eqref{eq:def:eps:load}, and $c_{u}>1$ is a constant. We also explain that for small $\eps$, the delay is
$$\Omega(e^{c_{l}L/(\log{L})^{2}}),$$
for some constant $c_{l}>0$.

We then propose and describe the U-CSMA policy as the main contribution of this paper. We show that for geometric networks, U-CSMA 
policy overcomes the shortcomings of the classical CSMA policy and allows
 to obtain high throughput or utility, arbitrarily close to the optimal, with low packet delay that is order-optimal, i.e., stays bounded as the network-size increases to infinity. In
particular, we analytically show that for large networks with torus interference graph and with uniform packet arrival rates, 
the average delay under U-CSMA policy is upperbounded as
$$O\Big( \Big[\frac{1}{\eps}\Big]^{3}\Big), $$
independent of the network-size $L$. 

Using a simulation study, we show that the same general delay behaviour also holds for the practical case where 1) the arrival rates are determined by
a congestion control algorithm used on top of the U-CSMA policy to maximize a network-wide
utility, and 2) the interference graph is geometric 
(see Section~\ref{sec:netw:mod}) and constructed in a randomized manner.

\subsection {Performance of Classical CSMA Policy}\label{sec:motivation}

In this section, we provide a motivating example to examine the performance of the classical CSMA policy, and
explain why even for simple topologies, this 
policy fails to support a high throughput with low delay. 


Consider a fixed wireless network with torus interference graph, as defined in Section~\ref{sec:mod:lattice:uaa}, having $L$ links and a uniform packet arrival rate $\lambda$ to each link, as defined in \eqref{eq:def:unif:lambda}. It is well-known that \cite{gaunt:65} if all links use the
 same rate $z$, then the following holds for the achieved uniform throughput $\mu(z,L)$:
\begin{align}
   \mu_{max}(L)-\mu(z,L)=\boldsymbol{\Theta}(z^{-1}).\label{eq:un:thput}
\end{align}
This means that to be $\boldsymbol{\Theta}(\eps)$ away from the maximum uniform throughput $\mu_{max}(L)$,
an attempt rate $z$ of order $\frac{1}{\eps}$ is needed.

For the above network, two threshold behaviours exist, as explained in the following.

\emph{\underline{Threshold Behaviour as a Function of Attempt-rate $z$:} }
It is well-known that for a fixed network size $L$, as the attempt rate $z$ increases beyond a threshold, 
the delay of classical CSMA policy on the torus interference graph increases substantially. 
This increase is related to a \emph{phase transition} phenomenon, in terms of the existence of more than one Gibbs measures for the infinite torus \cite{aldous:prob:book}. 


The currently best explicit characterization of the delay of the classical CSMA policy in terms of $z$ shows that the delay is
(see, e.g., the mixing time analysis in \cite{rajag:09})
\begin{align}
  O( z^{c_{u}L}), \label{bound:d:u}
\end{align}
for some constant $c_{u}>1$. While for $z<1$, the above bound can be moderate for a moderate network size $L$, for $z>1$, there will a rapid increase even for moderate values of
$L$. Since by \eqref{eq:un:thput}, a large attempt-rate is needed to support a high throughput, this explains why the classical CSMA policy cannot provide high throughput without incurring a large delay. 

We note that by \eqref{eq:un:thput}, the classical CSMA policy needs to 
use an attempt rate of order $1/\eps$ to support the load $\rho=1-\eps$, which can be used to write the delay bound in \eqref{bound:d:u} as
\begin{align}
  O\Big( \Big[\frac{1}{\eps}\Big]^{c_{u}L} \Big). \label{bound:d:u:eps}
\end{align}

\emph{\underline{Threshold Behaviour as a Function of Network-size $L$:}} 
Depending on the value of a given attempt $z$, as we increase the network size $L$, the delay of the classical CSMA policy 
shows an undesirable threshold behaviour. 

On one hand, there exists a constant $z_{c,1}>0$ such that for all attempt-rates $z<z_{c,1}$, the delay is upperbounded as \cite{Vigoda01anote}
\begin{align}
  O(\log(L)). \label{bound:d:log}
\end{align}
This bound states that for low attempt rates resulting in low uniform-throughputs, the delay increases only logarithmically in the network size $L$.   

On the other hand, there exists a constant $z_{c,2}>0$ such for any attempt-rate $z>z_{c,2}$, the delay is lowerbounded as~\cite{borgs_torpid_1999}
\begin{align}
\Omega(e^{c_{l}L/(\log{L})^{2}}),\label{bound:d:l}  
\end{align}
for some constant $c_{l}>0$. Hence, for large attempt-rates required to support high throughputs, the delay grows exponentially with 
the network-size $L$, which results in a threshold behaviour as $L$ increases. It is this exponential increase in the delay that prevents
the classical CSMA policy to provide high throughput with low packet delay as the network-size $L$ increases.

\emph{\underline{Simulation:}}
To illustrate the threshold behaviours, we have simulated a torus of size $L\in\{100, 400, 1600\}$
under the classical CSMA policy with uniform attempt rate $z$. 
We have assumed i.i.d packet arrivals where 
every unit of time one packet arrives for link $l$, $l\in\setL$, with 
probability $\lambda$, independent of any other packet arrival event. 
For a given network size $L$, to support the uniform arrival rate $\lambda$
 (see Section~\ref{sec:mod:lattice:uaa}) where
 \begin{align}
   \lambda = (1-\epsilon) \mu_{max}(L),  \qquad \epsilon >0, \label{eq:arrival:sim}
 \end{align}
and consequently a load factor (as defined in \eqref{def:loadf}) of $\rho=(1-\eps)$,
we have chosen the attempt rate $z$ such that the resulting uniform throughput $\mu(z,L)$ is given by 
\begin{align}
  \mu(z,L)=\mu_{max}(L)(1-\frac{\eps}{2})>\lambda.
\end{align}

 Fig.~\ref{fig:Q}(a) shows the resulting average queue size per link as a function of $\rho$ in 
linear scale. This figure clearly illustrates the two threshold behaviours.

First, we see that for a given network-size $L$, for a small load $\rho $ less than $0.3$, the queue-sizes are small. However, as the load $\rho$ increases towards $0.5$,
 which requires a larger attempt-rate $z$, the queue-size increases from only few packets to thousands. While the classical CSMA policy is 
throughput-optimal and in principle can support a load $\rho$ close to $1$, we see that in practice, it cannot support loads as low as $0.5$, i.e., it 
cannot reach the $50\%$ utilization without incurring a large delay. For instance, for the $20 \times 20$ torus, the large delay becomes more
 than $1$sec for a packet length of $2346$ bytes and a channel rate of $54$Mbs as in 802.11 standards.

Second, we see that for a given $\rho$, the queue-size shows two different behaviours. If $\rho<0.4 $, the queue-size is small and hardly changes with the 
network size. In contrast, for $\rho>0.4$, the queue-size shows a threshold behaviour and drastically and exponentially increases with the network size. 
For instance, at $\rho=0.44$, the queue-size almost doubles every time that the network size $L$ increases by a factor of $4$.

\begin{figure*}[htp]
\centering
\subfigure[Illustration of the threshold behaviours under classical CSMA policy for torus 
interference graph.]{
\includegraphics[width=.31\textwidth]{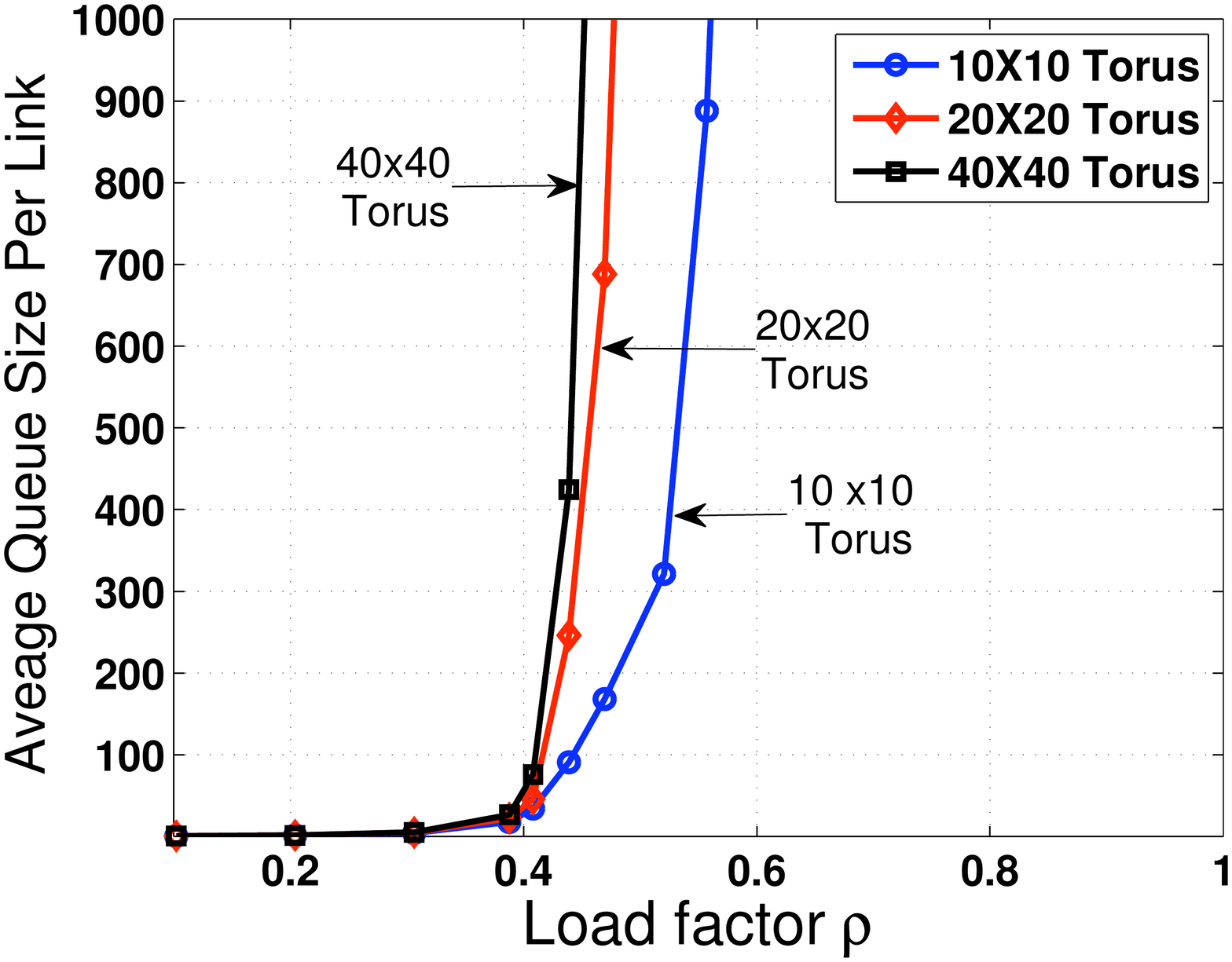}
\label{fig:snapshot1}
}
\hfill
\hspace{-.4 in}
\subfigure[Illustration of elimination of the threshold behaviours under U-CSMA policy
for torus interference graph.]{
\includegraphics[width=.31\textwidth]{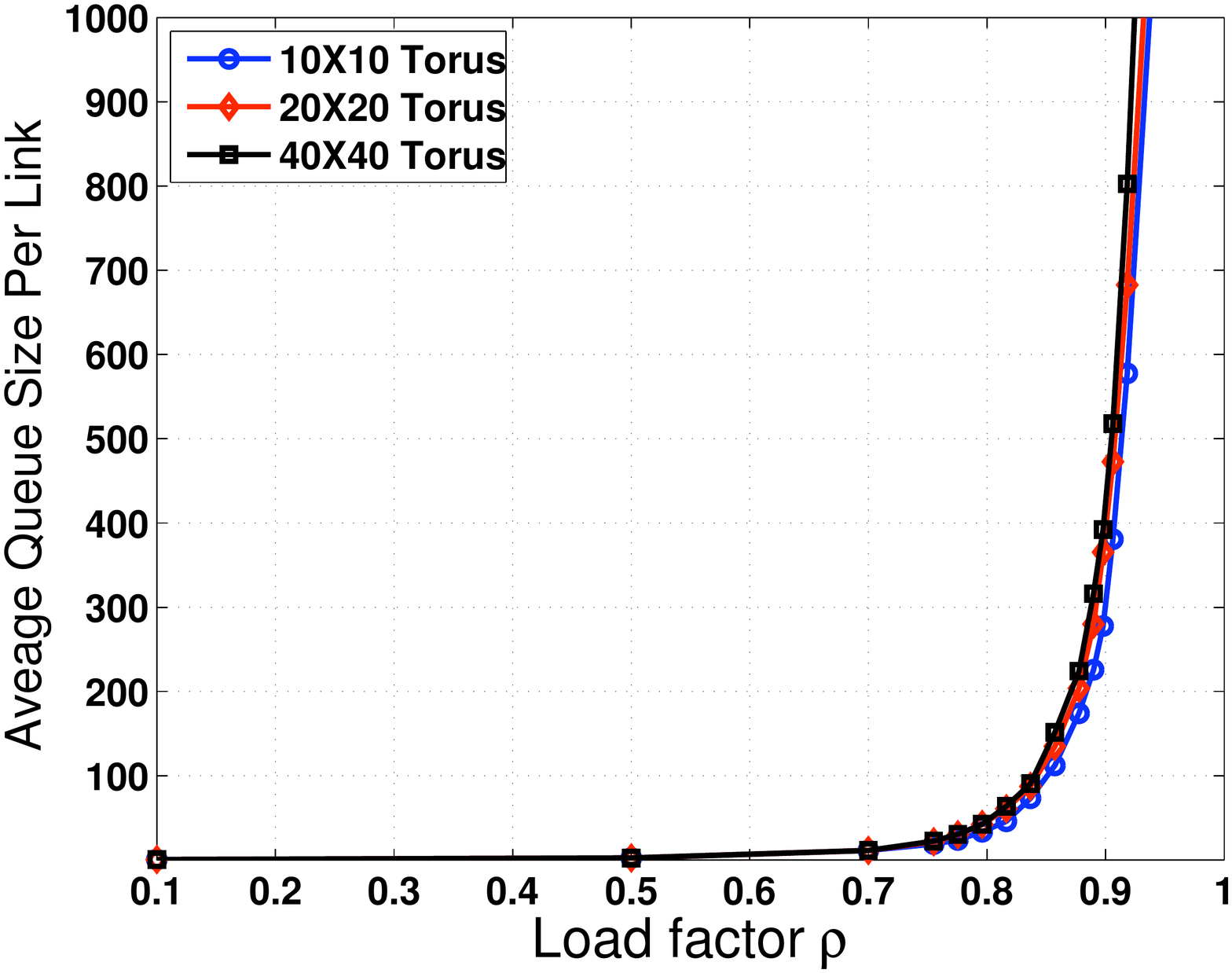}
\label{fig:snapshot2}
}
\hfill \hspace{-.4 in}
\subfigure[Performance under U-CSMA policy combined with congestion control
 in random geometric interference graphs, as a function of utility ratio $\rho_{u}$.]{
\includegraphics[width=.31\textwidth]{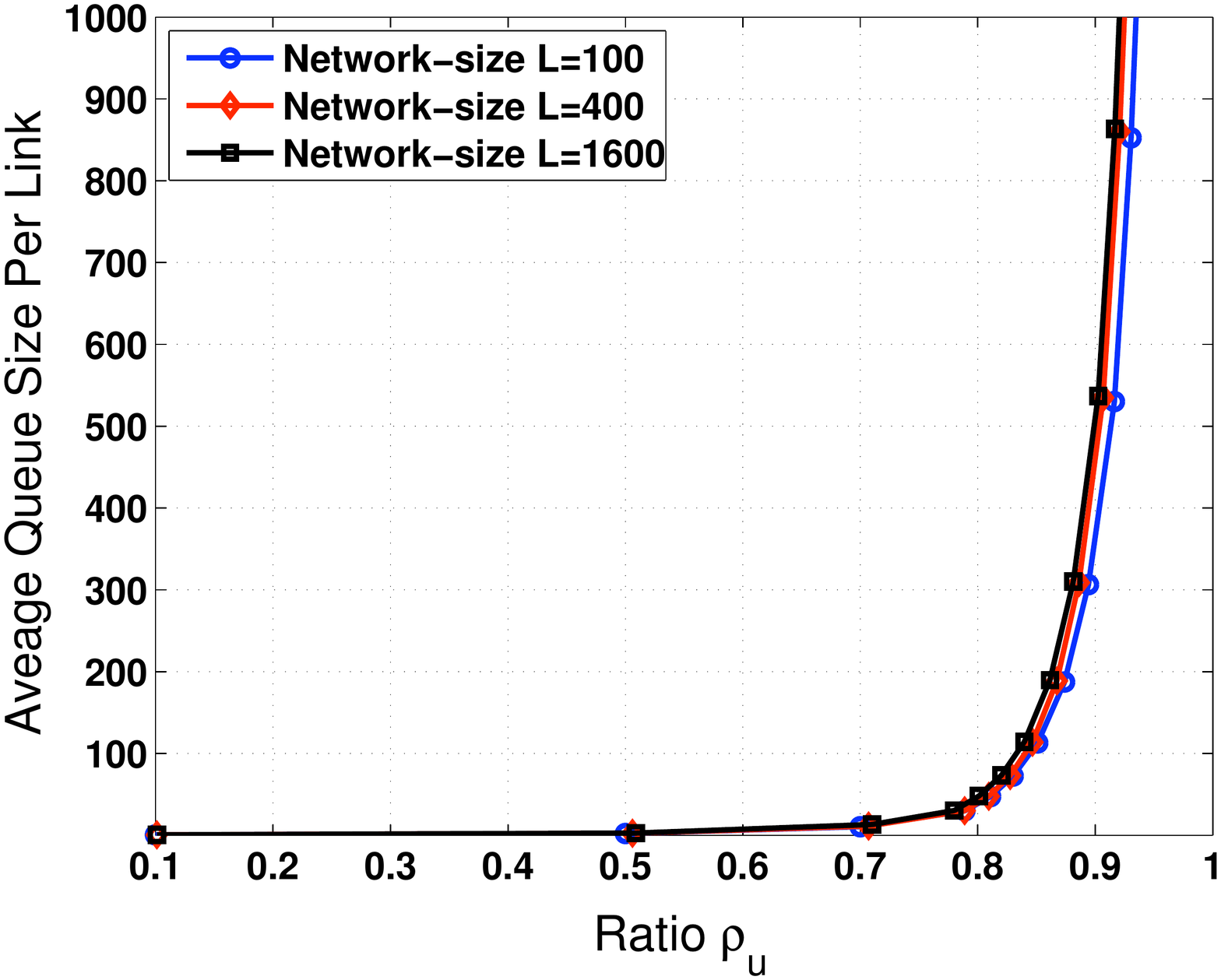}
\label{fig:snapshot3}
}
\caption{Illustration of performance under the classical CSMA policy and U-CSMA policy.}
\label{fig:Q}
\vspace{-.2in}
\end{figure*}


\emph{\underline{Intuition:}}
By \eqref{eq:un:thput}, in order to support a high uniform throughput, the classical CSMA policy needs to use a large attempt rate $z$. 
 For a large attempt rate $z$, the network state will mainly alternate between two types of transmission patterns (valid schedules) 
where either mostly links in the set of even links $\setL^{(e)}$, or links in the set of odd links $\setL^{(o)}$, are active (see Section~\ref{sec:mod:lattice:uaa}). However, as $z$ and $L$ increase, transitions between these two types of patterns occur very infrequently. 
This implies that the classical CSMA policy
 tends to \emph{lock into} one type of transmission patterns
for a very long time before it switches to the other type of patterns \cite{aldous:prob:book}.

This \emph{locking-in} behaviour of the CSMA policy immediately implies that while one type of links, e.g., even links, are active for a long time, the other type of links, e.g., odd links, cannot transmit for a long time. As a result, this locking-in behaviour leads to large queue-sizes and hence a 
large packet delay. 

We next describe U-CSMA policy and provide theoretical and simulation results characterizing its performance.


\subsection{U-CSMA Policy and Its Performance}\label{sec:approach}
The main contribution of this paper is to propose U-CSMA policy that overcomes
the threshold behaviours faced by the classical CSMA policy. As such,
U-CSMA enables to obtain a high throughput with low delay that is order-optimal.




\emph{\underline{U-CSMA Policy:}}
The basic idea behind our proposed U-CSMA policy is very simple. U-CSMA policy uses
a classical CSMA policy $\vz$ as described in Section~\ref{sec:mod:ideal:csma}. However, periodically, i.e., at times 
$$T_{i}=iT , \ i\in\{0,1,2,\cdots\},$$
 U-CSMA policy resets, or \emph{unlocks}, the
transmission pattern of the classical CSMA policy by requiring all links to become silent, and then 
immediately restarts the classical CSMA protocol to operate as usual. In the rest, we refer to parameter $T$ as the \emph{unlocking period}. We note the in the limit of large $T$, U-CSMA policy reduces to the
throughput-optimal classical CSMA policy.

 The intuition behind the above unlocking mechanism is to prevent the threshold behaviour by
 preventing the 
policy from locking into a particular
 transmission pattern for too long. In Section~\ref{sec:dis:impl}, we provide
one approach to implement U-CSMA policy
 in a fully distributed and asynchronous manner.


\underline{\emph{Analytical Results:}}  
In order to characterize the performance of U-CSMA policy, we first need to know how to choose 
the unlocking period $T$. While a smaller $T$ helps employ the unlocking mechanism more frequently
leading to a smaller delay, it
may also prevent the underlying classical CSMA policy used by U-CSMA policy from converging to a maximum schedule that is necessary to obtain a high throughput. Hence, as the first step, we need to study how fast the classical CSMA policy
converges to a maximum schedule.

Our first analytical result (see Proposition~\ref{result:1} in Section~\ref{sec:main}) shows that for the lattice and torus 
interference graphs with uniform attempt rate $z$,
valid schedules under the classical CSMA policy quickly converge to a maximum schedule at a rate that becomes independent of 
network-size $L$ for large networks and attempt-rates. Remarkably, this result shows that the \emph{distance} to the maximum schedules roughly drops 
as $$\frac{1}{\sqrt{t}}.$$

Our second analytical result (see Proposition~\ref{result:2} in Section\ref{sec:result:2}) uses the above convergence result to stabilize networks with torus interference topology and uniform
packet arrival rate $\lambda$. In particular this result shows
 that U-CSMA policy with unlocking period
\begin{align}
  T(\eps)=\boldsymbol{\Theta}\Big( \Big[\frac{1}{\eps}\Big]^{2}\Big),
\end{align}
and with large uniform attempt rate\footnote{ Large attempt rates can
be implemented using Glauber dynamics as in \cite{rajag:09,ni-2009}. } $z$ stabilizes the load $\rho=(1-\eps)$ for large networks with torus interference graph. Hence,
by the above choice for the unlocking period, U-CSMA policy stabilizes queues in the network, all of
which have packet arrival rate of $\lambda= (1-\eps)\mu_{max}(L)$.

Further, this result shows that by the above choice for the unlocking period $T(\eps)$, the average queue-size per link and, hence, average delay become order-optimal and 
independent of the network size $L$ in the sense that for large $L$ and attempt-rate $z$, they are upperbounded as
\begin{align}
  O\Big( \Big[\frac{1}{\eps}\Big]^{3}\Big). \label{eq:bound:d:np}
\end{align}

Comparing the above delay bound with the ones in \eqref{bound:d:u:eps} and \eqref{bound:d:l} for the classical CSMA policy, we see that
U-CSMA policy does not suffer from the threshold behaviours. Specifically, we see that
as a function of $1/\eps$, the queue-size under U-CSMA policy increases at most with exponent $3$ as opposed to the exponent
$L$ under classical CSMA policy, as suggested by the bound in \eqref{bound:d:u:eps}. Moreover, U-CSMA policy has changed a queue-size that
exponentially grows with the network size $L$ (see \eqref{bound:d:l}) to a queue-size that does not depend on the network size $L$.

\underline{\emph{Simulation Results:}} To illustrate the performance of the U-CSMA policy and compare it
with the analytical results, we have simulated a torus
of size $L\in\{100, 400, 1600\}$ under the U-CSMA policy. We have assumed i.i.d. packet arrivals  
  where every unit of time, one packet arrives to link $l$, $l\in\setL$, 
 with probability $\lambda$ independent of any other arrival event. 
We have set the uniform attempt rate at $z=50$, and for a given uniform arrival 
rate $$\lambda =(1-\eps) \mu_{max}(L),$$ 
or load $\rho=1-\eps$, we have chosen the unlocking period $T$ as 
\begin{align}
T= \frac{1.2}{\eps^{2}} .
\end{align}

Fig.~\ref{fig:Q}(b) shows the resulting average queue-sizes as a function of load $\rho$. We make the following two observations. 
First, comparing Fig.~\ref{fig:Q}(b) with Fig.~\ref{fig:Q}(a), we see that while
 the classical CSMA ``hits the wall'' and its queue-size becomes on the order of thousands of packets before reaching
 a low load of $\rho=0.5$, 
the U-CSMA policy can indeed get much closer to the maximum load of $1$. In practical terms, for a packet length of 
$2346$ bytes and a channel rate of $54$Mbs as in 802.11 standards, the average packet delay under U-CSMA policy becomes $30$ms and $90$ms for
 $80\%$ and $85\%$ channel utilizations, respectively, while the average delay under the classical CSMA becomes more than $1$sec before even reaching the $50\%$ 
utilization. In addition, replotting the queue-size as a function of $\eps=1-\rho$ in $\log$-$\log$ scale (see 
Fig.~\ref{fig:torus_q_log}),
 we see that the average exponent by which queue-size increases as a function of $1/\eps $ is 3.02, which closely matches 
the exponent 3 as predicted by the analysis in \eqref{eq:bound:d:np}.

Second and as remarkably predicted by the analysis, the average queue-size does not change significantly with the network size. In fact, for $20\times20$ and $40\times 40$ toruses the average queue-sizes are hardly distinguishable. This confirms that 1) U-CSMA eliminates the threshold behaviours that exist for the classical CSMA policy, and 2) the delay under U-CSMA 
is order-optimal in that it stays bounded as the network size increases.

To investigate whether the insight gained through the analysis for the torus interference graph carries over to general network setups, we have 
simulated a random geometric interference graph \cite{Hunt1998,mazumdar:shroff:06, joo:shroff:infocom:08}, 
see Section~\ref{sec:netw:mod}, in which $L\in\{100, 400, 1600\}$ links are randomly distributed over a square area of $10\times 10$, $20\times 20$, and
$40\times 40$, respectively. We have chosen the interference range $r$ so that every link on the average interferes with six other links. As in \cite{neely:modiano:infocom:05,lin:shroff:ton06, rad:09}, we have implemented a 
congestion control algorithm to tune the arrival rate to each link so 
that a \emph{network-wide} logarithmic utility
 function $U_{net}$ is maximized. This algorithm operates on top of the U-CSMA policy (see 
Section\ref{sec:flow:control} for further details).

Fig.~\ref{fig:Q}(c) plots the average queue-size as a function of $\rho_{u}$ where 
$$\rho_{u}=\frac{U_{net}}{U_{opt}},$$
i.e., $\rho_{u}$ is the ratio of the achieved network-wide utility to the optimum maximal utility $U_{opt}$. 
Remarkably, the delay behaviour is similar to the one illustrated by Fig.~\ref{fig:Q}(b).

The main observation here is that the that the insight gained through the analysis for the torus interference graph also holds for
 the general case considered here. First, we observe that even in random topologies
under a congestion control algorithm, we can use U-CSMA policy to assign arrival rates
 closely to the optimal without incurring a large delay. For instance,
for a packet length of $2346$ bytes and a channel rate of $54$Mbs, the delay becomes 40ms to get to $80\%$ of optimality. Interestingly, the exponent by which queue-size increases as a function 
of $1/(1-\rho_u)$ approaches 3, the same exponent in the delay bound of torus graph in \eqref{eq:bound:d:np} (see Fig.\ref{fig:rand_q_log} for the corresponding $\log$-$\log$ plot).

Second, we observe
that the average queue-size and hence the delay slightly change with the network-size. This means order-optimality of delay is preserved, and therefore, we can use U-CSMA policy jointly with congestion control to assign arrival rates 
close to the optimal with low packet delay in arbitrarily large networks. 

  In the next section, we provide one example to implement U-CSMA policy in
a fully distributed and asynchronous manner.

\section{Distributed Implementation}\label{sec:dis:impl} 

In this section, we provide an algorithm to implement the unlocking mechanism 
of U-CSMA policy, as described in Section~\ref{sec:approach}, in a fully distributed
and asynchronous manner. 
 We assume links can send busy tones to \emph{initiate}, or \emph{relay}, the unlocking process. 
Busy tones as opposed to control packets are propagated much 
faster and their detection is easier. 

Fix the unlocking period at $T$.
Let $\Delta_{b}\ll T$ be the maximum delay from the time a link broadcasts 
one busy tone until all its interfering links detect the busy tone.
Further, every link keeps track of the last time $t_{last}^{(1)}$ and the second last time $t_{last}^{(2)}$ 
that either initiated 
or relayed a busy tone. In addition, every link maintains a counter that determines the
next time after $t_{last}^{(1)}$ that it may send a busy tone to locally initiate the unlocking process. The value of the counter
is reset to $T+t_{b}$ at times $\{t_{last}^{(1)}\}$, where $t_{b}$ is a r.v. Links choose $t_{b}$ as follows to maintain the length of 
periods close to 
$T$. If $t_{last}^{(1)}-t_{last}^{(2)}\leq T$, $t_{b}$ is chosen 
uniformly distributed from $[0,2\Delta]$, otherwise from $[-2\Delta,0]$, where $\Delta\geq\Delta_{b}$. A link that joins the network
for the first time at time $t$, sets its $t_{last}^{(1)}=t$ and its counter value to $T$.

Every link $l$ implements the following. If at time $t$, link $l$ detects a busy tone or its
counter reaches zero, it broadcasts one busy tone, to all of its interfering links, only if it has not done so in the 
last $0.5 T$ time-units. This ensures that busy tones will not go back to the link where initiated them. 
After broadcasting a busy tone, the link $l$ stops transmitting and can start competing for the channel, using the classical CSMA policy as usual,
only after a time that is uniformly distributed in $[0, 2\Delta]$. This ensures that 
one link does not always transmit first. 

It is clear that for a fixed $T$ and $\Delta$, as the delay $\Delta_{b}$ approaches zero, i.e., when busy tones
propagate very fast, the distributed approach 
converges to the ideal unlocking mechanism. For large $\Delta_{b}$, however, transmission patterns are unlocked 
locally. Nevertheless, our simulations for both the torus and the random geometric networks, as
simulated in 
 Section~\ref{sec:overview}, show that the changes in the queue size and utility are less than $1.5 \%$ when $T=154$, $\Delta_{b} \in [0.1,2]$,
and $\Delta=\Delta_{b}$, all in units of time. Hence, with moderate values of busy tone delay, the distributed unlocking mechanism performs close to its ideal.

 In the next section, we provide formal statements of the analytical results in this
paper.

\section{Performance Analysis}\label{sec:main}

In this section, we formally state the analytical results developed in this paper for lattice
and torus interference graphs.
These results characterize the rate by which the schedule under classical CSMA policy converges 
to maximum schedules, and characterize the delay-throughput tradeoff under U-CSMA policy.
These results use two assumptions that are formally stated in Section~\ref{sec:assumptionss}.

Even by making these assumptions, the analysis of CSMA convergence is by no means trivial. 
This analysis requires techniques often used to develop mean-field results \cite{Leboudec:08}, characterizing the properties of ODEs, and also large deviation results. Simulation results presented in Section~\ref{sec:overview} and this section verify that
these assumptions indeed lead to correct qualitative results, not only for lattice and 
torus topologies, but also for random geometric networks under congestion control.



\subsection{Convergence to Maximum Schedules Under Classical CSMA Policy}\label{sec:result:1}
Our first result characterizes the rate by which the schedule under classical CSMA policy converges to maximum
 schedules. We consider the lattice or torus interference graph with $L$ links, and a classical 
CSMA policy with uniform attempts rate $z$, as described in Section~\ref{sec:model}.

To state our first result, we use the following notation.
Let $\theta_{L}(t,z)$ be the \emph{density}, i.e., fraction, of links that are active at time $t$, $t > 0$. Hence, if
 $N_{a}(t,z)$ is the total number of links that are active at the time $t$ under a classical 
CSMA policy with uniform attempt rate $z$, then $\theta_{L}(t,z)$ is given by
$$\theta_{L}(t,z) = \frac{N_a(t,z)}{L}$$
We assume that the system is idle at time $t=0$ such that
$$ \theta_{L}(0,z) = 0, \qquad z>0.$$

Let $\delta_{L}(t,z)$ be 
\begin{align}
  \delta_{L}(t,z)=0.5-\theta_{L}(t,z). \label{eq:def:delta:l:n}
\end{align} 
Since $0.5$ is the fraction of links that can be active under a maximum schedule
in lattice or torus interference graphs in the limit of large $L$, we see that 
$\delta_{L}(t,z)$ can represent the \emph{distance} between the schedule at time 
$t$ and the limit maximum schedules.

Proposition~\ref{result:1} characterizes how fast the distance $ \delta_{L}(t,z)$
 approaches $0$, or in other words, how fast the distance to maximum schedules drops to $0$, in 
the limit of large $L$ and $z$. 
\begin{proposition}\label{result:1}
Suppose the interference graph is given by the lattice (or torus) interference graph $G_{L}$ (or $\setT_{L}$).
 Under Assumptions~\ref{assum:1}-\ref{assum:2} for $G_L$ (or $\setT_L$), there
exists a positive constant $C_{1}$, independent of $z$ and $L$, such that for any $\tau>0$, we have that
\beqa
 \liminf_{z\to \infty} \ \liminf_{L\to \infty}  P\left[ \sup_{t \in (0,\tau]} \left [ \delta_{L}(t,z) - \frac{C_{1}}{\sqrt{t}} \right ] \leq  0 \right]=1. \nonumber
\eeqa
\end{proposition}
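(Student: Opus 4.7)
The plan is to follow a standard three-step mean-field template: (i) pass to the $L \to \infty$ limit to obtain a deterministic ODE governing $\theta(t,z)$; (ii) analyze that ODE to show that $\delta(t,z) := 0.5 - \theta(t,z)$ admits a uniform-in-$z$ upper bound of the form $C_1/\sqrt{t}$ on $(0,\tau]$; and (iii) combine with a functional concentration inequality to transfer the deterministic bound to $\delta_L(t,z)$, and take the nested $\liminf$'s in the prescribed order. Assumptions~1--2 of Section~\ref{sec:assumptionss} are precisely what should rule out the lattice spatial correlations that would otherwise obstruct a closed ODE description in the variable $\theta$ alone.

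For step (i), I would invoke a Kurtz-type functional law of large numbers for the CSMA Markov process on $G_L$ (or $\setT_L$), in the spirit of \cite{Leboudec:08}: an inactive link fires at rate $z$ when all its neighbors are also inactive, and an active link deactivates at rate one. Writing the rate of $\theta_L$-increase as $z$ times the empirical fraction of inactive links with no active neighbor, Assumptions~1--2 collapse this empirical functional onto a deterministic function $g(\theta,z)$, yielding $\sup_{t\in[0,\tau]}|\theta_L(t,z)-\theta(t,z)| \to 0$ in probability as $L\to\infty$, where
\begin{align*}
\frac{d\theta}{dt} \;=\; z\,g(\theta,z) \;-\; \theta, \qquad \theta(0,z)=0.
\end{align*}
The technical heart is step (ii): $g(\theta,z)$ must vanish at $\theta=1/2$ fast enough that the ODE drives $\delta(t,z)$ down at a rate of the form $\dot\delta \le -\kappa z\,\delta^{\alpha} + O(1)$ for some $\alpha\ge 2$. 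A standard scalar comparison then yields $\delta(t,z)\le C'/\sqrt{z\,t}$ in the transient regime and $\delta(t,z)\le C''/\sqrt{z}$ near the fixed point, both dominated by $C_1/\sqrt{t}$ on $(0,\tau]$ once $z$ is sufficiently large; for $t$ so small that $C_1/\sqrt{t}\ge 1/2$ the bound is automatic from $\delta_L \le 1/2$.

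Step (iii) is then routine: for any $\eta>0$ the concentration from step (i) gives $P\bigl[\sup_{t\in(0,\tau]}(\delta_L(t,z)-\delta(t,z)) > \eta\bigr]\to 0$ as $L\to\infty$, while from step (ii) we have $\delta(t,z)+\eta \le C_1/\sqrt{t}$ for all sufficiently large $z$; taking $\liminf_{L\to\infty}$ and then $\liminf_{z\to\infty}$ closes the argument. I expect the principal obstacle to be step (ii), namely quantifying how fast $g(\theta,z)$ vanishes near $\theta=1/2$, which demands a careful count of blocking configurations on the lattice and is where Assumptions~1--2 must carry the real technical burden. A secondary difficulty is to knit together the transient bound $1/\sqrt{zt}$ and the equilibrium level $1/\sqrt{z}$ into the single $z$-independent envelope $C_1/\sqrt{t}$ cleanly and uniformly on all of $(0,\tau]$.
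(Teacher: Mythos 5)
Your high-level template (mean-field ODE, $1/\sqrt{t}$ decay, concentration, nested liminfs) is broadly aligned with the paper, but the proposal has a genuine gap at the crux: you assume a closed \emph{scalar} ODE $\dot\theta = z\,g(\theta,z)-\theta$ with some unknown "blocking function," and you only posit $\dot\delta \le -\kappa z\,\delta^\alpha$ with $\alpha\ge 2$. The paper shows no such scalar closure exists: the rate at which $\theta_L$ improves cannot be written as a function of $\theta_L$ alone, because it depends on the geometry of the current configuration. What Assumptions~1--2 actually buy is a lower bound on the density $R_L(t)$ of a specific boundary feature — \emph{critical events}, i.e.\ single-step "bumps" on cluster boundaries whose resolution converts one active link into two — namely $R_L(t)\ge c_R\,\delta_L(t)^3$ (Lemma~\ref{lemma:R}). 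The cubic is derived from Assumption~\ref{assum:1} (fat clusters give $\ell(t)=O(1/\delta_L)$ and $\#\setC_L^{(nd)}\gtrsim L\delta_L^2$) combined with Assumption~\ref{assum:2} (at least $c_1/\ell$ bumps per non-dominating cluster), and it is precisely this exponent $3$ — not $\alpha\ge 2$ — that produces the $1/\sqrt{t}$ rate. With $\alpha=2$ you would get $1/t$, which is a different proposition; without pinning down the exponent you have no theorem at all. Moreover, the paper does not collapse to a single ODE: it tracks a four-dimensional vector $(\theta_L,\theta_L^{(fg)},\theta_L^{(f)},\tilde\theta_L)$ of slow/fast densities, writes a linear-plus-$\delta^3$ system \eqref{eq:difference:eqs}, and extracts the slow dynamics $\dot y_1 = \tfrac{2}{3}c_R(0.5-y_1)^3$ by showing the fast variables equilibrate at $O(1/z)$ (Lemmas~\ref{lemma:t1},~\ref{lemma:y:diff:B}).

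Two secondary issues. First, your $z$-scaling is off: you obtain transient/equilibrium bounds $C'/\sqrt{zt}$ and $C''/\sqrt{z}$ from a $z$-accelerated drift, whereas the paper's effective slow ODE is $z$-independent (the $z$ cancels through the quasi-stationary fast variables $\theta^{(fg)},\theta^{(f)},\tilde\theta\sim 1/z$), which is exactly why the envelope $C_1/\sqrt{t}$ is uniform in $z$ and $L$ — you would need to prove this cancellation rather than hope that large $z$ dominates. Second, the ODE comparison cannot start from the idle initial condition $\theta(0)=0$: the linearization/comparison requires $\theta_L$ bounded away from $0$ and $0.5$, so the paper first proves (Lemma~\ref{lemma:global:bound}) that with probability $\to 1$ the process enters such an interval by $t_0=1$ and launches the ODE comparison from there, absorbing $t\le 2$ into the trivial bound $\delta_L\le 1/2$. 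Your sketch would need these ingredients — above all, the concrete cluster-geometry argument producing the cubic drift — before it becomes a proof.
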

\begin{proof}
  Proof is provided in Appendix~\ref{sec:proof:t1}.
\end{proof}
Proposition~\ref{result:1} states that for every finite time-horizon $(0,\tau]$, with probability 
approaching one as first the network size $L$ approaches infinity and then $z$ approaches infinity, the distance  $ \delta_{L}(t,z)$ between  $\theta_{L}(t,z)$ and the maximum fraction of active links $0.5$ converges to $0$ and drops as $O(\frac{1}{\sqrt{t}})$ for $ t \in (0,\tau]$.

The above convergence has two important implications. First, under the classical CSMA policy,
the distance to maximum schedules asymptotically drops as $O(\frac{1}{\sqrt{t}})$, only depending on time $t$. Second, as the 
$O(\frac{1}{\sqrt{t}})$ bound does not depend on the network-size $L$ or attempt-rate $z$, the convergence is not negatively
affected by a large $L$ or large $z$.
This is in a stark contrast to the results obtained for for the mixing time of CSMA policies, i.e., the rate at which CSMA policies reach their steady-state, which increases with attempt-rate $z$ and
 can be exponential in the network size $L$~\cite{borgs_torpid_1999}.

To illustrate the convergence behaviour, we have simulated
a $n\times n $ lattice, $n\in\{20,30,50,100\}$, under the classical CSMA policy with $z=100$.
Fig.~\ref{fig:rho_t:400:10000} shows $\theta_{L}(t,z) $, averaged over $20$ simulation runs, for each
lattice. As predicted by Proposition~\ref{result:1},
 convergence behaviour becomes independent of the network size for large lattices. In fact, 
$\theta_{L}(t,z)$ for the largest lattice can be very closely fitted by a curve of the form 
$0.1(1+0.4t)^{-0.5}$, which drops to zero as $1/\sqrt{t}$, as stated by the proposition.

\begin{figure}[tp]
\centering
\includegraphics[width=.35\textwidth]{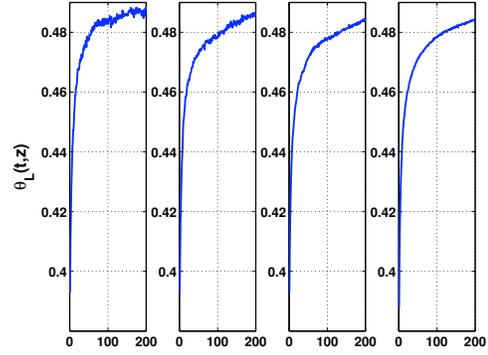}
\caption{Convergence of $\theta_L(t,z)$ for lattices of size $20 \times 20$, $30 \times 30$, $50 \times 50$, and $100 \times 100$, 
from left to right, respectively ($z=100$).}
\label{fig:rho_t:400:10000}
\end{figure}

\subsection{Delay-Throughput Trade-off under U-CSMA Policy}\label{sec:result:2}

Proposition~\ref{result:1} states that under the classical CSMA policy, the distance to maximum
 schedules converges to zero at a rate independent of the network
size in the limit of large network sizes and attempt rates. Our second result stated in Proposition~\ref{result:2} characterizes the delay-throughput trade-off under U-CSMA policy for the 
torus interference graph with uniform attempt-rate $z$ (see Section~\ref{sec:mod:lattice:uaa}). Intuitively, Proposition~\ref{result:2} states that in large networks, the delay-throughput
 trade-off under U-CSMA policy does not depend on the network size $L$.

In order to formally state the throughput-delay trade-off for any given link in the network, irrespective of its position, 
we consider the torus interference graph $\setT_{L}$ 
(see Section~\ref{sec:mod:lattice:uaa})
 instead of the lattice interference graph $G_{L}$. For the lattice interference graph and similar topologies, it is well
known that due to boundary effects, the throughput achieved by links in the network is not uniform over all links in the network 
when a uniform attempt rate $z$ is used \cite{durvy:08}. The torus interference graph is symmetric with respect to link positions, and as a result
boundary effects do not exist. While we develop the analysis for the torus interference graph, the general insight gained through the analysis carries over to more general settings, as discussed in 
Section~\ref{sec:approach}

To state Proposition~\ref{result:2}, we introduce several definitions. We first note that by
 Proposition~\ref{result:1}, for the torus interference graph $\setT_{L}$ and a given $\tau>0$, we can
 define a non-negative function $\eps_{p}(L,z,\tau)$ such that we have
 \begin{align}
   P\left[ \sup_{t \in (0,\tau]} \left [ \delta_{L}(t,z) - \frac{C_{1}}{\sqrt{t}} \right ] \leq  0 \right]\geq 1-\eps_{p}(L,z,\tau),
 \end{align}
and 
\vspace{-.1in}
\begin{align}
  \limsup_{z\to \infty} \limsup_{L\to \infty} \eps_{p}(L,z,\tau)=0.
\end{align}

For a given $\eps'>0$, the above limit allows us to define $z(\eps',\tau)$ and $L(z,\eps',\tau)$ such that for 
$z>z(\eps',\tau)$ and $L>L(z,\eps',\tau)$, we have 
\vspace{-.1in}
\begin{align}
  \eps_{p}(L,z,\tau)< \frac{1}{2} \eps'.
\end{align}

Furthermore, for a given uniform packet arrival-rate $\lambda$, $0 < \lambda < 0.5$, and a given 
uniform attempt-rate $z$ (see Section~\ref{sec:mod:lattice:uaa}), we define $Q_{l}(t,z,\lambda)$ 
as the queue size of link $l$ at time $t$.

\begin{figure}[tp]
\centering
\includegraphics[width=.35\textwidth]{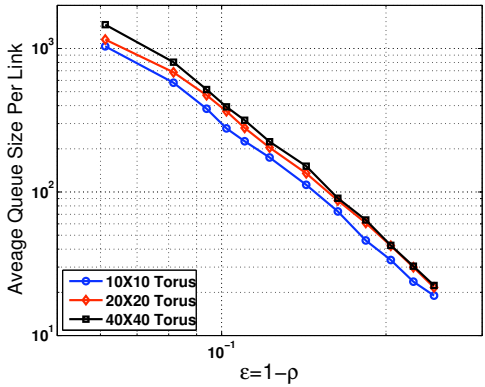}
\caption{$\log$-$\log$ plot of average queue-size
 as a function of distance $\eps$ to the maximum load $\rho=1$, under U-CSMA policy in torus interference graph.}
\label{fig:torus_q_log} 
\vspace{-.2in}
\end{figure}

Using the above definitions, Proposition~\ref{result:2} is given as follows. 


\begin{proposition}\label{result:2}
Consider the torus interference graph $\setT_{L}$, 
and suppose Assumptions~\ref{assum:1}-\ref{assum:2} hold for $\setT_{L}$.
Let the uniform packet arrival rate to each link be $\lambda$, corresponding to load
$\rho(\lambda)$. 
Let the unlocking period $T(\lambda)$ used by the U-CSMA policy be
$$T(\lambda) = \frac{(16C_{1})^{2}}{\eps^{2}}=\boldsymbol{\Theta} \left(\frac{1}{\eps^{2}} \right)$$
where 
\vspace{-.1in}
$$\eps=\eps(\lambda)=1-\rho(\lambda),$$
 and $C_{1}$ is a constant given in Proposition~\ref{result:1}.
Then, the there exists a positive constant $C_2$ such that for $z>z(\eps,T(\lambda) )$ and $L>L(z,\eps, T(\lambda)) $, the time average of the queue size for any link $l$ in
$ \setT_{L}$ satisfies the following under U-CSMA policy with the unlocking period $T(\lambda)$:
\begin{align}
\limsup_{t\to \infty } \ev\left[\frac{1}{t} \int_{0}^{t} Q_{l}(t,z,\lambda) \ dt\right] 
<  \frac{C_{2}k_{\frac{\eps}{16}}}{\eps^{3}}=\boldsymbol{\Theta} \left(\frac{k_{\frac{\eps}{16}}}{\eps^{3}} \right)
\nonumber 
\end{align}
where $k_{\eps}$ is defined by the arrival process in Section~\ref{sec:mod:lattice:uaa}.
\end{proposition}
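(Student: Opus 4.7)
The plan is to combine the fast schedule-convergence result of Proposition~\ref{result:1} with a standard quadratic Lyapunov-drift argument applied over one unlocking period. Since U-CSMA forces all links silent and then restarts the classical CSMA dynamics at each time $iT$, every interval $[iT,(i+1)T]$ begins from the same initial condition $\theta_L(0,z)=0$. By the torus symmetry of both the attempt rates and the initial state, the probability that any fixed link $l$ is active at time $t$ relative to the restart equals $\ev[\theta_L(t,z)]$, so the expected service accumulated by link $l$ during a period is $\int_0^T \ev[\theta_L(t,z)]\,dt$.

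First I would use Proposition~\ref{result:1} together with the choice $T=(16C_1)^2/\eps^2$ to bound the per-period per-link service rate from below. On the good event of Proposition~\ref{result:1}, which holds with probability at least $1-\eps/32$ for $z>z(\eps/32,T)$ and $L>L(z,\eps/32,T)$, we have $\theta_L(t,z)\geq 0.5 - C_1/\sqrt{t}$, so
$$\frac{1}{T}\int_0^T \theta_L(t,z)\,dt \;\geq\; 0.5 - \frac{2C_1}{\sqrt{T}} \;=\; 0.5 - \frac{\eps}{8}.$$
Absorbing the low-probability bad event (where $\theta_L\geq 0$ trivially) into a small additive loss gives an expected per-link service rate of at least $0.5-\eps/8-o(\eps)$ per period.

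Next I would use the arrival-concentration assumption~\eqref{eq:def:arrival:exp} with parameter $\eps/16$: as soon as the averaging window exceeds $k_{\eps/16}$, the conditional expected arrival rate at link $l$ is at most $\lambda+\eps/16 = 0.5-7\eps/16$. Taking the drift window to be $T'=\max(T,k_{\eps/16})$ and writing $V(\bfQ)=\tfrac{1}{2}\sum_l Q_l^2$, the standard expansion yields
$$\ev[V(\bfQ(t+T'))-V(\bfQ(t))\,|\,\setH_s(t)] \;\leq\; \tfrac{1}{2}\sum_l \ev[(A_l-S_l)^2] - \tfrac{5\eps T'}{16}\sum_l Q_l(t),$$
where $(A_l-S_l)^2 \leq (A_{max}+1)^2 T'^2$ since arrivals and services per window are $O(T')$. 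This is a negative Foster-Lyapunov drift of order $\eps\,T'\,\|\bfQ\|_1$ against a second-moment term of order $L\,T'^2$; telescoping in time and applying the Lyapunov-drift theorem gives
$$\limsup_{t\to\infty}\frac{1}{t}\int_0^t \ev[Q_l(s)]\,ds \;=\; O\!\left(\frac{T'}{\eps}\right) \;=\; O\!\left(\frac{\max(T,k_{\eps/16})}{\eps}\right) \;=\; O\!\left(\frac{k_{\eps/16}}{\eps^3}\right),$$
using $T=\boldsymbol{\Theta}(1/\eps^2)$ and $k_{\eps/16}\geq 1$.

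The main obstacle is propagating the high-probability schedule-convergence guarantee of Proposition~\ref{result:1} into a clean expectation-level bound on the per-link service, because the bad event and the queue states are coupled across periods. I would handle this by conditioning on $\setH_s(iT)$ at the start of each period and observing that the unlocking mechanism makes the CSMA evolution inside a period depend only on its own restart and on fresh exponential clocks, so the schedule-convergence event in period $i$ is conditionally independent of $\bfQ(iT)$. A secondary technical point is passing from the discrete-period drift inequality to a continuous-time time-average queue bound; this is routine because $Q_l$ changes by at most $O(T')$ across a single period, and this correction is absorbed into the leading order $O(k_{\eps/16}/\eps^3)$.
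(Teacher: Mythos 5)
Your proposal follows essentially the same route as the paper's own proof in Appendix~\ref{sec:proof:t2}: a quadratic Lyapunov drift over multi-period windows, the identification of the per-link service fraction with the active-link density $\theta_{L}(t,z)$ by torus symmetry, Proposition~\ref{result:1} together with the choice $T=(16C_{1})^{2}/\eps^{2}$ to make $\tfrac{1}{T}\int_{0}^{T}\theta_{L}(t,z)\,dt \geq 0.5-\Theta(\eps)$, the arrival-concentration constant $k_{\eps/16}$, and the crucial observation that the unlocking restart makes the CSMA evolution in each period conditionally independent of $\bfQ(iT)$. That last observation is precisely how the paper justifies applying Proposition~\ref{result:1} in every period conditioned on $\setH_{s}(T_{i})$, so you have recovered the main structural idea.

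Two small slips worth fixing. First, you take the bad event of Proposition~\ref{result:1} to have probability at most $\eps/32$, which forces $z>z(\eps/32,T)\geq z(\eps,T)$; the statement only promises the bound for $z>z(\eps,T)$, i.e.\ with $\eps_{p}<\eps/2$, so you would be proving the result under a strictly stronger hypothesis. The paper shows that even with $\eps_{p}<\eps/2$ the bad-event loss is only $\eps/4$, and after accounting for the loss of the first unit time in $\int_{T_{i}}^{T_{i+1}}\theta_{L}\,dt$ the service rate is still $>0.5-\tfrac{3}{8}\eps$, giving a drift coefficient $\eps'=\eps/8$ that suffices. Second, your drift window $T'=\max(T,k_{\eps/16})$ need not be an integer multiple of the unlocking period $T$ when $k_{\eps/16}>T$, so the per-window service lower bound is not directly a sum over complete unlocking periods; the paper sidesteps this by taking the window to be $k_{\eps/16}\,T$, which is simultaneously $\geq k_{\eps/16}$ (so arrival concentration applies) and a multiple of $T$ (so the period-wise service bound sums cleanly), and this product also matches the final $\Theta(k_{\eps/16}/\eps^{3})$ bound without the extra case analysis hidden in your $\max$.
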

\begin{proof}
  Proof is provided in Appendix~\ref{sec:proof:t2}.
\end{proof}

Proposition~\ref{result:2} states that in order to get $\eps$ close to
 the maximum load of $\rho=1$, the expected time average of
 any queue-size in the network becomes only 
 $O\big(k_{\frac{\eps}{16}} / \eps^{3}\big)$, independent of network-size $L$ for large $L$. This is achieved by choosing the unlocking period $T$ to be
on the order of $\frac{1}{\eps^{2}}$. By Little's Theorem, we have that the average delay is also $O\big(k_{\frac{\eps}{16}}/ \eps^{3}\big)$. 
We note that
$k_{\eps}$ represents the rate by which the arrival process converges to it expected value (in the sense of \eqref{eq:def:arrival:exp}). Hence, we expect this rate 
to appear in the average queue size and the average delay of any link. In the case where every unit of time, packets arrive to each queue according
 to an i.i.d process, we have $k_{\eps}=1$. For such a case, we have that 
the average delay for any given link is
\begin{align}
  O\Big(\Big[\frac{1}{\eps} \Big]^{3}\Big). \label{bound:d:ucsma:ind}
\end{align}

 Quite surprisingly, the above delay-bound and the resulting throughput-delay trade-off are valid for arbitrarily large 
torus networks as long as $z>z(\eps, T(\lambda))$. Moreover, since $C_{2}$ in the proposition 
is a constant, the delay-bound does not depend on the network-size $L$, and hence, we have an order-optimal average delay. This makes
the delay-throughput trade-off under U-CSMA policy independent of the network-size $L$ for large 
$L$.
As a result, U-CSMA policy, which benefits from an unlocking mechanism,
can indeed provide high throughput with low delay for arbitrarily large 
torus networks.

To investigate the accuracy of the delay-bound in \eqref{bound:d:ucsma:ind}, 
we have replotted the queue-size as a function of $\eps$ under the simulation setup of Section~\ref{sec:approach}. The figure
shows that the queue-size increases with (average) slop 3.02 in $\log$-$\log$ scale, which, 
as expected, is
close to the exponent 3 given in \eqref{bound:d:ucsma:ind}\footnote{As mentioned in Section~\ref{sec:overview}, we have also observed an exponent
close to 3 when queue-size is plotted against $1-\rho_{u}$ for the case where U-CSMA is combined with 
congestion control in random geometric topologies, implying that a variant of Proposition~\ref{result:2} should likely be true for this case. }.

\begin{figure*}[htp]
\centering
\subfigure[Snapshot at time $t=5$, i.e., after five packet-transmission times.]{
\includegraphics[width=.32\textwidth]{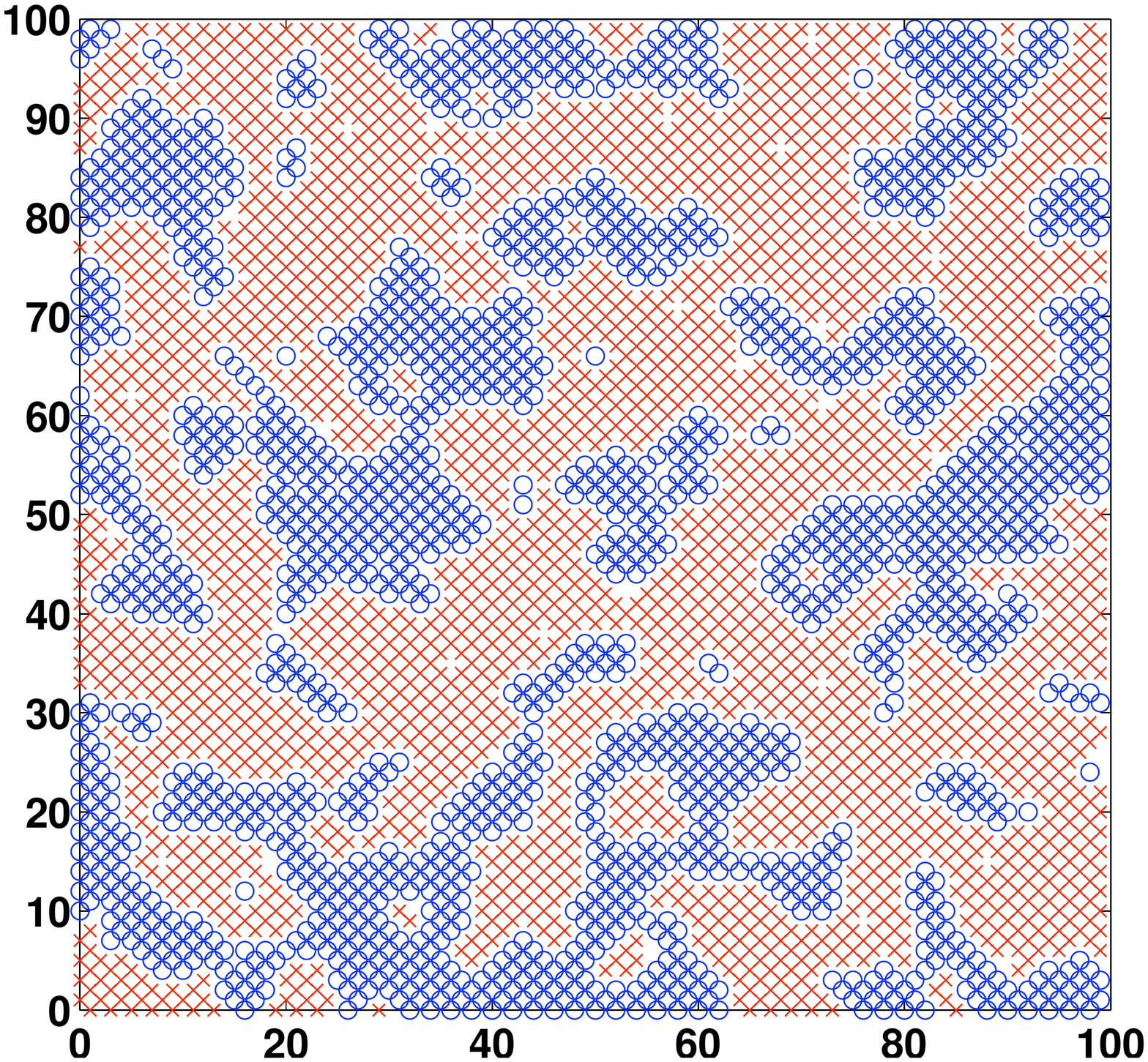}
\label{fig:snapshot1}
}
\hfill
\hspace{-.4 in}
\subfigure[Snapshot at time $t=50$, i.e., after fifty packet-transmission times.]{
\includegraphics[width=.32\textwidth]{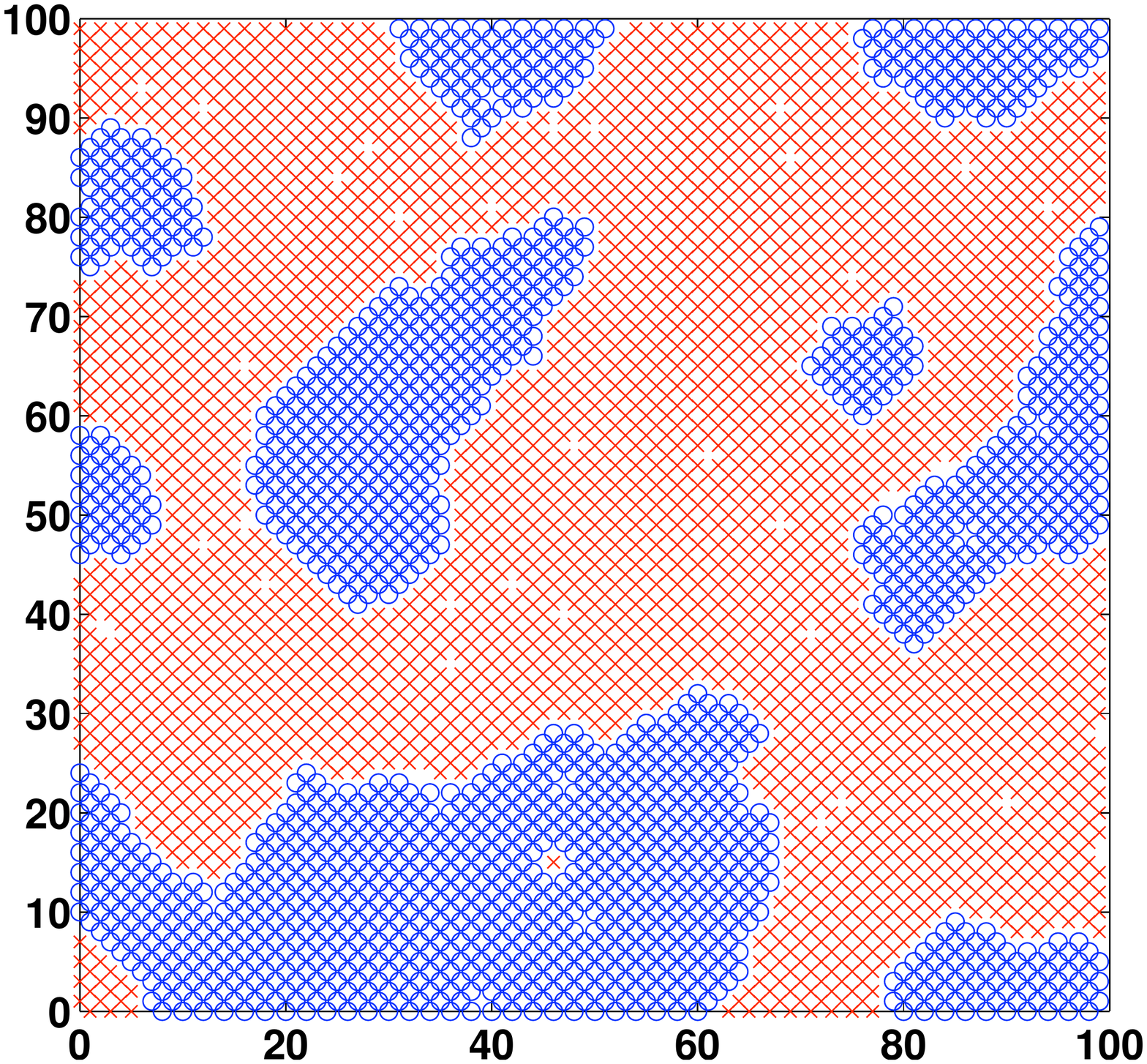}
\label{fig:snapshot2}
}
\hfill \hspace{-.4 in}
\subfigure[Snapshot at time $t=200$, i.e. after two hundreds packet-transmission times.]{
\includegraphics[width=.32\textwidth]{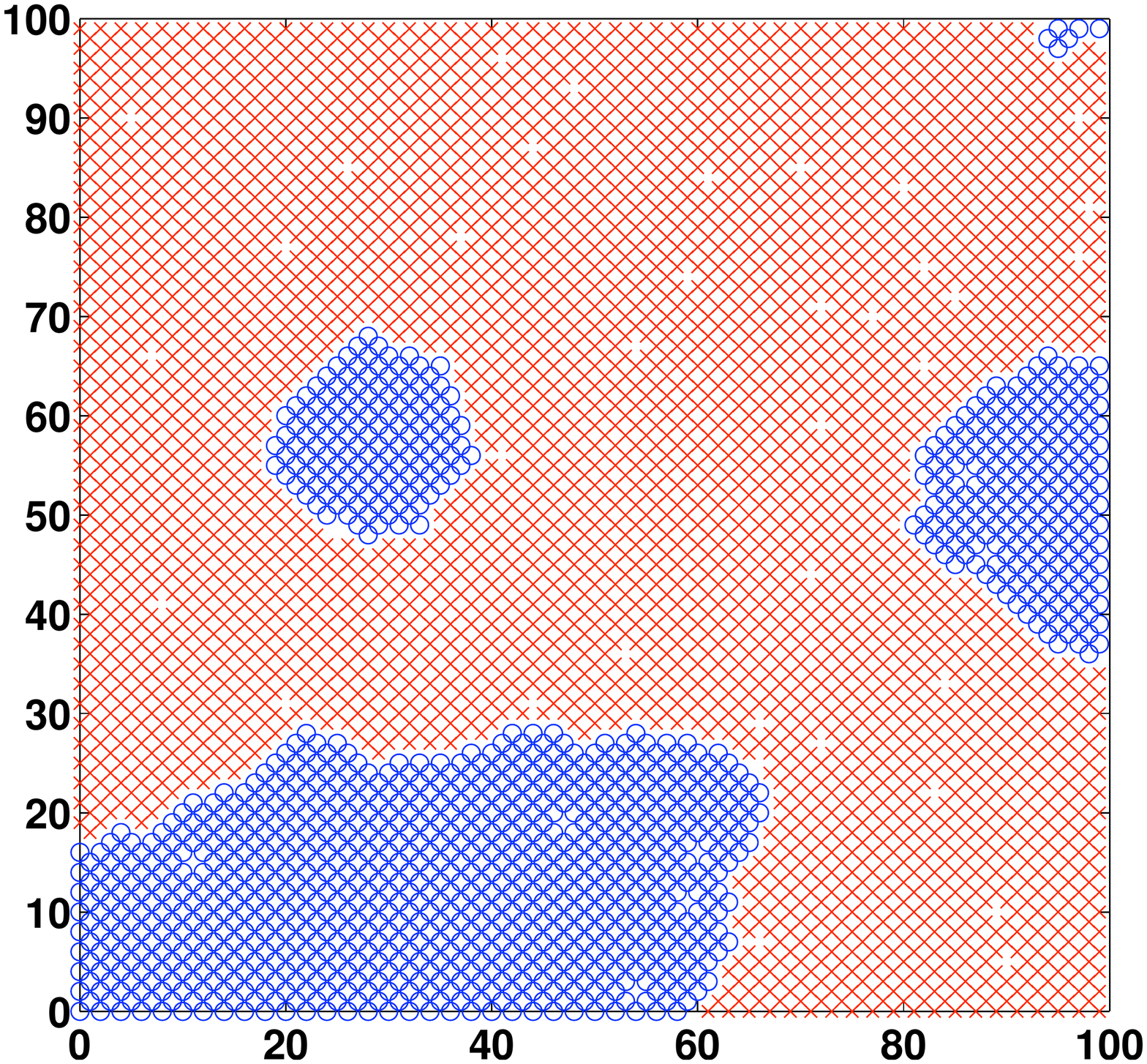}
\label{fig:snapshot3}
}
\caption{Snapshots of the schedules under the classical CSMA policy for the $100\times 100$ 
lattice. A red cross represents an odd active link and a blue circle represents an even 
active link.}
\end{figure*}

\subsection{Discussion}
Note that in Proposition~\ref{result:1} and~\ref{result:2}, we first let $L$ approach infinity and then let $z$ approach infinity.
We believe that the same result holds if one changes the order of limits. For instance, in Section~\ref{sec:flow:control}, we
consider a fix network where by using different values of the unlocking period, we effectively increase the attempt rate. The obtained results, as
illustrated in Fig.~\ref{fig:rand_q_log}, closely match of those if we could change the order of limits. We have left a formal proof
of this property for the future research.

\section{Dynamics of Schedules Under Classical CSMA Policy}\label{sec:dynamics:of:csma}
Having provided a formal statement of our main results in
 Section~\ref{sec:main}, we now turn our attention to the dynamics of schedules under
the classical CSMA policy whose characterization is the first step for
 the derivation of Proposition~\ref{result:1}~and~\ref{result:2}.

At the heart of the proofs for Proposition~\ref{result:1}~and~\ref{result:2}, lies
 the analysis of how the density $\theta_{L}(t,z)$ of active links evolves over time under
 the classical CSMA policy with uniform attempt rate $z$ where all links are idle at 
time $t=0$. To better understand the evolution of $\theta_{L}(t,z)$, consider
 Fig.~\ref{fig:rho_t:400:10000} in which $\theta_{L}(t,z)$ is plotted for a $100\times 100$
lattice. Recalling that each unit of time equals to one packet transmission time, we make the following observations:
\begin{enumerate}
\item At time $t=0$, all links are idle; thus, $\theta_{L}(t,z)=0$.
\item At $t=5 $, i.e., after five packet-transmission times, since the attempt rate $z$ is high, the density $\theta_{L}(t,z)$ increases quickly to $0.39$.
\item At time $t=50$, the density $\theta_{L}(t,z)$ increases to $0.47$.
\item At time $t=200$, the density $\theta_{L}(t,z)$ is slowly reaching to the
 limit of approximately $0.49$.
\end{enumerate}

The evolution of $\theta_{L}(t,z)$ as explained above is a function of the dynamics of CSMA schedules. To see how these dynamics
affect the evolution of $\theta_{L}(t,z)$, in Fig.~\ref{fig:snapshot1}-Fig.~\ref{fig:snapshot3}, we have shown three snapshots of the $100\times 100$ lattice
 interference graph under the classical CSMA policy at times $t\in\{5,50,200\}$.
In these snapshots, we have shown only the active links. We have shown the even active links by blue circles and 
the odd active links by red crosses (see Section~\ref{sec:mod:lattice:uaa} for the definition of odd and even links). Fig.~\ref{fig:snapshot1}-Fig.~\ref{fig:snapshot3} illustrate the following typical characteristics of
the dynamics of schedules under the classical CSMA policy: 
\begin{enumerate}
\item[(a)] After an initial transient behavior,  e.g., at time $t=5$ after five packets transmission times, \emph{clusters}\footnote{For a formal definition of
clusters, their boundaries and areas see Appendix~\ref{sec:definitions}. } of active links have emerged where in each cluster
 all active links are odd and belong to $\setL^{(o)}$, or all active links are even and belong to $\setL^{(e)}$. 
\item[(b)] Shortly after the network starts, e.g., at time $t=5$, clusters of add active links or clusters of even active links
seem to be uniformly distributed over the interference graph.
\item[(c)] Over time, e.g., when we reach time $t=50$, clusters of odd active links or otherwise clusters of even active
links start to dominate the schedule. In Fig.~\ref{fig:snapshot2}-\ref{fig:snapshot3}, clusters of odd active links are dominating the schedule.
\item[(d)] Over time, e.g., at time $t=200$, the dominant clusters dominate further and grow in size.
As a result, the corresponding schedule becomes similar to 
and approaches a maximum schedule in which only one type
of links, odd or otherwise even, are active. 
\end{enumerate}

Based on the above observations, we can see that Proposition~\ref{result:1} states how
 fast the schedule under classical CSMA policy consisting of dominant clusters approaches
a maximum schedule with only odd active links, or only even active links, 
in which half of the links are active. Therefore, to prove Proposition~\ref{result:1}, as the first step, we need to 
analyze the dynamics of CSMA clusters and their evolution over time. Our analysis of the dynamics of CSMA clusters is based on two assumptions on
the properties of CSMA clusters. We formally introduce the two assumptions in the next section. These assumptions are made in order to make the analysis tractable and we comment on this in more details in Section~\ref{sec:eden}. These assumptions are the only assumptions that we use to formally prove Proposition~\ref{result:1} and~\ref{result:2}
 in Appendix~\ref{sec:proof:t1} and~\ref{sec:proof:t2}, respectively.

\section{ Regularity and Randomness Assumptions}\label{sec:assumptionss}

Our analysis leading to Proposition~\ref{result:1} and~\ref{result:2} 
is based on two assumptions on the properties of the clusters
that emerge under the classical CSMA policy on the lattice or torus interference graph 
 with a uniform attempt rate $z$. The first assumption is a \emph{regularity} assumption which states that the geometry of the clusters can not be arbitrary, but satisfy some minimal regularity assumption. The second assumptions is a \emph{randomness assumption}, which states that
 while clusters should satisfy some minimal regularity assumption, 
they cannot be \emph{too regular} and need to satisfy some minimal randomness assumption. In the following, we first introduce the definitions
and notations required to formally state the assumptions.

\subsection{Definitions and Notations}\label{sec:def:not}

 For the purpose of illustration, we assume that
each link $l$ in $G_{L}$ or $\setT_{L}$ represented by coordinates $(i, j)$ can be interpreted
and mapped to the point $(i, j)$ in $\mathbb{R}^{2}$. With such an extension,
we have mapped the vertex set $\setL$ of $G_L$ or $\setT_{L}$ to a subset of points
in $\mathbb{R}^{2}$.

We use the following definitions and notations, valid for both lattice and torus interference graphs. These
definitions are formally presented in Appendix~\ref{sec:definitions}. 
For a given cluster $\setC$, e.g., the cluster of even active links inside the shaded area in Fig.~\ref{fig:cluster_def}, we use $\partial \setC$ to denote its boundary, and use $\ell(\setC)$ to refer to the length of the
 boundary $\partial(\setC)$. We also use $A(\setC)$ to refer to the scalar
 value of the area that cluster $\setC$ covers. Fig.~\ref{fig:cluster_def} illustrates these definitions. Here, the area that a cluster covers and the length of the boundary of a cluster have their usual
 meaning for geometric objects under the euclidean geometry in $\mathbb{R}^2$. We use $\setC_L(t,z)$ to
 denote the set of all clusters that exist at time $t$ in the network of size $L$ under the classical CSMA policy with attempt rate $z$, i.e., 
$$\setC_{L}(t,z)=\{\setC_{i}, 1\leq i\leq \# \setC_{L}(t,z)\},$$
where we define $\# \setC_{L}(t,z)$ to be the number of clusters at time $t$.

At any given time, either the number of even active links is the same as the number of odd active links, or the number of one type (odd or even)
of active links is less than the number of the other type (even or odd, respectively) of active links. In the first case, we define the 
\emph{non-dominating type} to be either
of the odd or even type. In the second case, we define the non-dominant type to be the type of active links whose contribution,
in terms of the number of active links, to the CSMA schedule is
less the contribution of the other type. For instance, in Fig.~\ref{fig:snapshot3}, even is the non-dominating
type. Using this definition, we define $\setC^{(nd)}_{L}(t,z)$ to be
 the set of clusters of non-dominating type of active links at time $t$, and $\# \setC^{(nd)}_{L}(t,z) $ to be the number of such 
clusters. We define a non-dominating cluster to be a cluster that belongs to $\setC^{(nd)}_{L}(t,z) $. 

We also define the set $\setC_{L}^{(nd)}(t,z,\ell)$ to be the set of all non-dominating clusters whose boundary-length is equal to $\ell$, i.e.,
\begin{align}
  \setC_{L}^{(nd)}(t,z,\ell)=\{\setC_{i}: \ \setC_{i}\in\setC_{L}^{(nd)}(t,z), \ell(\setC_{i})=\ell \}.
\end{align}
We define $\# \setC_{L}^{(nd)}(t,z,\ell) $ to be the number of non-dominating clusters with boundary length $\ell$ at time $t$.

We next state the two assumptions, i.e, the regularity and the randomness assumptions.

\begin{figure}
\centering
\includegraphics[scale=0.35]{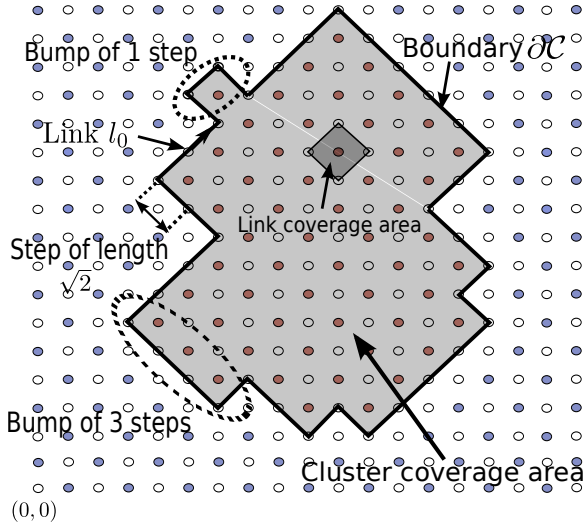}
\caption{Demonstration of active links, link coverage area, cluster coverage area
 and its boundary, and bumps in the lattice interference graph $G_{L}$. Active links
are colored.}
\label{fig:cluster_def}
\end{figure}

\subsection{Regularity Assumption}

The first assumption is a regularity assumption on the geometry of a cluster. 
The intuition behind Assumption~\ref{assum:1} is that clusters do not prefer 
a particular direction when they grow or shrink, i.e., clusters tend to grow or shrink at similar rates in all directions. As a result, it must not
be true that clusters stay \emph{thin} so that they have grown only in one direction and essentially look like a
 line. In other words, clusters must be \emph{fat} so that the have grown or shrunk at similar comparable 
rates in all directions.

Formally, we can define a cluster to be fat if the ratio of its area to the square of its
 boundary length satisfies the following
\begin{align}
   \frac{A(\setC)}{\ell(\setC)^2} \geq c_{a}, \label{ineq:ratio:Al2}
\end{align}
for some constant $c_{a}>0$. For instance, if clusters are rectangular and for which, the
 length of the both sides of
 the rectangles grow by the same factor, as we increase the network size $L$, we then have that the above ratio stays lower-bounded at a constant value. On the other
hand, if clusters grow in only one direction and look like a line, as the network size $L$
 increases, then the above ratio approaches zero.

Instead of stating that every cluster is fat so that \eqref{ineq:ratio:Al2} holds, Assumption~\ref{assum:1} states that \emph{on the average} non-dominating 
clusters are fat so that the ratio
$$ \frac{ \sum_{\setC \in \setC_{L}^{(nd)}(t,z)} A(\setC) } { \sum_{\setC \in C_{L}^{(nd)}(t,z)} \ell(\setC)^2},$$ 
is properly lower-bounded, where the ratio is equal to the ratio of average area covered by a cluster to the average of the square of its boundary length, over all non-dominating clusters 
at time $t$. Assumption~\ref{assum:1} is stated in the following:
 \begin{assumption}\label{assum:1}
For the lattice and torus interference graphs, there exists a positive constant $c_a$ such that
 clusters under the classical CSMA policy
 with uniform attempt rate $z$ 
satisfy the following:
  \begin{align}
    \liminf_{L \to \infty} \inf_{z\geq 1, t\geq t_{0}=1} 
\frac{ \sum_{\setC \in C_{L}^{(nd)}(t,z)} A(\setC) } { \sum_{\setC \in C_{L}^{(nd)}(t,z)} \ell(\setC)^2}> c_a.
  \end{align}
\end{assumption}


The reason why we consider non-dominating clusters is that dominating clusters typically are large clusters that often are not fat.
To see this, consider Fig.~\ref{fig:snapshot2} in which even (blue) clusters are the non-dominating clusters. Intuitively, in this figure,
even clusters can be considered as ``islands'' in a ``sea'' of a large odd (red) cluster.
While the islands, i.e., even clusters in Fig.~\ref{fig:snapshot2} look fat, the ``sea'', i.e., the large odd cluster in the figure
 essentially contains all cluster boundaries and relative to even clusters is thin. 

We also note that in Assumption~\ref{assum:1}, we first take an $\inf$ over $z$ and $t$. 
This $\inf$ makes the ratio lower-bounded
independent of $t$ and $z$.
 We choose $t_{0}$ to be bigger than zero since at time zero there are no clusters. As far as the analysis is
concerned, the choice for a positive constant $t_{0}$ is arbitrary. Finally, we have chosen $z\geq 1$ to ensure that clusters 
exist in the limit of large lattices or toruses.


\subsection{Randomness Assumption}\label{sec:randomness}
Assumption~\ref{assum:1} requires that clusters (at least non-dominating clusters) are 
regular and fat based on the intuition that clusters 
grow or shrink at a similar rates at all directions. 
Assumption~\ref{assum:2} on the other hand requires that clusters are not \emph{too regular} but satisfy some minimal randomness assumption with
respect to the geometry of their boundaries. 
To formally state Assumption~\ref{assum:2}, we first provide the intuition behind this assumption. This intuition does not serve as 
a proof for the assumption. 

Consider the non-dominating cluster $\setC$ of even active links (inside the shaded area) in Fig.~\ref{fig:cluster_def}. 
Recall that we use $\partial \setC$ 
to denote the boundary of cluster $\setC$, as shown by solid lines in Fig~\ref{fig:cluster_def}. We see that on the boundary of the cluster, there are 
a number of inactive links. Consider one arbitrary such inactive link $l_{0}$ on the boundary $\partial \setC$, e.g., link $l_{0}$ in 
Fig.~\ref{fig:cluster_def}. Consider moving along the boundary starting
from link $l_{0}$ in either of the two possible directions, e.g., the one shown in the figure. Let $l_{i}$ be the $i$th link visited along the boundary. Suppose after visiting 
$n$ distinct links, we return to link $l_{0}$. In such a case, we  will close the loop and have $l_{n+1}=l_{0}$. 

 We note that moving along the boundary is possible 
in \emph{steps} of length $\sqrt{2}$, as shown in Fig.~\ref{fig:cluster_def}. We define the $i$th step to be the step from link $l_{i-1}$ to link $l_{i}$, $i\geq1$. For each step, we
define its direction to be \emph{direction} of the movement from link link $l_{i-1}$ to link $l_{i}$. For the cluster $\setC$, let 
$n_r(\setC)$ be the number steps taken to return to the starting link $l_{0}$. Since each step is $\sqrt{2}$ in length, and that the boundary
length of cluster $\setC $ is defined as $\ell( \setC)$, we have
\begin{align}
  n_r(\setC)= \frac{\ell(\setC)}{\sqrt{2}}. \label{ineq:n_r}
\end{align}

Now, suppose the only information 
available about a cluster $\setC$ is its boundary length, i.e., we have that  
$$\ell(\setC)=\ell.$$
While moving along $\partial \setC$, to close the loop, we inevitably need to make direction changes. Formally, we define 
a direction-change event at the $i$th step to be the event where the direction in the $i$th step is different than the previous $(i-1)$th step. 
If $i=1$, we compare the direction of the first step with the direction of the last step returning to $l_{0}$. Let $p_{\ell}$ be the probability that 
there is a direction change clockwise at a given step on the boundary. Due to symmetry, $p_{\ell}$ is also the probability of 
counter-clockwise direction change. Since we close the loop, we have that there must be at least four direction changes. 
Hence, for the expected number of direction changes, we have
\begin{align}
  n_r(\setC) 2 p_{\ell}\geq 4.
\end{align}
Hence, using \eqref{ineq:n_r}, we must have that 
\begin{align}
  p_{\ell}\geq  \frac{c}{\ell} \label{ineq:pl:ell}
\end{align}
where 
\begin{align}
  c=2\sqrt{2}.
\end{align}

The probability $p_{\ell}$ is the probability that there is direction change clockwise (or counter-clockwise) at a given step on
 the boundary of a cluster with length $\ell$. 
However, since we close the loop, it is clear these direction changes are correlated on the boundary of the given cluster. Moreover, these
direction changes can be also correlated on two different clusters. However, as the boundary length $l$ grows, for a fixed $i$, 
we expect direction changes on the $i$th  step to become independent of direction changes at the $(i-1)$th, $(i-2)$th, ..., and the first steps.
 Assumption~\ref{assum:2} uses this intuition to assume and state a property for the boundary of clusters for all $\ell$.

To state the property, consider moving along the boundary $\partial \setC$ of the cluster $\setC$. While moving along the boundary, we 
may encounter \emph{bumps} of $n$ steps. Fig.~\ref{fig:cluster_def} shows bumps of one and three steps. 
Formally, a bump of $n$ steps at a given step occurs when starting at the given step, as the  
first step, 1) there is a direction change in the second step, 2) after the direction change, there is no direction change in the next $n-1$ steps, and
3) at the $(n+2)$th step there is a direction change so that the new direction is opposite to the direction at the first step. 
Hence, when a bump of $n$ steps occurs, for the first time, a direction reversal occurs at the $(n+2)$th step.

Now, consider all non-dominating
clusters of boundary-length $\ell$, and suppose an \emph{independence assumption} holds so that a direction change
along the boundary of a non-dominating cluster with boundary length $\ell$ occurs with probability $p_{\ell}$ independently of any other direction on
 the boundary of the same cluster or other clusters. Using this independence assumption, the probability of having
 a bump of $n$ steps at a given step is
$$ p_{l}(1-p_{l})^{n-1}p_{l}.$$

Let $N^{(b)}_{\setC}(n)$ be the number of bumps of $n$ steps on the boundary of cluster $\setC$. 
Since there are $n_r(\setC)$ steps on the boundary, using \eqref{ineq:n_r} and \eqref{ineq:pl:ell}, for 
the expected number of bumps of $n$ steps on the boundary of cluster $\setC$, we have
\begin{align}
\ev[ N^{(b)}_{\setC}(n)] & = n_r(\setC) p_{l}(1-p_{l})^{n-1}p_{l} \nonumber \\ 
& \geq 2 (1-p_{l})^{n-1} \frac{c}{\ell} 
\nonumber \\ & \geq \frac{c_n'}{\ell}
\label{ineq:Nb:cn}
\end{align}
where 
\begin{align}
  c_{n}'= 2c(1-p_{l})^{n-1} \geq 0.
\end{align}
In particular, for $n=1$, we have 
\begin{align}
  c_{1}'=2c >0.
\end{align}
The constant $c_{1}'$ is independent of $z$, $L$, or time $t$.

As the network size grows, for any fixed $\ell$, we expect the number of non-dominating clusters with boundary-length $\ell$ to increase. As such, if 
a law of large numbers also holds, we expect the number of bumps of $n$ steps averaged over all non-dominating 
clusters of boundary-length $\ell$ to be at least
 $\frac{c_{n}'}{\ell}$, according to \eqref{ineq:Nb:cn}. In other words, with
probability approaching one as $L$ increases,
we expect to have that 
\begin{align}
    \liminf_{L \to \infty}\frac{\sum_{\setC \in \setC_{L}^{(nd)}(t,z,\ell) } N_{\setC}^{(b)}(n) }{\# \setC_{L}^{(nd)}(t,z,\ell)} \geq  \frac{c_{n}'}{\ell}.
\end{align}

Assumption~\ref{assum:2} states that the above inequality holds with probability one: 

\begin{assumption}\label{assum:2}
For lattice and torus interference graphs, for each $n\geq1$, there exists a non-negative constant $c_{n}$ with $c_{1}>0$ such that clusters under the classical CSMA policy with uniform attempt rate $z$ satisfy the
following:
\begin{align}
    \liminf_{L \to \infty} \inf_{z\geq 1,t\geq t_{0}=1}\frac{\sum_{\setC \in \setC_{L}^{(nd)} (t,z,\ell) }
 N_{\setC}^{(b)}(n) }{\# \setC_{L}^{(nd)}(t,z,\ell)} \geq  \frac{c_{n}}{\ell}.
\end{align}
 \end{assumption}

Assumption~\ref{assum:2} implies that cluster boundaries exhibit a minimal amount of randomness and can not be \emph{too regular} or \emph{too smooth}. 
For example, by Assumption~\ref{assum:2}, it would be unlikely to have all non-dominating clusters as perfect rectangles, instead the assumption
 requires that a minimal fraction of such clusters of boundary length $\ell$ to have bumps of for instance length $n=1$ along the their boundaries so that
each such cluster contributes $\frac{c_{1}}{\ell}$ bumps of unit length on the average. As explained earlier, this behaviour would be expected if 
direction changes occurred independently over cluster boundaries. This states that the randomness assumption can be viewed as a
 consequence of, and hence weaker than, 
an independence assumption for the
direction changes along the cluster boundaries. For similar reasons as explained
 for Assumption~\ref{assum:1}, Assumption~\ref{assum:2} takes
 an $\inf$ over $z$ and $t$.

\subsection{Discussion}\label{sec:eden}
A few comments on Assumption~\ref{assum:1} and~\ref{assum:2} are in order. A natural question that arises in this context
is that whether we can prove the conditions of Assumption~\ref{assum:1} and~\ref{assum:2} for 
clusters under the classical CSMA policy. To answer this question, we note 
that there has been considerable effort in trying to derive regularity properties as given in Assumption~\ref{assum:1} for spatial random processes. 
But only for processes that are much simpler than the CSMA process considered here, such results have been obtained. The model for which 
results have been partially obtained, and that is the closest to the random process that we consider in this paper, is the
 Eden model~\cite{peters_radius_1979}\cite{Botet:85}. 

The Eden model is a discrete-time model where initially a cluster is given by a single node, and 
at each time step exactly one node on the lattice is added to the cluster boundary, where the added node is chosen uniformly and independently 
(from all previous steps) from the set of all nodes that are next to a node on the boundary of the current cluster. For this model, it has
 been shown that \cite{peters_radius_1979} clusters indeed satisfy the regularity condition given by Assumption~\ref{assum:1}, and we have that
$$ \lim_{ A(\setC) \to \infty} \log \Big[\frac{A(\setC)}{\ell(\setC)^{2}}\Big] =C,$$
for some constant $C$.

The analysis of the Eden model heavily relies on the assumptions that 
\begin{enumerate}
\item[(a)] at each step exactly one node is added to the cluster boundary, and 
\item[(b)] nodes are added to the boundary of a cluster uniformly and independently of each other and all previous steps.
\end{enumerate}
These assumptions clearly do not hold for clusters under the classical CSMA policy. 
Lack of these assumptions makes
the analysis for CSMA clusters difficult. However, it seems true that for large clusters, the processes by clusters grow or shrink
 tend become (almost) independent at boundary points that are sufficiently far from each other. In such a case, clusters tend to grow 
or shrink in similar comparable rates at all directions, which makes it unlikely to have clusters that are too thin. This is the intuition behind 
Assumption~\ref{assum:1}.

As mentioned earlier, making Assumption~\ref{assum:1}~and~\ref{assum:2} does not make 
the analysis of the dynamics of CSMA clusters trivial. 
This analysis requires tools and techniques from mean-field theory \cite{Leboudec:08}, ODE theory \cite{markley:book,perko:book}
, and large deviation results \cite{borovkov}. 
Appendix~\ref{sec:proof:t1} provides the analysis for the CSMA clusters. 


Simulation results in Section~\ref{sec:overview} and~\ref{sec:main} verify that Assumption~\ref{assum:1}~and~\ref{assum:2} lead to the correct and 
precise characterization of the classical CSMA behaviour and delay-throughput trade-off of
U-CSMA policy. As such, we believe that making these assumptions is well validated, and
the hope is that the formulation of these assumptions will also serve as a 
 possible starting point for additional studies, possibly allowing to further relax or formally prove these assumptions.


\section{Random Geometric Topologies Under Congestion Control Combined with U-CSMA Policy}\label{sec:flow:control}
In this section, we provide the details on how to use U-CSMA policy jointly 
with a congestion control algorithm for general topologies. Congestion control is 
necessary since in practice, the capacity region $ \Gamma$ (see Section~\ref{sec:mod:ideal:csma}) 
is often not known, and
packet arrival rates could initially be outside the capacity region $\Gamma$. We provide  
a detailed look at the simulation results provided earlier in Section~\ref{sec:overview}, and show
that we can use U-CSMA policy in arbitrarily large random geometric topologies to 
assign arrival rates close to the optimal utility with a low packet delay that exhibits order-optimality 
in the sense that it stays bounded as 
the network-size increases.

For simplicity, we assume that links always have data to send and consider flows of data
instead of discrete-size data packets. Using congestion control, we like to ensure that,
1) the admitted flows are indeed supportable, and 2) the set of admitted flows
is chosen in a fully distributed manner such that a network-wide utility function is maximized.
Suppose $U_{l}(\cdot)$ is the (concave, monotonically increasing, and differentiable)
 utility function for link
$l$ as a function of its admitted long-term flow rate $r_{l}$. Suppose the objective is to find
$\{r_{l}\}$ to have the optimal utility $U_{opt}$ \cite{neely:modiano:infocom:05,lin:shroff:ton06,Jiang:09}:
\beqa
U_{opt}=\max_{\{r_{l}\} } \ \ \sum_{l} U(r_{l}), \qquad \{r_{l}\} \in \Gamma.  \label{eq:utility}
\eeqa

Let \emph{utility ratio} $\rho_{u}$ be the ratio of the achieved utility $U_{net}$ to the 
optimal utility $U_{opt}$:
\begin{align}
  \rho_{u}=\frac{U_{net}}{U_{opt}}. \nonumber
\end{align}
Let $\eps_{u}$ be the distance of $\rho_{u}$ to the optimal ratio of 1:
$$\eps_{u}=1-\rho_{u}.$$
In order to be $O(\nu)$, $\nu>0$, away from $U_{opt}$, and have
$\eps_{u}=O(\nu)$,
one approach \cite{neely:modiano:infocom:05,lin:shroff:ton06} is that each link $l$
sets its own admitted flow $\xi_{l}(t)$ at time $t$ to be
\beqa
\xi_{l}(t)=\arg\max_{0\leq\xi \leq  \xi_{max}} \Big[ \nu^{-1}  U_{l}(\xi)- Q_{l}(t) \xi \Big],
\label{eq:flowcontrol:l}
\eeqa
where $Q_{l}(t)$ is the queue of admitted flow to link $l$ at time $t$, and $\xi_{max}$ is a sufficiently large constant. As for scheduling, at any time $t$,
 MWM policy can be used that chooses
 a valid schedule $\bfI(t) \in \setI$ to solve
\beqa
\max_{\bfI(t)\in\setI} \ \ \sum_{l} Q_{l}(t)I_{l}(t), \label{eq:mwm}
\eeqa
where $\bfI(t)=(I_{1}(t))_{l\in\setL}$ with $I_{l}(t)=1$ meaning
link $l$ is active at time
$t$, and $I_{l}(t)=0$, otherwise. The set $\setI$ is the set of all valid schedules. 

Since MWM policy is hard to be implemented (see
Section~\ref{sec:related}), we are interested to use U-CSMA policy for scheduling.
 To see how we use U-CSMA policy, first consider a classical CSMA policy that sets
the attempt-rate of link $l\in\setL$ as
\beqa
z_{l}=e^{w_{l}}, \label{eq:zwl}
\eeqa
where $w_{l}$ is a weight associated with link $l$.
Let 
$$W^{*}=\max_{\bfI \in \setI} \sum_{l} w_{l} I_{l}.$$




Suppose Proposition~\ref{result:1} extends to random geometric interference graphs such that 
under the classical CSMA policy with attempt rates given in \eqref{eq:zwl} and with all links
 inactive at time $t=0$, the schedule $\bfI(t)$ used at time $t$ 
satisfies the following:
\begin{align}
  \sum_{l} w_{l}I_{l}(t) \geq \Big[1-O\Big(\frac{1}{\sqrt{t}}\Big)\Big] W^{*}, \label{ineq:prop:ext}
\end{align}
with probability approaching one in the limit of large networks. 
We note that Proposition~\ref{result:1} can be considered as a special case of the above for the 
torus interference graph with $w_{l}=1$, $l\in\setL$. The above extension essentially states
that we can use CSMA policies to \emph{approximate} MWM policy in random geometric interference graphs, consistent with the existence of PTAS for MWM in geometric graphs \cite{mazumdar:shroff:06}.

The above extension motivates us to design U-CSMA policy as follows. It resets the scheduling
pattern with requiring all links to become inactive every $T$ units of time, as described in Section~\ref{sec:approach}, and sets 
the attempt rate of any link $l$ at time $t$ be $z_{l}(t)$ where
\beqa
z_{l}(t)=e^{w_{l}(t)}=e^{\frac{Q_{l}(t)}{k T}},\label{eq:zl:csma}
\eeqa 
for a fixed integer $k$. By the above choice, we can ensure that the weights
 do not change substantially\footnote{Attempt rates that are slowly varying functions of
links queue-sizes have been used in \cite{rajag:09,ghaderi:10} to achieve throughput-optimality for CSMA 
policies.} over an interval
of length $T$. At the same time, using the unlocking mechanism with period $T$, we ensure that we 
never lock into schedules for more than $T$ time-units.

\begin{figure}[tp]
\centering
\includegraphics[width=.35\textwidth]{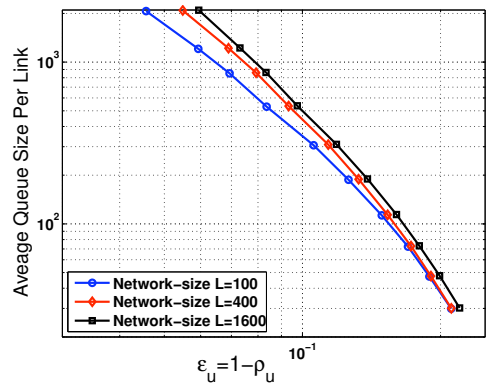}
\caption{$\log$-$\log$ plot of average queue-size under a congestion control algorithm combined
with U-CSMA policy in random geometric interference graphs, as a function 
of the distance $\eps_{u}$ to the optimal utility ratio $\rho_{u}=1$.}
\label{fig:rand_q_log} 
\end{figure}

Since we are working with $\frac{ Q_{l}(t)}{2T}$ instead of $Q_{l}(t)$, we also modify
the congestion control of \eqref{eq:flowcontrol:l} so that every link $l$ chooses its own admitted flow $\xi_{l}(t)$ to 
be 
\beqa
\xi_{l}(t)=\arg\max_{0\leq\xi \leq  \xi_{max}} \Big[ \nu^{-1}  U_{l}(\xi)-\frac{ Q_{l}(t)}{2T} \xi \Big].
\label{eq:flow:fin}
\eeqa

Analysis in \cite{neely:modiano:infocom:05,lin:shroff:ton06} shows that using the complex MWM policy
to solve \eqref{eq:mwm}
along with the distributed congestion control of \eqref{eq:flowcontrol:l}, we will have $\eps_{u}=O(\nu)$
 with average packet delay of $O(\nu^{-1})$. Using a similar analysis and assuming that 
Proposition~\ref{result:1} can be extended as described earlier, we can show that
using U-CSMA with attempt rates given in \eqref{eq:zl:csma} and distributed congestion control of
\eqref{eq:flow:fin}, we will
have $\eps_{u}=O(1/\sqrt{T})+O(\nu)$ with average delay of $O(T/ \nu)$ for large random geometric 
networks.
Choosing $\nu=1/\sqrt{2T}$, we then have that $\eps_{u}=O(\nu)$ and the average delay as
$O(\nu^{-3})$. Hence, to be $\eps_{u}$ from utility optimality, for large networks, the average
delay becomes 
\begin{align}
   O\Big(\Big[ \frac{1}{\eps_{u}} \Big]^{3}\Big), \label{bound:d:ucsma:cgc}
\end{align}
similar to the delay bound derived from Proposition~\ref{result:2} in \eqref{bound:d:ucsma:ind}.

To investigate the performance of U-CSMA policy, with attempt rates given in \eqref{eq:zl:csma},
 used jointly with the congestion of \eqref{eq:flow:fin}, we have conducted
 simulation for random geometric networks
of size $L\in\{100, 400, 1600\}$ with interference range $r$ such that on the average each
link interferes with six other links, as described in Section~\ref{sec:overview}. We have set
 $$ U_{l}(r)=\log(1+r), \qquad l\in\setL. $$
We have used different values of unlocking period $T$, and hence, different values of
$\nu$, in order to obtain different values of $\eps_{u}$. Note that
the exact value of $U_{opt}$ is difficult to compute. However, using the fact that $\log(\cdot)$ is concave, we can show that $U_{opt}$ is upperbounded by $L\log(1+r_{L})$, where $r_{L}$ is the maximum
fraction of links that can be activated in a network with $L$ links. For the setup
considered here, $r_{L}$ approaches $\frac{1}{3}$ for large $L$. We have used the
 upperbound for $U_{opt}$ to obtain
 a conservative estimate for $\eps_{u}$. 

Fig.~\ref{fig:rand_q_log} replots Fig.~\ref{fig:Q}(c) and shows
the average queue-size as a function of $\eps_{u}$ in $\log$-$\log$ scale for small $\eps_{u}$.
We observe the following. First, as $\eps_{u}$ decreases the average queue-size increases
with a slope close to 3 in $\log$-$\log$ scale, as expected by \eqref{bound:d:ucsma:cgc}, similar to 
the exponent 3 obtained in Proposition~\ref{result:2} for delay-throughput under U-CSMA for torus interference graph. 
This suggests that 
an extension of Proposition~\ref{result:1}
should likely hold. 

Second, we observe that the plots for different network-sizes behave similarly.
In particular, for large $L$, i.e., $L=400$ and $L=1600$, the average queue-sizes are very 
close. This confirms that U-CSMA exhibits the same delay order-optimality and the same 
desirable delay-throughput behaviour observed in the torus interference graph
 (see Fig.~\ref{fig:torus_q_log}). In 
particular, the simulation results show that we can indeed use U-CSMA jointly
 with congestion control in large
random geometric networks to operate close to the optimal utility with low packet delay.


\section{Conclusion}\label{sec:con}
In this paper, we have proposed U-CSMA policy as a new CSMA policy. In contrast to the scheduling 
policies in the literature, U-CSMA policy not only is simple and distributed but also provides
high throughput with low delay. Our analysis for torus topologies with uniform packet arrivals shows that 
the delay under U-CSMA is order-optimal, and hence, it stays bounded as the network size
increases. Simulations show that the same desirable delay behaviour also holds for the practical
case where U-CSMA is combined with congestion control in large random geometric networks to
maximize a network-wide utility. Our study in this
paper uses a novel approach to characterize the performance of random access policies and provides
a new prospect into the scheduling of large-scale multihop wireless networks.

\bibliographystyle{IEEEtran} 
\bibliography{IEEEabrv,UCSMA_TechRep}

%

\appendices

\section{Formal Definition of Clusters, Cluster Area, and Length of Cluster Boundary}\label{sec:definitions}


 In this section, we provide formal definitions of a cluster $\setC$, its area $A(\setC)$, and 
its boundary length $\ell(\setC)$ for the lattice interference graph $G_{L}$. Corresponding definitions for the 
torus interference graph $\setT_{L}$ are defined in a similar manner.

Given a lattice interference graph $G_L$, we denote by $\partial G_{L}$ the boundary of $G_L$, i.e.
$\partial G_{L}$ is the set of all links for which at least one coordinate is equal to 0 or $n$.

\subsection{Cluster $\setC$}
To define clusters, consider the lattice $G_{L}$, and assume that links in $G_{L}$ use the classical CSMA policy with uniform attempt rate $z$ for
transmission. Recall that in Section~\ref{sec:netw:mod}, a link $l \in \setL $ is defined to be active at time $t$ if it is transmitting at that time.
Then, at any time $t$, the set of active links can be partitioned to a finite number of \emph{clusters}. We define each
cluster $\setC_{i}$, $1\leq i\leq \# \setC_{L}(t,z)$, to have the following properties:
\begin{enumerate}
\item [a)] $C_{i}$ is a subset of active links in $G_{L}$.
\item [b)] Connectivity: For any two links $l$ and $l'$, where $l\neq l'$ and $l,l'\in \setC_{i}$, there exists a path of $n$ links
$\{l_{1},l_{2}, \cdots, l_{n}\}$ in $\setC_{i}$ for some $n\geq 2$ where $l_{1}=l$ and $l_{n}=l'$, such that
 $l_{k}=(i_{k},j_{k})\in \setC_{i}$, $1\leq k \leq n$,
and $|i_{k}-i'_{k+1}|=|j_{k}-j'_{k+1}|=1$. 
\item [c)] Maximality: $\setC_{i}$ is maximal is the sense that no further links can be added to $\setC_{i}$ without violating one of the above properties.
\end{enumerate}
 The above properties define a cluster as a maximal connected subset of active links, where each link
in the cluster can reach any other in the cluster in a sequence of links, or path, in the cluster. Considering
the mapping from $\setL$ to $\mathbb{R}^{2}$ as explained in Section~\ref{sec:def:not}, along the path, the
 euclidean distance of one link
to the next is $\sqrt{2}$. In Fig.~\ref{fig:cluster_def},
colored links inside the inner polygon represent one cluster.

By the above definition, each cluster contains only odd active links or only even active links. We define
 an odd (resp. even) cluster to be a cluster consisting of odd
(resp. even) links.




\subsection{Cluster Area $A(\setC)$}

For each cluster $\setC$, we use $A(\setC)$ to denote the scalar value of its coverage area $\setA_{\setC}$.

To define $\setA_{\setC}$,
first consider links that are inside
the lattice, i.e., all links $l\notin \partial G_{L}$.
Any such link has four interfering links on the lattice. Considering the mapping from the links in $\setL$ to
the points in $\mathbb{R}^{2}$, for any link $l$ inside the lattice, we
define its \emph{coverage-area} $\setA_{l}$ to be the \emph{square-area} formed by its four closest links. In Fig.~\ref{fig:cluster_def},
we have shown the coverage area of one active link. For any link that is
 not inside the lattice, i.e, $l\in \partial G_{L}$, we
define $\setA_{l}$ to be the intersection of the area $[0,n]\times [0,n]$ in $\mathbb{R}^{2}$ and the square
 that would exist if link $l$ were also inside the lattice. For each link $l$, we define $A_{l}$ to be the scalar value of the 
its coverage area $\setA_{l}$.

For each cluster $\setC$, we can define its \emph{coverage-area} as 
$$\setA_\setC= \cup_{l \in \setC} \setA_{l},$$
 i.e., the union of the coverage area of all links that belong to $\setC$. The coverage-area $\setA_\setC$ contains some
points $(i,j)$, where links of $G_{L}$ may be located, and also some points in $\mathbb{R}^{2}$ where links are not located.
For instance, the area inside the inner polygon in Fig.~\ref{fig:cluster_def} is the coverage-area of one cluster.


\subsection{Length of Cluster Boundary  $\ell(\setC)$}

For each cluster $\setC$, we use $\ell(\setC)$ to denote
the \emph{length of the boundary} of cluster $\setC$, i.e, the \emph{length} of its boundary $\partial \setC$.

For each cluster $\setC$, we define its \emph{boundary} $\partial \setC$ to be the boundary of its coverage-area $\setA_{\setC}$. 
Hence, $\partial \setC$ is the set of all points in $\mathbb{R}^{2}$ that any neighbourhood of which contains
 points both in $\setA_{\setC}$ and points
not in $\setA_{\setC}$ (see Fig.~\ref{fig:cluster_def}).

\section{Proof of Theorem~1}\label{sec:proof:t1}

In this Appendix, we provide the proof of Theorem~1 for the lattice interference graph $G_L$. The proof for the 
torus interference graph $\setT_{L}$ follows similarly. 

Since by \eqref{eq:def:delta:l:n}, 
$$\delta_{L}(t,z) \leq 0.5 < 1,$$
the probability limit in the theorem trivially holds for 
$\tau\leq 2$ by choosing $C_{1}\geq \sqrt{2}$. Therefore, in the rest, we consider only the case where
\begin{align}
  \tau>2, \label{ineq:assum:tau}
\end{align}
and at the end of the proof, we choose $C_1\geq \sqrt{2}$.

In the following, we first introduce several definitions that will be used throughout the proof.
We next in Appendix~\ref{sec:first:ques} classify events that occur while the classical 
CSMA policy operates. Using this classification, in Appendix~\ref{subsec:diff:evol}, we 
derive a set of stochastic \emph{difference} equations that characterize how
the density (fraction) of active links $\theta_{L}(t,z)$ changes over time. In 
Appendix~\ref{subsec:deter:eqs}, we 
define the deterministic ODEs associated with the obtained
stochastic difference equations. Finally in Appendix~\ref{subsec:final:step}, we use several lemmas that
 are provided in Appendix~\ref{appendix:lemma} to state how the properties of the defined ODEs  
relate to the properties of the obtained difference equations, and use these properties to complete the proof of the theorem.

The following are the definitions that will be used throughout the analysis.
\begin{definition}
 The density of events (or links) that satisfy a given property is 
the total number of such events (or links) divided by the total number of links $L$.
\end{definition}
\begin{definition}
We often define density of links that satisfy a given property. To simplify the presentation, we use
the defined density to also denote the set of links that satisfy the property; hence, by writing $l\in \theta_{L}(t,z)$ we mean that link $l$ belongs
to the set of active links whose density is $\theta_{L}(t,z)$.
\end{definition}
\begin{definition}\label{notation:mineq}
  For two matrices (or vectors) $\bfA$ and $\bfB$ with the same dimensions, we write
$$\bfA\leq \bfB $$
if and only if matrix $\bfA$ is component-wise less than or equal to matrix $\bfB$.
\end{definition}

\subsection{Event Classification} \label{sec:first:ques}

In this section, we classify the events that occur while the classical CSMA policy operates, as described in Section~\ref{sec:mod:ideal:csma}. This classification provides a 
basis for the analysis in Appendix~\ref{subsec:diff:evol}. 
We define four types of events: rare events, ordinary events of type-I and type-II, and critical events. 
We start by defining rare events.

\subsubsection{Rare Events}\label{sec:rare:new}

We first define the \emph{$r_{n}$-neighbourhood} of each link $l$ 
where $r_{n}>10$ is a constant. Considering the mapping of the vertex set $\setL$ of $G_L$ to 
the points in $\mathbb{R}^{2}$, we define the $r_{n}$-neighbourhood of any link $l$ 
to be the set of links whose distance from link $l$ is less than $r_{n}$.

A \emph{rare} event occurs at time $t$ if 1) a link $l$ senses the channel as idle in the interval $[t_{1},t]$, $t_{1}<t$, and 2) 
at time $t$, an active link $l'$ in the \emph{$\frac{r_{n}}{2}$-neighbourhood}
of link $l$ stops transmitting. We define links $l$ and $l'$ to be \emph{involved} with the rare event at time $t$. 
If the links that interfere with link $l'$ find the channel as idle when the defined rare event occurs at time $t$, we also 
define link $l$ and these interfering links to be involved with the rare event. Based on these definitions, 
involved with a rare event at time $t$, there are two or more inactive links within $r_{n}$-neighbourhood 
of each other that sense the channel as idle at time $t$. We use $\theta_{L,r}(t,z)$ to denote the density (fraction) of links that are
inactive at time $t$, and that
in whose $r_{n}$-neighbourhood, there is another inactive link that senses the channel as idle at time $t$ such that both inactive links have remained idle until time $t$ after a rare event
with which both links are involved.

\subsubsection{Ordinary Events}\label{sec:typ:event}

Depending on the position of links that stop transmitting, we consider two types of 
\emph{ordinary} events. To simplify the presentation, we study these two types through the 
following examples.

\underline{\emph{2.a) Case1:}} 
Consider the active link $a$ in Fig.~\ref{fig:small_net} that is
next to a corner on its cluster boundary. We call active links such as $a$ that are next to a corner on a cluster boundary as 
\emph{corner} links. Suppose link $a$ stops transmitting, and suppose no other link 
stops transmitting before link $a$ resumes transmitting or link $b$ starts transmitting. In such a
case, links $a$ and $b$ both sense the channel as idle and compete for transmission with attempt-rate $z$. 
With rate $2z$, one of
these links starts 
transmitting before the other. The transmitting link can be either $a$ or
 $b$ with probability $0.5$.
If link $b$ \emph{wins}, the corner in the boundary moves to north-west, and there is  
a change in the boundary shape. Otherwise, the boundary stays the same. 
A similar discussion holds for links $c$ and $d$.

\underline{\emph{2.b) Case2:}}
Consider the active link $e$ that is next to a corner in Fig.~\ref{fig:small_net}. We define corners
such as the one next to link $e$ in Fig.~\ref{fig:small_net} as \emph{double corners} that are
 corners on the boundary of a cluster extending in two directions at least two steps, each of length $\sqrt{2}$.
We define active links such as link $e$ as the \emph{double-corner} links. 
Suppose link $e$ stops transmitting, and suppose no other active link stops transmitting before link 
$e$ resumes transmission or link $f$ starts transmitting. Similar to case~1, with rate $2z$ 
one of links $e$ or $f$ starts transmitting. If link $f$ starts transmitting, which happens
with probability $0.5$, there is a change in the cluster boundary, and two new corners
are created.

\underline{\emph{2.c) Case3:}}
Consider active link $g$ in Fig.~\ref{fig:small_net}. Suppose link $g$ stops transmitting, and suppose
no other link within the neighbourhood of link $g$ stops transmitting. In such a case, none of interfering links of link $g$ can sense the channel as idle. 
As a result, link $g$ is the only link that senses the channel, and with rate $z$ tries to resume its transmission.

\begin{figure}[tp]
\centering
\includegraphics[width=.47\textwidth]{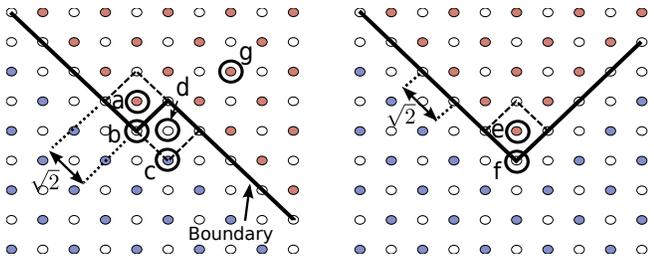}
\caption{Cases leading to ordinary events. Colored circles represent active links.}
\label{fig:small_net}
\end{figure}

Considering the third case above, we define an ordinary event of 
type-I as the event in which without causing a rare event, a non-corner active link, such as link $g$, stops transmitting.
Considering the first two cases above, we define an
ordinary event of type-II to be an event where without causing a rare event, a corner or a double-corner link, such as link $a$ or $e$, respectively,
 stops transmitting.

We use $\tilde{\theta}_{L,1}(t,z)$ to denote the density (fraction) of links that 1) at sometime $t_{1}$, $t_{1}\leq t$, 
have stopped transmitting leading to an ordinary event of type-I, and 2) have sensed the channel as idle in 
the closed interval $[t_{1},t]$. Similarly, we use $\tilde{\theta}_{L,2}(t,z)$ to denote the density (fraction) of
links that any of which, e.g., link $a$ in Fig.~\ref{fig:small_net}, 1) at sometime $t_{1}$, $t_{1}\leq t$ 
has stopped transmitting leading to an ordinary event of type-II, and that 2) the link and the second link associated with the 
ordinary event, e.g., link $b$ in Fig~\ref{fig:small_net}, have sensed the channel as idle in 
the closed time interval $[t_{1},t]$.

\subsubsection{Critical Events}\label{sec:chang:event}

We first define \emph{critical} events. We then explain why these events are critical.

Consider Fig.~\ref{fig:lattice_net:1}, in which, we have shown boundaries of two neighbouring clusters. Consider
the line-segment $f-g$, from link $f$ to link $g$ on the boundary of the right cluster. We
 see that 1) this line-segment has unit step length (length of $\sqrt{2}$), and 2) while moving
along the boundary, the direction of movement before line-segment $f-g$ is opposite to the direction 
after line-segment $f-g$. Considering this example, we define 
a \emph{critical event} to be an event in which 1) on the boundary of a cluster there is a line-segment of unit step length, and 2)
the direction before the line-segment is opposite to the direction after the line-segment, while the boundary
of the cluster is traversed. We define an active link such as $e$ as a \emph{critical link} if around which, the boundary satisfies the above
two properties.

To see why critical events are important, consider the critical link $e$ in Fig.~\ref{fig:lattice_net:1}. Consider
the following sequence of events, which we define as the \emph{$c$-sequence}: 
\begin{itemize}
\item [(1)] Active link $e$ stops transmitting,
\item [(2)] inactive link $f$ (or $g$) starts transmitting, and
\item [(3)] inactive link $g$ (or, respectively, $f$) starts transmitting after link $f$ (or, respectively, $g$) does.
\end{itemize}
In this sequence of events,
 one active link $e$ stops transmitting  
and two new links $f$ and $g$ start transmitting. Hence, the \emph{net effect} is increasing the number
of active links by one\footnote{We note that having 
a reverse sequence of these events, i.e., both links $f$ and $g$ stop transmitting and link $e$ 
start transmitting requires that a rare event to occur.}. 




Inspired by the $c$-sequence defined above, we
 define three new densities, all of which as r.v.'s depending on time $t$. We define $R_{L}(t,z)$ as the density 
of critical events at time $t$, i.e., the total number of these events divided by 
$L$. We use $R_L(t,z)$ to also denote the density (fraction)
 of links that are critical at time $t$, such as $e$ in Fig.~\ref{fig:lattice_net:1}.

\begin{figure}[tp]
\centering
\includegraphics[scale=0.3]{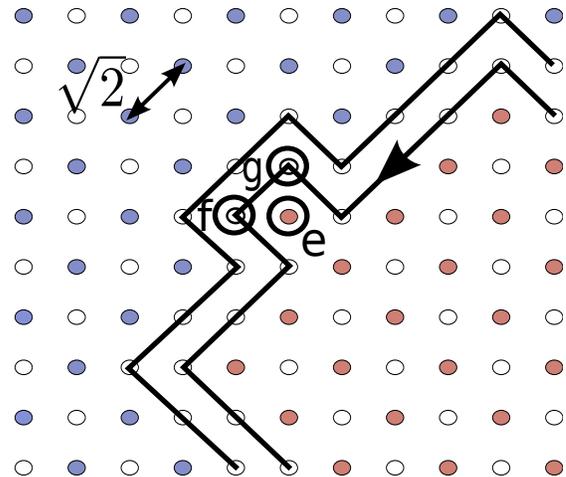}
\caption{Illustration of the critical event on the boundary of a cluster at line-segment $f-g$. Colored circles
represent active links.}
\label{fig:lattice_net:1}
\end{figure}

To define $\theta^{(fg)}_{L}(t,z)$, consider the following. Associated with each critical link, e.g., link $e$ in Fig.~\ref{fig:lattice_net:1}, there are 
two inactive links on the boundary of the cluster to which the critical link belongs, e.g., links $f$ and $g$ in Fig.~\ref{fig:lattice_net:1}.
These inactive links can sense the channel as idle
if the associated critical link stops transmitting. We define $\theta^{(fg)}_{L}(t,z)$ as the density of events, i.e., their total number at time $t$ 
divided by $L$, where both of such inactive links in addition to the (previously) critical link sense the channel as idle. 
Note that these events include the event at time $t$
where the following has occurred: 1) link $e$ stops transmitting at time $t_{1}<t$, 2) link $f$ starts transmitting at time $t_{2}$,
 $t_{1}<t_{2}<t$, and then 3) link $f$ stops
 transmitting at time $t$ so that both links $f$ 
and $g$ sense the channel as idle at time $t$. We note that with each event associated with density $\theta^{(fg)}_{L}(t,z)$, 
three links sense the channel as idle.

 Finally, we define
$\theta^{(f)}_{L}(t,z)$. As mentioned earlier, associated with each critical link, e.g., link $e$ in Fig.~\ref{fig:lattice_net:1}, there are 
two inactive links on the boundary of the cluster to which the critical links belongs, e.g., links $f$ and $g$ in Fig.~\ref{fig:lattice_net:1}. Consider the events
where one of these links has become active and the other link is inactive. We define $\theta^{(f)}_{L}(t,z)$ as the density, i.e., the
 total number divided by $L$, of such events. An example of such events occurs when
the first and second events defined in the $c$-sequence have occurred. In such a case, we have that link
$f$ is active and link $g$ is inactive. Since the events associated with density $\theta^{(f)}_{L}(t,z)$ involve
one active and one inactive link, we reuse $\theta^{(f)}_{L}(t,z)$ to denote the density of such active links, and also reuse it to
denote the density of such inactive links.

\subsection{Difference Equations for the System Evolution}\label{subsec:diff:evol}

In this section, we derive a set of difference equations that will be used to characterize the evolution of $\theta_{L}(t,z)$ over time.
To simplify the presentation, in the rest of the proof, we drop dependency of the defined link and event densities on $z$.
 In Appendix~\ref{sec:first:ques}, we defined densities
$\theta_{L,r}(t)$, $\tilde{\theta}_{L,1}(t)$, $\tilde{\theta}_{L,2}(t)$, $R_{L}(t) $, $\theta^{(fg)}_{L}(t)$, and $\theta^{(f)}_{L}(t)$. 
We use these densities to obtain the difference equations.

 To obtain the difference equations, we focus on the change in the defined densities
from time $t$ to time $t+\varep$ where 
\beqa
\varepsilon = \frac{1}{ \lceil L^{1-\zeta} \rceil}
\label{eq:def:vareps}
\eeqa
where $\zeta$ is a constant and $0<\zeta<1$.
For any fixed $z>0$, we have that
\beqa
\lim_{L\to \infty} z^{2}\varepsilon=\lim_{L\to \infty} z\varepsilon= \lim_{L\to \infty} z L^{-\zeta}= 0. 
\label{eq:lim:z:varepsilon}
\eeqa
We obtain the difference equations for 
$$t \geq t_{0}$$
where 
\begin{align}
  t_{0}=1. \label{eq:def:t0}
\end{align}

Deriving the difference equations requires us to define r.v.'s and obtain their expected values 
given the history $\setH(t)$ of the defined densities up to and including time $t$. Formally, we define $\setH(t)$ as
the history of all defined densities, i.e., $\theta_{L}(t)$, $\theta_{L,r}(t)$, $\tilde{\theta}_{L,1}(t)$, 
$\tilde{\theta}_{L,2}(t)$, $R_{L}(t) $, $\theta^{(fg)}_{L}(t)$, and $\theta^{(f)}_{L}(t)$ from time zero up to and including time $t$. 

In the following, we first focus on the change in $\theta_{L}(t)$ from time $t$ to time $t+\varep$. We then consider 
the change in other densities from time $t$ to time $t+\varep$. The analysis in this section
is based on the following inequality
\begin{align}
  0< \theta_{l} <\theta_{L}(t)<\theta_{u}<0.5, \label{ineq:ineq:assum}
\end{align}
where $\theta_{l}$ and $\theta_{u}$ are positive constants independent of $z$ and $L$.
We can assume the above inequality since the theorem is stated in the limit of first letting $L$ approaching infinity and then letting
$z$ approach infinity, and that by
 Lemma~\ref{lemma:global:bound} for $z>1$ and all $t\in [t_{0},\tau]$, the above inequality
 holds with probability approaching one as $L$
approaches infinity.

\subsubsection{Difference Equation for  $\theta_{L}(t)$}\label{ss:sec:diff:rho:L}
In this section, we obtain the difference equation for the change in $\theta_{L}(t)$ from time $t$ to
time $t+\varep$ given the history $\setH(t)$. Note that by definition, knowing $\setH(t)$, we know the densities
$\theta_{L}(t)$, $\theta_{L,r}(t)$, $\tilde{\theta}_{L,1}(t)$, 
$\tilde{\theta}_{L,2}(t)$, $R_{L}(t) $, $\theta^{(fg)}_{L}(t)$, and $\theta^{(f)}_{L}(t)$. 

Recall that $\theta_{L}(t)$ is the density of active links at time $t$. To obtain the change in $\theta_{L}(t)$, we need to consider all events that can affect the number of 
active links from time $t$ to time $t+\varep$. The events that contribute to the change in $\theta_{L}(t)$
are the following events:
\begin{itemize}
\item A link that is active at time $t$ is not active at time $t+\varep$.
\item An inactive link at time $t$ associated with density $\theta^{(f)}_{L}(t)$ is
active at time $t+\varep$. 
\item Either of three inactive links at time $t$ associated with density $\theta^{(fg)}_{L}(t)$ is active 
at time $t+\varep$.
\item An inactive link at time $t$ associated with density $\tilde{\theta}_{L,1}(t)$ or $\tilde{\theta}_{L,2}(t)$ is active at time $t+\varep$.
\item An inactive link $l$ that at time $t$ finds the channel busy is active time $t+\varep$. This events requires \emph{multiple transitions}, i.e.,
multiple links changing their states from time $t$ to time $t+\varep$,
in the neighbourhood of the given link $l$.
\item An inactive link that at time $t$ is within $r_{n}$-neighbourhood of inactive links associated with density $\theta_{L,r}(t)$
is active at time $t+\varep$.
\end{itemize}

In the following, we consider the contribution of each of the above events on the change in 
$\theta_{L}(t) $ from time $t$ to time $t+\varep$. We start by studying the contribution of active links on the change in $\theta_{L}(t)$. 

\emph{1.a) }Consider an active link $l$, and let $\mathbf{1}_{\theta_{L},l}(t+\varepsilon)$ be the indicator function that at time 
$t+\varepsilon$, link $l$ is not transmitting. Let $X_{1}(t)$ denote the rate of change in $\theta_{L}(t)$ due to
 changes in states of active links, i.e., let
\beqa
X_{1}(t)=-\frac{1}{L\varepsilon} \sum_{l\in \theta_{L}(t)} \mathbf{1}_{\theta_{L},l}(t+\varepsilon).
\label{eq:def:X:hat}
\eeqa
The negative sign is used to indicate that active links that stop transmitting decrease the number of 
active links. We next find $\ev[X_{1}(t) |\setH(t)]$.

We consider two cases leading to $\mathbf{1}_{\theta_{L},l}(t+\varepsilon)=1$. 
In the first case, an active link $l\in \theta_{L}(t)$ stops transmitting at some time $t'$, $t<t'<t+\varepsilon$,
and stays idle by time $t+\varepsilon$. 
Recall that packet transmission times and 
back-off timers are governed by memoryless exponential r.v.'s. As a result, since links stop transmitting with unit rate, independent of $\setH(t)$, the probability that an active link $l\in \theta_{L}(t)$ stops transmitting at some time $t'$, $t<t'<t+\varepsilon$
is 
\begin{align}
  \varepsilon-O(\varepsilon^{2}). \label{eq:prob:stop:xh}
\end{align}
Since the attempt rate is $z$, once
a links stops transmitting at time $t'$, with probability\footnote{If a link that has stopped transmitting cannot transmit again, e.g., since a neighbour has started 
transmitting,  
then the $O(z\varepsilon)$ term in \eqref{eq:prob:stay:idle:xh} can be zero. The probability $1-O(z\varepsilon)$ accounts for
these cases, too.}
\begin{align}
  1-O(z\varepsilon)\label{eq:prob:stay:idle:xh}
\end{align}
 it will
stay idle until time $t+\varepsilon$. Therefore, independent of $\setH(t)$, with probability   
\begin{align}
  \big[\varepsilon-O(\varepsilon^{2})\big]\big[1-O(z\varepsilon)\big]=\varepsilon-O(z\varepsilon^2) \label{eq:l:stop:no:start}
\end{align}
an active link $l\in \theta_{L}(t)$ stops transmitting at some time $t'$, $t<t'<t+\varepsilon$ and stays idle
until time $t+\varepsilon$. 

In the second case, we have $\mathbf{1}_{\theta_{L},l}(t+\varepsilon)=1$ as a result of link $l$ changing its state 
more than once. For example, this happens when after time $t$ link $l$ stops, starts, and stops transmitting all before time $t+\varep$ 
and stays idle up to time $t+\varep$. Independent of $\setH(t)$, the event that link $l$ changes its state more than once, from time $t$ to time $t+\varep$, occurs with
probability not larger than
$$ \big[\varepsilon-O(\varepsilon^{2})\big]\big[z\varepsilon-O\big((z\varepsilon)^{2}\big)\big] =O(z\varepsilon^{2}).$$ This follows since such an event requires a link to
stop and start transmitting at least once, each with rate one or $z$, respectively. Hence, 
the second case occurs with probability 
\begin{align}
  O(z\varepsilon^{2}).\label{eq:prob:c2:ind:xh}
\end{align}

 Considering the above two cases, it follows that
 \begin{align}
P\big[\mathbf{1}_{\theta_{L},l}(t+\varepsilon)=1 \ |\setH(t)\big]=\varepsilon+e_{1,l}(t) \label{eq:p:ind:r:l:varep}   
 \end{align}
where 
\begin{align}
  |e_{1,l}(t)|=O(z\varepsilon^{2}). \label{eq:abs:el1}
\end{align}
Having \eqref{eq:p:ind:r:l:varep}, we can write
\begin{align}
 \ev\big[X_{1}(t) \ |\setH(t)\big]&=-\frac{1}{L \varepsilon}\Big( \sum_{l\in \theta_{L}(t)}(\varepsilon +e_{1,l}(t) )  \Big)
\nonumber \\ &=-\theta_{L}(t) (1 +e_{1}(t) ),
\label{eq:hat:rho:hat}
\end{align}
where 
\begin{align}
  |e_{1}(t)|=O(z   \varepsilon). \label{eq:hat:rho:hat:e1:n}
\end{align}

We next find the variance of $X_{1}(t)$ given $\setH(t)$. Let 
\begin{align}
B_{1}(t)\triangleq X_{1}(t)-\ev\big[X_{1}(t)\ |\setH(t)\big].\label{eq:def:var:X:hat}  
\end{align}
By \eqref{eq:p:ind:r:l:varep} and \eqref{eq:abs:el1}, we have
$$P\big[ \mathbf{1}_{\theta_{L},l}(t+\varepsilon)=1 \ |\setH(t) \big] \leq \varepsilon +O(z\varep^2).$$ 
Moreover, as shown in Lemma~\ref{lemma:cond:prob}, since packet transmission times and back-off timers are independent across all links, we have that
for two different active links $l,l'\in \theta_{L}(t)$
\begin{align}
  P\Big[\mathbf{1}_{\theta_{L},l}(t+\varepsilon)=1 \  \big| \mathbf{1}_{\theta_{L},l'}(t+\varepsilon)=1, \setH(t)\Big]\leq \varepsilon +
O(z\varep^2)
.
\end{align}
Using the above inequality, \eqref{eq:p:ind:r:l:varep}, and \eqref{eq:abs:el1}, we obtain
\begin{align}
&  \ev\Big[\mathbf{1}_{\theta_{L},l}(t+\varepsilon)\mathbf{1}_{\theta_{L},l'}(t+\varepsilon) \big| \setH(t)\Big]
\nonumber \\ &  -\ev\Big[\mathbf{1}_{\theta_{L},l}(t+\varepsilon) \ \big|  \setH(t)\Big]\ev\Big[\mathbf{1}_{\theta_{L},l'}(t+\varepsilon)\ \big|  \setH(t) \Big]
 = O( z\varepsilon^{3} ).
\label{ineq:corel:pll'}
\end{align}
Using the definitions in \eqref{eq:def:X:hat} and \eqref{eq:def:var:X:hat}, we have
\begin{align}
&   \ev\Big[B_{1}(t)^{2}\ \big| \setH(t)\Big]
 =\frac{1}{(L \varepsilon)^{2}} \Bigg(
\nonumber \\ &
\sum_{l\in\theta_{L}(t) }
 \ev\Big[ \Big(\mathbf{1}_{\theta_{L},l}(t+\varepsilon)-\ev\big[\mathbf{1}_{\theta_{L},l}(t+\varepsilon)  \ | \setH(t) \big] \Big)^2  \ \big| \setH(t)\Big]
\nonumber \\ &
+ \sum_{l,l'\in\theta_{L}(t), l\neq l' }
 \ev\bigg[ \Big(\mathbf{1}_{\theta_{L},l}(t+\varepsilon)-\ev\big[\mathbf{1}_{\theta_{L},l}(t+\varepsilon) \ | \setH(t) \big] \Big)
\nonumber \\ & \qquad 
\Big(\mathbf{1}_{\theta_{L},l'}(t+\varepsilon)-\ev\big[\mathbf{1}_{\theta_{L},l'}(t+\varepsilon)  \ | \setH(t) \big] \Big) \ \Big| \setH(t) \bigg]
\Bigg)
\end{align}

Having obtained the above equality, we can use \eqref{eq:p:ind:r:l:varep} and \eqref{ineq:corel:pll'} to upper-bound
the first and the second summation in the above, respectively, leading to
\begin{align}
   \ev\big[B_{1}(t)^{2} \ | \setH(t)\big]     
  &\leq \frac{1}{(L \varepsilon)^{2}} \sum_{l\in\theta_{L}(t)} (\varepsilon+O(z\varepsilon^2))
   \nonumber \\  & + \frac{1}{(L \varepsilon)^{2}} \sum_{l,l'\in\theta_{L}(t), l\neq l' } O(z\varepsilon^3)
\nonumber \\ & \leq \frac{1}{(L \varepsilon)^{2}} L\theta_{L}(t) (\varepsilon+O(z\varepsilon^2))
\nonumber \\ & + \frac{1}{(L \varepsilon)^{2}} [L\theta_{L}(t)]^2 O(z\varepsilon^3).
\end{align}
Since $0\leq \theta_{L}(t)\leq1$, we can use \eqref{eq:lim:z:varepsilon} in the above to show that
\begin{align}
   \ev\big[B_{1}(t)^{2} \ | \setH(t) \big]  & \leq
L^{-\zeta} \big[1+O(z\varepsilon)\big]+O(z\varepsilon)
\nonumber \\ &=L^{-\zeta}+ O(z\varepsilon)
\label{eq:hat:rho:hat:error}.
\end{align}

Thus far, we have characterized the contribution of active links by obtaining \eqref{eq:hat:rho:hat} and \eqref{eq:hat:rho:hat:error}.

\emph{1.b) }We next consider the contribution of inactive links associated with density $\theta^{(f)}_{L}(t)$ in 
the change of $\theta_{L}(t)$ from time $t$ to time $t+\varep$. Any such inactive link if it is active at time $t+\varep$,
it increases by one the number of active links. 

Recall that $\theta^{(f)}(t)$ represents the density of a subset of inactive links (see Appendix~\ref{sec:chang:event}).  Consider the inactive link $l\in \theta^{(f)}_{L}(t)$. For such link $l$, let $\mathbf{1}_{\theta^{(f)}_{L},l}(t+\varepsilon)$ 
be the indicator function that link $l$ is active at time $t+\varep$. Using a similar analysis used to obtain \eqref{eq:p:ind:r:l:varep} and \eqref{eq:abs:el1}, and recalling that the attempt rate is $z$, we can show that
\beqa
 P\big[\mathbf{1}_{\theta^{(f)}_{L},l}(t+\varepsilon)=1  \ | \setH(t) \big]= z \varepsilon+e_{2,l}(t),
\label{eq:P:rho:f:l:s}
\eeqa
 where 
\beqa
|e_{2,l}(t)|=O(z^{2}\varepsilon^{2}).
\label{eq:P:rho:f:l:s:e3l}
\eeqa

 Define
 $X_{2}(t)$ similar to $X_{1}(t)$ as the rate of change in $\theta_{L}(t)$ associated with inactive links
in $ \theta^{(f)}_{L}(t)$, i.e, let 
\begin{align}
 X_{2}(t)=\frac{1}{L\varepsilon} \sum_{l\in \theta^{(f)}_{L}(t)} \mathbf{1}_{\theta^{(f)}_{L},l}(t+\varepsilon). \nonumber 
\end{align}
In addition, given $\setH(t)$, let 
$$B_{2}(t)=X_{2}(t)-  \ev\big[X_{2}(t) \ | \setH(t) \big].$$

Using \eqref{eq:P:rho:f:l:s} and \eqref{eq:P:rho:f:l:s:e3l}, we obtain
\begin{align}
 \ev\big[X_{2}(t)\ | \setH(t) \big]= \theta^{(f)}_{L}(t) (z +e_{2}(t) ), 
\label{eq:hat:rho:f}
\end{align}
where
\begin{align}
  |e_{2}(t)|=O(z^{2}\varepsilon). \label{eq:hat:rho:f:e:3}
\end{align}
Using a lemma similar to Lemma~\ref{lemma:cond:prob}, and taking similar steps leading to \eqref{eq:hat:rho:hat:error}, we also obtain
\begin{align}
   \ev\Big[ B_{2}(t)^{2} \ \big| \setH(t)\Big]\leq z L^{-\zeta}+ O(z^3\varep).
\label{eq:hat:rho:f:error}
\end{align}


\emph{1.c) }We next consider the contribution of inactive links associated with density $\theta^{(fg)}_{L}(t)$ in 
the change of $\theta_{L}(t)$ from time $t$ to time $t+\varep$. By the definition given in Section~\ref{sec:chang:event}, 
 $\theta^{(fg)}_{L}(t)$ represents the density of events where three links are inactive and sense the channel as idle at time $t$. As a result,
these events each with rate $3z$ increase the number of active links. This is similar to the previous case where
the inactive links associated with $\theta^{(f)}_{L}(t)$ increase the number of active links. Therefore, the
 contribution of inactive links associated with  $\theta^{(fg)}_{L}(t)$ can be 
characterized in a similar way used to characterize the contribution of inactive links associated with $\theta^{(f)}_{L}(t)$.

In particular, defining $X_{3}(t)$, similar to $X_{2}(t)$, to be the rate of change in $\theta_{L}(t)$ due to inactive links
associated with density $ \theta^{(fg)}_{L}(t)$, and letting
$$B_{3}(t)=X_{3}(t)-  \ev\big[X_{3}(t) \ | \setH(t) \big],$$
similar to \eqref{eq:hat:rho:f} and \eqref{eq:hat:rho:f:e:3}, we obtain that 
\begin{align}
 \ev\big[X_{3}(t) \ | \setH(t) \big]= \theta^{(fg)}_{L}(t) (3z +e_{3}(t) ), 
\label{eq:hat:rho:fg}
\end{align}
where
\begin{align}
  |e_{3}(t)|=O(z^{2}\varepsilon). \label{eq:hat:rho:fg:e:3}
\end{align}
Similar to \eqref{eq:hat:rho:f:error}, we also obtain 
\begin{align}
   \ev\Big[ B_{3}(t)^{2} \ \big| \setH(t) \Big]\leq z L^{-\zeta}+ O(z^3\varep).
\label{eq:hat:rho:fg:error}
\end{align}


\emph{1.d) }We next consider the contribution of inactive links associated with density $\tilde{\theta}_{L,1}(t)$ or $\tilde{\theta}_{L,2}(t)$ on the 
change in $\theta_{L}(t)$
from time $t$ to time $ t+\varep$. Let
\begin{align}
  \tilde{\theta}_{L}(t)=\tilde{\theta}_{L,1}(t)+\tilde{\theta}_{L,2}(t).
\end{align}
Based on the definitions given in Appendix~\ref{sec:typ:event}, the density $\tilde{\theta}_{L,1}(t)$ accounts for 
ordinary events of type-I leading to only one inactive link that tries to access the channel with rate $z$. 
In contrast, $\tilde{\theta}_{L,2}(t)$ 
accounts for ordinary events of type-II leading to two inactive links that with total rate of $2z$ are trying to access the channel.
 However, in both cases, only
one link can be made active. As a result, the only difference between the contribution of inactive links 
associated with  $\tilde{\theta}_{L,1}(t)$ or $\tilde{\theta}_{L,2}(t)$ 
 is the total rate by which 
the channel is accessed. In the rest, we assume in either case with a lower rate of $z$ the channel is accessed. This leads to a lower bound on the fraction of active links, which is what is stated in the statement of Theorem~1. 

Any inactive link associated with $\tilde{\theta}_{L}(t)$ if active at time $t+\varep$ increases the number of active links by one.
Define
$X_{4}(t)$ similar to $X_{2}(t)$ to be the rate of change in $\theta_{L}(t)$ due to inactive links associated with $\tilde{\theta}_{L}(t)$. 
In addition, let
$$B_{4}(t)=X_{4}(t)-  \ev\big[X_{4}(t) \ | \setH(t) \big].$$
Taking similar steps used to obtain \eqref{eq:hat:rho:f}, \eqref{eq:hat:rho:f:e:3}, and \eqref{eq:hat:rho:f:error}, we can show that
\begin{align}
 \ev\big[X_{4}(t) \ | \setH(t) \big]= \tilde{\theta}_{L}(t) (z +e_{4}(t) ), \label{eq:hat:rho:tilde}
\end{align}
where 
\begin{align}
|e_{4}(t)|=O(z^{2}\varepsilon),   \label{eq:hat:rho:tilde:e:4}
\end{align}
and 
\begin{align}
   \ev\Big[  B_{4}(t)^{2} \ \big| \setH(t) \Big]\leq zL^{-\zeta} + O(z^3 \varep).
\label{eq:hat:rho:tilde:error}
\end{align}


\emph{1.e) }We next consider the contribution of inactive links at time $t$ that sense the channel as busy 
at time $t$ on the change in $\theta_L(t)$, from time $t$ to time $t+\varep$. Let $\theta^{(b)}(t)$ be the fraction of these links at time $t$. 
Consider the link $l\in \theta^{(b)}(t) $. 
Link $l$ can affect the number of active links only when 
the following occurs. First, all active interfering neighbours of link $l$ stop transmitting before time $t+\varep$, and second, link $l$ starts transmitting
before time $t+\varep$.
For such an event, at least one link should stop transmitting, which based on~(\ref{eq:p:ind:r:l:varep}) and (\ref{eq:abs:el1}), independent of
$\setH(t)$, occurs with probability at most
\begin{align}
  \varep+O(z\varep^2)
\end{align}
Moreover, afterwards, link $l$ should start transmitting, which according to \eqref{eq:P:rho:f:l:s} and
 \eqref{eq:P:rho:f:l:s:e3l}, occurs with probability at most
\begin{align}
  z\varep+O(z^2\varep^2)
\end{align}
As a result, independent of $\setH(t)$, the probability that link $l\in \theta^{(b)}(t)$ starts transmitting before or at time $t+\varep$ is at most
\begin{align}
  z\varep^2+O(z^2 \varep^3).
\end{align}

Define $X_{5}(t)$ similar to $X_{2}(t) $ to be the rate of change in $\theta_{L}(t)$ due to inactive links that find the channel
busy at time $t$. In addition, let
$$B_{5}(t)=X_{5}(t)-  \ev\big[X_{5}(t) \ | \setH(t)\big].$$ 
Treating the density $\theta^{(b)}(t)$ in a similar way as we treated density $\theta^{(f)}_L(t)$, and we obtain
\begin{align}
 \ev\big[X_{5}(t) \ | \setH(t)\big] \leq   \theta^{(b)}(t) (z\varep   +e_{5}(t) ), \label{eq:X:rho:m}
\end{align}
where 
\begin{align}
|e_{5}(t)|=O(z^{2}\varepsilon^2 ).   \label{eq:X:rho:m:e:5}
\end{align}
Since $\theta^{(b)}(t)\leq 1$, we have
\begin{align}
  \ev\big[X_{5}(t)\ | \setH(t)\big] = O(z\varep).\label{eq:X:rho:m:1}
\end{align}
Moreover, we can show that
\begin{align}
   \ev\Big[  B_{5}(t)^{2} \ \big| \setH(t)\Big]= O(z^2 \varep).
\label{eq:X:rho:m:error}
\end{align}


\emph{1.f) }The final contribution that we consider is due to the events that are not previously considered. 
These events are related to rare events, and are the events where an inactive link that at time $t$ is within $r_{n}$-neighbourhood of an inactive link associated with density $\theta_{L,r}(t)$ is active at time $t+\varep$. Define $X_{r}(t)$ similar to $X_{1}(t)$ or $X_{2}(t) $ to be the rate of change in $\theta_{L}(t)$ due to these events. In addition, let
$$B_{6}(t)=X_{r}(t)-  \ev\big[X_{r}(t)\ | \setH(t)\big] .$$ 

Considering the definition for $r_{n}$-neighbourhood given in Appendix~\ref{sec:rare:new}, for any $r_{n}$, there exists a $\hat{c}_{n}$ where
 $\hat{c}_{n}$ represents an
upperbound on the number 
of links that are within $r_{n}$-neighbourhood of any given link $l$. As a result, the total number of links that may lead to the final contribution is bounded
by 
\begin{align}
  \hat{c}_{n}L\theta_{L,r}(t).
\end{align}
Defining $\theta_{L,r}'(t)$ to be the density of links that may lead to the final contribution, we have 
\begin{align}
  \theta_{L,r}'(t)\leq \hat{c}_{n}\theta_{L,r}(t).
\end{align}

Treating the density $\theta_{r,L}'(t)$ the same way we treated density $\theta^{(f)}_L(t)$, we obtain
\begin{align}
 \ev\big[X_{r}(t)\ | \setH(t)\big] \leq \hat{c}_{n}\theta_{L,r}(t) (z +e_{6}(t) ), \label{eq:X:rho:rare}
\end{align}
where 
\begin{align}
|e_{6}(t)|=O(z^{2}\varepsilon),   \label{eq:X:rho:rare:e:5}
\end{align}
and 
\begin{align}
   \ev\Big[  B_{6}(t)^{2} \ \big| \setH(t)\Big]\leq zL^{-\zeta} + O(z^3 \varep).
\label{eq:X:rho:rare:error}
\end{align}


\emph{1.g) }Having considered the contribution of all events, we next derive the difference 
equation for $\theta_{L}(t)$. We first give a definition. For any random process $h(t)$ (scalar or vector), we define the random process $\Delta(h(t))$ as
\beqa
\Delta(h(t))= \frac{1}{\varepsilon}\Big[ h(t+\varepsilon)-h(t)\Big].
\label{eq:def:Delta:h}
\eeqa
This process measures the rate of change in the random process $h(t)$ from time $t$ to time $t+\varep$.

Using the definition of $\Delta(\cdot)$ and the results in~(\ref{eq:hat:rho:hat})~(\ref{eq:hat:rho:hat:e1:n})~\eqref{eq:hat:rho:hat:error},
 (\ref{eq:hat:rho:f})-(\ref{eq:hat:rho:f:error}), (\ref{eq:hat:rho:fg})-(\ref{eq:hat:rho:fg:error}), (\ref{eq:hat:rho:tilde})-(\ref{eq:hat:rho:tilde:error}), \eqref{eq:X:rho:m:1}-\eqref{eq:X:rho:m:error},
(\ref{eq:X:rho:rare})-(\ref{eq:X:rho:rare:error}), given $\setH(t)$, we obtain
\begin{align}
\Delta(\theta_{L}(t))&= \bigg[-\theta_{L}+3z \theta_{L}^{(fg)}(t) +z\theta_{L}^{(f)}(t) +z\tilde{\theta}_{L}(t)
\nonumber \\ &\qquad \qquad \qquad \qquad  \qquad 
 + e(t) +B(t)\bigg], \label{eq:diffeq:1}
\end{align}
where 
\begin{align}
|e(t)|=O(z^{2}\varepsilon)+ \ev\big[X_{r}(t) \ | \setH(t) \big],   \label{eq:ehat:bound}
\end{align}
and 
$$B(t)=\sum_{i=1}^{6}B_{i}(t),$$
for which 
\beqa
 \ev\Big[B(t)^{2}\ \big|\setH(t)\Big]\leq O(z L^{-\zeta})+O(z^{3}\varep).
\label{eq:B:hat:ex}
\eeqa
In the next section, we derive the difference equations for other defined densities
$\theta_{L}^{(fg)}(t)$, $\theta_{L}^{(f)}(t)$, and $\tilde{\theta}_{L}(t)$.
 

\subsubsection{Difference Equations for Other  Densities}\label{sec:all:densities}

In the previous section, we obtained the difference equation for $\theta_{L}(t)$. We can take 
similar steps to obtain difference equations for other densities $\theta^{(fg)}(t)$, $\theta^{(f)}(t)$, and $\tilde{\theta}_{L}(t)$.
The difference equations for these densities are provided in Lemmas~\ref{lemma:diff:rho:fg}-\ref{lemma:diff:rho:tilde}.

The difference equations developed in the previous section and the ones in Lemmas~\ref{lemma:diff:rho:fg}-\ref{lemma:diff:rho:tilde} 
are based on terms such as 
$X_{r}(t)$, $X^{(f)}_{r}(t)$, $X_{r}^{(fg)}(t)$, $\tilde{X}_{r}(t)$, $\tilde{X}_{h}(t)$, and $R_{L}(t)$. In this section, we study
 these terms to obtain a set 
of difference equations that are based on only the densities $\theta_{L}(t)$, $\theta^{(fg)}(t)$, $\theta^{(f)}(t)$, and $\tilde{\theta}(t)$. 

Let $\Omega$ be the sample probability space and $\omega \in \Omega$. For a given $z$, we define the event $\setE(L,z,\tau)$ as 
\begin{align}
  \setE(L,z,\tau) = & \{\omega:\sup_{t\in [t_{0},\tau]} \theta_{L,r}(t) < c_{r}z^{-2} \} \cap
\nonumber \\ & \{\omega: \sup_{t\in [t_{0},\tau]} \theta_{L,h}(t) < c_{\theta}z^{-1} \} \cap
\nonumber \\ & \{\omega: \sup_{t\in [t_{0},\tau]} \theta_{L}^{(f)}(t) < c_{\theta}z^{-1} \} \cap
\nonumber \\ & \{\omega: \sup_{t\in [t_{0},\tau]} \theta_{L}^{(fg)}(t)< c_{\theta}z^{-1} \} \cap
\nonumber \\ & \{\omega: \sup_{t\in [t_{0},\tau]} \tilde{\theta}_{L}(t) < c_{\theta}z^{-1} \} \cap
\nonumber \\ & \{\omega: \sup_{t\in [t_{0},\tau]} \theta_{L}(t) < 0.5- c_{\theta,2 }e^{-\tau} \}\cap
\nonumber \\ & \{\omega: \inf_{t\in [t_{0},\tau]} \theta_{L}(t) >  c_{\theta,1 }e^{-\tau} \}\cap
\nonumber \\ & \{\omega:  c_{\theta,1}e^{-t_0} < \theta_{L}(t_{0}) <  0.5- c_{\theta,2 }e^{-t_0} \}
\label{eq:def:E:Ltau}
\end{align}
for some constants $c_{r}>0$, $c_{\theta}>0$, $c_{\theta,1}>0$, and $c_{\theta,2}>0$, all independent of $L$ and $z$, and for 
$\theta_{L,h}(t)$ defined as the density of links that are inactive and sense the channel as idle at time $t$. In the rest, to simplify the presentation, we let
$$\setE(L,z) =\setE(L,z,\tau),$$
and drop the dependency of $\setE(L,z,\tau)$ on $\tau$. Where appropriate, we do the same for other functions that depend on 
$\tau$.

Using Lemma~\ref{lemma:global:bound}, Lemma~\ref{lemma:rare}, and Lemma~\ref{lemma:rholh}, we have that for any given finite and fixed $z>1$ and
 $\tau>t_{0}$,
\begin{align}
\lim_{L\to \infty} P\big[ \setE(L,z) \big]=1.   \label{eq:lim:prob:setE}
\end{align}
Hence, we can define $\eps_{r}(L,z)>0$ such that  
\begin{align}
  P\big[\setE(L,z)\big]\geq 1-\eps_{r}(L,z)
\end{align}
where for any finite and fixed $z>1$ and $\tau>t_{0}$
\begin{align}
  \lim_{L\to \infty } \eps_{r}(L,z)=0
\label{eq:eps:Ltau:lim}
\end{align}
 In particular, there exists $L_{0}(z)=L_{0}(z,\tau)$ such that for all $L>L_{0}(z)$
 \begin{align}
   \eps_{r}(L,z)\leq 0.5. \label{ineq:epsr:Llower}
 \end{align}

We next use the definition of $ \setE(L,z) $ to state $R_{L}(t)$ as a function of $\theta_{L}(t)$.  
Considering \eqref{eq:def:delta:l:n} and \eqref{ineq:ineq:assum}, we choose constants $z_{R}$ and $L_{R}$ such that for $z>z_{R}$ and $L>L_{R}$ if $ \setE(L,z)$ occurs, the conditions of 
Lemma~\ref{lemma:R} hold so that w.p.1. we have 
\begin{align}
  R_{L}(t)\geq c_{R} \delta_{L}(t)^{3}
\end{align}
where $c_{R}$ is a positive constant independent of $z$, $L$, $\tau$, and $t$. Since
a larger $R_{L}(t)$ indicates more critical events, and hence, a larger rate 
by which $\theta_{L}(t)$ increases, in the rest we assume w.p.1.
\begin{align}
  R_{L}(t)= c_{R} \delta_{L}(t)^{3}.\label{eq:RL:delta}
\end{align}

Equation \eqref{eq:RL:delta} allows us to write $R_{L}(t)$ as a function of $\delta_{L}(t)$ and
thus as a function of $\theta_{L}(t)$.

We next consider the terms $X_{r}(t)$, $X_{r}^{(fg)}(t)$, $X_{r}^{(f)}(t)$, and $\tilde{X}_{r}(t)$ and $\tilde{X}_{h}(t)$
given in \eqref{eq:X:rho:rare}, Lemma~\ref{lemma:diff:rho:fg}, Lemma~\ref{lemma:diff:rho:f}, Lemma~\ref{lemma:diff:rho:tilde}, respectively. 
We first consider the difference equation in \eqref{eq:diffeq:1} containing the terms $e(t)$ and $B(t)$, and use
 the properties of event $\setE(L,z)$ to obtain an upperbound
for $|e(t)|$ given in \eqref{eq:ehat:bound} that contains the expected value of $X_{r}(t)$. To do so, we consider
the difference equation in \eqref{eq:diffeq:1} conditioned on the event $\setE(L,z)$ as well as history $\setH(t)$. 
The equation in \eqref{eq:diffeq:1} contains some terms such as $-\theta_{L}(t)$ and $3z \theta_{L}^{(fg)}$
that given $\setH(t)$ are known, and hence, knowing that event $\setE(L,z)$ occurs does not affect these terms. However, 
both of $e(t)$ and $B(t)$ are affected by knowing that event $\setE(L,z)$ occurs. 

To upperbound  the term $|e(t)|$ given that $\setE(L,z)$ occurs, we first note
 that by the definition of $\setE(L,z)$ given in \eqref{eq:def:E:Ltau}, we have
\begin{align}
 \theta_{L,r}(t) \leq c_{r} z^{-2}. \label{ineq:rholr:z2:b}
\end{align}
If this inequality was the only information available, in addition to $\setH(t)$, then 
by \eqref{eq:X:rho:rare}, we would have that 
\begin{align}
  \ev\Big[ X_{r}(t) \ \big| \setH(t),\eqref{ineq:rholr:z2:b} \Big] \leq \hat{c}_{n}\theta_{L,r}(t) (z +e_{6}(t) )= O(z^{-1}).
\end{align}
However, the event $\setE(L,z)$ includes other inequalities. To address this, we can use the above bound, Lemma~\ref{lemma:conditional}, and assume
 $L> L_{0}(z)$ so that \eqref{ineq:epsr:Llower} holds, to show that
\begin{align}
  \ev\Big[ X_{r}(t) \ \big| \setH(t),\setE(L,t)\Big] \leq \frac{O(z^{-1})}{1-\eps_{r}(L,z)} =O(z^{-1}).
\end{align}
Thus, given that $\setE(L,z)$ occurs and using \eqref{eq:ehat:bound}, we have
\begin{align}
  |e(t)| \leq O(z^2 \varep)+ \ev\Big[ X_{r}(t) \ \big| \setH(t),\setE(L,z)\Big] =O(z^{-1}).
\end{align}

For the r.v. $B(t)$ in \eqref{eq:diffeq:1}, using Lemma~\ref{lemma:conditional}, and
 assuming $L> L_{0}(z)$ so that \eqref{ineq:epsr:Llower} holds, we also have that 
\begin{align}
  \ev\Big[B(t)^{2} \ \big|\setH(t),\setE(L,z)\Big] &\leq  \frac{\ev\Big[B(t)^{2} \ \big|\setH(t)\Big]}{1-\eps_{r}(L,z)}
\nonumber \\ & \leq 2 \big[O(zL^{-\zeta})+O(z^{3}\varep)\big]
.
\end{align}
Hence,
\begin{align}
  \ev\Big[B(t)^{2} \ \big|\setH(t),\setE(L,z)\Big]=O(zL^{-\zeta})+O(z^{3}\varep).
\end{align}
 
We can repeat the same arguments for the difference equations for $\theta_{L}^{(fg)}(t)$, $\theta_{L}^{(f)}(t)$, and 
$\tilde{\theta}_{L}(t)$ containing the terms $X^{(fg)}_{r}(t)$, $X^{(f)}_{r}(t)$, and $\tilde{X}_{r}(t)$ and $\tilde{X}_{h}(t)$, respectively,
 given in Lemmas~\ref{lemma:diff:rho:fg}-\ref{lemma:diff:rho:tilde}, respectively. Doing so and using \eqref{eq:RL:delta}, and applying the definition of $\setE(L,z)$ to upperbound $\theta_{L}^{(f)}(t)$ in Lemma~\ref{lemma:diff:rho:fg}-\ref{lemma:diff:rho:tilde}, and also to upperbound 
$\theta_{L,h}(t)$ in
Lemma~\ref{lemma:diff:rho:tilde}, given $\setH(t)$ and $\setE(L,z)$,
we obtain the following vector difference equation for $z>z_{R}$ and $L>\max(L_{0}(z),L_{R})$:
\beqa
  \Delta[\boldsymbol{\theta}_{L}(t)]=\bfA \boldsymbol{\theta}_{L}(t)+f(\boldsymbol{\theta}_{L}(t))+\mathbf{e}(t)+\mathbf{B}(t) 
\label{eq:difference:eqs}
\eeqa
where
\beqa
&& \boldsymbol{\theta}_{L}(t)=\!\!\!
\begin{bmatrix}
\theta_{L}(t) \\
\theta^{(fg)}_{L}(t) \\
\theta^{(f)}_{L}(t) \\
\tilde{\theta}_{L}(t)
\end{bmatrix}, 
 f( \boldsymbol{\theta}_{L}(t))=\!\!\!\begin{bmatrix}
0 \\
 c_{R} \delta_{L}(t)^{3} \\
0 \\
-c_{R} \delta_{L}(t)^{3}
\end{bmatrix} ,
\nonumber \\ 
&& 
\mathbf{e}(t)=\!\!\! \begin{bmatrix}
  e(t) \\
 e^{(fg)}(t) \\
e^{(f)}(t) \\
\tilde{e}(t)
\end{bmatrix} , 
\mathbf{B}(t)=
\begin{bmatrix}
  B(t) \\
  B^{(fg)}(t)\\
  B^{(f)}(t) \\
    \tilde{B}(t)  \\ 
\end{bmatrix}
, \label{eq:difference:vecs}
\eeqa
and
\begin{eqnarray}
  \bfA=\begin{bmatrix}
-1 & 3z & z & z \\
0 & -3z & 0 & 0 \\
0 & 2z & -z & 0 \\
1 & 0 & 0& -z 
 \end{bmatrix}.
\label{eq:difference:def:A}
\end{eqnarray}
Given $\setH(t)$ and $\setE(L,z)$, we have that w.p.1
\begin{align}
&|e(t)|=O(z^{-1}), 
\
 |e^{(fg)}(t)|=O(z^{-1}), 
\nonumber \\ 
&
|e^{(f)}(t)|=O(z^{-1}),
\ 
|\tilde{e}(t)|=O(z^{-1}  ).
\label{eq:es:bound}
\end{align}
Moreover, we have
\beqa
&&\ev\Big[ B(t)^{2}\ \big| \setH(t), \setE(L,z)\Big]\leq O( zL^{-\zeta}) + O(z^3 \varep), 
\nonumber \\
&&\ev\Big[ B^{(fg)}(t)^{2}\ \big| \setH(t), \setE(L,z)\Big]\leq O( zL^{-\zeta}) + O(z^3 \varep), 
\nonumber \\ & & \ev\Big[B^{(f)}(t)^{2} \ \big|\setH(t), \setE(L,z)\Big]\leq  [O( zL^{-\zeta}) + O(z^3 \varep),
\nonumber \\ & & \ev\Big[\tilde{B}(t)^{2}\ \big|\setH(t), \setE(L,z)\Big]\leq  O( zL^{-\zeta}) + O(z^3 \varep)
\label{eq:Bs:hat:ex}.
\eeqa

This concludes our analysis to obtain the difference equations for the evolution of the defined densities. In the next, 
we define a deterministic ODE as the counterpart of the above difference equation. 

\subsection{Deterministic Differential Equations}\label{subsec:deter:eqs}
In the previous section, we obtained the difference equation \eqref{eq:difference:eqs} for evolution of the vector density
$\bftheta(t)$ from time $t$ to time $t+\varep$. Using this difference equation, we can obtain 
a deterministic ODE by letting 
$$\|\mathbf{e}(t)\| \equiv 0, \ \|\mathbf{B}(t)\| \equiv 0 ,$$
and taking the limit of $\varepsilon$ approach zero while keeping $z$ fixed.
Doing so, and replacing $\theta_{L}(t)$, $\theta^{(fg)}_{L}(t)$, $\theta^{(f)}_{L}(t)$, 
and $\tilde{\theta}_{L}(t)$,
 with $x_{1}(t)$, $x_{2}(t)$, $x_{3}(t)$, and $x_{4}(t)$, respectively, and replacing $\delta_{L}(t)$ with
\beqa 
\delta_{\mathbf{x}}(t)=0.5-x_{1}(t), \label{def:hat:delta:bfx}
\eeqa
 we obtain
\begin{eqnarray}
  \frac{d}{dt}\mathbf{x}=\bfA\mathbf{x}+f(\mathbf{x}), \ 
f(\mathbf{x})=\begin{bmatrix}
0 \\
c_{R} \delta_{\mathbf{x}}^{3}\\
0\\
-c_{R}\delta_{\mathbf{x}}^{3}
\end{bmatrix}, \label{eq:diff:cont}
\end{eqnarray}
with
\begin{align}
\mathbf{x}_{t_{0}}=\mathbf{x}(t_{0})=\boldsymbol{\theta}_{L}(t_{0})  
\label{eq:init:xt0:rhot0}
\end{align}
 as the initial condition for $\bfx(t)$ at time $t_{0}$.


The solution $\bfx(t)$ to the above ODE will be used to characterize the evolution of $\bftheta(t)$ over time. 
However, we need to study the properties of $\bfx(t)$ itself. To do so, we define 
\begin{align}
  \bfy(t) =
\begin{bmatrix}
  y_{1}(t) \\
  y_{2}(t) \\
  y_{3}(t) \\
  y_{4}(t)
\end{bmatrix}
\end{align}
 to be the solution to the
following ODE
\begin{eqnarray}
  \frac{d}{dt}\mathbf{y}=\tilde{\bfA}\mathbf{y}+\tilde{f}(\mathbf{y}) \label{eq:diff:y}
\end{eqnarray}
where
\begin{eqnarray}
  \tilde{\bfA}=\begin{bmatrix}
0 & 0 & 0 & 0 \\
0 & -3z & 0 & 0 \\
0 & 2z & -z & 0 \\
1 & 0 & 0& -z 
 \end{bmatrix}, 
\
  \tilde{f}(\mathbf{y})=\begin{bmatrix}
\frac{2}{3}c_{R}\delta_{\bfy}^{3}\\
 c_{R}\delta_{\bfy}^{3}\\
0\\
-c_{R}\delta_{\bfy}^{3}
\end{bmatrix}\label{eq:def:tildef}
\end{eqnarray}
with $\delta_{\bfy}(t)$ defined as 
\begin{align}
  \delta_{\bfy}(t)=0.5-y_{1}(t).\label{eq:def:delta:hat:y}
\end{align}
The above ODE has a simpler structure compared to the ODE in \eqref{eq:diff:cont} since in \eqref{eq:diff:y}, given an initial condition, 
$y_{1}(t)$
can be determined independent of $y_{2}(t)$, $y_{3}(t)$, and $y_{4}(t)$. Having found the solution for $y_{1}(t)$, one can 
also find the solutions for $y_{2}(t)$, $y_{3}(t)$, and $y_{4}(t)$.

In order to state how $\bfy(t)$ relates to $\bfx(t)$, we first introduce two definitions.
We define $\setD_{w}$ as
\begin{eqnarray}
  \mathcal{D}_{w} &= \{ \mathbf{x}: 0 \leq x_{i} \leq 0.5, i=1, \cdots, 4. \} .
\label{eq:assum:well}
\end{eqnarray}
Given $\setD_{w}$, we define a solution $\mathbf{x}(t)$ to the ODE in \eqref{eq:diff:cont} to be \emph{well-defined} if for
 an initial condition $\mathbf{x}_{t_{0}}$ with 
$\mathbf{x}_{t_{0}} \in \mathcal{D}_{w}$, we also have that
$$\mathbf{x}(t)\in \mathcal{D}_{w}, \ t \in [t_{0},\tau].$$
In a similar manner, we define $\bfy(t)$ to be well-defined if given an initial value for $\bfy(t)$ in $\setD_{w}$, 
we have that $\bfy(t)$ stays in $\setD_{w}$.

Lemma~\ref{lemma:t1} states how $\bfy(t)$ and $\bfx(t)$ are related, which will be used in the final step of the analysis in 
Appendix~\ref{subsec:final:step}.


\subsection{Final Step}\label{subsec:final:step}

In this section, we provide the final step of the proof for Theorem~1. 
The goal is to characterize how 
fast $\delta_{L}(t)$ diminishes as a function of time $t$ in the time-interval of interest $(0,\tau]$. Here is the sketch of the final step 
of the proof.
We first characterize the initial conditions that will be used to define $\bfx(t)$ and $\bfy(t)$ as the solutions to
\eqref{eq:diff:cont} and \eqref{eq:diff:y}, respectively. We 
also define $\retheta(t=n\varep)$ as an approximation to $\bfx(t=n\varep)$, where $n$ is 
a non-negative integer. We 
next show that $\bftheta(t)$ is \emph{close} to $\bfy(t)$ using the following inequality 
\begin{align}
  \| \boldsymbol{\theta}_{L}(t) -\mathbf{y}(t) \| &\leq   \| \boldsymbol{\theta}_{L}(t) -\bftheta(n_{t}\varep) \|
\nonumber \\ &\  +
  \|\boldsymbol{\theta}_{L}(n_{t}\varep) - \retheta(n_{t}\varep)  \|
\nonumber \\ & \ +
\|\retheta(n_{t}\varep)-\mathbf{x}(n_{t}\varepsilon)\|
\nonumber \\ & \ +\| \mathbf{x}(n_{t}\varepsilon) -\mathbf{y}(n_{t}\varepsilon) \|
\nonumber \\ & \ +\| \mathbf{y}(n_{t}\varepsilon) -\mathbf{y}(t) \| 
\label{ineq:master}
\end{align}
where for $t\geq 0$,
$$n_{t}= \Big \lfloor \frac{t}{\varep} \Big\rfloor. $$
 We properly upperbound each term on the RHS of \eqref{ineq:master}, and then use
properties of $\bfy(t)$ to complete the proof of the theorem.

To start, we choose $L$ and $z$ such that 
\begin{align}
z>\max(1,z_{R}) \label{ineq:zRf}  
\end{align}
and 
\begin{align}
L>\max(L_{0}(z),L_{R}) \label{ineq:LRf}  
\end{align}
where $L_{0}(z)$, $L_{R}$, and $z_{R}$ are all defined in the previous section.

Recall that by \eqref{eq:lim:prob:setE}, for $z>1$, the event $\setE(L,z)$ defined in \eqref{eq:def:E:Ltau}
 occurs with probability approaching one as $L$
approaches infinity. The statement in Theorem~1 is in the limit of first letting $L$ approach infinity and then letting
$z$ approach infinity. Hence, in the rest, we assume that event $\setE(L,z)$ occurs. Assuming that event $\setE(L,z)$
occurs, by \eqref{eq:def:E:Ltau}, we have that 
\begin{align}
 0<c_{\theta,1}e^{-t_{0}} < \theta_{L}(t_{0})<0.5-c_{\theta,2}e^{-t_{0}}, \label{ineq:fin:init:rho:t0}
\end{align}
and that for all $t\in[t_{0},\tau]$
\begin{align}
 c_{\theta,1} e^{-\tau}< \theta_{L}(t)<0.5-c_{\theta,2}e^{-\tau}, \label{ineq:fin:init:rho}
\end{align}
and also
\begin{align}
  \sup_{t\in [t_{0},\tau]} \max\Big[ \theta_{L}^{(fg)}(t), \theta_{L}^{(f)}(t), \tilde{\theta}_{L}(t)\Big] <c_{\theta}z^{-1}
\label{ineq:fin:init:rhos}
\end{align}
where $t_{0}=1$ by \eqref{eq:def:t0}.

The inequalities in \eqref{ineq:fin:init:rho} imply that the inequality \eqref{ineq:ineq:assum} holds which along with 
\eqref{ineq:zRf} and \eqref{ineq:LRf} ensures that the difference equation of 
\eqref{eq:difference:eqs} holds for all $t\in [t_{0},\tau]$.
Moreover, the inequalities \eqref{ineq:fin:init:rho:t0} and \eqref{ineq:fin:init:rhos} at time $t_{0}$ and equality 
\eqref{eq:init:xt0:rhot0} provide constraints on the initial condition of the ODE in 
\eqref{eq:diff:cont} that define $\bfx(t)$. We will use these constraints to define the initial condition for $\bfy(t)$ and to
properly upperbound each term on the RHS of \eqref{ineq:master}.

Having characterized the constraints on the initial condition for $\bfx(t)$, by Lemma~\ref{lemma:t1}, there exists a $t_{1}$ independent of $z$, $t_{0}\leq t_{1}<2$ such that
\begin{align}
\sup_{t\in [t_{1},\tau]}  \| \bfx(t)-\bfy(t)\|=O(z^{-1}) ,\quad  \text{as $z \to \infty$},
\label{eq:fin:dis:x:y}
\end{align}
where $\bfy(t)$ is the solution to \eqref{eq:diff:y} with an initial condition at time $t_{1}$ such that
$$\bfy(t_{1})=\bfx(t_{1}).$$
By Lemma~\ref{lemma:t1}, we also have that regardless of $t_{1}$, for all $t\in[t_{1},\tau]$
\begin{align}
  \left|\frac{d}{dt}\bfy(t)\right|=O(1)\mathbf{1}_{4\times 1},\ \text{ as $z\to \infty$}, \label{eq:fin:der:y}
\end{align}
and
\begin{align}
  y_{1}(t)> 0.5 - \frac{C_{y}}{\sqrt{t}} \label{ineq:fin:y1:b}
\end{align}
where $C_{y}>0$ is a constant independent of $t_{1}$, $\tau$, and $z$.

 By \eqref{eq:fin:der:y}, and that 
$$|t-n_{t}\varepsilon|\leq \varepsilon,$$
we have also have that 
\begin{align}
 \sup_{t\in [t_{1}+\varep,\tau]}\| \mathbf{y}(n_{t}\varepsilon)-\mathbf{y}(t) \| = O(1) \varep, \ \text{as $z\to \infty$}. 
\label{eq:fin:dis:yd:yc}
\end{align}

Define the index $n$ to be such that 
$$ \ n\in \{t_{0}\varep^{-1},t_{0}\varep^{-1}+1,t_{0}\varep^{-1}+2,\cdots, \tau \varep^{-1}\}.$$ 
 Define 
$$\retheta(t=n\varepsilon)$$
 to 
be the deterministic solution to the difference equation of \eqref{eq:difference:eqs} with
$$\|\mathbf{e}(t)\|\equiv 0, \ \|\mathbf{B}(t)\|\equiv 0.$$
Considering the ODE in \eqref{eq:diff:cont}, we have that 
$\boldsymbol{\theta}_{\mathbf{e}=\mathbf{0},\mathbf{B}=\mathbf{0}}(t=n\varepsilon)$ is a discrete-time 
approximation of $\mathbf{x}(t)$ at times $t=n\varep$.
By Lemma~\ref{lemma:t1}, for $z>z_{0}'$, where $z_{0}'$ is a constant, and for $t\in[t_{0},\tau]$, we have that
$\mathbf{x}(t)\in\mathcal{D}_{w}$ where $\setD_{w}$ is defined in \eqref{eq:assum:well}. Therefore, the solution $\mathbf{x}(t)$ is bounded over
 $t\in[t_{0},\tau]$. For a fixed $z>z_{0}'$, this boundedness and that $f(\bfx)$ has continuous first partial derivatives
provide sufficient conditions to have (e.g., see
 Theorem~1.16 in \cite{markley:book})
\begin{eqnarray}
  \sup_{\frac{t_{0}}{\varep} \leq n \leq \frac{\tau}{\varep} }\|\retheta(n\varepsilon)-\mathbf{x}(n\varepsilon)\|=O(\varepsilon), \
\text{as $L \to \infty$}.\label{eq:dis:rho:e:b:0}
\end{eqnarray}

Since we started by assuming that the event $\setE(L,z)$ occurs, by Lemma~\ref{lemma:rho:stoch}, we also have that for any $\eps_{\theta}>0$
\begin{align}
& \liminf_{z\to \infty} \liminf_{L\to \infty}
  \nonumber \\ & \qquad  P\bigg[ \sup_{\frac{t_0}{\varepsilon} \leq n \leq \frac{\tau}{\varepsilon} }
\big \|  \boldsymbol{\theta}_{L}(n\varep) -\retheta(n\varep)  \big\| \leq \eps_{\theta}\bigg] = 1 .
\label{eq:fin:dis:rho:reB:d}
\end{align}

Finally by Lemma~\ref{lemma:cont:disc}, we have that for $z>1$
\begin{align}
 \lim_{L\to \infty} \sup_{t\in[t_{0},\tau] } \|\boldsymbol{\theta}_{L}(t) -\bftheta(n_{t}\varep)\|=0 \ (\text{in prob.})
\label{eq:fin:dis:rhod:rhoc}
\end{align}

Using \eqref{ineq:master}, \eqref{eq:fin:dis:x:y},
\eqref{eq:fin:dis:yd:yc}, \eqref{eq:dis:rho:e:b:0},
\eqref{eq:fin:dis:rho:reB:d}, and \eqref{eq:fin:dis:rhod:rhoc}, and noting that by \eqref{ineq:assum:tau} and Lemma~\ref{lemma:t1}
$$t_{1}<2<\tau, $$
 we then have that for any $\eta_{1}>0$
\begin{align}
\liminf_{z\to \infty} \ \liminf_{L\to \infty}
    P\bigg[ \sup_{t\in[2,\tau] }
\| \boldsymbol{\theta}_{L}(t) -\mathbf{y}(t) \| \leq \eta_1 \bigg] = 1 .
\end{align}
The above limit implies that
\begin{align}
 \liminf_{z\to \infty} \ \liminf_{L\to \infty}
  P\bigg[ \sup_{t\in[2,\tau] }
  \big|\theta_{L}(t)-y_{1}(t)\big| \leq \eta_1\bigg] = 1 .
\end{align}

By \eqref{eq:def:delta:l:n}, and recovering the dependencies on $z$, we have 
 $$\delta_{L}(t,z)=\delta_{L}(t)=0.5-\theta_{L}(t).$$
By the preceding limit, we then have that
\begin{align}
& \liminf_{z\to \infty} \ \liminf_{L\to \infty}
\nonumber \\ & \qquad 
  P\bigg[ \sup_{t\in[2,\tau] }
  \Big|\delta_{L}(t,z)-[0.5-y_{1}(t)]\Big| \leq \eta_1\bigg] = 1 ,
\end{align}
which along with \eqref{ineq:fin:y1:b} implies that
\begin{align}
 \liminf_{z\to \infty} \ \liminf_{L\to \infty}
 \ P\bigg[ \sup_{ t\in [2,\tau]} 
 \Big[ \delta_{L}(t,z)-\frac{C_{y}}{\sqrt{t}}\Big] \leq \eta_{1}\bigg] =1 .
\end{align}
Since the choice for $\eta_{1}>0$ is arbitrary, we can let
$$\eta_{1}=\frac{C_{y}}{\sqrt{\tau}},$$
which leads to
\begin{align}
 \liminf_{z\to \infty} \ \liminf_{L\to \infty}
\   P\bigg[ \sup_{t\in[2,\tau]} \Big [
  \delta_{L}(t,z)-\frac{2C_{y}}{\sqrt{t}} \Big]\leq 0\bigg] = 1 .
\end{align}
By defining 
\begin{align}
C_{1}=\max\big(\sqrt{2},2C_{y}\big), \label{eq:def:C1:fi}  
\end{align}
we can extend the above statement for all $t\in(0,\tau]$ since 
always $ \delta_{L}(t,z)\leq 0.5 <1 $. This completes the proof of the theorem.


\section{Proof of Theorem~2}\label{sec:proof:t2}
Here, we provide a drift analysis to prove Theorem~2. By the assumption in the theorem, 
we consider U-CSMA policy with the unlocking period $T$, as described in Section~\ref{sec:approach} in which
transmission patterns are unlocked at times 
\begin{align}
T_{i}=iT,  \ i\in\{0,1,2,\cdots\}. \label{eq:unlock:T}  
\end{align}

To simplify the presentation, in the rest, we drop the dependency of $Q_{l}(t,z,L)$ on $z$ and $L$.
For any link $l$ and $t_{2}>t_{1}\geq0$, its queue size $Q_{l}(t)$ evolves according to the following:
\beqa
Q_{l}(t_{2})=Q_{l}(t_{1})
-\int_{t_{1}}^{t_{2}}\!\!\!\!D_{l}(t)\mathbf{1}_{Q_{l}(t)>0} \ dt 
+A_{l}(t_{1}, t_{2}), \label{eq:q:update}
\eeqa
where 
$$D_{l}(t)\in\{0,1\},$$
 and $A_{l}(t_{1}, t_{2})$ is the number of packets that arrive to link $l$ in the time interval $(t_{1},t_{2}]$ as defined 
in Section~\ref{sec:mod:lattice:uaa}.
By the assumption in the theorem and the definition of $\eps(\lambda)$ in \eqref{eq:def:eps:load}, 
the average packet arrival rate to queue $Q_{l}(t)$ is
 \beqa
\lambda_{l}=\lambda= (1-\eps) \mu_{max}(L)
\label{eq:exp:arrival}
\eeqa
 where $0<\eps(\lambda)<1$. 
By \eqref{ineq:max:arrival:l}, we also have that w.p.1. 
\beqa
A_{l}(t,t+1)\leq A_{max}. \label{eq:a:max}
\eeqa
Without loss of generality, for constant $A_{max}$, we let
$$A_{max}=1,$$
which can be used to show that for $t\geq 0$,
\beqa
|Q_{l}(t+1)-Q_{l}(t)|\leq  \max(A_{max},1)=1. \label{ineq:Q:diff}
\eeqa

As stated in the theorem, we choose the unlocking period $T$ as follows
\begin{align}
  T=T(\lambda)= \bigg\lceil \frac{(16C_{1})^{2}}{\eps^{2}} \bigg \rceil.
\label{eq:T:value}
\end{align}
By \eqref{eq:def:C1:fi}, we have $C_{1}>1$, and hence, for any $\eps$, $0< \eps<1$, we have
\begin{align}
  T>1.
\end{align}

Let 
\begin{align}
  k=k_{\frac{\eps}{16}} \label{eq:def:k:eps3}
\end{align} 
where $k_{\eps}$ is defined by the convergence property of the arrival process for link $l$ in \eqref{eq:def:arrival:exp}. Since $T> 1$ and
$$ T_{(i+1)k}-T_{ik} =kT > k,$$
by the definition of $k$, we have that
\begin{align}
\left |  \ev\Big [ \frac{1}{kT} A_{l}(T_{ik}, T_{(i+1)k}) \ \big |\setH_{s}(T_{ik})\Big ]-\lambda \right | < \frac{\eps}{16}
\label{ineq:Al:k}
\end{align}
where $\setH_{s}(t)$ is the system history up to and including time $T_{ik}$.

We also need to determine $z$ and $L$. To do so, as stated in the theorem statement, we assume that 
$$z>z(\eps, T) \text{ and } L>L(z,\eps, T ),$$ 
where $z(\eps, T)$ and $L(z,\eps, T)$ are defined in Section~\ref{sec:result:2}, so that 
\begin{align}
P \Big[\inf_{t\in[0,T]} \Big(\theta_{L}(t) - 0.5 +\frac{C_{1}}{\sqrt{t}} \Big) \geq 0\Big ] \geq 1-\eps_{p}(L,z,T) \label{ineq:prob:epsP}
\end{align}
and
\begin{align}
  \eps_{p}(L,z,T)< \frac{1}{2}\eps.\label{ineq:epsP}
\end{align}

Having determined the necessary parameters, we next proceed with the following drift analysis. Using \eqref{eq:q:update} and \eqref{ineq:Q:diff}, 
we have that 
\begin{align}
\Delta_{i}(T)&=\ev\Big[Q_{l}^{2}(T_{(i+1)k})-Q_{l}^{2}(T_{ik})\ \big|Q_{l}(T_{ik})\Big]
\nonumber \\ & \leq
2Q_{l}(T_{ik})\ev\bigg[ A_{l}(T_{ik},T_{(i+1)k}) -
\nonumber \\ & \qquad \int_{T_{ik}}^{T_{(i+1)k}} D_{l}(t) \mathbf{1}_{Q_{l}(t)>0}\ dt \ \Big|Q_{l}(T_{ik})\bigg ]+  (kT)^{2}.\label{eq:drift:l}
\end{align}
If $Q_{l}(T_{ik})>kT$, by \eqref{ineq:Q:diff}, it then follows that w.p.1.,
\begin{align}
  \mathbf{1}_{Q_{l(t)}>0}=1, \ \ t \in[T_{ik}, T_{(i+1)k}] .\label{eq:ind:Qb}
\end{align}
For $Q_{l}(T_{ik})\leq kT$, by \eqref{ineq:Q:diff}, w.p.1, $Q_{l}(T_{(i+1)k})\leq 2kT$, and hence,
\begin{align}
  \Delta_{i}(T) \leq 4 (kT)^{2}, \label{ineq:Di:Ql}
\end{align}
In addition, for $Q_{l}(T_{ik})\leq kT$, w.p.1.,
\begin{align}
 Q_{l}(T_{ik})  \int_{T_{ik}}^{T_{(i+1)k}} D_{l}(t) \mathbf{1}_{Q_{l}(t)>0}\ dt \leq (kT)^{2}. \label{ineq:int:Dl:b:Ql}
\end{align}
Dividing $\Delta_{i}(T)$ to two expectations, one conditioned on $Q_{l}(T_{ik})\leq kT$ and the other
on $Q_{l}(T_{ik})> kT$, and using the results in \eqref{eq:drift:l}-\eqref{ineq:int:Dl:b:Ql}, we can show that 
\begin{align}
\Delta_{i}(T)&=\ev[Q_{l}^{2}(T_{(i+1)k})-Q_{l}^{2}(T_{ik})\ |Q_{l}(T_{ik})]
\nonumber \\ & \leq
2Q_{l}(T_{ik})\ev\bigg[ A_{l}(T_{ik},T_{(i+1)k}) -
\nonumber \\ & \qquad \int_{T_{ik}}^{T_{(i+1)k}} D_{l}(t) \ dt \ \big|Q_{l}(T_{ik})\bigg ]+ 5 (kT)^{2}.\label{eq:drift:l:f}
\end{align}

We next derive an upperbound for the RHS of \eqref{eq:drift:l:f}. By \eqref{ineq:Al:k}, we have that
\begin{align}
   \ev\Big [  A_{l}(T_{ik}, T_{(i+1)k}) \ \big |Q_{l}(T_{ik})\Big ] < kT\big[\lambda + \frac{\eps}{16} \big].
\label{ineq:drift:arr}
\end{align}
To obtain a lowerbound for 
\begin{align}
\ev\Big[  \int_{T_{ik}}^{T_{(i+1)k}} D_{l}(t) \ dt \ \big|Q_{l}(T_{ik})\Big ],
\end{align}
we use the conditions considered for $z$ and $L$ which state that \eqref{ineq:prob:epsP} and \eqref{ineq:epsP} 
hold. Since the unlocking mechanism restarts the network at time $T_{i}$, $i\geq 0$, the CSMA transmission events after 
this time are independent of the CSMA transmission events before this time. Hence, \eqref{ineq:prob:epsP} and 
\eqref{ineq:epsP} hold for any time interval $(T_{i},T_{i+1}]$, $i\geq 0$, independent of $\setH_{s}(T_{i})$, 
which can be used along with $C_{1}>1$ to show that
\begin{align}
&\ev\bigg[  \frac{1}{T} \int_{T_{i}}^{T_{i+1}}\theta_{L}(t)dt  \ \Big| \setH_{s}(T_{i}) \bigg ] \geq  
\ev\bigg [\frac{1}{T} \int_{T_{i}\ + \ 1}^{T_{i+1}}\theta_{L}(t) dt \ \Big| \setH_{s}(T_{i}) \bigg ]
\nonumber \\ 
& \qquad \qquad  \geq \big(1-\eps_{p}(L,z,T)\big) \Big[0.5 \frac{T-1}{T}- 2C_{1}\frac{\sqrt{T} -1}{T} \Big]
\nonumber \\ 
&  \qquad \qquad  > 0.5 -0.5 \eps_{p}(L,z,T)-\big[1-\eps_{p}(L,z)\big]\frac{2C_{1}}{\sqrt{T}}
\nonumber \\ 
&  \qquad \qquad >  0.5-\frac{\eps}{4}-\frac{2C_{1}}{\sqrt{T}}.
\end{align}
By the choice for $T$ as given in \eqref{eq:T:value}, we then have that
\begin{align}
\ev\bigg[  \frac{1}{T} \int_{T_{i}}^{T_{i+1}}\theta_{L}(t)dt  \ \Big| \setH_{s}(T_{i}) \bigg ]  > 0.5-\frac{\eps}{4}-\frac{\eps}{8}=0.5 -\frac{3\eps}{8}.
\label{ineq:ave:rho:pT}
\end{align}

Moreover,
since the 2D torus is symmetric with respect to link positions, we have that the average 
transmission, over one unlocking period, for any link $l$ is the same as the average transmission for any other link $l'$. This allows us
to write
\begin{align}
&\ev \left[  \frac{1}{T} \int_{T_{i}}^{T_{i+1}}\theta_{L}(t) dt \ \Big| \setH_{s}(T_i) \right]
\nonumber \\ & \qquad =\ev \left [  \frac{1}{T} \int_{T_{i}}^{T_{i+1}} \frac{1}{L} \sum_{l\in\setL} D_{l}(t) dt \ \Big| \setH_{s}(T_i) \right]
\nonumber \\ & \qquad
= \frac{1}{L}  \sum_{l\in\setL} \ev \left [  \frac{1}{T} \int_{T_{i}}^{T_{i+1}} D_{l}(t) dt\ \Big| \setH_{s}(T_i) \right]
\nonumber \\ & \qquad 
=\ev \left [   \frac{1}{T} \int_{T_{i} }^{T_{i+1}} D_{l}(t) dt \ \Big| \setH_{s}(T_i) \right]
\end{align}
for any $l\in\setL$.
By the above equality and \eqref{ineq:ave:rho:pT}, we obtain
\begin{align}
  \ev \left [   \frac{1}{T} \int_{T_{i} }^{T_{i+1}} D_{l}(t) dt \ \big| \setH_{s}(T_i) \right]>0.5 -\frac{3}{8}\eps.\label{ineq:drift:dep}
\end{align}

Using \eqref{eq:exp:arrival}, \eqref{eq:drift:l:f}, \eqref{ineq:drift:arr}, \eqref{ineq:drift:dep}, 
and that $\mu_{max}(L)\leq 0.5$, we have that
\begin{align}
\Delta_{i}(T) & \leq 2 Q(T_{ik})\Big[kT (\lambda+\frac{1}{16}\eps) -kT (0.5-\frac{3}{8}\eps)\Big]+ 5(kT)^{2}
\nonumber \\
&  = -\eps'kT Q(T_{ik}) + 5(kT)^{2}
\end{align}
where 
\begin{align}
  \eps'=\frac{1}{8}\eps<1. \label{eq:def:eps':ll}
\end{align}
Summing over $i=\{0, \cdots, J-1\}$, and then taking the expected value, we obtain
\begin{align}
&\ev\Big[\sum_{i=0}^{J-1}\Delta_{i}(T)\Big]=\ev\Big[Q_{l}^{2}(T_{Jk})-Q_{l}^{2}(T_{0}=0)\Big]
\nonumber \\ & \qquad \qquad \leq -\eps' \ev\Big[\sum_{i=0}^{J-1} kT Q_{l}(T_{ik})\Big]+ J 5 (kT)^{2} ,\label{ineq:exp:drift}  
\end{align}

 Rearranging terms in \eqref{ineq:exp:drift}, dividing by
$JkT\eps'$, letting $J$ approach infinity, and assuming queues are initially bounded, we obtain
\begin{align}
& \limsup_{J \to \infty}\frac{1}{JkT}\ev\Big[\sum_{i=0}^{J-1} kT Q_{l}(T_{ik})\Big]
\nonumber \\ 
& \qquad \qquad 
\leq \limsup_{J \to \infty} \frac{1}{JkT\eps'}\Big(\ev\big[Q_{l}^{2}(0)\big]+ J 5 (kT)^{2} \Big)  =  
\frac{5kT}{\eps'}. \label{ineq:sum:q}
\end{align}

From \eqref{ineq:Q:diff} and that $k\geq 1$ and $T>1$, we obtain
\beqa
\int_{T_{ik}}^{T_{(i+1)k}} Q_{l}(t) \ dt &\leq&  \int_{T_{ik}}^{T_{(i+1)k}}  \Big[Q_{l}(T_{ik})+(\lceil t \rceil -T_{ik})\Big] \ dt 
\nonumber \\ &\leq&
 kT Q_{l}(T_{ik})+2(kT)^{2}. \label{ineq:int:Q:dis:l}
\eeqa
For $t> 0$, define
$$i_{t}=\Big\lceil \frac{t}{kT}  \Big\rceil.$$
By the non-negativity of queue sizes and that $t\leq T_{i_{t}k}$, we have that
\begin{align}
\frac{1}{t} \int_{0}^{t} Q_{l}(t) \ dt \leq \frac{i_{t}kT}{t}
 \frac{1}{i_{t} kT}\int_{0}^{T_{i_{t} k} } Q_{l}(t)  \ dt,
\end{align}
which along with \eqref{ineq:int:Q:dis:l} leads to
\begin{align}
  \frac{1}{t} \int_{0}^{t} Q_{l}(t) \ dt \leq  \frac{i_{t}kT}{t}
 \frac{1}{i_{t} kT}\bigg[\Big(\sum_{m=0}^{i_{t}-1} kT Q_{l}(T_{mk})\Big) +  i_{t}2(kT)^{2} \bigg]
\label{ineq:int:pre:sup}
\end{align}
Using \eqref{ineq:sum:q}, taking the expected value of both sides of \eqref{ineq:int:pre:sup} and then the $\limsup$ as $t$ approaches infinity, and using the limit 
$$\lim_{t\to \infty}\frac{i_{t}kT}{t} =\lim_{t\to \infty}\frac{\lceil \frac{t}{kT} \rceil kT}{t} =1, $$ 
we obtain
\begin{align}
\limsup_{t\to \infty } \ev\left[\frac{1}{t} \int_{0}^{t} Q_{l}(t) \ dt\right] 
\leq   5\frac{kT}{\eps'}+2kT.
\end{align}
Using \eqref{eq:T:value} and \eqref{eq:def:eps':ll}, we then have that 
\begin{align}
\limsup_{t\to \infty } \ev\left[\frac{1}{t} \int_{0}^{t} Q_{l}(t) \ dt\right] 
< \frac{56kT}{\eps} = \frac{C_{2}k}{\eps^{3}}
\end{align}
where the constant $C_{2}$ is given by
$$C_{2}=56(16C_{1})^{2}.$$
Recalling that $k$ is defined in \eqref{eq:def:k:eps3}, we have completed the proof of Theorem~\ref{result:2}.


\section{ Lemmas}\label{appendix:lemma}

\begin{lemma}\label{lemma:global:bound}
  Consider $\tau>t_{0}$ and suppose $z>1$. There exist positive constants $c_{\theta,1}$ and $c_{\theta,2}$, independent of $\tau$, $z$, and $L$, such that 
\begin{align}
&\lim_{L\to \infty } P\bigg[ \Big(c_{\theta,1} e^{-\tau} < \inf_{t\in[t_{0},\tau]} \theta_{L}(t)\Big)
\nonumber \\ & \qquad \qquad 
 \cap  \Big(\sup_{t\in[t_{0},\tau]} \theta_{L}(t) < 0.5- c_{\theta,2} e^{-\tau}\Big) \bigg] =1.
\end{align}
Moreover, for $t_{0}>0$
\begin{align}
\lim_{L\to \infty } P\Big[ c_{\theta,1}e^{-t_{0}}  < \theta_{L}(t_{0}) < 0.5- c_{\theta,2}e^{-t_{0}} \Big] = 1.
\end{align}
\end{lemma}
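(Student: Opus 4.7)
My plan is to combine a drift analysis for the expected density with a pathwise concentration estimate, exploiting that $L\theta_L(t)$ is a birth--death jump process with jumps of $\pm 1$.

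For the expectation, the drift reads $\frac{d}{dt}\ev[\theta_L(t)] = z\,\ev[\theta_{L,h}(t)] - \ev[\theta_L(t)]$, since every active link deactivates at unit rate and every inactive link sensing the channel as idle activates at rate $z$. Each active link blocks at most itself and its four neighbours from being idle-sensing, so $\theta_{L,h} \geq 1 - 5\theta_L$ pointwise. Substituting gives $\ev[\theta_L(t)] \geq \frac{z}{5z+1}\bigl(1 - e^{-(5z+1)t}\bigr)$, which for $z > 1$ and $t \geq t_0 = 1$ is uniformly bounded below by a positive constant $c_1$ independent of $z$, $L$, $\tau$; choosing $c_{\theta,1}$ so that $c_{\theta,1} e^{-\tau} \leq c_1/2$ for all $\tau \geq t_0$ then handles the lower bound at the expectation level. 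For the upper bound I would establish the complementary inequality $\theta_{L,h} \leq C\delta_L$ geometrically: an inactive link sensing idle requires all four of its neighbours to be inactive, i.e.\ four local defects from a perfect odd/even tiling, and these can be charged against $\delta_L$ by a local injection. This yields $\frac{d}{dt}\ev[\delta_L] \geq 1/2 - (1+Cz)\ev[\delta_L]$, hence $\ev[\delta_L(t)] \geq 1/(2(1+Cz))$ for all $t \geq 0$, producing a strict gap from $1/2$.

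For the pathwise step, I would decompose $L\theta_L(t)$ into its compensator and a c\`adl\`ag martingale $M_L(t)$ with predictable quadratic variation of order $L(z+1)\tau$, and apply Doob's $L^2$ maximal inequality to obtain $P\bigl[\sup_{t \in [0,\tau]} |\theta_L(t) - \ev[\theta_L(t)]| > \eta\bigr] = O\!\bigl(\tau(z+1)/(L\eta^2)\bigr)$, which vanishes as $L \to \infty$ for any fixed $z$, $\tau$, $\eta$. Taking $\eta$ strictly smaller than the deterministic slack in both expectation bounds, and choosing $c_{\theta,1}, c_{\theta,2}$ so that $c_{\theta,i} e^{-\tau}$ is at most that slack for all $\tau > t_0$, gives the two-sided containment of the lemma; specializing the same argument to $t = t_0$ yields the second claim.

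The main technical hurdle is the upper-bound step: establishing $\theta_{L,h}(t) \leq C \delta_L(t)$ with $C$ independent of $z$ requires genuine geometric counting of how idle-sensing sites arise from local defects in a near-maximum independent set, and the cluster-regularity content of Assumption~\ref{assum:1} is the natural input here. A secondary subtlety is that the fixed-point estimate above degrades like $1/z$ as $z$ grows, so absorbing the $z$-dependence into the $z$-independent $e^{-\tau}$ form of the gap relies on the lemma being applied with $z$ fixed before $L \to \infty$, which makes the exponential-in-$\tau$ cushion loose enough to accommodate any polynomial-in-$z$ leakage into the constants.
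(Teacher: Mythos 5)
Your proposal takes a genuinely different route from the paper. The paper works locally: it singles out the sparse sublattice of links $l=(4i,4j)$ so that the neighbourhoods $\{l\}\cup\setN_l$ are pairwise disjoint, shows that (conditionally on anything else) the probability that one specific link in $\{l\}\cup\setN_l$ starts transmitting before time $t_0/2$ and keeps transmitting up to time $\tau$ is at least $(1-e^{-z/2})e^{-\tau}$, and then applies a law of large numbers over the disjoint groups. The factor $e^{-\tau}$ comes from the unit-rate exponential holding time of a single packet and is therefore automatically independent of $z$. You instead propagate the generator identity $\frac{d}{dt}\ev[\theta_L]=z\ev[\theta_{L,h}]-\ev[\theta_L]$ through pointwise combinatorial bounds on $\theta_{L,h}$, and then transfer to the path via a martingale maximal inequality. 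That is a standard mean-field style argument, and the lower-bound half of it is fine (indeed it gives a bound uniform in $\tau$, which is stronger than needed). Two remarks on the ingredients: the pointwise upper bound you call ``the main technical hurdle'' is in fact elementary and needs neither Assumption~\ref{assum:1} nor any injection argument --- since the non--idle-sensing links are exactly $A\cup N(A)$ where $A$ is the (independent) set of active links, one has $L\theta_{L,h}=L-|A|-|N(A)|$, and counting edges between $A$ and $N(A)$ (each vertex of $N(A)$ receives at most four) gives $|N(A)|\geq|A|$, hence $\theta_{L,h}\leq 1-2\theta_L=2\delta_L$ on the torus, with only an $O(1/\sqrt L)$ correction on the lattice. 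Also, the Doob step needs a bit more care than stated: the compensator of $\theta_L$ is $z\theta_{L,h}-\theta_L$, which is not a function of $\theta_L$ alone, so one cannot compare $\theta_L$ directly to its expectation by Gr\"{o}nwall; you must first replace the drift by your two one-sided bounds, couple to the two resulting one-dimensional ODEs, and control the martingale deviation from those --- routine, but not the inequality you wrote.

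The genuine gap is in the upper bound. With $\theta_{L,h}\leq 2\delta_L$ the drift comparison gives
\begin{align}
\ev[\delta_L(t)]\;\geq\;\frac{1/2}{1+2z}\qquad\text{for all }t\geq 0,\nonumber
\end{align}
and after concentration you obtain $\sup_{t\in[t_0,\tau]}\theta_L(t)<0.5-\tfrac{c}{1+2z}$ with high probability, for some universal $c>0$. This is a true statement, but it is \emph{not} the statement of the lemma. The lemma asserts a gap of the form $c_{\theta,2}e^{-\tau}$ with $c_{\theta,2}$ independent of $z$, $\tau$ and $L$, and that form is used downstream precisely to get the $z$-uniformity of $\theta_l,\theta_u$ in \eqref{ineq:ineq:assum} after the second ($z\to\infty$) limit in Theorem~1. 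Your remark that ``the exponential-in-$\tau$ cushion is loose enough to accommodate polynomial-in-$z$ leakage'' does not hold: $\tau$ and $z$ are free independent parameters, so for any candidate $c_{\theta,2}>0$ and any fixed $\tau$ one can pick $z$ large enough that $\frac{c}{1+2z}<c_{\theta,2}e^{-\tau}$, and the required inclusion of intervals fails. There is no way to manufacture the $z$-independent $e^{-\tau}$ gap from a fixed-point estimate whose width is genuinely $\boldsymbol{\Theta}(1/z)$: as $z\to\infty$ the CSMA chain really does spend essentially all its time at distance $O(1/z)$ from maximum schedules, so the mean-field drift is tight and cannot be sharpened. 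What the paper's argument captures, and the drift calculation averages away, is the rare pathwise event that a single packet transmission ``locks'' a $5$-link neighbourhood from before $t_0$ until $\tau$; the probability of that rare event carries the $e^{-\tau}$ factor with a $z$-independent prefactor, which is exactly what the lemma needs. To repair your approach you would have to supplement the drift estimate with a lower bound on $\delta_L$ that decays at most like $e^{-\tau}$ uniformly in $z$, and the natural way to do that is the paper's local-group / holding-time argument, so the detour through the drift buys you nothing on the upper-bound side.
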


\begin{proof}
Recall that 
$$t_{0}=1.$$
Consider link $l$, which has a maximum of four interfering neighbours in the set $\setN_{l}$. Recall that at time 
$t=0$, all links are idle. Let time $w_{l}$ be the \emph{first time} after time $t=0$ that link $l$ 
or one of its interfering links in $\setN_{l}$ starts transmitting.

We claim that at any time $t<w_{l}$, at least one of the 
links in the set $\{l\}\cup \setN_{l} $ senses the channel as idle, independent of states
of other links in $G_L$ from time $0$ to time $\tau$. To reach a 
contradiction, suppose $t<w_{l}$, and
it is true that all links in $\{l\}\cup \setN_{l} $ sense the channel as busy at time $t$. Since the
first transmission occurs at time $w_{l}$, this means that link $l$ is not transmitting
at time $t$ and finds one of its neighbours transmitting at that time. A 
neighbour transmitting at time $t$ means that by definition $w_{l}\leq t$, contradicting the assumption 
that $t<w_{l}$.

Define the set $\setL^{(e)}_{s}$ as the set of links in the $(n+1)\times (n+1)$ lattice $G_{L}$ with coordinates of the form 
$$(4i,4j), \ 0\leq 4i \leq n, \ 0 \leq 4j \leq n.$$
 The set $\setL^{(e)}_{s}$ is a subset of even links $\setL^{(e)}$. By the definition of $\setL^{(e)}_{s}$, we have that for any two links $l,l' \in \setL^{(e)}_{s}$, the sets 
$\{l\}\cup \setN_{l} $ and $\{l'\}\cup \setN_{l'} $ are disjoint.

By definition, at time $w_{l}$, for the first time link $l$ or one of its interfering links starts transmitting. Let $t_{l}$ the amount
of time that such a first transmission lasts. Since before time $w_{l}$, at least one link in the set $\{l\}\cup \setN_{l} $ 
is trying to access the channel with rate 
$z$, and since packet transmission times are independent of each other with unit rate, we have that 
\begin{align}
  & P\Big[w_{l} < \frac{t_{0}}{2}, t_{l}> \tau  \ \big|  \{w_{l'},t_{l'}, l'\neq l, l'\in\setL^{(e)}_{s}  \}\Big]
\nonumber\\  & \qquad \qquad \qquad \qquad \qquad \qquad \qquad \geq (1-e^{-0.5 z})e^{-\tau}.
\label{ineq:lb:p:wl}
\end{align}

Let $I_{\tau,l}$ be the indicator function that the for the first time after time zero link $l$ or one of its interfering 
links starts transmitting
at time $w_{l}<0.5t_{0}=0.5$,  
and that such a transmission continues until and including time $\tau$. For $z>1$, by \eqref{ineq:lb:p:wl}
we have that 
\begin{align}
   P\Big[I_{\tau,l}=1  \ \big|  \{I_{\tau,l'}, l'\neq l, l'\in\setL^{(e)}_{s}  \}\Big]\geq p_{\tau}
\end{align}
where
$$p_{\tau}=(1-e^{-0.5})e^{-\tau}>0.$$

Therefore, we have a sequence $\{I_{\tau,l}, l\in \setL^{(e)}_{s}\}$ of indicator functions that independently of each other 
will be one with probability at least $p_{\tau}$. Similar to the W.L.L.N, it then follows that 
\beqa
\lim_{L\to \infty } P\left[\frac{\sum_{l\in \setL_{s}^{(e)} }  I_{\tau,l}}{L}\geq 0.5 p_{\tau}\right] = 1.
\label{eq:limit:p:I:t0:1}
\eeqa

By definition, $\theta_{L}(t)$ is the fraction of all active links at time $t$, and 
for $t\in [t_0,\tau]$, it includes all
links $l\in \setL_s^{(e)}$ for which $ I_{\tau,l}=1$. Hence, for $t\in [t_{0},\tau]$, w.p.1
$$\theta_{L}(t) \geq \frac{\sum_{l\in \setL_{s}^{(e)} }  I_{\tau,l}}{L}.$$
Letting 
$$c_{\theta,1}=0.25 (1-e^{-0.5}),$$
 and using the above and \eqref{eq:limit:p:I:t0:1}, we obtain
\beqa
\lim_{L\to \infty } P \Big[  \inf_{t\in[t_{0},\tau]} \theta_{L}(t) > c_{\theta,1} e^{-\tau} \Big]=1.
\eeqa
Using similar arguments, we also have that 
\beqa
\lim_{L\to \infty } P \Big[\sup_{t\in[t_{0},\tau]} \theta_{L}(t) < 0.5- c_{\theta,2} e^{-\tau}\Big]=1,
\eeqa
for some constant $c_{\theta,2}>0$ independent of $\tau$, $z$, and $L$.
Combining the last two limits, we obtain the first limit in the lemma, as required. The second limit also follows 
by replacing $\tau$ with $t_{0}$ in the above discussion, completing the proof.
 \end{proof}


\begin{lemma}\label{lemma:cond:prob}
Consider two active links $l$ and $l'$ at time $t$, i.e., suppose $l,l' \in \theta_{L}(t)$. We have
  \begin{align*}
  P\Big[\mathbf{1}_{\theta_{L},l}(t+\varepsilon)=1 \  \big| \mathbf{1}_{\theta_{L},l'}(t+\varepsilon)=1, \setH(t)\Big]
\leq \varepsilon +O(z\varepsilon^2)
\end{align*}
where $\mathbf{1}_{\theta_{L},l}(t+\varepsilon)$ is the indicator function that at time 
$t+\varepsilon$, link $l$ is not transmitting, as defined in Appendix~\ref{ss:sec:diff:rho:L}, and $\setH(t)$
is the history of densities as defined in Appendix~\ref{subsec:diff:evol}.
\end{lemma}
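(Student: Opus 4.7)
The plan is to leverage the decomposition already introduced in Appendix~\ref{ss:sec:diff:rho:L} and exploit the independence of the per-link exponential clocks driving the classical CSMA dynamics. For each of $l$ and $l'$, I would split the event $\{\mathbf{1}_{\theta_{L},k}(t+\varep)=1\}$ into the \emph{single-transition} subevent $E_{k,1}$, in which link $k$ stops transmitting at exactly one time $t'\in(t,t+\varep)$ and does not restart by $t+\varep$, and the \emph{multi-transition} subevent $E_{k,2}$ containing at least one stop-then-restart cycle. The marginal bounds $P[E_{k,1}\mid\setH(t)] = \varep+O(z\varep^{2})$ and $P[E_{k,2}\mid\setH(t)]=O(z\varep^{2})$ have already been derived in \eqref{eq:l:stop:no:start}--\eqref{eq:prob:c2:ind:xh}.

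The structural fact that powers the argument is that since $l$ and $l'$ are both active at time $t$, they cannot be interfering neighbours in $G(\setL,\setE)$; otherwise the classical CSMA policy would not permit them to transmit simultaneously. Consequently neither link's state directly blocks the other's channel access, and the two residual transmission-length clocks of $l$ and $l'$ are independent unit-rate exponentials given $\setH(t)$. In particular, the events $\{l\text{ stops in }(t,t+\varep)\}$ and $\{l'\text{ stops in }(t,t+\varep)\}$ are conditionally independent. Using the crude upper bound that replaces the ``does not restart'' factor by $1$, this independence yields $P[E_{l,1}\cap E_{l',1}\mid\setH(t)]\leq \varep^{2}+O(\varep^{3})$. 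For the mixed terms I would bound $P[E_{l,2}\cap E_{l',1}\mid\setH(t)]$ by combining the independent stop cost $O(\varep)$ for $l'$ with the independent stop-then-restart cost $O(z\varep^{2})$ for $l$, giving $O(z\varep^{3})$; by symmetry $P[E_{l,1}\cap E_{l',2}\mid\setH(t)]=O(z\varep^{3})$, and $P[E_{l,2}\cap E_{l',2}\mid\setH(t)]=O(z^{2}\varep^{4})$. Summing the four cases,
\[
P\!\left[\mathbf{1}_{\theta_{L},l}(t+\varep)=1,\,\mathbf{1}_{\theta_{L},l'}(t+\varep)=1\,\big|\,\setH(t)\right] \;=\; \varep^{2}+O(z\varep^{3}).
\]

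Dividing by the denominator $P[\mathbf{1}_{\theta_{L},l'}(t+\varep)=1\mid\setH(t)] = \varep(1+O(z\varep))$ obtained from \eqref{eq:p:ind:r:l:varep}--\eqref{eq:abs:el1}, the claim then follows from
\[
P\!\left[\mathbf{1}_{\theta_{L},l}(t+\varep)=1\,\big|\,\mathbf{1}_{\theta_{L},l'}(t+\varep)=1,\,\setH(t)\right] \;\leq\; \frac{\varep^{2}+O(z\varep^{3})}{\varep\bigl(1+O(z\varep)\bigr)} \;=\; \varep+O(z\varep^{2}).
\]

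The main subtlety I expect lies in the ``does not restart'' portion of $E_{l,1}$, since whether $l$'s restart is blocked depends on the states of its interfering neighbours, which could in principle be correlated with those of $l'$ through common second-order neighbours in $G(\setL,\setE)$. I would sidestep this by using the one-sided upper bound that drops the restart constraint altogether: any correlation mediated by shared neighbours can only \emph{decrease} $P[E_{l,1}\cap E_{l',1}]$, because an interfering neighbour can only inhibit, never accelerate, a restart. The upper bound $\varep^{2}+O(\varep^{3})$ therefore remains valid, and all residual correlations live at order $O(z\varep^{3})$ which, after division by $P[B]$ of order $\varep$, are absorbed into the stated $O(z\varep^{2})$ error term, exactly matching the form in the lemma.
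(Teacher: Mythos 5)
Your argument is correct and follows essentially the same route as the paper: you upper-bound the joint probability that both indicators equal one by the probability that both links stop transmitting, factor that via independence of the per-link exponential transmission clocks, and then divide by $P\big[\mathbf{1}_{\theta_{L},l'}(t+\varepsilon)=1\mid\setH(t)\big]=\varepsilon+O(z\varepsilon^{2})$. The paper's version is shorter because it obtains the $\leq\varepsilon^{2}$ bound on the joint directly from the inclusion $\{\mathbf{1}_{\theta_{L},k}(t+\varepsilon)=1\}\subseteq\{\text{link }k\text{ stops before }t+\varepsilon\}$ together with independence of the stop events; your further decomposition into single-/multi-transition subcases, and the discussion of correlated restarts through shared second-order neighbours, is extra bookkeeping that this cruder inclusion already renders unnecessary, though it introduces no errors and lands on the same $\varepsilon^{2}+O(z\varepsilon^{3})$ intermediate estimate before the division.
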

\begin{proof}
By \eqref{eq:p:ind:r:l:varep} and \eqref{eq:abs:el1}, we have 
\begin{align}
\Big|  P\big[\mathbf{1}_{\theta_{L},l}(t+\varepsilon)=1 \ | \setH(t)  \big]-\varepsilon \Big|=O(z\varepsilon^2).
\label{ineq:p:l:stop}
\end{align}
We also have that 
\begin{align}
&   P\Big[\mathbf{1}_{\theta_{L},l}(t+\varepsilon)=1 \text{ and } \mathbf{1}_{\theta_{L},l'}(t+\varepsilon)=1 \ \big| \setH(t)\Big]
\nonumber \\ &\leq P\Big[ \text{link $l$ and $l'$ stop transmitting before $t+\varepsilon$} \ \big| \setH(t) \Big]
\nonumber \\ & = [\varepsilon-O(\varepsilon)][\varepsilon-O(\varepsilon)]\leq \varepsilon^2.
\label{ineq:p:ll':stop}
\end{align}
The first inequality follows from the fact that having 
$$\mathbf{1}_{\theta_{L},l}(t+\varepsilon)=1,$$
 by definition, 
requires link $l$ to stop transmitting before time $t+\varep$. The equality follows form \eqref{eq:prob:stop:xh} and  
the fact that packet transmission times are i.i.d over time and over links. 

Using the definition of conditional probabilities, \eqref{ineq:p:l:stop} and \eqref{ineq:p:ll':stop}, we obtain
\begin{align*}
&    P\Big[\mathbf{1}_{\theta_{L},l}(t+\varepsilon)=1 \  \big| \mathbf{1}_{\theta_{L},l'}(t+\varepsilon)=1 \ \big| \setH(t) \Big]
\nonumber \\ & \leq \frac{\varepsilon^2}{\varep-O(z\varep^2)}
 = \varep+O(z\varep^2),
\end{align*}
as required.

\end{proof}


\begin{lemma}\label{lemma:diff:rho:fg}
  For the density $\theta^{(fg)}_{L}(t)$, we have that
\begin{align}
 \Delta[\theta_{L}^{(fg)}(t)] &= \Big[R_{L}(t)-3z\theta_{L}^{(fg)}(t)+\theta^{(f)}(t) 
\nonumber \\ &\qquad  \qquad+e^{(fg)}(t)+B^{(fg)}(t)\Big],
\end{align}
where
\begin{align}
&  |e^{(fg)}(t)|=O(z^{2}\varepsilon)+  \ev[X^{(fg)}_r(t) \ | \setH(t)], 
\nonumber \\
&    \ev[X^{(fg)}_r(t) \ | \setH(t) ]\leq \hat{c}_{n}\theta_{L,r}(t) \big[ z +O(z^2  \varep)\big],
\nonumber \\
 & \ev\Big[B^{(fg)}(t)^{2}\ \big|\setH(t)\Big]\leq O(z L^{-\zeta})+O(z^{3}\varep)    ,
  \end{align}
and $\hat{c}_{n}>0$ is a constant independent of $L$ and $z$ as defined in Appendix~\ref{ss:sec:diff:rho:L}.f.
\end{lemma}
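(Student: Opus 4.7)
The plan is to adapt the event-enumeration of Appendix~\ref{ss:sec:diff:rho:L} from $\theta_L(t)$ to the new density $\theta^{(fg)}_L(t)$. First I would enumerate every elementary event that can change $\theta^{(fg)}_L(t)$ on the interval $[t,t+\varep]$ and group them into three ordinary categories plus a rare category. The ordinary categories are: (i) a critical link $e\in R_L(t)$ stops transmitting, instantaneously turning the critical configuration into an $(fg)$-state (in a critical configuration the boundary links $f$ and $g$ have no other active interfering neighbour than $e$, so once $e$ becomes idle all three of $e,f,g$ sense the channel as idle); (ii) in an $(f)$-state, where exactly one of $\{f,g\}$ is active and $e$ is still idle, that active link stops transmitting, restoring the three-idle configuration; (iii) one of the three idle links in an $(fg)$-state wins the back-off and resumes transmitting, destroying the state.

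Next, exactly as in Appendix~\ref{ss:sec:diff:rho:L}, I would compute the one-step conditional probabilities. Each state of type (i) and (ii) produces a creation with probability $\varep + e,\ |e|=O(z\varep^2)$, since links drop at unit rate and any double transition inside $[t,t+\varep]$ occurs with probability $O(z\varep^2)$. Each $(fg)$-state is destroyed with probability $3z\varep + O(z^2\varep^2)$, coming from three independent attempts of rate $z$. Summing per state, dividing by $L\varep$, and taking expectations conditional on $\setH(t)$ yields the deterministic drift $R_L(t)+\theta^{(f)}_L(t)-3z\theta^{(fg)}_L(t)$, with the higher-order residuals collected into $e^{(fg)}(t)$ of size $O(z^2\varep)+\ev[X_r^{(fg)}(t)\mid\setH(t)]$. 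The rare-event term $X_r^{(fg)}(t)$ counts the $(fg)$-modifications that arise in the $r_n$-neighbourhood of a link contributing to $\theta_{L,r}(t)$, and it is bounded exactly as in \eqref{eq:X:rho:rare}--\eqref{eq:X:rho:rare:e:5} using the $r_n$-neighbourhood count $\hat c_n$.

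For the second-moment bound I would re-use the decoupling argument of Lemma~\ref{lemma:cond:prob}: because back-off timers and transmission times are independent across links, pairs of distinct states contribute at most an $O(z^3\varep^3)$ covariance, while the diagonal terms contribute at most $O(z\varep)$ per state; there are at most $L$ states of each type, so after dividing by $(L\varep)^2$ I obtain $\ev\bigl[B^{(fg)}(t)^2\mid\setH(t)\bigr]\le O(zL^{-\zeta})+O(z^3\varep)$, matching the bound in \eqref{eq:hat:rho:f:error}.

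The step I expect to be the main obstacle is ensuring that the classification of $(fg)$-creation paths is both exhaustive and disjoint. In particular I would need to check that every creation of an $(fg)$-state is uniquely attributed either to an $R_L(t)$ event (the critical link $e$ drops while its cluster is still intact, so $f,g$ become idle-sensing for the first time) or to a $\theta^{(f)}_L(t)$ event (a previously created $(fg)$-state was partially repopulated by one of $\{f,g\}$ starting to transmit and then dropping), and that any finer history involving two or more transitions inside $[t,t+\varep]$ is correctly absorbed either into the $O(z^2\varep)$ portion of $e^{(fg)}(t)$ or, when two simultaneously-idle links sit within $r_n$-distance, into $X_r^{(fg)}(t)$ via the rare-event bookkeeping of Appendix~\ref{sec:rare:new}.
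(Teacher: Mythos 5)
Your proposal is correct and takes essentially the same approach as the paper. The paper's proof of this lemma is extremely terse — it only lists the five contributing event categories (critical link drops, active link in an $(f)$-state drops, one of the three idle links in an $(fg)$-state resumes, multiple-transition events, rare-event neighbourhood activations) and then says to repeat the estimates of the $\theta_L(t)$ derivation; your proposal is a faithful and more explicit rendering of exactly that enumeration and that computation, including the covariance bookkeeping via the analogue of Lemma~\ref{lemma:cond:prob}.
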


\begin{proof}
   The events that can contribute to the change in $\theta^{(fg)}_{L}(t)$ from time $t$ to time $t+\varep$
are the following events:
\begin{itemize}
\item A critical link at time $t$ associated with density $R_{L} (t)$ is inactive at time $t+\varep$.
\item An active link at time $t$ associated with density $\theta^{(f)}_{L}(t)$ is inactive at time $t+\varep$.
\item Either of three links at time $t$ associated with density $\theta^{(fg)}_{L}(t)$ is active at time $t$.
\item Multiple transitions (state-changes) from time $t$ to time $t+\varep$ done by one link or a link and the links in its $r_{n}$-neighbourhood
 that can increase or decrease the number of events associated with $\theta^{(fg)}_{L}(t+\varep)$.
\item An inactive link that at time $t$ is within $r_{n}$-neighbourhood of another inactive link associated with density
$\theta_{L,r}(t)$ is active at time $t+\varep$.

\end{itemize}
Considering the above events and taking similar steps leading to \eqref{eq:diffeq:1}-\eqref{eq:B:hat:ex}, we obtain the statement of the lemma, as required.  
\end{proof}


\begin{lemma}\label{lemma:diff:rho:f}
  For the density $\theta^{(f)}_{L}(t)$, we have that
\begin{align}
 \Delta[\theta_{L}^{(f)}(t)] &= \Big[2z\theta_{L}^{(fg)}(t)-(1+z)\theta^{(f)}(t) 
\nonumber \\ &\qquad  \qquad+e^{(f)}(t)+B^{(f)}(t)\Big],
\end{align}
where
\begin{align}
&  |e^{(f)}(t)|=O(z^{2}\varepsilon)+  \ev[X^{(f)}_r(t) \ | \setH(t)], 
\nonumber \\
&    \ev[X^{(f)}_r(t) \ | \setH(t) ]\leq \hat{c}_{n}\theta_{L,r}(t) \big[ z +O(z^2  \varep)\big]
\nonumber \\
& \ev\Big[B^{(f)}(t)^{2}\ \big|\setH(t)\Big]\leq O(z L^{-\zeta})+O(z^{3}\varep).    
\end{align}
and $\hat{c}_{n}>0$ is a constant independent of $L$ and $z$ as defined in Appendix~\ref{ss:sec:diff:rho:L}.f.
\end{lemma}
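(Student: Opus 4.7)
The plan is to mirror the derivation of the difference equation for $\theta_{L}(t)$ in Appendix~\ref{ss:sec:diff:rho:L} and the proof of Lemma~\ref{lemma:diff:rho:fg}, adapted to the transitions that create or destroy the configurations counted by $\theta^{(f)}_{L}(t)$. Recall from Section~\ref{sec:chang:event} that an event contributing to $\theta^{(f)}_{L}(t)$ is characterized by a (previously) critical link that has become inactive together with exactly one of its two associated boundary neighbours being active while the other remains inactive and senses the channel as idle.

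First, I will classify the events that can change $\theta^{(f)}_{L}(t)$ from time $t$ to time $t+\varepsilon$ into five types: (i) creation events, in which one of the two inactive boundary neighbours of a configuration in $\theta^{(fg)}_{L}(t)$ starts transmitting (each of the two neighbours attempts with rate $z$, giving the factor $2z\theta^{(fg)}_{L}(t)$); (ii) destruction events in which the unique active link of a configuration in $\theta^{(f)}_{L}(t)$ stops transmitting (rate $1$); (iii) destruction events in which the remaining inactive boundary neighbour of such a configuration starts transmitting (rate $z$); (iv) ``multiple-transition'' events inside the $\varepsilon$-window, contributing only $O(z^{2}\varepsilon^{2})$ per link and hence $O(z^{2}\varepsilon)$ to the density rate; and (v) rare events in the $r_{n}$-neighbourhood of a link counted by $\theta_{L,r}(t)$, whose aggregate contribution is absorbed into $X^{(f)}_{r}(t)$. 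Types (ii) and (iii) together contribute $(1+z)\theta^{(f)}_{L}(t)$ to the negative drift, which explains the coefficient $-(1+z)$ in the stated equation.

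Next, for each event type I will compute the per-link conditional probability given $\setH(t)$, using the memoryless property of the transmission and back-off timers exactly as in \eqref{eq:prob:stop:xh}--\eqref{eq:l:stop:no:start} and \eqref{eq:P:rho:f:l:s}--\eqref{eq:P:rho:f:l:s:e3l}. Summing over the relevant link sets and dividing by $L\varepsilon$ yields the stated drift expression, with error term
\[
 |e^{(f)}(t)|=O(z^{2}\varepsilon)+\ev[X^{(f)}_{r}(t)\mid\setH(t)],
\]
where the $O(z^{2}\varepsilon)$ absorbs all the higher-order transition probabilities and the second term absorbs the rare-event contribution. The bound
\[
 \ev[X^{(f)}_{r}(t)\mid\setH(t)]\leq \hat{c}_{n}\theta_{L,r}(t)[z+O(z^{2}\varepsilon)]
\]
is obtained exactly as in Appendix~\ref{ss:sec:diff:rho:L}.f, since at most $\hat{c}_{n}$ links sit inside the $r_{n}$-neighbourhood of any given link in $\theta_{L,r}(t)$ and each can fire with rate $z$.

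Finally, for the martingale-difference term $B^{(f)}(t)$, I will compute the second moment by expanding the double sum over link pairs as in \eqref{ineq:corel:pll'}--\eqref{eq:hat:rho:f:error}. The diagonal contribution is $O(z\varepsilon)$ per link, summing to $O(zL^{-\zeta})$ after dividing by $(L\varepsilon)^{2}=L^{2}\varepsilon^{2}$ and using $\varepsilon=1/\lceil L^{1-\zeta}\rceil$. The off-diagonal contribution is handled by a conditional-probability lemma analogous to Lemma~\ref{lemma:cond:prob}: the independence of the exponential timers across distinct links makes pairwise events nearly independent up to an $O(z\varepsilon^{3})$ covariance per pair, which sums to $O(z^{3}\varepsilon)$. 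Combined, these give $\ev[B^{(f)}(t)^{2}\mid\setH(t)]\leq O(zL^{-\zeta})+O(z^{3}\varepsilon)$, completing the bound. The main bookkeeping obstacle is to make sure that transitions out of a $\theta^{(f)}$-configuration and simultaneous transitions into one are not double-counted, and that the rare-event catchall $X^{(f)}_{r}(t)$ is defined broadly enough to absorb every multi-link transition touching the $r_{n}$-neighbourhood of a $\theta_{L,r}$-link; once this is done carefully, the remaining calculations are entirely parallel to those already carried out for $\theta_{L}(t)$ and $\theta^{(fg)}_{L}(t)$.
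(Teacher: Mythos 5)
Your event decomposition is exactly the paper's: creation via one of the two $f$/$g$ links of a $\theta^{(fg)}$-configuration firing (rate $2z\theta^{(fg)}_{L}$), destruction via the single active link stopping (rate $1$) or the remaining inactive link firing (rate $z$), plus the multiple-transition and rare-event catchalls, followed by the same per-link probability and second-moment estimates used in \eqref{eq:diffeq:1}--\eqref{eq:B:hat:ex}. The proposal is correct and mirrors the paper's proof of the lemma.
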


\begin{proof}
The events that can contribute to the change in $\theta^{(f)}_{L}(t)$ from time $t$ to time $t+\varep$
are the following events:
\begin{itemize}
\item Either of the two of the three inactive links at time $t$ associated with the events defined by $\theta^{(fg)}_{L}(t)$, e.g.,
 link $f$ or $g$ in Fig.~\ref{fig:lattice_net:1}, is active at time $t+\varep$.
\item An active link at time $t$ associated with density $\theta^{(f)}_{L}(t)$ is inactive at time $t+\varep$.
\item An inactive link at time $t$ associated with density $\theta^{(f)}_{L}(t)$ is active at time $t+\varep$.
\item Multiple transitions (state-changes) from time $t$ to time $t+\varep$ done by one link or a link and the links in its $r_{n}$-neighbourhood
 that can increase or decrease the number of events associated with $\theta^{(f)}_{L}(t+\varep)$.
\item An inactive link that at time $t$ is within $r_{n}$-neighbourhood of another inactive link associated with density
$\theta_{L,r}(t)$ is active at time $t+\varep$.

\end{itemize}

Considering the above events and taking similar steps leading to \eqref{eq:diffeq:1}-\eqref{eq:B:hat:ex}, we obtain the statement of the lemma, as required.   
\end{proof}


\begin{lemma}\label{lemma:diff:rho:tilde}
For the density $\tilde{\theta}_{L}(t)$, we have that
\begin{align}
 \Delta[\tilde{\theta}_{L}(t)] &= \Big[ (\theta_{L}(t)-R_{L}(t)-\theta^{(f)}(t)) -z\tilde{\theta}_{L}(t)
\nonumber \\ &\qquad  \qquad \qquad \qquad +\tilde{e}(t)+\tilde{B}(t)\Big],
\end{align}
where
\begin{align}
&  |\tilde{e}(t)|\leq O(z^{2}\varepsilon)+  \ev[\tilde{X}_r(t) \ | \setH(t) ] + \ev[\tilde{X}_h(t)\ | \setH(t)], 
 \nonumber \\ &   \ev[\tilde{X}_r(t) \ | \setH(t) ]\leq \hat{c}_{n}\theta_{L,r}(t) \big[ z +O(z^2  \varep)\big],
\nonumber \\ &    \ev[\tilde{X}_h(t)\ | \setH(t) ]\leq \hat{c}_{n}\theta_{L,h}(t) \big[ 1 +O(z  \varep)\big],
\nonumber \\ &
 \ev\Big[\tilde{B}(t)^{2}\ \big|\setH(t)\Big]\leq O(z L^{-\zeta})+O(z^{3}\varep),
  \end{align}  
and $\theta_{L,h}(t)$ is defined in Appendix~\ref{sec:all:densities}, and $\hat{c}_{n}$ is a constant independent of
$L$ and $z$ as defined in Appendix~\ref{ss:sec:diff:rho:L}.f.
\end{lemma}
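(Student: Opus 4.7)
The plan is to imitate the derivation of the difference equations for $\theta_L(t)$, $\theta_L^{(fg)}(t)$, and $\theta_L^{(f)}(t)$ in Appendix~\ref{subsec:diff:evol} and in Lemmas~\ref{lemma:diff:rho:fg} and~\ref{lemma:diff:rho:f}, since the structure of the bookkeeping is identical; the only work is to identify which events add links to or remove links from the set counted by $\tilde{\theta}_L(t)$, and then to apply the same small-$\varepsilon$ expansions and second-moment estimates that were used before.

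First, I will enumerate the events that can change the count of links counted by $\tilde{\theta}_L(t)$ between time $t$ and time $t+\varepsilon$. Positive contributions (links that enter $\tilde{\theta}_L$): whenever an active link stops transmitting and the event is ordinary (type-I or type-II, see Appendix~\ref{sec:typ:event}), the link and, in the type-II case, the other involved link are added to $\tilde{\theta}_L$. Since an active link stopping at time $t$ is (i) a critical event with density $R_L(t)$, (ii) an event of the type already being tracked in $\theta_L^{(f)}(t)$, (iii) a rare event, or (iv) an ordinary event, the total rate of ordinary events is essentially $\theta_L(t)-R_L(t)-\theta_L^{(f)}(t)$ up to rare-event corrections. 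Negative contributions (links that leave $\tilde{\theta}_L$): (a) an inactive link in $\tilde{\theta}_L(t)$ becomes active, which happens with attempt rate $z$, giving the dominant loss term $-z\tilde{\theta}_L(t)$; (b) a neighbour of such a link that was previously inactive becomes active, making the channel busy for the $\tilde{\theta}_L$ link — this requires some other inactive link sensing the channel as idle (i.e.\ an element of $\theta_{L,h}(t)$) to attempt first, which is the source of the $\tilde{X}_h(t)$ correction; and (c) rare-event driven losses in an $r_n$-neighborhood, captured by $\tilde{X}_r(t)$, exactly as in \eqref{eq:X:rho:rare}.

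Next, for each of these contributions I will carry out the same two calculations already done for $\theta_L(t)$ in Appendix~\ref{ss:sec:diff:rho:L}.a--f: compute the conditional probability of the indicator event given $\setH(t)$ using the memoryless property (each stop has probability $\varepsilon+O(z\varepsilon^2)$, each start has probability $z\varepsilon + O(z^2\varepsilon^2)$, and multiple transitions contribute $O(z\varepsilon^2)$), then sum over the relevant density to obtain the mean term. This yields $(\theta_L(t)-R_L(t)-\theta_L^{(f)}(t))$ for the positive drift, $-z\tilde{\theta}_L(t)$ for the dominant negative drift, and $O(z^2\varepsilon)$ error contributions collected into $\tilde{e}(t)$, together with the rare-event and hidden-idle terms $\ev[\tilde{X}_r(t)\mid\setH(t)]$ and $\ev[\tilde{X}_h(t)\mid\setH(t)]$ bounded by $\hat c_n\theta_{L,r}(t)(z+O(z^2\varepsilon))$ and $\hat c_n\theta_{L,h}(t)(1+O(z\varepsilon))$ respectively (the latter has rate $1$ rather than $z$ because the triggering event is a transmission ending, not a backoff firing).

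For the second-moment bound on $\tilde{B}(t)$, I will reuse Lemma~\ref{lemma:cond:prob} and its analogues for starting transitions to control the pairwise correlations $\ev[\mathbf{1}_l\mathbf{1}_{l'}\mid\setH(t)]-\ev[\mathbf{1}_l\mid\setH(t)]\ev[\mathbf{1}_{l'}\mid\setH(t)]=O(z\varepsilon^3)$, then follow the exact computation leading to \eqref{eq:hat:rho:hat:error} and \eqref{eq:hat:rho:f:error} to obtain $\ev[\tilde B(t)^2\mid\setH(t)]\le O(zL^{-\zeta})+O(z^3\varepsilon)$. The main obstacle I anticipate is not any single estimate but rather the careful bookkeeping at step~(i): to avoid double-counting, I need to verify that the three categories of active-link-stopping events (critical, already-$\theta_L^{(f)}$, and ordinary) partition the stopping events up to the rare-event contribution, so that the positive drift can be written cleanly as $\theta_L(t)-R_L(t)-\theta_L^{(f)}(t)$ with only an $\ev[\tilde X_r(t)\mid\setH(t)]$ correction; this is a combinatorial disjointness check on the event classification of Appendix~\ref{sec:first:ques}, \ref{sec:typ:event}, and \ref{sec:chang:event}, and once it is done the remainder of the proof is mechanical.
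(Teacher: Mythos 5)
Your overall plan mirrors the paper's proof closely: enumerate the events that move links into or out of the set counted by $\tilde{\theta}_{L}(t)$, then apply the small-$\varepsilon$ expansions and pairwise-correlation bounds used for $\theta_{L}(t)$, $\theta_{L}^{(fg)}(t)$, and $\theta_{L}^{(f)}(t)$. The leading drift terms $\theta_{L}(t)-R_{L}(t)-\theta^{(f)}(t)$ and $-z\tilde{\theta}_{L}(t)$ are attributed correctly, the $\tilde{X}_{r}$ term is handled the same way as in \eqref{eq:X:rho:rare}, and the second-moment estimate via the analogue of Lemma~\ref{lemma:cond:prob} is exactly what the paper uses. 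The ``combinatorial disjointness check'' you flag is the real content and is handled in the paper by the explicit bullet list of events.

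However, your explanation of the $\tilde{X}_{h}(t)$ term is wrong, and it is the one place where the proof actually requires a non-mechanical observation. You describe $\tilde{X}_{h}(t)$ as a negative loss mechanism: a link in $\theta_{L,h}(t)$ in the neighbourhood of a $\tilde{\theta}_{L}$ link fires its backoff, starts transmitting, and knocks the $\tilde{\theta}_{L}$ link out of the set. But that triggering event is a backoff firing, which occurs with rate $z$, so your mechanism would give a bound of the form $\hat{c}_{n}\theta_{L,h}(t)\,z(1+O(z\varepsilon))$, not the $\hat{c}_{n}\theta_{L,h}(t)(1+O(z\varepsilon))$ stated in the lemma. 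You then assert that the rate is $1$ ``because the triggering event is a transmission ending, not a backoff firing,'' which directly contradicts the mechanism you just described --- in your picture the trigger \emph{is} a backoff firing. The paper's $\tilde{X}_{h}$ is something else entirely: when computing the positive drift, the paper deliberately \emph{excludes} any active link that sits within the $\frac{r_{n}}{2}$-neighbourhood of an inactive link that senses the channel as idle, precisely because its stopping could cause a rare event rather than an ordinary one. The density of such excluded active links is bounded by $\hat{c}_{n}\theta_{L,h}(t)$, and each of them stops transmitting at unit rate, so the omitted contribution (later recovered through the multiple-transition events) is bounded by $\hat{c}_{n}\theta_{L,h}(t)(1+O(z\varepsilon))$. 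That is the source of the rate-$1$ coefficient. Without this correction the positive drift $\theta_{L}(t)-R_{L}(t)-\theta^{(f)}(t)$ over-counts, and your partition argument in the last paragraph does not close: you would either be double-counting rare events against ordinary ones or silently dropping the near-$\theta_{L,h}$ active links without any accounting for them. So the gap is not in your estimates but in the event classification: you need the additional exclusion of near-$\theta_{L,h}$ active links from the first bullet, and the $\tilde{X}_{h}$ term must be read as a correction to the positive drift, not as an extra depletion channel for $\tilde{\theta}_{L}(t)$.
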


\begin{proof}
  Recall that $\tilde{\theta}_{L}(t)$ (see Appendix~\ref{sec:typ:event} and Appendix~\ref{ss:sec:diff:rho:L}) represents the density of links that are inactive 
at time $t$ as a result of an ordinary event of type-I and type-II. Considering this, the following are the events that contribute to the
 change in $\tilde{\theta}_{L}(t)$ from time $t$ to time $t+\varep$:
\begin{itemize}
\item An active link $l$ at time $t$ is inactive at time $t+\varep$ where link $l$ at time $t$ 1) is
 not associated with densities $R_{L}(t)$ or $\theta^{(f)}(t)$, and 2) at time $t$ is not 
within $\frac{r_{n}}{2}$-neighbourhood of any inactive link that senses the channel as idle at time $t$.
\item An inactive link at time $t$ associated with density $\tilde{\theta}_{L}(t)$ is active at time $t+\varep$.
\item Multiple transitions (state-changes) from time $t$ to time $t+\varep$ done by one link or a link and the links in its $r_{n}$-neighbourhood
 that can increase or decrease the number of events associated with $\tilde{\theta}_{L}(t+\varep)$.
\item An inactive link that at time $t$ is within $r_{n}$-neighbourhood of another inactive link associated with density
$\theta_{L,r}(t)$ is active at time $t+\varep$.
\end{itemize}

As defined in Appendix~\ref{sec:all:densities}, $\theta_{L,h}(t)$ is the fraction of links that are inactive and sense the channel as idle
at time $t$. Considering this definition, we note the following. 
In the first event considered above, we included active links not associated with densities
$R_{L}(t)$ and $\theta^{(f)}(t)$ since these active links are already accounted for in the difference equations for 
$\theta^{(fg)}(t)$ and $\theta^{(f)}(t)$, respectively. We further included those of such active links not within $\frac{r_{n}}{2}$-neighbourhood of 
inactive links that sense the channel as idle to avoid rare events\footnote{We need to avoid rare events since the definitions of densities $\tilde{\theta}_{L,1}(t)$
and $\tilde{\theta}_{L,2}(t)$ requires active links stop transmitting without leading to rare events.}. 
By the definition of $\hat{c}_{n}$, the density of links within $\frac{r_{n}}{2}$-neighbourhood of 
inactive links that sense the channel as idle is not larger than
$$\hat{c}_{n}\theta_{L,h}(t).$$
However, if an active link $l$ is 
 within $\frac{r_{n}}{2}$-neighbourhood of 
an inactive link $l'$ that senses the channel as idle at time $t$, and 1) link $l'$ does sense the channel as busy in the time interval
$[t',t+\varep]$, for some $t'$,
$$t<t' < t+\varep,$$
 and 2) link $l$
stops transmitting at some time $t''$, 
$$t'<t''\leq t+\varep,$$
 and stays idle by time $t+\varep$, then we may have an ordinary event. These sequence of events are accounted
 for by multiple transitions event 
considered above.

Considering the above observations and taking similar steps leading to \eqref{eq:diffeq:1}-\eqref{eq:B:hat:ex}, we obtain the statement in the lemma, as required.
\end{proof}



\begin{lemma}\label{lemma:rare}
Consider any finite time-interval $[t_{1},t_{2}]$, where $0<t_{1}<t_{2}<\infty$. Then, for $z>0$,
 \begin{align*}
&\lim_{L\to \infty}   P\bigg[ \sup_{t\in [t_{1},t_{2}]} \theta_{L,r}(t) <c_{r}z^{-2} \bigg]=1
 \end{align*}
 where $c_{r}>0$ is a constant independent of $z$ and $L$.
\end{lemma}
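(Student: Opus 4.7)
The plan is to bound $\theta_{L,r}(t)$ by a two-stage rate-balance argument that produces the two factors of $1/z$, and then to lift the expectation bound to a high-probability supremum bound by a standard concentration/union argument.

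First I would set up the accounting. Call a link $l$ \emph{lone-idle} at time $s$ if $l$ is inactive and all of its interferers are inactive (so it is sensing idle). Once a link becomes lone-idle, the Markov dynamics give it an attempt clock with rate $z$, so the probability it is still lone-idle after a further time $\Delta$ is bounded by $e^{-z\Delta}$. Since new lone-idle links are created only when a transmission ends (there are $O(L)$ active links, each stops at rate $1$, and each end-of-transmission can render at most a bounded number $\hat c_n$ of links lone-idle), the per-link creation rate of lone-idleness is bounded by a constant $c_0$ independent of $z$ and $L$. A Little's-law / Palm-calculus type estimate then gives
\begin{equation*}
\ev[\theta_{L,h}(t)] \leq c_0 \int_0^{t} e^{-z(t-s)}\,ds \leq \frac{c_0}{z},
\end{equation*}
so the expected density of lone-idle links is $O(1/z)$. (A separate companion lemma, Lemma~\ref{lemma:rholh}, which the author already invokes in \eqref{eq:def:E:Ltau}, makes precisely this statement, and in the proof I would just cite or reprove it along these lines.)

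Next I would apply the same accounting one level higher to obtain the $1/z^2$ factor. A rare event can occur at time $s$ only if some active link $l'$ ends its transmission at $s$ and a link $l$ in its $r_n/2$-neighbourhood is already lone-idle at $s^-$. The rate of transmission endings per link is $1$, and the conditional probability that a fixed nearby link is lone-idle at that instant is $O(1/z)$ by the first step. Hence the per-link creation rate of rare-event pairs is $O(1/z)$. Once such a pair is created, both involved links are idle with independent attempt clocks of rate $z$, so the pair survives to time $t$ with probability at most $e^{-2z(t-s)}$. Again by Little's law,
\begin{equation*}
\ev[\theta_{L,r}(t)] \;\leq\; \hat c_n \cdot \frac{c_0}{z}\int_0^{t} e^{-2z(t-s)}\,ds \;\leq\; \frac{c_0 \hat c_n}{2\,z^{2}},
\end{equation*}
which is the desired $O(z^{-2})$ mean bound with a constant independent of $L$, $z$, $t_1$, $t_2$.

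To pass from expectation to the supremum-in-probability statement, I would use the fact that $\theta_{L,r}(t)$ is an average of $L$ $\{0,1\}$-valued indicators whose dependence structure is local on the interference graph (each link depends only on its $r_n$-neighbourhood and the history therein). A Chebyshev or Efron--Stein-type bound for sums of locally dependent indicators yields $\mathrm{Var}[\theta_{L,r}(t)] = O(1/L)$, so for any fixed $t$, $P[\theta_{L,r}(t) \geq c_r z^{-2}]\to 0$ as $L\to\infty$ once $c_r > c_0\hat c_n / 2$. To upgrade this to a supremum over $[t_1,t_2]$, I would discretize time into $O(L)$ grid points of spacing $\varepsilon$ (as in \eqref{eq:def:vareps}), union-bound over the grid using the variance bound, and control the fluctuation between consecutive grid points by the fact that at most $O(L\varepsilon)$ transitions can occur in a subinterval of length $\varepsilon$ with high probability (which itself follows from standard Poisson tail estimates).

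The main obstacle will be the second step, in particular the claim that the conditional probability that a given link is lone-idle at the instant a nearby transmission ends is $O(1/z)$. Under the CSMA Markov chain the event ``$l$ is lone-idle at $s^-$'' and the event ``$l'$ ends a transmission at $s$'' are not independent; conditioning on the latter biases the neighbourhood of $l'$ toward states compatible with $l'$ being active. I would handle this by replacing the na\"ive independence argument with a Palm/Campbell computation: the joint rate of the pair event can be written as an expectation against the stationary (or instantaneous) measure of the joint state of the local neighbourhood, and after noting that the graph degree and $r_n$ are bounded, the conditional probability that $l$ is lone-idle given $l'$ is active is still $O(1/z)$ for exactly the same reason as unconditionally. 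Making this rigorous is the delicate part; once it is in hand, the rest of the argument is bookkeeping.
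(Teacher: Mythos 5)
Your route is genuinely different from the paper's, even though both rest on the same two rate estimates (creation of rare-event pairs at rate $O(1/z)$ because $\theta_{L,h}=O(1/z)$, and clearance at rate $\Theta(z)$). The paper packages these into a stochastic difference inequality $\Delta[\theta_{L,r}(t)]\leq 6\hat c_n'\theta_{L,h}(t)-z\theta_{L,r}(t)+e_r(t)+B_r(t)$, drops to the deterministic ODE $\dot x_r\leq 6\hat c_n' x_h-zx_r$, solves it to get $x_r< c_r'z^{-2}$, and then invokes the same ODE-tracking machinery (a Gronwall-type induction plus Doob's submartingale inequality on the cumulated noise) that drives Proposition~1. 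You instead take a Little's-law/Palm route to a mean bound $\ev[\theta_{L,r}(t)]=O(z^{-2})$, a variance bound $O(1/L)$ from local dependence, and a grid union bound to get the supremum. Your approach is conceptually more elementary and doesn't need the ODE apparatus, but it pushes the work onto two lemmas the paper gets for free from its framework: the Palm/Campbell conditioning argument (that the nearby link is lone-idle with probability $O(1/z)$ \emph{given} a transmission is ending -- the bias you correctly flag), and a concentration lemma for sums of locally dependent indicators whose domain of dependence grows over the time window. Neither is insurmountable, but both would need to be made precise, and the paper's Doob-inequality step handles the uniform-in-$t$ control more automatically than your grid discretization. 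One small correction to your survival estimate: the definition of $\theta_{L,r}$ only requires the \emph{partner} link $l'$ to sense idle (the counted link $l$ need only be inactive), so the guaranteed escape rate for the pair is $z$, not $2z$; this only changes the constant, not the $z^{-2}$ order.
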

\begin{proof}
Recall that in Section~\ref{sec:rare:new}, we defined $\theta_{L,r}(t)$ to denote the density (fraction) of links that are
inactive at time $t$, and that
in whose $r_{n}$-neighbourhood, there is another inactive link that senses the channels as idle at time $t$ such that both inactive links have remained idle until time $t$ after a rare event
with which both links are involved. 

In addition, recall that $\theta_{L,h}(t)$ is the density of links that are inactive and sense the channel as idle at time $t$. By 
\eqref{ineq:diff:ineq:rho:h} in the proof 
of Lemma~\ref{lemma:rholh}, we have that 
\begin{align}
  \Delta(\theta_{L,h}(t)) \leq \big[5-(z+5) \theta_{L,h}(t)\big]\varep+e_{h}(t)+B_{h}(t)
\label{ineq:diff:ineq:rho:h:2}
\end{align}
where $e_{h}(t)$ and $B_{h}(t)$ are properly bounded in \eqref{eq:rho:h:error:t} and \eqref{eq:rho:h:B:t}, respectively.

To obtain a difference inequality for $\theta_{L,r}(t)$, we study the change in $\theta_{L,r}(t)$ from time $t$ to time $t+\varep$. 
Consider a link $l\in \theta_{L,h}(t)$. If an active link $l'$ within $2r_{n}$-neighbourhood
of link $l$ stops transmitting at some time $t'>t$, then at most link $l$, link $l'$, and interfering links of link $l'$, which may also find the channel
as idle, may be counted as the links belonging to $\theta_{L,r}(t')$. We chose the $2r_{n}$-neighbourhood instead of $r_{n}$-neighbourhood towards obtaining an upperbound for the rate by which $\theta_{L,r}(t)$ increases. Therefore, since links stops transmitting with unit rate, and that there 
are at most $\hat{c}_{n}'$ links in the $2r_{n}$-neighbourhood of any link, where $\hat{c}_{n}'>0$ is a constant, the term
$$ 6 \hat{c}_{n}'  \theta_{L,h}(t)$$
plus an error term gives the maximum rate by which $\theta_{L,r}(t)$ increases due to active links at time $t$ that 
are within $2r_{n}$-neighbourhood of inactive links in $\theta_{L,h}(t)$ and that are idle by time $t+\varep$.

In addition, since each inactive link $l$ in $\theta_{L,r}(t)$ with rate $z$ tries to become active, the term
$$z\theta_{L,r}(t)$$ 
plus an error term gives the minimum rate by which $\theta_{L,r}(t)$ decreases due to inactive links in $\theta_{L,r}(t)$ start transmitting. 
Finally, we must consider the events in which multiple transitions occur from time $t$ to time $t+\varep$ by one link or a link and the 
links within its $r_{n}$-neighbourhood. These events can contribute to $\theta_{L,r}(t+\varep)$.

Considering the maximum and minimum rates obtained above and the events
with multiple transitions, we can
use a similar approach as taken in 
Appendix~\ref{subsec:diff:evol}, to show that
\begin{align}
\Delta[\theta_{L,r}(t)]&\leq   6 \hat{c}_{n}'\theta_{L,h}(t)-z \theta_{L,r}(t)+e_{r}(t)+B_{r}(t). \label{ineq:diff:ineq:rho:r}
\end{align}
where 
\begin{align}
|e_{r}(t)|=O(z^{2}\varepsilon),   \label{eq:e:r:bound:lem}
\end{align}
and
\begin{align}
  \ev[B_{r}(t)^{2}|\mathcal{H}(t)]=O(zL^{-\zeta})+O(z^{3}\varep). \label{eq:B:r:bound:lem}
\end{align}

 For a fixed $z$, we can obtain the deterministic differential inequality counterparts of \eqref{ineq:diff:ineq:rho:h:2} 
and \eqref{ineq:diff:ineq:rho:r} by letting
$$e_{n}(t)=B_{n}(t)=e_{r}(t)=B_{r}(t) \equiv 0,$$
 letting 
$$\varepsilon \to 0,$$
 and replacing $\theta_{L,h}(t)$ and 
$\theta_{L,r}(t)$
 with $x_{h}(t)$ and $x_{r}(t)$, respectively:
 \begin{align}
&\frac{d}{dt} x_{h}\leq 5-(z+5) x_{h}, \nonumber \\ 
&   \frac{d}{dt} x_{r}\leq 6\hat{c}_{n}' x_{h} -z x_{r}. \label{ineq:diff:ineq:rho:r:cont}
 \end{align}

By the proof of Lemma~\ref{lemma:rholh}, we can choose $t_{1}'$ independent of $z$ and $L$ with
$0<t_{1}'<t_{1}$ such that
\begin{align}
\sup_{t\in[t_{1}',t_{2}]}  x_{h}(t)\leq c_{\theta,h}' z^{-1},
\end{align}
where $c_{\theta,h}'>0$ is a constant independent of $z$ and $L$. Using this upperbound for $x_{h}(t)$ and the differential inequality for $x_{r}(t)$ in 
\eqref{ineq:diff:ineq:rho:r:cont}, we can show that
\begin{align}
\sup_{t\in [t_{1},t_{2}]}   x_{r}(t) < c_{r}'z^{-2}
\end{align}
for some constant $c_{r}'>0$ independent of $L$ and $z$.

Therefore, the solution to the deterministic differential inequality counterparts of \eqref{ineq:diff:ineq:rho:r}
is less than $c_{r}' z^{-2}$ over the entire interval $[t_{1},t_{2}]$. 
Using this result, the bounds given in \eqref{eq:e:r:bound:lem} and \eqref{eq:B:r:bound:lem}, which both approach 
zero as $L$ approaches infinity, and the methods of 
Appendix~\ref{subsec:deter:eqs} and~\ref{subsec:final:step}, we obtain the statement of the lemma for 
$$c_{r}=c_{r}'+1,$$
 as required.
\end{proof}


\begin{lemma}\label{lemma:rholh}
 Consider any finite time-interval $[t_{1},t_{2}]$, where $0<t_{1}<t_{2}<\infty$. Then, for $z>0$,
 \begin{align*}
&\lim_{L\to \infty}   P\bigg[ \sup_{t\in [t_{1},t_{2}]} 
\nonumber \\ & \qquad \quad 
\max \Big(\theta_{L,h}(t), \theta^{(fg)}_{L}(t), \theta^{(f)}_{L}(t) ,\tilde{\theta}_{L}(t) \Big) <c_{\theta}z^{-1} \bigg]=1
 \end{align*}
 where $c_{\theta}>0$ is a constant independent of $z$ and $L$.
\end{lemma}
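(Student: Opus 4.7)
My plan is to derive, for each of the four densities $h \in \{\theta_{L,h}, \theta^{(fg)}_L, \theta^{(f)}_L, \tilde{\theta}_L\}$, a drift inequality of the form
\begin{equation*}
\Delta[h(t)] \leq a_h(t) - z\,b_h\, h(t) + e_h(t) + B_h(t),
\end{equation*}
where $b_h>0$ is a constant independent of $z$ and the source $a_h(t)$ is either uniformly bounded or is already $O(1/z)$ by a previously established bound. Passing to the deterministic counterpart by setting $e_h\equiv B_h\equiv 0$ and $\varep\to 0$ as in Appendix~\ref{subsec:deter:eqs}, the scalar linear ODE $\dot x = a - zb x$ gives $x(t) \leq a/(zb) + x(t_0')e^{-zb(t-t_0')}$. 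For any fixed $t_1 > t_0'$, using $ze^{-zb(t_1-t_0')}\leq 1/[eb(t_1-t_0')]$ for all $z>0$, I obtain $x(t) \leq C/z$ uniformly on $[t_1,t_2]$ for a constant $C$ independent of $z$ and $L$. In the separate regime $z<1$ the bound $c_\theta/z \geq c_\theta \geq 1$ is trivial since every density is bounded by $1$, so both regimes can be absorbed into a single $c_\theta$.

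The only drift inequality not already in the paper is the one for $\theta_{L,h}$; this is the inequality invoked in the proof of Lemma~\ref{lemma:rare} and I would derive it as follows. On a $\varep$-horizon, $\theta_{L,h}$ gains a member only when an active link stops transmitting: that link plus at most four of its interferers may become inactive and idle-sensing, so the inflow is at most $5\,\theta_L(t)\leq 5/2$ per unit time. A link $l\in\theta_{L,h}(t)$ leaves the set either by starting a transmission (rate $z$) or by witnessing a neighbour start a transmission (a constant rate absorbed into a summand $5$), which drains the density at rate at least $(z+5)\theta_{L,h}(t)$. The same accounting of multiple transitions and martingale fluctuations used in Appendix~\ref{subsec:diff:evol} then yields $|e_h(t)| = O(z^2\varep)$ and $\ev[B_h^2 \mid \setH(t)] = O(zL^{-\zeta}) + O(z^3\varep)$, so
\begin{equation*}
\Delta[\theta_{L,h}(t)] \leq 5 - (z+5)\theta_{L,h}(t) + e_h(t) + B_h(t).
\end{equation*}

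With the four drift inequalities in hand, I would chain the ODE bounds sequentially. First, $\theta_{L,h}\leq c_h/z$ from the inequality above with $a=5$, $b=1$. Second, Lemma~\ref{lemma:diff:rho:fg} with the trivial bounds $R_L\leq 1$ and $\theta^{(f)}_L\leq 1$ gives $\theta^{(fg)}_L \leq c_{fg}/z$. Third, substituting $\theta^{(fg)}_L\leq c_{fg}/z$ into Lemma~\ref{lemma:diff:rho:f} renders the source $2z\theta^{(fg)}_L$ bounded by $2c_{fg}$, yielding $\theta^{(f)}_L \leq c_f/z$. Fourth, Lemma~\ref{lemma:diff:rho:tilde} with $\theta_L\leq 1/2$ and the bound on $\theta^{(f)}_L$ already in hand gives $\tilde\theta_L \leq \tilde c/z$. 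Setting $c_\theta = 1 + \max(c_h, c_{fg}, c_f, \tilde c)$ completes the deterministic argument for all $z>0$.

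The main obstacle is transferring these four deterministic bounds into a pathwise $\sup_{t\in[t_1,t_2]}$ statement as $L\to\infty$. Here I would reuse, essentially verbatim, the mean-field argument of Appendix~\ref{subsec:final:step}: the gap between the stochastic trajectory and the ODE solution is controlled by the accumulated deterministic error (an $O(z^2\tau\varep)$ quantity that vanishes by the choice \eqref{eq:def:vareps} of $\varep$) together with a Doob maximal inequality applied to the martingale $\sum_{n\leq\tau/\varep}B_h(n\varep)$ whose second moment aggregates to $O(z\tau L^{-\zeta}) + O(z^3\tau\varep) \to 0$ for fixed $z$. The one subtlety is that the chain must be executed pathwise rather than in expectation: I would establish each stochastic bound on the same high-probability event used for the previous step via a union bound, losing only a vanishing probability at each stage. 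A final union bound over the four densities then yields the lemma.
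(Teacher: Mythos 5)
Your overall architecture --- four drift inequalities, reduction to scalar linear ODEs, the bound $z e^{-zb(t_1-t_0')} \leq 1/[eb(t_1-t_0')]$ to turn an initial-condition exponential into an $O(1/z)$ term, and the mean-field/Doob transfer from Appendix~\ref{subsec:final:step} --- is exactly the scheme the paper uses, and your derivation of the $\theta_{L,h}$ drift inequality matches the one in the paper's proof. (Aside: the paper gets the $(z+5)$ drain purely from substituting $\theta_L(t)\le 1-\theta_{L,h}(t)$ into $5\theta_L$, not from a ``neighbour starts transmitting'' mechanism; your extra justification for the $+5$ is unnecessary and not quite right, but the inequality you land on is the same.)

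However, there is a genuine gap in the chaining step for $\theta^{(fg)}_L$, $\theta^{(f)}_L$, $\tilde\theta_L$: you invoke Lemmas~\ref{lemma:diff:rho:fg}--\ref{lemma:diff:rho:tilde} using only the ``main'' terms and treat the error $e^{(\cdot)}(t)$ as vanishing, but in those three lemmas the error is
\begin{align*}
|e^{(fg)}(t)| = O(z^2\varep) + \ev\big[X^{(fg)}_r(t)\mid\setH(t)\big],
\qquad
\ev\big[X^{(fg)}_r(t)\mid\setH(t)\big] \le \hat{c}_n\,\theta_{L,r}(t)\,\big[z+O(z^2\varep)\big],
\end{align*}
and similarly for $e^{(f)}$, $\tilde{e}$. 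Without a prior bound on $\theta_{L,r}(t)$, this term is $O(z)$, and the deterministic ODE $\dot x \le O(z) - 3zx$ only yields $x = O(1)$, not $O(1/z)$. The missing link in your chain is therefore $\theta_{L,r}$. The correct order is: (i) establish the $\theta_{L,h}$ bound (self-contained, no $\theta_{L,r}$ dependence); (ii) derive the drift inequality for $\theta_{L,r}$ (inflow $\lesssim \theta_{L,h}$, drain $z\theta_{L,r}$) and use (i) to conclude $\theta_{L,r}(t)=O(z^{-2})$ --- this is precisely the content of Lemma~\ref{lemma:rare}; (iii) now the $\hat c_n\theta_{L,r}(t)z$ error contributions are $O(z^{-1})$, which shift each ODE equilibrium by $O(z^{-2})$ and preserve the $O(1/z)$ bound, so your cascade through $\theta^{(fg)}_L \to \theta^{(f)}_L \to \tilde\theta_L$ goes through. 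The paper itself compresses steps (ii)--(iii) into the phrase ``taking similar steps,'' so your omission mirrors an unspoken dependency in the source text, but for the argument to be correct you must interpose the $\theta_{L,r}$ step explicitly. Everything else --- the treatment of the $z<1$ regime (a clean observation the paper doesn't bother with), the pathwise union-bound book-keeping --- is fine.
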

\begin{proof}
   We first provide the proof sketch for the statement of lemma only containing $\theta_{L,h}(t)$, i.e., we show that
   \begin{align}
    \lim_{L\to \infty}   P\bigg[ \sup_{t\in [t_{1},t_{2}]} \theta_{L,h}(t) <c_{\theta,h}z^{-1} \bigg]=1 
\label{eq:lim:prob:rho:h:bound}
   \end{align}
where $c_{\theta,h}>0$ is a constant independent of $z$ and $L$.
Taking similar steps, we obtain the same limit probabilities for $\theta^{(fg)}_{L}(t)$, $\theta^{(f)}_{L}(t)$, and
$\tilde{\theta}_{L}(t)$ with 
corresponding constants $c_{\theta}^{(fg)}>0$, $c_{\theta,h}^{(f)}>0$, and $\tilde{c}_{\theta}>0$. Since all of these limit probabilities state that 
events occur with probability one in the limit, the intersection of these events also occurs with probability one in the same limit 
as $L$ approaches infinity. Hence, using
these limit probabilities, we can
obtain the limit probability in the lemma with 
$$c_{\theta}=\max \big( c_{\theta,h}, c_{\theta}^{(fg)} ,c_{\theta,h}^{(f)} , \tilde{c}_{\theta} \big).$$

We now proceed to provide the proof-sketch for \eqref{eq:lim:prob:rho:h:bound}.
 By definition, since $\theta_{L,h}(t)$ is the fraction of links that are inactive and sense the channel as idle, and $\theta_{L}(t)$ 
is the fraction of links that are active, both at time $t$, we have
 \begin{align}
\theta_{L,h}(t)+\theta_{L}(t)\leq 1.   
\label{ineq:sum:rhos}
 \end{align}

Consider the change in $\theta_{L,h}(t)$ from time $t$ to time $t+\varepsilon$.
Since the attempt rate is 
$z$,  using the same 
arguments provided in Appendix~\ref{subsec:diff:evol}, we can show that the term 
$$z \theta_{L,h}(t)\varepsilon$$
 plus an error term gives 
the fraction of links that belong to $\theta_{L,h}(t)$ and that are active at time $t+\varep$. Hence, $z \theta_{L,h}(t)\varepsilon$ plus
the error term serves as a lowerbound\footnote{We have a lowerbound since an inactive link in $\theta_{L,h}(t)$ that is active 
at time $t+\varep$ may make other (interfering) inactive links in $\theta_{L,h}(t)$ sense the channel as busy at time $t+\varep$. In such a case,
by defintion,  $\theta_{L,h}(t+\varep)$ will not account for these inactive links.} for the rate by which $\theta_{L,h}(t)$ decreases from time $t$ to time $t+\varep$ due to inactive links in $\theta_{L,h}(t)$ that start transmitting.

Similarly, the term 
$$5\theta_{L}(t)\varepsilon$$
 plus an error term
 gives the maximum increase in $\theta_{L,h}(t)$ due to active links in $\theta_{L}(t)$ that stop transmitting. This follows since 1) each active link stops transmitting with unit rate, and 2) 
at most five links may sense the channel as idle
when an active link stops transmitting, i.e., the link itself and its four interfering links.
Finally, we must consider the events in which multiple transitions occur from time $t$ to time $t+\varep$ by one link or a link and the 
links within its $r_{n}$-neighbourhood. These events can contribute to $\theta_{L,h}(t+\varep)$.

 Considering the obtained rates and the events with multiple transitions that contribute to the change in $\theta_{L,h}(t)$ from time $t$ to time $t+\varep$, we can use a similar approach as taken in 
Appendix~\ref{subsec:diff:evol} to show that
\beqa
\Delta(\theta_{L,h}(t))   \leq -z \theta_{L,h}(t) \varepsilon +5 \theta_{L}(t) \varepsilon +e_{h}(t)+B_{h}(t)
\label{ineq:delta:rho:h}
\eeqa
where 
\begin{align}
|e_{h}(t)|=O(z^{2}\varepsilon),  \label{eq:rho:h:error:t}
\end{align}
and 
\begin{align}
  \ev[B_{h}(t)^{2}|\mathcal{H}_{h}(t)]=O(zL^{-\zeta})+O(z^{3}\varep) \label{eq:rho:h:B:t}
\end{align}
where $\setH_{h}(t)$ is the history of $\theta_{L,h}(t)$ from time zero up to and including time $t$.
Using \eqref{ineq:sum:rhos} and \eqref{ineq:delta:rho:h}, we have that 
\begin{align}
  \Delta(\theta_{L,h}(t)) \leq \big[5-(z+5) \theta_{L,h}(t)\big]\varep+e_{h}(t)+B_{h}(t)
\label{ineq:diff:ineq:rho:h}
\end{align}

For a fixed $z$, we can obtain the deterministic differential inequality counterpart of the above stochastic difference inequality by letting
 $$e_{h}(t)=B_{h}(t) \equiv 0,$$
 taking the limit of $\varepsilon$ approaching zero, and replacing $\theta_{L,h}(t)$
 with $x_{h}(t)$:
$$\frac{d}{dt} x_{h}\leq 5-(z+5) x_{h}.$$
Any solution to the above inequality is bounded by a function of the form
 $$\frac{5}{z+5}+C_{h} e^{-(z+5)t},$$
 for some constant $C_{h}\geq 0$
that depends only on the initial condition on $x_{h}(0)$. Since 
as the initial condition 
$$0\leq x_{h}(0)=\theta_{L,h}(0)\leq 1 ,$$
 we have 
$$0 \leq C_{h} < 1.$$
 Therefore, considering the assumption in the lemma that $0< t_{1}<t_{2}<\infty$, we can
 find a constant $c_{\theta,h}'>0$ independent of $L$ and $z$ such that
 \begin{align}
\sup_{t\in [t_{1}, t_{2}]}   x_{h}(t)\leq \sup_{t\in [t_{1}, t_{2}]} \Big[ \frac{5}{z+5}+e^{-(z+5)t}\Big] <c_{\theta,h}' z^{-1}.
 \end{align}

Therefore, the solution to the deterministic differential inequality counterpart of \eqref{ineq:diff:ineq:rho:h}
is less than $c_{\theta,h}' z^{-1}$ over the entire interval $[t_{1},t_{2}]$. 
Using this result, the bounds given in \eqref{eq:rho:h:error:t} and \eqref{eq:rho:h:B:t}, which both approach zero as 
$L$ approaches infinity, and the methods of 
Appendix~\ref{subsec:deter:eqs} and~\ref{subsec:final:step}, we then can show that for 
$$c_{\theta,h} =c_{\theta,h}'+1$$
\eqref{eq:lim:prob:rho:h:bound} holds,
 as required.
\end{proof}

\begin{lemma}\label{lemma:R}
  Consider the lattice interference graph $G_{L}$ and a time $t$, $t_{0} < t < \infty$. 
Suppose we have that 
\begin{align}
  0<\delta_{1}\leq \delta_{L}(t)\leq \delta_{2}<0.5,\label{ineq:delta:lemma:R}
\end{align}
for some constants $\delta_{1}$ and $\delta_{2}$. Moreover, suppose
\begin{align}
  \eta(L)+5\theta_{L,h}(t) < 0.5 \delta_{L}(t)
\end{align}
where $\eta(L)$ is defined in Lemma~\ref{lemma:area} with the property that 
$$\lim_{L\to \infty} \eta(L)=0,$$
and $\theta_{L,h}(t)$ is the density of links that are inactive and sense the channel as idle at time 
$t$, as defined in Appendix~\ref{sec:all:densities}.
Then, there exist constants $z_{R}'$ and $L_{R}'$ such that for $z>z_{R}'$ and $L>L_{R}'$, w.p.1, we have 
\begin{align}
R_{L}(t)\geq c_{R} \delta_{L}(t)^{3}, \label{eq:prob:Rl:del}  
\end{align}
for some constant $c_{R}>0$, independent of $t$, $L$, $z$, and $\tau$.
\end{lemma}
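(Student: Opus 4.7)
The plan is to combine the regularity and randomness Assumptions with a geometric comparison between $\delta_L(t)$ and cluster boundaries, then run a Cauchy–Schwarz/Hölder step to produce the cubic scaling in $\delta_L(t)$. Throughout I will work only with non-dominating clusters and drop the dominating one, which can only decrease the count of critical links and will therefore give a legitimate lower bound on $R_L(t)$.

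The first step is to count critical links using Assumption~\ref{assum:2}. A critical link as pictured in Fig.~\ref{fig:lattice_net:1} is precisely the ``peak'' of a bump of $n=1$ step on the boundary of the cluster to which it belongs, so the total number of critical links in non-dominating clusters equals $\sum_{C \in \setC_L^{(nd)}(t,z)} N_\setC^{(b)}(1)$. Partitioning by boundary length $\ell$ and applying Assumption~\ref{assum:2} with $n=1$ (valid once $L>L_R'$ for some $L_R'$), I get
\begin{align*}
R_L(t)\,L \;\geq\; \sum_{\ell} \#\setC_L^{(nd)}(t,z,\ell)\,\frac{c_1}{\ell} \;=\; c_1 \sum_\ell \frac{N_\ell}{\ell},
\end{align*}
where I abbreviate $N_\ell = \#\setC_L^{(nd)}(t,z,\ell)$. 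Defining $S_1 = \sum_\ell N_\ell\,\ell$ (total non-dominating boundary length) and $S_2 = \sum_\ell N_\ell\,\ell^2$, Hölder's inequality with exponents $3$ and $3/2$ applied to $\sum_\ell (N_\ell/\ell)^{1/3}\cdot N_\ell^{2/3} \ell^{4/3}$ yields $S_1 \leq (\sum_\ell N_\ell/\ell)^{1/3} S_2^{2/3}$, hence
\begin{align*}
\sum_\ell \frac{N_\ell}{\ell} \;\geq\; \frac{S_1^{3}}{S_2^{2}}.
\end{align*}

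Next I upper-bound $S_2$ by Assumption~\ref{assum:1}: for $L>L_R'$,
$S_2 = \sum_{\setC\in\setC_L^{(nd)}(t,z)} \ell(\setC)^2 \leq c_a^{-1}\sum_{\setC\in\setC_L^{(nd)}(t,z)} A(\setC) \leq c_a^{-1}L$,
using that the coverage areas of distinct non-dominating clusters are disjoint and contained in the $L$-area of the lattice. For the lower bound on $S_1$, I invoke the area/boundary lemma (Lemma~\ref{lemma:area}) referenced in the hypothesis: this should supply a bound of the form $S_1 \geq c_\ell\,(\delta_L(t) - \eta(L) - 5\theta_{L,h}(t))\,L$. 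The hypothesis $\eta(L)+5\theta_{L,h}(t) < 0.5\,\delta_L(t)$ together with $\delta_L(t)\leq \delta_2<0.5$ then gives $S_1 \geq (c_\ell/2)\,\delta_L(t)\,L$. Putting these into the Hölder bound,
\begin{align*}
R_L(t) \;\geq\; \frac{c_1}{L}\cdot\frac{\bigl((c_\ell/2)\delta_L(t)L\bigr)^{3}}{(L/c_a)^{2}} \;=\; c_1\Bigl(\tfrac{c_\ell}{2}\Bigr)^{3}c_a^{2}\,\delta_L(t)^{3},
\end{align*}
which is the desired bound with $c_R = c_1(c_\ell/2)^{3}c_a^{2}$, a constant independent of $t$, $L$, $z$, and $\tau$. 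The thresholds $z_R'$ and $L_R'$ will come from (i) ensuring Assumption~\ref{assum:1}--\ref{assum:2} are ``activated'' in the sense that the $\liminf$ statements hold with the stated constants, and (ii) ensuring clusters are well-formed so that Lemma~\ref{lemma:area} applies, with the bound $\delta_L(t) \geq \delta_1 > 0$ keeping the non-dominating boundary away from the degenerate regime.

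The main obstacle is Step 4: invoking the geometric Lemma~\ref{lemma:area} correctly. The corrections $\eta(L)$ (rare-event/boundary slack) and $5\theta_{L,h}(t)$ (contribution of idle-sensing but inactive links, each of which could belong to up to five boundary/interior pockets) must be carefully accounted for, and one must verify that the total non-dominating boundary length truly grows linearly in $\delta_L(t) L$ and is not dominated by these slack terms — which is exactly what the hypothesis of the lemma enforces. The remainder of the argument is essentially an exercise in moment inequalities on cluster-size distributions, and the constants combine cleanly into $c_R$.
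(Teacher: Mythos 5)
Your argument takes a genuinely different combinatorial route from the paper. Where the paper bounds the average non-dominating boundary length $\ell(t)$, uses Jensen to upper-bound $\ell(t)$ and lower-bound the cluster count, and then a Markov-inequality truncation to restrict attention to clusters of boundary length at most $2\ell(t)$, you combine the same three input estimates (the boundary-length lower bound from Lemma~\ref{lemma:B:eps}, the $\sum\ell^{2}$ upper bound from Assumption~\ref{assum:1}, and the per-length bump count from Assumption~\ref{assum:2}) via a single H\"older inequality. The H\"older step is correct: with $a_\ell=(N_\ell/\ell)^{1/3}$, $b_\ell=N_\ell^{2/3}\ell^{4/3}$ and exponents $(3,3/2)$ one gets $S_1\leq(\sum_\ell N_\ell/\ell)^{1/3}S_2^{2/3}$, hence $\sum_\ell N_\ell/\ell\geq S_1^3/S_2^2$. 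This is cleaner and makes the cubic exponent transparent; identifying critical links with bumps of one step on non-dominating boundaries also matches the paper and correctly yields a lower bound on $R_L(t)L$.

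There is, however, a genuine gap in the step $R_L(t)L\geq c_1\sum_\ell N_\ell/\ell$. Assumption~\ref{assum:2} is a $\liminf_{L\to\infty}$ statement \emph{for each fixed} $\ell$, so the threshold $L_\ell$ beyond which the per-cluster bump count exceeds $(1-o(1))c_1/\ell$ may depend on $\ell$. Since boundary lengths that occur in a size-$L$ lattice can be as large as $\boldsymbol{\Theta}(L)$, the claim that a single $L_R'$ works for all $\ell$ simultaneously is exactly the point that needs justification, and it is not automatic from the $\liminf$. This is precisely why the paper invokes $\delta_L(t)\geq\delta_1>0$ to bound $\ell(t)\leq 4\sqrt{2}/(c_a\delta_L(t))$ and then truncates (by Markov) to $\ell\leq 2\ell(t)$, a set of boundary lengths that is bounded independently of $L$ and hence admits a uniform threshold; you mention $\delta_1$ only in passing and do not tie it into the bound.

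The gap can be closed within your H\"older framework, and it is worth spelling out because it is not a free lunch. Fix $\ell_0=C/(c_a\delta_1)$ for a suitable absolute constant $C$ and set $S_1^{(0)}=\sum_{\ell\leq\ell_0}N_\ell\ell$, $S_2^{(0)}=\sum_{\ell\leq\ell_0}N_\ell\ell^2\leq S_2$. Then
\begin{align*}
S_1-S_1^{(0)}=\sum_{\ell>\ell_0}N_\ell\ell\leq\frac{1}{\ell_0}\sum_{\ell>\ell_0}N_\ell\ell^2\leq\frac{S_2}{\ell_0},
\end{align*}
and since $S_2=O(L/c_a)$ this tail is at most $\tfrac{1}{2}\cdot\tfrac{1}{\sqrt{2}}L\delta_L(t)$ once $C$ is chosen large enough (using $\delta_1\leq\delta_L(t)$). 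Combined with $S_1\geq\tfrac{1}{\sqrt{2}}L\delta_L(t)$ this gives $S_1^{(0)}=\boldsymbol{\Theta}(\delta_L(t)L)$, H\"older applied to the truncated sums gives $\sum_{\ell\leq\ell_0}N_\ell/\ell\geq (S_1^{(0)})^3/(S_2^{(0)})^2\geq(S_1^{(0)})^3/S_2^2$, and Assumption~\ref{assum:2} only needs to be invoked for the bounded (hence finite, since lengths are multiples of $\sqrt{2}$) set $\{\ell\leq\ell_0\}$, yielding a uniform threshold $L_R'$. Without some version of this truncation your chain of inequalities is not fully justified; with it, your route gives a clean alternative proof of the lemma.
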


\begin{proof}

By the assumption in the lemma, we have that 
\begin{align}
  \eta(L)+5\theta_{L,h}(t) < 0.5 \delta_{L}(t).
\end{align}
Using this inequality, Lemma~\ref{lemma:B:eps}, and the definitions given in Section~\ref{sec:assumptionss}, we have w.p.1 
\beqa
 \frac{\sqrt{2}\sum_{\setC\in \setC_{L}^{(nd)}(t,z)} \ell(\setC)}{L\delta_{L}(t)}   >1 \label{eq:select:z:l}.
\eeqa
Let $\ell(t)$ be the average boundary-length of non-dominating clusters at time $t$, i.e., let
\begin{align}
  \ell(t)=\frac{\sum_{\setC\in\setC_{L}^{(nd)}(t,z)} \ell(\setC)}{\# \setC_{L}^{(nd)}(t,z) }.
\label{eq:def:ell:t}
\end{align}
Using this definition and \eqref{eq:select:z:l}, we have that
\begin{align}
 \# \setC_{L}^{(nd)}(t,z) \ell(t)> \frac{1}{\sqrt{2}}L \delta_{L}(t)  .
\label{ineq:setC:ell:low}
\end{align}

Since the total number of links in $G_L$ is $L$, and the coverage area $A_{l}$ of each link is at most two (see Appendix~\ref{sec:definitions}), 
we find $2L$ as an upperbound for the
total area covered by the links in the clusters. Hence, we have that
\begin{align}
  \sum_{\setC \in \setC_{L}^{(nd)}(t,z)} A(\setC) \leq 2L. \label{ineq:sum:A:up}
\end{align}
By Assumption~\ref{assum:1}, for any $z\geq 1$ and $t>t_0$, there exists $L_{A1}$ such that if $L>L_{A1}$, w.p.1, we have that 
\begin{align}
  \sum_{\setC \in \setC_{L}^{(nd)}(t,z)} A(\setC)  \geq 0.5 c_{a}   \sum_{\setC \in \setC_{L}^{(nd)}(t,z)} \ell(\setC)^{2}.
\label{ineq:assum:1:init}
\end{align}
In the rest, we assume that 
$$L>L_{A1},$$
and 
$$z>z_{R}'=1.$$
By \eqref{ineq:sum:A:up} and \eqref{ineq:assum:1:init}, for $L>L_{A1}$, we obtain
\begin{align}
   \sum_{\setC \in \setC_{L}^{(nd)}(t,z)} \ell(\setC)^{2} \leq \frac{4L}{c_{a}}.  \label{ineq:sum:ell2:up}
\end{align}
By Jenson's inequality and \eqref{eq:def:ell:t}, we also have that 
\begin{align}
\frac{\sum_{\setC \in \setC_{L}^{(nd)}(t,z)} \ell(\setC)^{2} }{\#\setC_{L}^{(nd)}(t,z)}\geq  
\bigg[ \frac{\sum_{\setC \in \setC_{L}^{(nd)}(t,z)} \ell(\setC) }{\#\setC_{L}^{(nd)}(t,z)}\bigg]^{2}=\ell(t)^{2},
\end{align}
which along with \eqref{ineq:sum:ell2:up} leads to 
\begin{align}
  \#\setC_{L}^{(nd)}(t,z)\ell(t)^{2}\leq \frac{4L}{c_{a}}. \label{ineq:setC:ell:up}
\end{align}
Using \eqref{ineq:setC:ell:low} and \eqref{ineq:setC:ell:up}, we have that 
\begin{align}
  \ell(t)\leq \frac{4\sqrt{2}}{c_r}  \frac{1}{ \delta_{L}(t)}. \label{ineq:ell:delta:b}
\end{align}
Using the above inequality and \eqref{ineq:setC:ell:low}, we have that
\begin{align}
\#\setC_{L}^{(nd)}(t,z)\geq \frac{Lc_{a}}{8} \delta_{L}(t)^{2}. \label{ineq:setC:up:fin}
\end{align}

We use the above bounds to find a lowerbound for $R_{L}(t)$. Recall that $R_{L}(t)$ is the density of critical
events (see Appendix~\ref{sec:chang:event}). Since each critical event maps to a bump of one step (see Appendix~\ref{sec:randomness}), $R_{L}(t)$ is also the density of bumps of one step. By Assumption~\ref{assum:2}, w.p.1, for a given $\ell$,
for non-dominating clusters of length $\ell$, we have that the $\liminf$ of average number of bumps of one step per non-dominating cluster is lower-bounded by $\frac{c_{1}}{\ell}$. Therefore, for $L>L_{\ell}$, where $L_{\ell}$ is a constant dependent on $\ell$, we have that
\begin{align}
  \frac{\sum_{\setC\in \setC_{L}^{(nd)}(t,z,l)} N_{\setC}^{(b)} (n=1) }{\#\setC_{L}^{(nd)}(t,z,\ell)} \geq \frac{0.5 c_{1}}{\ell} \label{ineq:assum:nbumps}
\end{align}
where $c_{1}$ is a positive constant independent of $z$, $t$, $\tau$, or $L$.

Consider all non-dominating clusters of boundary length $k\ell(t)$ or less where 
$$k=2.$$
 By \eqref{ineq:delta:lemma:R} and \eqref{ineq:ell:delta:b}, 
$\ell(t)$ is finite and 
bounded. Hence, there are only a finite number of values for the boundary length of the considered clusters. This implies that we can 
find a constant $L_{A2}$ such that if $L>L_{A2}$, then \eqref{ineq:assum:nbumps} holds for all $\ell\leq k \ell(t)$. In the rest, we assume 
that 
$$L>L_{R}'=\max(L_{A1},L_{A2}) $$
so that \eqref{ineq:ell:delta:b}, \eqref{ineq:setC:up:fin}, and \eqref{ineq:assum:nbumps} hold for $\ell\leq k \ell(t)$.

By \eqref{ineq:ell:delta:b} and Markov inequality,
for the total number of non-dominating clusters with boundary length $k\ell(t)$ or less, we have
\begin{align}
  \sum_{\ell\leq k \ell(t)} \# \setC_{L}^{(nd)}(t,z,\ell) \geq \Big(1-\frac{1}{k}\Big) \ \#\setC_{L}^{(nd)}(t,z). \label{ineq:no:c:ell:less}
\end{align}

Using \eqref{ineq:ell:delta:b}, \eqref{ineq:setC:up:fin}, \eqref{ineq:assum:nbumps}, and \eqref{ineq:no:c:ell:less}, for the total number of bumps of one step on the 
boundary of non-dominating clusters with boundary-length of $k\ell(t)$ or less, we have 
\begin{align}
  \sum_{\ell\leq k \ell(t)} \sum_{\setC\in \setC_{L}^{(nd)}(t,z,l)} N_{\setC}^{(b)} (1) & \geq   \sum_{\ell\leq k \ell(t)}  \# \setC_{L}^{(nd)}(t,z,\ell) 
 \frac{0.5c_{1}}{\ell}
\nonumber \\
&\geq  \frac{0.5c_{1}}{k\ell(t)} \sum_{\ell\leq k \ell(t)}  \# \setC_{L}^{(nd)}(t,z,\ell) 
\nonumber \\ &
\geq \frac{c_1 c_{a}^{2} }{k64\sqrt{2} } \Big(1-\frac{1}{k}\Big) L \delta_{L}(t)^{3}.
\end{align}
Thus, for $z>z_{R}'$ and $L>L_{R}'$, we have
\begin{align}
  R_{L}(t)\geq   \frac{1}{L}\sum_{\ell\leq k \ell(t)} \sum_{\setC\in \setC_{L}^{(nd)}(t,z,l)} N_{\setC}^{(b)} (1)\geq  
c_{R}\delta_{L}(t)^{3}
\end{align}
where 
$$c_{R}=\frac{c_1 c_{a}^{2} }{2^{8} \sqrt{2} }$$
is a constant independent of $L$, $z$, $\tau$, and $t$. This completes the proof of the lemma.
\end{proof}


\begin{lemma}\label{lemma:conditional}
  Consider a non-negative r.v. $x$, and an event $A$ such that 
$P(A)\geq 1-\eps_{A}$, where $0\leq \eps_{A} <1$. We have 
\begin{align}
  \ev[x|A] \leq \frac{\ev[x]}{1-\eps_{A}} . \nonumber
\end{align}
\end{lemma}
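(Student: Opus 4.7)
The plan is to use the law of total expectation together with non-negativity of $x$. Specifically, I would decompose
\begin{align}
 \ev[x] = \ev[x \mid A] P(A) + \ev[x \mid A^{c}] P(A^{c}), \nonumber
\end{align}
which is valid since $P(A) \geq 1 - \eps_{A} > 0$ (so conditioning on $A$ is well defined) and the unconditional expectation splits along the partition $\{A, A^{c}\}$. Because $x \geq 0$ almost surely, the conditional expectation $\ev[x \mid A^{c}] $ is also non-negative (and we may take it to be $0$ by convention if $P(A^{c})=0$), so the second term on the right-hand side is non-negative and can be dropped to obtain the inequality $\ev[x] \geq \ev[x \mid A] P(A)$.

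From here, the bound follows in one step: dividing by $P(A)$, which is strictly positive, gives $\ev[x \mid A] \leq \ev[x]/P(A)$, and then the hypothesis $P(A) \geq 1 - \eps_{A}$ (with $\eps_{A} < 1$, so $1-\eps_{A}>0$) yields $1/P(A) \leq 1/(1-\eps_{A})$, producing the claimed bound $\ev[x \mid A] \leq \ev[x]/(1 - \eps_{A})$. There is no real obstacle here; the only point requiring a sentence of care is ensuring that all denominators are positive, which is guaranteed by the assumption $0 \leq \eps_{A} < 1$, and that non-negativity of $x$ (not just integrability) is what allows us to discard the $\ev[x \mid A^{c}] P(A^{c})$ term.
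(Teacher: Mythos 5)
Your proposal is correct and follows essentially the same route as the paper: both decompose $\ev[x]$ via the law of total expectation over $\{A,A^{c}\}$, discard the non-negative term $\ev[x\mid A^{c}]P(A^{c})$, and then bound $P(A)$ below by $1-\eps_{A}$. No meaningful difference in approach.
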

\begin{proof}
  Define $A^c$ to be the complement of the event $A$. For the expected value of r.v. $x$, we can write
  \begin{align}
    \ev[x]=\ev[x| A]P(A)+\ev[x|A^c][1-P(A)].\nonumber
  \end{align}
Using the non-negativity of r.v. $x$ and the assumption for the event $A$, we have 
  \begin{align}
    \ev[x|A]&=\frac{ \ev[x]-\ev[x| A^c] [1-P(A)] }{P(A)}\leq \frac{ \ev[x]}{P(A)} 
   \nonumber \\ &\leq \frac{\ev[x]}{1-\eps_{A}},\nonumber
  \end{align}
completing the proof.

\end{proof}

\begin{lemma}\label{lemma:t1}
  Suppose the following holds for the initial condition $\bfx_{t_{0}}=\bfx(t_{0})$ of the ODE of \eqref{eq:diff:cont}. First, suppose
  \begin{align}
    0<c_{x_{1},1}<x_{1}(t_{0})<c_{x_{1},2}<0.5 , \label{ineq:assum:x1:lemma:m}
  \end{align}
where $c_{x_{1},1}$ and $c_{x_{1},2}$ are constants independent of $z$. Second, suppose for $2\leq i \leq 4$,
\begin{align}
  x_{i}(t_{0})=O(z^{-1}).\label{ineq:assum:xs:lemma:m}
\end{align}
Then, there exists a constant $z_{0}'$ such that the following holds for $z>z_{0}'$. First, we have that the ODE of \eqref{eq:diff:cont} has a well-defined unique solution $\mathbf{x}(t)$, i.e., we have that 
$\mathbf{x}(t)$ exists, is unique, and $\mathbf{x}(t) \in \setD_{w}$ for 
all $t\in [t_{0},\tau]$. Second, there exists a $t_{1}$, $t_{0}\leq t_{1}<2$, independent of $z$, such that
\begin{align}
 \sup_{t\in[t_{1},\tau]} \| \mathbf{x}(t)-\mathbf{y}(t)\|=O(z^{-1}), \ \text{as $z\to \infty$}, \label{eq:B:x:y:z-1}
\end{align}
where $\mathbf{y}(t)$ is the solution to \eqref{eq:diff:y} with the initial condition
 $$\mathbf{y}(t_{1})=\mathbf{x}(t_{1})$$ with the property that
\begin{align}
 \sup_{t\in[t_{1},\tau]} \left|\frac{d}{dt}\bfy(t)\right|=O(1)\mathbf{1}_{4\times 1},\ \text{as $z \to \infty$},  \label{eq:b:der:yt} 
\end{align}
and for $t\geq t_{1}$
\begin{align}
  y_{1}(t)> 0.5 - \frac{C_{y}}{\sqrt{t}} \label{ineq:y1:bound:m:s}
\end{align}
where $C_{y}>0$ is a constant independent of $t_{1}$ and $z$.

\end{lemma}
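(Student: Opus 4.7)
The plan is to exploit the singular-perturbation structure of the ODE in \eqref{eq:diff:cont}: for large $z$, the components $x_{2}, x_{3}, x_{4}$ are fast variables while $x_{1}$ is slow, and the two ODEs in \eqref{eq:diff:cont} and \eqref{eq:diff:y} should agree once the fast variables sit near their quasi-steady values. First I would establish existence, uniqueness, and invariance in $\setD_{w}$. The right-hand side $\bfA \bfx + f(\bfx)$ is locally Lipschitz, so a unique solution exists on a maximal interval; to promote this to $[t_{0},\tau]$ I would check that the vector field points inward on $\partial \setD_{w}$: on the face $x_{i}=0$ one has $\dot x_{i}\geq 0$ by inspection of \eqref{eq:difference:def:A} and the sign of $f$, while on $x_{1}=0.5$ the driving term $3zx_{2}+zx_{3}+zx_{4}-x_{1}$ is bounded and $f_{1}=0$, so $x_{1}$ stays below $0.5$ for $z$ large. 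An analogous argument handles $\bfy(t)$.

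Next I would set up the quasi-steady reduction. From the rows of $\bfA$ the natural candidates are
\begin{align*}
\bar x_{2}=\frac{c_{R}\delta_{\bfx}^{3}}{3z},\quad \bar x_{3}=\frac{2 c_{R}\delta_{\bfx}^{3}}{3z},\quad \bar x_{4}=\frac{x_{1}-c_{R}\delta_{\bfx}^{3}}{z},
\end{align*}
each of order $O(z^{-1})$. Writing $u_{i}=x_{i}-\bar x_{i}$ for $i\in\{2,3,4\}$, each $u_{i}$ satisfies a linear ODE of the form $\dot u_{i}=-\alpha_{i} z\, u_{i}+g_{i}(t)$ with $\alpha_{i}\in\{1,3\}$ and $g_{i}(t)=O(1)$ coming from $\dot{\bar x}_{i}$. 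Since $u_{i}(t_{0})=O(z^{-1})$ by \eqref{ineq:assum:xs:lemma:m}, standard linear-ODE estimates give $|u_{i}(t)|\leq e^{-\alpha_{i}z(t-t_{0})}|u_{i}(t_{0})|+O(z^{-1})$, so picking any $t_{1}\in(t_{0},2)$ independent of $z$ yields $|u_{i}(t)|=O(z^{-1})$ uniformly on $[t_{1},\tau]$.

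With this, substituting into the first row of $\bfA$ cancels the $x_{1}$ and the pure $\delta_{\bfx}^{3}$ terms exactly and leaves
\begin{align*}
\dot x_{1}= \tfrac{2}{3} c_{R}\delta_{\bfx}^{3}+ O(z^{-1}), \qquad t\in[t_{1},\tau],
\end{align*}
which is the same autonomous ODE governing $y_{1}$ in \eqref{eq:diff:y} up to an $O(z^{-1})$ forcing. Setting $w=x_{1}-y_{1}$ and using the factorization $\delta_{\bfx}^{3}-\delta_{\bfy}^{3}=-w(\delta_{\bfx}^{2}+\delta_{\bfx}\delta_{\bfy}+\delta_{\bfy}^{2})$ with the bracket bounded on $\setD_{w}$, Gronwall's lemma and $w(t_{1})=0$ give $|w(t)|=O(z^{-1})$ on $[t_{1},\tau]$. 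Combined with $|x_{i}(t)-y_{i}(t)|\leq |u_{i}(t)|+|\bar x_{i}-\bar y_{i}|=O(z^{-1})$ for $i\geq 2$ (where $\bar y_{i}$ denotes the analogous quasi-steady quantity for $\bfy$), this proves \eqref{eq:B:x:y:z-1}. Finally, integrating the autonomous scalar equation $\dot\delta_{\bfy}=-\frac{2}{3}c_{R}\delta_{\bfy}^{3}$ explicitly gives $\delta_{\bfy}(t)^{-2}=\delta_{\bfy}(t_{1})^{-2}+\tfrac{4}{3}c_{R}(t-t_{1})$, from which \eqref{ineq:y1:bound:m:s} follows with $C_{y}$ independent of $z$ and $t_{1}$ because the initial $\delta_{\bfy}(t_{1})$ is bounded away from $0$ and $0.5$ by \eqref{ineq:assum:x1:lemma:m}; the derivative bound \eqref{eq:b:der:yt} then follows by plugging these $O(z^{-1})$-sized $y_{2},y_{3},y_{4}$ back into \eqref{eq:def:tildef}, where the apparent $z$-factors cancel.

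The main obstacle I anticipate is making the transient time $t_{1}$ truly independent of $z$ while keeping the $O(z^{-1})$ error uniform on $[t_{1},\tau]$. The delicate point is that $\dot{\bar x}_{i}$ depends on $\dot x_{1}$, which in turn depends on the $u_{i}$, so one has to solve this coupling self-consistently; a bootstrap argument, first assuming $\dot x_{1}=O(1)$ in the forcing $g_{i}$, then verifying this a posteriori from the reduced equation, is the cleanest way I see to close the loop.
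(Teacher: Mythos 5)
Your singular‑perturbation framing is the right intuition, and your explicit solution for $y_1$, the final constant $C_y$, and the role of the Gronwall/bootstrap step are all in line with the paper. However, there is a concrete computational gap that breaks the main estimate, and a secondary issue with the invariance claim.

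The order estimate for the fast deviations is wrong. Your quasi-steady candidates $\bar x_2=\frac{c_R\delta_\bfx^3}{3z}$, $\bar x_3=\frac{2c_R\delta_\bfx^3}{3z}$, $\bar x_4=\frac{x_1-c_R\delta_\bfx^3}{z}$ all carry an explicit factor $1/z$, so $\dot{\bar x}_i$ is $O(z^{-1})$ (given $\dot x_1=O(1)$), \emph{not} $O(1)$ as you state for $g_i$. Consequently, the quasi-steady-state balance for $\dot u_i=-\alpha_i z\,u_i+g_i$ yields $|u_i(t)|=O(z^{-2})$ after a transient of length $O((\log z)/z)$ (with a cascade: $u_2$ first, then $u_3$ whose forcing contains $2zu_2$, then $u_4$) — not $O(z^{-1})$ as you claim. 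This is not a cosmetic point: the residual you insert into the slow equation is
\begin{align*}
  \dot x_1 \;=\; \tfrac{2}{3}c_R\delta_\bfx^3 \;+\; z\bigl(3u_2+u_3+u_4\bigr),
\end{align*}
so with $u_i=O(z^{-1})$ the error is $O(1)$ and the comparison with $y_1$ collapses. You need the sharper $u_i=O(z^{-2})$ for the residual to be $O(z^{-1})$; this is exactly what the paper establishes, first at $t_1$ via Lemma~\ref{lemma:x:cond} (see \eqref{eq:x2:t1:l}--\eqref{eq:x4:t1:l}, all $O(z^{-2})$) and then on all of $[t_1,\tau]$ for $\bfy$ via Lemma~\ref{lemma:y:diff:B} (see \eqref{eq:y2:t1:lemma}--\eqref{eq:y4:t1:lemma}).

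The invariance-of-$\setD_w$ argument is also not airtight: on the face $x_4=0$ one has $\dot x_4=x_1-c_R\delta_\bfx^3$, which can be negative when $x_1$ is small relative to $c_R\delta_\bfx^3$, so the vector field does not a priori point inward there. The paper sidesteps this: it introduces a globally Lipschitz modification $g(\bfx)$ to get global existence for $\bfx_g(t)$, proves $\|\bfx_g-\bfy\|=O(z^{-1})$ via Gronwall (Lemma~\ref{lemma:yh:xh}) using the fact that $\bfy$ is a perturbed solution of the modified ODE with $O(z^{-1})$ forcing, and only then infers $\bfx_g(t)\in\setD\supset\setD_w$ (hence $g\equiv 0$ along $\bfx_g$, hence $\bfx_g=\bfx$) from the proximity to $\bfy(t)\in\setD_w$. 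Your approach can likely be repaired by carrying the $O(z^{-2})$ cascade and, rather than a direct invariance argument, deducing membership in $\setD_w$ a posteriori as the paper does; as written, though, the stated estimates do not imply the conclusion.
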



\begin{proof}

By Lemma~\ref{lemma:x:cond}, there exists a constant $z_{0}$ such that for $z>z_{0}$, there exists a $t_{1}$ independent of $z$ with 
$$t_{0} \leq  t_{1} <2$$
 such that under the assumptions in the lemma,
 $\mathbf{x}(t)$ is unique and well-defined over the interval $[t_{0},t_{1}]$, i.e.,
 \begin{align}
   \bfx(t)\in\setD_{w}, \ t\in[t_{0},t_{1}].    
 \end{align}
 Lemma~\ref{lemma:x:cond} also states that for $z>z_{0}$, we have
  \begin{align}
    0<c_{x_{1},1}'<x_{1}(t_{1})<c_{x_{1},2}'<0.5 ,\label{ineq:init:x1:t1:an}
  \end{align}
where $c_{x_{1},1}'$ and $c_{x_{1},2}'$ are constants independent of $z$. By Lemma~\ref{lemma:x:cond}, we also have that
\begin{align}
  x_{i}(t_{1})=O(z^{-1}),\qquad   2\leq i \leq 4,
\label{ineq:init:x2:4:t1}
\end{align}
and 
\begin{align}
&  \Big|x_{2}(t_{1})-\frac{1}{3z} c_{R}\delta_{\mathbf{x}}^{3}(t_{1}) \Big|=O(z^{-2}),
\label{eq:x2:t1}
\\ &
  \Big|x_{3}(t_{1})-\frac{2}{3z} c_{R}\delta_{\mathbf{x}}^{3}(t_{1}) \Big|=O(z^{-2}).
\label{eq:x3:t1}
\\ &
  \Big|x_{4}(t_{1})-\frac{1}{z}\big[x_{1}(t_{1})-c_{R}\delta_{\mathbf{x}}(t_{1})^{3}\big] \Big|=O(z^{-2}). \label{eq:x4:t1}
\end{align}

Having that for $z>z_{0}$, $\bfx(t)$ uniquely exists over the interval $[t_{0},t_{1}]$, it remains to prove that
 $\bfx(t)$ as the solution to 
\begin{align}
  \frac{d}{dt}\bfx=\bfA\bfx+f(\bfx), \ \bfx(t_{1})=\bfx_{t_{1}} \label{eq:diff:cont:l}
\end{align}
is also unique and well-defined over the interval $[t_{1},\tau]$ given an initial value at time $t_{1}$, and that 
\eqref{eq:B:x:y:z-1}-\eqref{ineq:y1:bound:m:s} hold. We note that 
general theorems on existence and uniqueness of solutions for ODEs require Lipschitz continuous functions
on the RHS of ODEs.
However, the RHS of \eqref{eq:diff:cont:l} is only locally Lipschtiz continuous. One approach to address the
 lack of Lipschitz continuity is through adding a 
term to the RHS of \eqref{eq:diff:cont:l} to make sure that the RHS is Lipschtiz continuous. To do so, we define the
following ODE with an initial condition at time $t_{1}$, where $t_{1}$ is specified in Lemma~\ref{lemma:x:cond}:
\begin{align}
&  \frac{d}{dt}\mathbf{x}_{g}=\bfA\mathbf{x}_{g}+f(\mathbf{x}_{g})+g(\mathbf{x}_{g}),
\nonumber 
\\
 &  \mathbf{x}_{g}(t_{1})=\mathbf{x}(t_{1})\label{eq:cont:diff:g}
\end{align}
where $\bfx_{g}(t)$ is given by
$$ \bfx_{g}(t)= 
\begin{bmatrix}
x_{g,1}(t) \\ x_{g,2}(t) \\ x_{g,3}(t) \\ x_{g,4}(t)  
\end{bmatrix}
, $$
and we choose $g(\mathbf{x})$ to be a function that satisfies the following:
\begin{itemize}
\item For  $\|\mathbf{x}\|\geq 2 $,
\begin{align}
  \|f(\mathbf{x})+g(\mathbf{x})\|=0. \label{eq:def:g}
\end{align}
\item Defining 
\begin{eqnarray}
  \mathcal{D}=\{\mathbf{x}: x_{i} \in [-0.6,0.6], 1\leq i \leq 4\}, \label{eq:def:setD}
\end{eqnarray}
we have that
\begin{align}
 \|g(\mathbf{x})\|=0 , \  \mathbf{x}\in \mathcal{D} \label{eq:def:g:D}
\end{align}
\item The function $(f+g)(\mathbf{x})$ is continuously differentiable (on $\mathbb{R}^{4}$) and Lipschitz continuous with Lipschitz constant 
$L_{f+g}$ that is independent of $z$.
\item The Lipschitz continuous function $(f+g)(\bfx)$ is such that 
for any $\bfx_{1}$ and $\bfx_{2}$ 
\begin{align}
  (f+g)(\bfx_{1})-(f+g)(\bfx_{2}) = \mathbf{L}_{f+g}(\bfx_{1},\bfx_{2}) (\bfx_{1}-\bfx_{2})
\nonumber 
\end{align}
where each element $ [\mathbf{L}_{f+g}(\bfx_{1},\bfx_{2})]_{ij}$ of matrix $ \mathbf{L}_{f+g}(\bfx_{1},\bfx_{2})$ is such that
\begin{align}
  \left|  [\mathbf{L}_{f+g}(\bfx_{1},\bfx_{2})]_{ij} \right|<L_{f+g}. \label{ineq:Lfgij:B}
\end{align}
\end{itemize}

The third condition in the above ensures that for a given initial condition at time $t_{1}$, the solution $\bfx_{g}(t)$ to
 the ODE in \eqref{eq:cont:diff:g} exists for all $t\geq t_{0}$ and that
it is unique (e.g. see \cite{perko:book}, Chapter~3, Theorem~3.). We also note that 
a function $g(\bfx)$ satisfying the fourth condition exists since
$f(\mathbf{x})$ is a (polynomial) function of only $x_{1}$.


Consider the ODE for $\bfy(t)$ as given in \eqref{eq:diff:y} with the initial condition at time $t_{1}$ such that
\begin{align}
  \bfy(t_{1})=\bfx_{g}(t_{1})=\bfx(t_{1}). \label{def:init:con:yt1:xgt1}
\end{align}
Considering the constraints on the initial condition $\bfy(t_{1})$ as imposed by the above
 equality and \eqref{ineq:init:x1:t1:an}-\eqref{eq:x4:t1}, it follows from Lemma~\ref{lemma:y:diff:B} that
\begin{align}
\sup_{t\in [t_{1},\tau]} \left|  \frac{d}{dt}\bfy(t)\right| =O(1)\mathbf{1}_{4\times1}. \label{eq:b:der:yt:1}
\end{align}

By the ODE for $\bfy(t)$, we also have that $y_{1}(t)$ is given by the following ODE
\begin{align}
  \frac{d}{dt}y_{1}=\frac{2}{3}c_{R}\delta_{\mathbf{y}}^{3}(t)=\frac{2}{3} c_{R} (0.5-y_{1})^{3} \label{eq:ode:y1}
\end{align}
 with the initial condition that
\begin{align}
  y_{1}(t_{1})=x_{1}(t_{1}). \label{eq:init:y:x}
\end{align}
Using this ODE for $y_{1}(t)$, we obtain for $t\geq t_{1}$
 \begin{eqnarray}
   y_{1}(t)=0.5-\al (1+\be (t-t_{1}))^{-\frac{1}{2}},
   \label{eq:y1:exact:sol}
 \end{eqnarray}
where
\begin{align}
  \al=0.5-y_{1}(t_{1}), \label{eq:al}
\end{align}
and 
\begin{align}
  \be=\frac{4c_{R}}{3}\al^{2}.\label{eq:be}
\end{align}
By \eqref{ineq:init:x1:t1:an} and \eqref{eq:init:y:x}, we have that 
\begin{align}
 & 0<\al_1<\al<\al_2 <0.5, 
\nonumber \\
&  0<\be_1<\be<\be_2<C_R, \label{ineq:al:be:cons}
\end{align}
where $\al_1$, $\al_2$, $\be_1$, and $\be_2$ are constants independent of $z$. Using these constants, 
we can find a constant $C_{y}>0$ independent of $t_{1}$ and $z$ such that
\begin{align}
  y_{1}(t)> 0.5 - \frac{C_{y}}{\sqrt{t}}, \ t\geq t_{1}. \label{ineq:B:t:y1:f}
\end{align}

We can also rewrite the ODE for $y_{1}(t)$ given in \eqref{eq:ode:y1} as
\begin{align}
  \frac{d}{dt}y_{1}& =-y_{1}+ 3zy_{2}+zy_{3}+zy_{4} \nonumber \\
&                     \quad -3zy_{2}+c_{R}\delta_{\mathbf{y}}^{3}(t) \label{y1:term2} \\
 &                     \quad -z y_{3}+ \frac{2}{3}c_{R}\delta_{\mathbf{x}}^{3}(t) \label{y1:term3} \\
&                     \quad -zy_{4}+y_{1}- c_{R}\delta_{\mathbf{y}}^{3}(t) . \label{y1:term4} 
\end{align}
By Lemma~\ref{lemma:y:diff:B}, we have that the $\sup$ of expressions in \eqref{y1:term2}-\eqref{y1:term4} over $[t_{1},\tau]$
 all are $O(z^{-1})$. Hence, for $t\in[t_{1},\tau]$, we have
\begin{eqnarray}
  \frac{d}{dt}y_{1}=-y_{1}+zy_{3}+zy_{4}+e_{y_{1}} \label{eq:dy1}
\end{eqnarray}
where
\begin{align}
\sup_{t\in [t_{1},\tau]} |e_{y_{1}}(t)|=O(z^{-1}).\label{eq:ey1:B}  
\end{align}

Since by Lemma~\ref{lemma:y:diff:B}, for $t \in[t_{1},\tau]$ we have $\mathbf{y}(t)\in \mathcal{D}_{w}$, and $\setD_{w}\subset \mathcal{D}$, 
by the property of $g(\cdot)$ given in \eqref{eq:def:g:D}, we have that 
$$\|g(\mathbf{y}(t))\|=0 , \ t \in[t_{1},\tau].$$
 Using this equality, the ODE in \eqref{eq:dy1}, and 
the differential equations for $y_{2}(t)$, $y_{3}(t)$, and $y_{4}(t) $, 
as given in \eqref{eq:diff:y}-\eqref{eq:def:tildef}, we obtain that for $t \in[t_{1},\tau]$
\begin{eqnarray}
  \frac{d}{dt}\mathbf{y}=\bfA \mathbf{y}+f(\mathbf{y})+g(\mathbf{y})+\mathbf{e}_{\mathbf{y}},
\label{eq:diff:y:g:e}
\end{eqnarray}
where $\mathbf{e}_{\mathbf{y}}(t)$ accounts for the error term $e_{y_{1}}(t)$, and by \eqref{eq:ey1:B}, we have
\begin{eqnarray}
\sup_{t\in[t_{1},\tau]}  \|\mathbf{e}_{\mathbf{y}}(t)\|=O(z^{-1}). \label{eq:error:bound:ey}
\end{eqnarray}
Therefore, $\mathbf{y}(t)$ is the solution to a perturbed version of the ODE in \eqref{eq:cont:diff:g}. 
We next use this result to complete the proof of the lemma.

Define $\bfe_{g}(t)$ as
\begin{align}
  \bfe_{g}(t)=\bfx_{g}(t)-\bfy(t) .
\end{align}
Considering the properties given for the chosen function $g(\cdot)$ in the beginning of the proof, by Lemma~\ref{lemma:yh:xh}, 
we have that
\begin{align}
\sup_{t\in[t_{1},\tau] } \|\bfe_{g}(t) \| =O(z^{-1}) .\label{ineq:eg:lemm}
\end{align}
Since by Lemma~\ref{lemma:y:diff:B}, $\mathbf{y}(t)\in\mathcal{D}_{w}$ where $\mathcal{D}_{w}$ is 
strictly inside $\mathcal{D}$, by \eqref{ineq:eg:lemm}, we have that for $z>z_{1}$, where 
$z_{1}>z_{0}$ and is a sufficiently large constant, 
\begin{align}
  \mathbf{x}_{g}(t)\in \mathcal{D}, \ t\in[t_{1},\tau] . \label{ineq:xg:in:D}
\end{align}
 Hence, for $z>z_{1}$, by \eqref{eq:def:g:D} and \eqref{ineq:xg:in:D}, 
 \begin{align}
g(\mathbf{x}_{g}(t))=0   , \ t\in [t_{1},\tau],
 \end{align}
which along with \eqref{eq:cont:diff:g} for $z>z_{1}$ leads to
\begin{eqnarray}
\frac{d}{dt}\mathbf{x}_{g}= \bfA \mathbf{x}_{g}+f(\mathbf{x}_{g}) , \ t\in [t_{1},\tau]  \label{eq:xg:x:equ}.
\end{eqnarray}

In addition, since $y_{1}(t)$ is an increasing function of $t$ and less than $0.5$ as given by \eqref{eq:y1:exact:sol}, 
by \eqref{ineq:eg:lemm}, we can choose a sufficiently large constant $z_{2}$, $z_{2}>z_{1}$, such that for $z>z_{2}$
and any $t\in[t_{1},\tau]$, 
\beqa
\frac{1}{2}(0.5-\al_{2}) < x_{g,1}(t) < 0.5  \label{eq:05x1:bound}
\eeqa
where $\al_{2}$ is a constant defined in \eqref{ineq:al:be:cons}. Using the above bounds for 
$x_{g,1}(t)$ and that $\bfx_{g}(t)$ satisfies the ODE in \eqref{eq:xg:x:equ} with the constraints at time $t_{1}$
given in \eqref{ineq:init:x1:t1:an}-\eqref{ineq:init:x2:4:t1}, we have that 
\begin{align}
0<x_{g,i}(t), \ 2\leq i \leq 4, \ t\in[t_{1},\tau],
\end{align}
and
\begin{align}
  \sup_{t\in[t_{1},\tau]} x_{g,i}(t)=O(z^{-1}),  \ 2\leq i \leq 4.
\end{align}
 Therefore, by the above bound and \eqref{eq:05x1:bound}, for $z>z_{3}$ where $z_{3}>z_{2}$ is a sufficiently large constant, we have
\begin{align}
  \bfx_{g}(t)\in \setD_{w}, \ t\in[t_{1},\tau]. \label{ineq:xg:in:Dw}
\end{align}

Since $\bfx_{g}(t)$ is uniquely defined over the interval $[t_{1},\tau]$, by \eqref{eq:cont:diff:g}, \eqref{eq:xg:x:equ}, and \eqref{ineq:xg:in:Dw}, we have that for $z>z_{3}$, $\bfx_{g}(t)$
serves as a well-defined solution for the original ODE in \eqref{eq:diff:cont:l}.
Moreover, since the
RHS of the ODE in \eqref{eq:diff:cont:l} is continuously differentiable with respect to $x_{i}$'s, $ 1\leq i \leq 4$, we are ensured
 if a solution to \eqref{eq:diff:cont:l} exists, it has to be unique (e.g., see Theorem~1.13 in \cite{markley:book}). Therefore, 
$\mathbf{x}_{g}(t)$ serves as a unique solution to \eqref{eq:diff:cont:l} for the interval $[t_{1},\tau]$. By \eqref{ineq:eg:lemm} and 
\eqref{ineq:xg:in:Dw}, 
it then follows that for 
$$z>z_{0}'=z_{3},$$ 
the ODE in \eqref{eq:diff:cont:l} has a well-defined unique solution over the interval $[t_{1},\tau]$, and that
  \begin{align}
\sup_{t\in[t_{1},\tau] } \| \bfx(t)-\bfy(t) \|=O(z^{-1}) ,
\end{align}
which along with the results in \eqref{eq:b:der:yt:1} and \eqref{ineq:B:t:y1:f} completes the proof of the lemma.
  
\end{proof}


\begin{lemma}\label{lemma:rho:stoch}
Given that event $\setE(L,z)$ as defined in \eqref{eq:def:E:Ltau} occurs, for any $\eps_{\theta}>0$, we have that 
\begin{align*}
&\liminf_{z\to \infty} \ \liminf_{L\to \infty}  
 \nonumber \\ 
& \qquad \qquad P\bigg[ \sup_{\frac{t_{0}}{\varepsilon}\leq n \leq \frac{\tau}{\varepsilon}} \| 
\bftheta(n\varep) -\retheta(n\varep)\| \leq \eps_{\theta} \bigg]=1.
\end{align*}
\end{lemma}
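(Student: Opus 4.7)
The plan is to prove this lemma via a standard mean-field / stochastic approximation argument, adapted to the discrete-time recursion \eqref{eq:difference:eqs}. All estimates will be carried out on the event $\setE(L,z)$, on which the bounds \eqref{eq:es:bound} on the deterministic bias $\mathbf{e}$ and \eqref{eq:Bs:hat:ex} on the martingale noise $\mathbf{B}$ are in force, and on which the trajectory $\bftheta$ stays in a bounded subregion of $\setD_w$ where the drift in \eqref{eq:difference:eqs} is Lipschitz. Combined with $P[\setE(L,z)]\to 1$ from \eqref{eq:lim:prob:setE}, the desired unconditional probability statement will then follow.

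Set $\mathbf{D}(n) \triangleq \bftheta(n\varep) - \retheta(n\varep)$ for $n_0 \leq n \leq N$, with $n_0 = \lceil t_0/\varep \rceil$, $N = \lfloor \tau/\varep \rfloor$, and $\mathbf{D}(n_0)=\mathbf{0}$ by construction. Subtracting the deterministic recursion (obtained by setting $\mathbf{e}\equiv\mathbf{0},\mathbf{B}\equiv\mathbf{0}$ in \eqref{eq:difference:eqs}) from the stochastic one and iterating yields the Duhamel-type representation
\begin{align}
\mathbf{D}(n) = \sum_{k=n_0}^{n-1} (I+\varep\bfA)^{\,n-1-k}\,\varep\,\Bigl[\,f(\bftheta(k\varep))-f(\retheta(k\varep)) + \mathbf{e}(k\varep) + \mathbf{B}(k\varep)\,\Bigr]. \nonumber
\end{align}
A direct eigenvalue computation shows that $\operatorname{spec}(\bfA)=\{0,-z,-(1+z),-3z\}$, so for $\varep z$ small (guaranteed by \eqref{eq:lim:z:varepsilon}) the eigenvalues of $I+\varep\bfA$ lie in $[0,1]$. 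Granting a $z$-uniform bound $\|(I+\varep\bfA)^j\|\leq C_A$ for all $j\geq 0$, and using the Lipschitz continuity of $f$ (which depends on $\bfx$ only through $\delta_{\bfx}^3$ and hence has a Lipschitz constant $L_f$ independent of $z$), one obtains
\begin{align}
\|\mathbf{D}(n)\| \leq C_A L_f \sum_{k=n_0}^{n-1} \varep\,\|\mathbf{D}(k)\| + C_A\tau \sup_k \|\mathbf{e}(k\varep)\| + C_A \sup_{m\leq n}\|\mathbf{M}(m)\|, \nonumber
\end{align}
where $\mathbf{M}(m) \triangleq \sum_{k=n_0}^{m-1} \varep\,\mathbf{B}(k\varep)$. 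A discrete Gronwall inequality then produces $\sup_n \|\mathbf{D}(n)\| \leq K\bigl[\tau\sup_k\|\mathbf{e}(k\varep)\| + \sup_m\|\mathbf{M}(m)\|\bigr]$ with $K$ independent of $z$ and $L$.

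For the two remaining pieces: on $\setE(L,z)$, \eqref{eq:es:bound} gives $\sup_k\|\mathbf{e}(k\varep)\| = O(z^{-1})$; and since each $\mathbf{B}(k\varep)$ is centered given $\setH(k\varep)$, $\{\mathbf{M}(m)\}$ is a martingale, so Doob's $L^2$-maximal inequality together with \eqref{eq:Bs:hat:ex} yields
\begin{align}
\ev\!\Bigl[\sup_m \|\mathbf{M}(m)\|^2 \,\Bigm|\, \setE(L,z)\Bigr] \leq 4\!\sum_{k=n_0}^{N-1}\!\varep^{2}\bigl[O(zL^{-\zeta}) + O(z^{3}\varep)\bigr] \leq 4\tau\varep\bigl[O(zL^{-\zeta}) + O(z^{3}\varep)\bigr], \nonumber
\end{align}
which tends to $0$ as $L\to\infty$ for every fixed $z$ since $\varep=1/\lceil L^{1-\zeta}\rceil$. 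By Markov, $\sup_m\|\mathbf{M}(m)\|\to 0$ in probability on $\setE(L,z)$ as $L\to\infty$. Hence $\sup_n\|\mathbf{D}(n)\| \leq K\tau\cdot O(z^{-1}) + o_P(1)$ in the inner limit, and taking $z\to\infty$ afterwards kills the residual $O(z^{-1})$ term, proving the double limit.

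The principal technical obstacle is establishing the $z$-uniform bound $\|(I+\varep\bfA)^j\|\leq C_A$: the operator norm of $\bfA$ is itself $O(z)$, so a naive Gronwall estimate in the Euclidean norm would produce an $e^{O(z)\tau}$ amplification that swamps the $O(z^{-1})$ bias and breaks the outer $z\to\infty$ limit. The way through is to exploit the structure exposed by the spectrum — the nonzero eigenvalues $-z,-(1+z),-3z$ are stable and the zero eigenvalue is semisimple — and to verify that the associated eigenbasis remains well-conditioned in a weighted norm that inflates the three ``fast'' coordinates $\theta^{(fg)}_L,\theta^{(f)}_L,\tilde\theta_L$ by $z$ (consistent with their $\setE(L,z)$-bound of $c_\theta z^{-1}$). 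Carrying out this rescaling and checking the conditioning is the crux of the argument; once it is in place, all other pieces are routine.
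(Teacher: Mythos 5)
Your proposal follows the same broad mean-field skeleton as the paper's proof: write the error $\bftheta(n\varep)-\retheta(n\varep)$ as a Duhamel sum driven by the bias $\mathbf{e}$ and noise $\mathbf{B}$, absorb the nonlinear term $f$ via a $z$-uniform Lipschitz/linearization step, bound the propagator uniformly in $z$, and apply a Doob maximal inequality to the accumulated noise. The genuine difference is how the noise sum is controlled. The paper lumps bias and noise into $\Upsilon_n = \varep\sum_{m}\||\mathbf{e}(n-m)|+|\mathbf{B}(n-m)|\|$, a pathwise-nondecreasing process that is automatically a submartingale (robust to conditioning on $\setE(L,z)$), bounds the propagator component-wise by $\hat{c}e^{4L_c\varep m}\mathbf{1}_{4\times 4}$, and applies Doob's $L^1$-maximal inequality to $\Upsilon$. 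You instead keep the raw martingale $\mathbf{M}(m)=\sum_{k<m}\varep\mathbf{B}(k)$ and invoke the $L^2$-maximal inequality. This is tighter by an $\varep$-factor but introduces two issues you glide over: (i) conditioning on the future event $\setE(L,z)$ breaks the martingale property of $\mathbf{M}$, so the conditional-$L^2$ Doob bound needs the unconditional second moments together with $P[\setE]\to 1$ (the paper's submartingale sidesteps this entirely); and (ii) to get $\|\sum_k(I+\varep\bfA)^{n-1-k}\varep\mathbf{B}(k)\|\le C_A\sup_m\|\mathbf{M}(m)\|$ you would need an Abel summation, whose variation term $\sum_j\|(I+\varep\bfA)^{j}-(I+\varep\bfA)^{j+1}\|\le\sum_j\varep\|\bfA\|\|(I+\varep\bfA)^j\|$ contributes $O(z\tau)$ because $\|\bfA\|=O(z)$. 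So the constant $K$ you assert to be $z$-uniform is not; the Gronwall output actually reads $K(z)[\tau\sup\|\mathbf{e}\|+\sup\|\mathbf{M}\|]$ with $K(z)=O(z)$ hitting the noise term. The double limit $L\to\infty$ then $z\to\infty$ survives this (for fixed $z$, $\sup_m\|\mathbf{M}(m)\|\to 0$ in probability), but your stated uniformity is wrong; the cleaner route is exactly the paper's: bound $\|\sum_kP_{n-1-k}\varep\mathbf{B}(k)\|\le C_A\sum_k\varep\|\mathbf{B}(k)\|$ directly and never Abel-sum.

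On the crux you flag — the $z$-uniform bound on the propagator $(I+\varep\bfA)^j$ — you are right that this is the hard step, and right that a naive operator-norm estimate blows up as $e^{O(z)}$. The paper simply asserts ``we can show that'' $|\hat{\bfA}^m_{ij}|\le\hat{c}(1+4\varep L_c)^m$ with $z$-uniform $\hat{c}$, so you are not alone in leaving this implicit. However, the specific fix you propose — reweighting the fast coordinates by $z$ — does not quite close the gap: with $W=\mathrm{diag}(1,z,z,z)$ the conjugate $W\bfA W^{-1}$ still has the entry $z$ in position $(4,1)$ (coming from the constant $1$ in $\bfA_{41}$), so the weighted norm of $\bfA$ is still $O(z)$. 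What actually saves the argument is that the eigenbasis of $\bfA$, while containing $O(z)$ entries (the columns for eigenvalues $0$ and $-3z$), is compensated by the corresponding rows of $U^{-1}$ being $O(z^{-1})$ (one can check $\det U=O(z^2)$ while the relevant cofactors are $O(z)$), so every product $U_{ij}(U^{-1})_{jk}$ is $O(1)$ and $\hat{c}$ is indeed $z$-uniform. Either your proposal or the paper would benefit from spelling this out; as written both contain the same assertion-level gap here, and your weighted-norm heuristic, taken literally, does not supply the missing argument.
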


\begin{proof}
Consider the system at discrete times in the set 
$$\{n \varepsilon, n\geq t_{0}\varepsilon^{-1}\}.$$
 For these times, by definition, 
$\boldsymbol{\theta}_{L}(n\varepsilon)$
provides a solution to the difference equation in \eqref{eq:difference:eqs}, and
$\boldsymbol{\theta}_{\mathbf{e}=\mathbf{0},\mathbf{B}=\mathbf{0}}(n\varepsilon)$ is the solution to the same difference equation
 with
 $$\|\mathbf{e}(t=n\varepsilon)\|\equiv 0, \ \|\mathbf{B}(t=n\varepsilon)\|\equiv 0,$$
and the initial condition that 
\begin{align}
\retheta(t_{0}) =  \bftheta(t_{0}) .\label{eq:init:reB:rho}
\end{align}

Throughout this proof, by the assumption in the lemma, we assume that event $\setE(L,z)$ occurs. As a result,
all expectations and probabilities are stated conditioned on $\setE(L,z)$.
In the rest, to simplify the notation, where appropriate we write $n$ to indicate the time instant $t=n\varepsilon$; hence,
we write $\boldsymbol{\theta}_{L}(n)$ to mean $\boldsymbol{\theta}_{L}(t=n\varepsilon)$.

We first provide the proof-sketch. Define the sequence
$\{\Upsilon_{n}, n\geq t_{0}\varepsilon^{-1} \}$ as
$$\Upsilon_{n}=\varepsilon\sum_{m=0}^{n-t_{0}\varepsilon^{-1}} \Big\| |\mathbf{e}(n-m)|+|\mathbf{B}(n-m)| \Big\|$$
where $\bfe(n)=\mathbf{e}(t=n\varep)$ and $\bfB(n)=\mathbf{B}(t=n\varep)$ are defined in \eqref{eq:difference:vecs}.
Define
\begin{eqnarray}
  \mathbf{e}_{\boldsymbol{\theta}}(n)= \boldsymbol{\theta}_{L}(n)-\boldsymbol{\theta}_{\mathbf{e}=\mathbf{0},\mathbf{B}=\mathbf{0}}(n).
\label{eq:def:e:rho}
\end{eqnarray}
By \eqref{eq:init:reB:rho}, we have
\begin{align}
  \mathbf{e}_{\boldsymbol{\theta}}(n=t_{0}\varep^{-1})=\mathbf{0}.\label{eq:init:erho:n:l}
\end{align}

In the first part of the proof, we show that if 
\beqa
 \qquad \sup_{ \frac{t_{0}}{\varepsilon} \leq n \leq \frac{\tau}{\varepsilon}} \Upsilon_{n} \leq  \eps_{1}
\label{eq:upsilon:assum}
\eeqa
where $\eps_{1}>0$ can be chosen arbitrarily small, and if
\beqa 
\|\boldsymbol{\theta}_{\mathbf{e}=\mathbf{0},\mathbf{B}=\mathbf{0}}(t=n\varep)
-\mathbf{x}(t=n\varepsilon)\|<\eps_{\theta}, \ \frac{t_{0}}{\varepsilon}\leq n \leq \frac{\tau}{\varepsilon}
\label{ineq:rho:e:B:0:y:eps2}
\eeqa
where $\bfx(t)$ is the solution to \eqref{eq:diff:cont}-\eqref{eq:init:xt0:rhot0} and $\eps_{\theta}>0$ can be chosen arbitrarily small, then
for $z>z_0'$, where $z_0'$ is defined in Lemma~\ref{lemma:t1}, we have
\begin{eqnarray}
\sup_{\frac{t_{0}}{\varepsilon} \leq n \leq \frac{\tau}{\varepsilon}}  \|\mathbf{e}_{\boldsymbol{\theta}}(n)\|\leq \eps_{\theta}.
\end{eqnarray}

In the second part of the proof, to complete the proof, we show that \eqref{eq:upsilon:assum} holds with probability 
approaching one as we first let $L$ approach infinity and then let $z$ appraoch infinity. We also show that 
\eqref{ineq:rho:e:B:0:y:eps2} holds for sufficiently large $L$.

\underline{\emph{First part:}} Suppose  \eqref{eq:upsilon:assum} and \eqref{ineq:rho:e:B:0:y:eps2} hold for arbitrarily small $\eps_{1}>0$ and $\eps_{\theta}>0$.
We choose $\eps_{\theta}$ arbitrarily small such that $0<\eps_{\theta}<1$, and let
\beqa
\eps_{1}=\eps_{\theta}\frac{1}{4\hat{c} e^{L_{c}\tau}}, \label{eq:eps1:def}
\eeqa
where $\hat{c}$ is a constant
independent of $z$ that will be determined later (in the following inductive proof).
 We next show that
\begin{eqnarray}
  \|\mathbf{e}_{\boldsymbol{\theta}}(n)\|\leq  4 \hat{c} e^{L_{c} n\varepsilon  }
\Upsilon_{n},\qquad \frac{t_{0}}{\varepsilon}\leq n \leq \frac{\tau}{\varepsilon}, \label{eq:e:induct}
\end{eqnarray}
which along with \eqref{eq:upsilon:assum} and \eqref{eq:eps1:def} leads to 
\begin{eqnarray}
\sup_{\frac{t_{0}}{\varepsilon} \leq n \leq \frac{\tau}{\varepsilon}}  \|\mathbf{e}_{\boldsymbol{\theta}}(n)\|\leq \eps_{\theta}.
\end{eqnarray}

What remains to complete the first part is to show that
\eqref{eq:e:induct} indeed holds. We first note the following.
 Subtracting the difference equation associated with 
$\boldsymbol{\theta}(n)_{\mathbf{e}=\mathbf{0},\mathbf{B}=\mathbf{0}}$ from that of $\bftheta(n)$ given in \eqref{eq:difference:eqs}
and using \eqref{eq:def:e:rho}, we obtain
\begin{align}
  \Delta [\mathbf{e}_{\mathbf{\theta}}(n)]&=\bfA \mathbf{e}_{\mathbf{\theta}}(n)+f(\boldsymbol{\theta}(n))-
f(\boldsymbol{\theta}(n)_{\mathbf{e}=\mathbf{0},\mathbf{B}=\mathbf{0}}) 
\nonumber \\ & +\mathbf{e}(n)+\mathbf{B}(n),
\nonumber
\end{align}
If $\|\mathbf{e}_{\mathbf{\theta}}(n)\|\leq 1$ and $\|\boldsymbol{\theta}_{L}(n)\|\leq 2$, we can write
\begin{eqnarray}
  \Delta [\mathbf{e}_{\mathbf{\theta}}(n)]=\bfA \mathbf{e}_{\mathbf{\theta}}(n)+\mathbf{L}(n) \mathbf{e}_{\mathbf{\theta}}(n)+\mathbf{e}(n)+\mathbf{B}(n),
\label{eq:e:rho:ode}
\end{eqnarray}
where $\mathbf{L}(n) $ is a $4\times 4$ matrix such that
\begin{align}
|\mathbf{L}_{ij}(n)|\leq L_{c}, \ 1\leq i,j \leq 4 , \label{ineq:Lij:B:n}
\end{align}
for some constant $L_{c}>0$. This follows from the definition of $f(\cdot)$ as a polynomial function, as defined in \eqref{eq:difference:vecs}.
Therefore, as long as $\boldsymbol{\theta}(n)$ and $ \mathbf{e}_{\mathbf{\theta}}(n)$ 
are properly bounded, the evolution of $ \mathbf{e}_{\mathbf{\theta}}(n)$ can be locally linearized using the matrix
$\mathbf{L}(n)$ as in \eqref{eq:e:rho:ode}.

 We next prove inductively that for all 
$$n \in \{t_{0}\varepsilon^{-1} \leq n \leq \tau\varep^{-1}\},$$
 1) inequality \eqref{eq:e:induct} holds, 2)
equality \eqref{eq:e:rho:ode} holds, and 3) $\boldsymbol{\theta}(n) \in \mathcal{D}$ where $\mathcal{D}$
is defined in \eqref{eq:def:setD}.

For $k=t_{0}\varepsilon^{-1}$, by \eqref{eq:init:erho:n:l}, we have the
 degenerate case of 
$$\mathbf{e}_{\boldsymbol{\theta}}(k)=\mathbf{0}$$
 for which
\eqref{eq:e:induct} holds. Moreover, since the event $\setE(L,z)$ occurs, we have that 
$\bftheta(t=t_{0})\in \mathcal{D}$, or in other words,
$\bftheta(k=t_{0}\varepsilon^{-1})\in \mathcal{D}$. This implies that 
$$\|\bftheta(k=t_{0}\varepsilon^{-1}) \|\leq 2.$$ 
Since $\mathbf{e}_{\boldsymbol{\theta}}(k=t_{0}\varepsilon^{-1})=\mathbf{0}$ 
and $\|\bftheta(k=t_{0}\varepsilon^{-1}) \|\leq 2$, we also have that
equality \eqref{eq:e:rho:ode} holds. Hence, for $k=t_{0}\varepsilon^{-1}$, 
the claims of the induction hold.

Next, suppose for some $k$ where
$$ t_{0}\varepsilon^{-1} \leq k\leq \tau\varepsilon^{-1}-1, $$
 the claims of the induction 
hold for all 
$$n \in \{t_{0}\varepsilon^{-1},t_{0}\varepsilon^{-1}+1, \cdots, k\}.$$
 We show that the claims hold for
 time $n=k+1$. By the induction assumption, \eqref{eq:e:rho:ode} holds for all 
$$n \in \{ t_{0}\varepsilon^{-1},t_{0}\varepsilon^{-1}+1 \cdots, k\}.$$
For such an $n$, using the definition of $\Delta(\cdot)$ given in \eqref{eq:def:Delta:h}, we can rewrite \eqref{eq:e:rho:ode} as
\begin{eqnarray}
  \mathbf{e}_{\boldsymbol{\theta}}(n+1)=\hat{\bfA}(n)\mathbf{e}_{\boldsymbol{\theta}}(n)
+\varepsilon \big[\mathbf{e}(n)+\mathbf{B}(n)\big]
\end{eqnarray}
where 
\begin{align}
  \hat{\bfA}(n)=\mathbf{I}+\varepsilon [\bfA+\mathbf{L}(n)].
\end{align}
Using the above and \eqref{eq:init:erho:n:l}, we have that\footnote{For $m=0$, we define the product term in \eqref{eq:e:expr} to be one.} 
\begin{align}
&  \mathbf{e}_{\boldsymbol{\theta}}(n+1)
\nonumber \\ & 
\ =\varepsilon \sum_{m=0}^{n-t_{0}\varepsilon^{-1}} \Big[ \prod_{j=0}^{m-1} \hat{\bfA}(n-j)\Big]  \Big[\mathbf{e}(n-m)+\mathbf{B}(n-m)\Big].
\label{eq:e:expr}
\end{align}
Using \eqref{ineq:Lij:B:n}, we then have that 
\begin{align}
|  \mathbf{e}_{\boldsymbol{\theta}}(n+1)| \leq \varepsilon \sum_{m=0}^{n-t_{0}\varepsilon^{-1}}  \hat{\bfA}^{m} 
 \Big[|\mathbf{e}(n-m)|+|\mathbf{B}(n-m)|   \Big] \label{eq:e:expr:B}
\end{align}
where 
\begin{align}
  \hat{\bfA}=\mathbf{I}+\varepsilon [\bfA + L_{c} \mathbf{1}_{4\times 4}].
\end{align}

All eigenvalues of $\hat{\bfA}$ are distinct and real, and for small $\varepsilon$, they are
 close to one. Moreover, we have that the largest 
eigenvalue of $\hat{\bfA}$ is less than $1+4\varepsilon L_{c}$. Using this and the eigenvalue
decomposition for $\hat{\bfA}$ to write it as $\hat{\bfA}=\mathbf{U} \mathbf{D} \mathbf{U}^{-1}$, where 
$\mathbf{D}$ is a diagonal matrix containing the eigenvalues of $\hat{\bfA}$ and $\mathbf{U}$
is the matrix with its columns as the eigenvectors of $\hat{\bfA}$, we can show that each element $\hat{\bfA}_{ij}^{m}$ of the matrix
$\hat{\bfA}^{m}$ satisfies the following
\begin{eqnarray}
|  \hat{\bfA}_{ij}^{m}|\leq \hat{c} (1+4\varepsilon L_{c})^{m}\leq \hat{c} e^{4L_{c}\varepsilon m },
\label{ineq:Aijm}
\end{eqnarray}
for some constant $\hat{c}>0$. This constant was used earlier 
to determine $\eps_{1}$ in \eqref{eq:eps1:def}.

Using the bound in \eqref{ineq:Aijm}, we have that
\begin{eqnarray}
&&  \left\| \hat{\bfA}^{m}\Big[|\mathbf{e}(n-m)|+|\mathbf{B}(n-m)|\Big] \right\|
\nonumber \\ && \qquad \leq 4 \hat{c} e^{L_{c}\varepsilon m} \Big\| |\mathbf{e}(n-m)|+|\mathbf{B}(n-m)| \Big\|.
\label{eq:AB:product}
\end{eqnarray}
Hence, by \eqref{eq:e:expr:B} and \eqref{eq:AB:product}, we have that
\begin{eqnarray}
 \| \mathbf{e}_{\mathbf{\theta}}(n+1) \| \leq 
4\hat{c} e^{L_{c}\varepsilon (n+1)} \Upsilon_{n}.
\label{ineq:e:eps2}
\end{eqnarray}
Since, by the induction assumption,
$$n \in \{t_{0}\varepsilon^{-1},t_{0}\varepsilon^{-1}+1, \cdots, k\},$$
we can use choose $n=k$ in \eqref{ineq:e:eps2}. Therefore, we obtain that \eqref{eq:e:induct} holds for $n=k+1$.

Moreover, by \eqref{eq:upsilon:assum}, 
\eqref{eq:eps1:def}, and the inequality \eqref{ineq:e:eps2} for $n=k$, we have that
\begin{align}
 \big\| \mathbf{e}_{\mathbf{\theta}}(k+1) \big\| \leq \eps_{\theta}<1.
\label{ineq:e:eps2:2}
\end{align}
In addition, by \eqref{ineq:rho:e:B:0:y:eps2} and
 the error bound in \eqref{ineq:e:eps2:2} along with definition of $\mathbf{e}_{\mathbf{\theta}}(n)$ in \eqref{eq:def:e:rho}, we have that
 $$ \big\|\boldsymbol{\theta}_{L}(t=(k+1)\varep)- \mathbf{x}(t=(k+1)\varepsilon) \big\|\leq 2\eps_{\theta}.$$
 Under the assumption in the lemma that $\setE(L,z)$ occurs, by Lemma~\ref{lemma:t1}, for $z>z_0'$, we have
$$\mathbf{x}(t=(n+1)\varepsilon)\in \mathcal{D}_{w}$$
where $\setD_{w}$ is strictly inside $\setD$. 
This and the above inequality imply that for small $\eps_{\theta}$, we have
 $$\boldsymbol{\theta}_{L}(k+1) \in \mathcal{D}.$$
This inequality implies
that $\|\boldsymbol{\theta}(k+1)\|\leq 2$, which along with
\eqref{ineq:e:eps2:2} confirms that \eqref{eq:e:rho:ode} holds for $n=k+1$. Therefore, we have that for 
$n=k+1$ all claims of the induction hold, completing the inductive proof.


\underline{\emph{Second part:}} By the non-negativity of the norm function, the sequence $\{\Upsilon_{n}\}$ is a submartingle. By Doob's inequality, for any $\eps_{1}>0$, we then have that 
\begin{eqnarray}
P\Big[\sup_{\frac{t_{0}}{\varepsilon} \leq n \leq \frac{\tau}{\varepsilon}} \Upsilon_{n} > \eps_{1}\ \big| \setE(L,z) \Big]
\leq\eps^{-1}_{1} \ev \big[\Upsilon_{\frac{\tau}{\varepsilon}} \ | \setE(L,z)\big] .
\label{ineq:doob:ups}
\end{eqnarray}

By \eqref{eq:es:bound}, we have that
\begin{eqnarray}
  \| \mathbf{e}(n) \| = O(z^{-1}).
  \label{eq:exp:norm:e}
\end{eqnarray}
In addition, by 
\eqref{eq:Bs:hat:ex}, the following inequality 
for two given random
vectors $\mathbf{a}$ and $\mathbf{b}$:
 $$\ev\big[\|\mathbf{a}+\mathbf{b} \|\big] \leq \ev\big[\|\mathbf{a}\|\big]+\ev\big[\|\mathbf{b}\|\big],$$
and that for a given r.v. $x$, $\ev[x]\leq \sqrt{\ev[x^{2}]}$, we have that
\begin{align}
&  \ev\Big[\|\mathbf{B}(n) \|  \ \big| \setE(L,z) \Big] \leq
\Big[O(z^{0.5}L^{-0.5\zeta}) + O(z^{1.5}\varep^{0.5}) \Big].
\nonumber \\
  \label{eq:exp:B:t:norm}
\end{align}

From \eqref{eq:exp:norm:e} and \eqref{eq:exp:B:t:norm}, we have that
\begin{align}
  \ev\Big[\Upsilon_{\frac{\tau}{\varepsilon}}\ \big| \setE(L,z) \Big] &=\varepsilon \Big[\frac{\tau-t_{0}}{\varepsilon}+1\Big]
 \Big[O(z^{-1})+O(z^{0.5}L^{-0.5\zeta})
\nonumber \\ & \qquad \qquad \qquad \qquad   + O(z^{1.5}\varep^{0.5})  \Big].
\end{align}
Thus,
$$\limsup_{z\to \infty}\ \limsup_{L\to \infty} \ \ev \big[\Upsilon_{\frac{\tau}{\varepsilon}}  \ | \setE(L,z) \big] =0 .$$
Taking the limit of both sides of \eqref{ineq:doob:ups}, we obtain that for any $\eps_{1}>0$
\begin{align}
\limsup_{z\to \infty} \limsup_{L\to \infty} 
P \Big[ \sup_{\frac{t_{0}}{\varepsilon}\leq n \leq \frac{\tau}{\varepsilon}} \Upsilon_{n} >\eps_{1} \ | \setE(L,z) \Big] =0 ,
\label{eq:lim:sup:ups}
\end{align}
which means that 
\begin{align}
 \liminf_{z\to \infty} \liminf_{L\to \infty} 
P \Big[ \sup_{\frac{t_{0}}{\varepsilon}\leq n \leq \frac{\tau}{\varepsilon}} \Upsilon_{n} \leq \eps_{1} \ | \setE(L,z) \Big] =1.
\label{eq:lim:sup:ups}
\end{align}
The above limit states that \eqref{eq:upsilon:assum} holds with probability approaching one in the limit as first $L$ approaches infinity and then $z$ approaches
infinity, as required.

We next show that \eqref{ineq:rho:e:B:0:y:eps2} holds for sufficiently large $L$. By the discussion in Appendix~\ref{subsec:final:step}
 leading to \eqref{eq:dis:rho:e:b:0}, 
having that the event $\setE(L,z)$ occurs, we can show that
\begin{align}
  \sup_{\frac{t_{0}}{\varep} \leq n \leq \frac{\tau}{\varep} }\|\retheta(n\varepsilon)-\mathbf{x}(n\varepsilon)\|=O(\varepsilon), \
\text{as $L \to \infty$}.
\end{align}
Therefore , for any $\eps_{\theta}>0$, we 
can choose $L$ sufficiently large so that \eqref{ineq:rho:e:B:0:y:eps2} holds, completing the second step and 
the proof of the lemma.
  
\end{proof}



\begin{lemma}\label{lemma:cont:disc}
Suppose $z>1$. For $t\geq 0$, define $$n_{t} =\Big\lfloor\frac{t}{\varep}  \Big\rfloor.$$ 
We have that 
\begin{align}
 \lim_{L\to \infty} \sup_{t\in[t_{0},\tau] } \|\boldsymbol{\theta}_{L}(t) -\bftheta(n_{t}\varep)\|=0 \ (\text{in prob.})
\end{align}

\end{lemma}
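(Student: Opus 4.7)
The plan is to exploit the fact that $\varepsilon = 1/\lceil L^{1-\zeta}\rceil \to 0$ while the CSMA process, run on $L$ links each of which attempts transmission at rate $z$ and stops transmitting at rate $1$, has total jump rate bounded by $L(1+z)$. Consequently, the (stochastic) number $N_{n}$ of jumps of the CSMA Markov chain in any interval $\bigl((n-1)\varepsilon, n\varepsilon\bigr]$ has mean of order $zL\varepsilon = O(zL^{\zeta})$, and each such jump perturbs any component of $\bftheta(t)$ by at most $c_J/L$ for a constant $c_J$ depending only on the interference-graph degree and the size of the $r_n$-neighbourhood. Hence, denoting by $V_n$ the supremum of $\|\bftheta(t)-\bftheta(n\varepsilon)\|$ over $t\in[n\varepsilon,(n+1)\varepsilon]$, we have $V_n \leq c_J N_{n+1}/L$, and it suffices to control $\max_n N_n$ across the $\lceil\tau/\varepsilon\rceil = O(L^{1-\zeta})$ intervals covering $[t_0,\tau]$.

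First I would stochastically dominate each $N_n$ by a Poisson random variable with mean $\mu = L(1+z)\varepsilon = O(zL^{\zeta})$, via the standard coupling of a continuous-time Markov chain with a homogeneous Poisson process of rate equal to the maximum total jump rate. A Chernoff bound then gives $\Pr[N_n > K] \leq (e\mu/K)^{K}$ for $K\geq \mu$. Choosing $K = L^{(1+\zeta)/2}$, which exceeds $\mu$ for large $L$ since $(1+\zeta)/2 > \zeta$, yields a per-interval tail of order $\bigl(ez L^{(\zeta-1)/2}\bigr)^{L^{(1+\zeta)/2}}$, which decays super-exponentially in $L$. A union bound over $O(L^{1-\zeta})$ intervals therefore still gives $\Pr[\max_n N_n > K] \to 0$ as $L\to\infty$, and on the complementary event
\begin{align}
\sup_{t\in [t_0,\tau]} \|\bftheta(t) - \bftheta(n_t\varepsilon)\| \leq \frac{c_J K}{L} = O\bigl(L^{-(1-\zeta)/2}\bigr) \to 0. \nonumber
\end{align}
This produces the required convergence in probability.

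The main obstacle I anticipate is verifying the per-jump bound $c_J/L$ for the more intricate densities $\theta_L^{(fg)}$, $\theta_L^{(f)}$, and $\tilde\theta_L$, which are defined through the history of rare and ordinary events in Appendix~\ref{sec:first:ques} rather than as simple sums of link-state indicators. A single Markov jump of some link $l$ can reclassify only those links whose $r_n$-neighbourhood contains $l$; this set has cardinality at most a constant $\hat c_n$ depending only on $r_n$ and the lattice geometry. Each such link contributes at most $O(1/L)$ to any of the four component densities, so the overall change across one jump is $O(\hat c_n/L)$, giving the required uniform bound. This step is routine but requires a careful case analysis paralleling the event classification of Appendix~\ref{subsec:diff:evol}; once it is in place the tail bound and union bound above close the argument.
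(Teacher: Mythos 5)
Your proposal is correct and follows essentially the same strategy as the paper: bound the per-jump perturbation of each component of $\boldsymbol{\theta}_L$ by $O(1/L)$, then show that the number of Markov jumps in any interval of length $\varepsilon$ is unlikely to be large enough to produce a non-negligible change, and finish with a union bound over the $O(L^{1-\zeta})$ intervals covering $[t_0,\tau]$. The only real difference is the specific concentration inequality: the paper fixes a deviation threshold $\eta_2$, notes that a change of $\eta_2$ in a density forces at least $cL\eta_2$ state changes, and applies a Markov inequality to $e^{-\upsilon\sum_i s_i}$ where $s_i$ are inter-jump times stochastically dominated below by $\mathrm{Exp}(Lz)$; you instead Poisson-dominate the jump count and use a Chernoff bound with an intermediate threshold $K=L^{(1+\zeta)/2}$, which slots neatly between the mean $\Theta(L^\zeta)$ and the scale $L$ at which the density would move. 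These are dual bounds on the same event ($N_n > K$ iff the $K$-th inter-arrival sum is $\leq \varepsilon$), so the two arguments are interchangeable. Your identification of the history-dependent densities $\theta^{(fg)}_L, \theta^{(f)}_L, \tilde\theta_L$ as the delicate point, and the resolution via locality of reclassification within an $r_n$-neighbourhood, matches the paper's implicit reasoning in the step from a density change of $\eta_2/2$ to $\Omega(L\eta_2)$ distinct state-changing links.
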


\begin{proof}

Consider the evolution of the system in the time interval $[n\varepsilon, (n+1)\varepsilon)$, where $n \in \{0,1,2,\cdots\}$. 
Suppose for a given $n$, there exists some $t\in [n\varepsilon, (n+1)\varepsilon)$ such that
\begin{eqnarray}
  \|  \boldsymbol{\theta}_{L}(t)-  \boldsymbol{\theta}_{L}(n\varepsilon)\|>\eta_{2},
\label{ineq:dis:rho:dis}
\end{eqnarray}
where $\eta_{2}>0$.

Since $\boldsymbol{\theta}_{L}(t)$ is a four dimensional vector, the above inequality
 implies that within the time interval $[n\varepsilon, t]$, at least
one of the densities in the vector $\bftheta(t)$ must change by an amount of $\frac{\eta_{2}}{2}$. Therefore, a fraction $c\eta_{2}$, where 
$c>0$ is some constant, of links must change
their states at least once, from active to inactive or vice versa, from time $n\varepsilon$ to time $t$.
 Define $x_{l}(n)\in\{0,1\}$ as follows. Let
$x_{l}(n)=1$ if link $l$ changes its state at least once from time $n\varepsilon$ to
 time $(n+1)\varepsilon$, and we let $x_{l}(n)=0$, otherwise.

Based on the above discussion, \eqref{ineq:dis:rho:dis} implies that
\begin{eqnarray}
  \sum_{l\in \mathcal{L}} x_{l}(n) \geq cL \eta_{2}. \label{ineq:sum:x:beg}
\end{eqnarray}
We note that $x_{l}$'s may be correlated random variables. To address this difficulty, 
we focus on the times epochs after time $n\varepsilon$ that one link changes it state. Since packet transmission times
and back-off periods are continuous r.v.'s, w.p.1., no two links can change their state at the
same time. This implies that the amount of time from time $n\varepsilon$ until the first link changes
its state is well-defined, which we denote it by $s_{1}$. Hence, the first state change
occurs at time $n\varepsilon+s_{1}$. Similarly, let $s_{i}$, $i\geq2$, denote the amount time from the $(i-1)$th state
change to the $i$th state change. Hence, the $i$th state change after time $n\varepsilon$ occurs at time
$$n\varepsilon+\sum_{j=1}^{i}s_{j}.$$ 

Without loss of generality in the rest we assume that $cL\eta_{2}$ is an integer.
Assuming so, the condition in \eqref{ineq:sum:x:beg} implies that the $cL\eta_{2}$th state 
change after time $n\varepsilon$ must occur up to time $(n+1)\varepsilon$, and therefore
\begin{align}
  \sum_{i=1}^{cL\eta_{2}}s_{i}\leq \varepsilon.
\label{ineq:sum:s11}
\end{align}
Consider the time $n\varepsilon+\sum_{j=0}^{j=i-1}s_{j}$. At this time, there are a number of links that sense the channel as idle and try to access
the channel with rate $z$, and there are a number of links that are busy with packet transmission and 
with unit rate stop transmitting. Other remaining links in the network do not affect the time until the next
state change since their interfering neighbours are transmitting. Moreover, since packet transmission
 times and back-off periods are exponentially 
distributed, the time until the next state change is the minimum of all of these associated exponential r.v.'s. In addition,
since back-off timers and packet transmission 
times are independent of each other, given the state of system at time $n\varep+\sum_{j=0}^{j=i-1}s_{j}$, the time until the next state
 change is the minimum of 
a collection of independent exponential r.v.'s.
As a result, the r.v. $s_{i}$ is also an exponential r.v. with a rate that depends only
on system history up to time $n\varepsilon+\sum_{j=0}^{j=i-1}s_{j}$. Since there are at most $L$ links, for $z>1$, the r.v. 
$s_{i}$ becomes an exponential r.v. with maximum rate of $Lz$. Hence, independent of the system history up to time $n\varepsilon$
and the previous times where links states changed, r.v. $s_{i}$ is stochastically dominated by an exponential r.v. with rate
$Lz$ and we have that
\begin{align}
\ev\Big[s_{i} \ \Big|\{s_{i-1}, s_{i-2}, ..., s_{1}\}, \setH_{s}(n\varepsilon)\Big]\geq \frac{1}{Lz}
\label{ineq:ev:si:h}
\end{align}
where $\setH_{s}(t)$ is the history of the system up to and including time $t$.

Define $\upsilon(L)$ as
\begin{align}
\upsilon(L) =z\sqrt{L}=o(zL), \ \text{as $L\to \infty$}.
\label{ineq:def:ups}
\end{align}
Multiplying both sides of \eqref{ineq:sum:s11} by $-\upsilon(L)$, taking exponential function of both sides, and then applying Markov inequality, we obtain
\begin{eqnarray}
  P\Big[\sum_{i=1}^{cL\eta_{2}}s_{i}\leq \varepsilon \ \big| \setH_{s}(n\varepsilon) \Big] \leq 
\frac{\ev\big[e^{-\upsilon\sum_{i=1}^{cL\eta_{2} }s_{i} } \ | \setH_{s}(n\varepsilon) \big]}
{e^{-\upsilon \varepsilon}}
\label{ineq:prob:sum:x:b}
\end{eqnarray}
In the above and the following to simplify the notation, we have used $\upsilon$ as a short notation for $\upsilon(L)$.
Using \eqref{ineq:ev:si:h}, and noting that for an exponential r.v. $s$ with mean $\ev[s]$
$$\ev[e^{-\upsilon s}]=(1+\upsilon\ev[s])^{-1},$$
we can show that 
\begin{align}
  \ev\big[e^{-\upsilon\sum_{i=1}^{cL\eta_{2}} s_{i} }  \ | \setH_{s}(n\varepsilon) \big]\leq (1+\frac{\upsilon}{Lz})^{-cL\eta_{2} }.
\end{align}

 For a fixed $z$ and sufficiently large $L$, by \eqref{ineq:def:ups}, we have that 
\begin{align}
  (1+\frac{\upsilon}{Lz})^{-cL\eta_{2} }=  (1+\frac{\upsilon}{Lz})^{ \frac{Lz}{\upsilon} ( -\frac{c\upsilon\eta_{2}}{z}) }
 <\left[\frac{e}{2}\right]^{-\frac{c\upsilon\eta_{2}}{z}}
\end{align}
Hence, by the above and \eqref{ineq:prob:sum:x:b}, for sufficiently large $L$, 
\begin{align}
    P\Big[\sum_{i=1}^{cL\eta_{2}} s_{i}\leq \varepsilon  \ \big| \setH(n\varepsilon) \Big] <e^{-\upsilon(\frac{c\eta_{2}}{z}- \varepsilon)}=
e^{-\frac{\upsilon}{z}(c\eta_{2}- z\varepsilon)}.
\end{align}
Using \eqref{ineq:def:ups} and noting that by \eqref{eq:lim:z:varepsilon} for a fixed $z$
$$\lim_{L\to \infty } z\varepsilon=0,$$
 we can choose $L$ sufficiently large so that
\begin{align}
P\Big[\sum_{i=1}^{cL\eta_{2}}s_{i}\leq \varepsilon \ \big| \setH_{s}(n\varepsilon) \Big] <
e^{-\frac{c\upsilon\eta_{2}}{2z} }=e^{-\frac{c\sqrt{L}\eta_{2}}{2} }.
\end{align}

The discussion leading to \eqref{ineq:sum:s11} states that in order to have \eqref{ineq:dis:rho:dis} for
 some $t\in[n\varepsilon, (n+1)\varepsilon)$, the inequality 
in \eqref{ineq:sum:s11} should hold. Using the probability bound in the above, for sufficiently large $L$, we thus have that 
\begin{align}
&P\Big[  \exists t \in[n\varepsilon, (n+1)\varepsilon), \text{ s.t. }
\nonumber \\
& \qquad \|  \boldsymbol{\theta}_{L}(t)-  \boldsymbol{\theta}_{L}(n\varepsilon)\|>\eta_{2} \ \big| \setH_{s}(n\varepsilon) \Big]
<e^{-\frac{c\sqrt{L}\eta_{2}}{2} }.
\end{align}
Note that the above probability bound is independent of the system history up to and 
including time $n\varepsilon$. Hence, considering the event that 
in \emph{none} of the intervals $[n\varepsilon, (n+1)\varepsilon)$, 
$0 \leq n \leq  \tau\varepsilon^{-1}-1 \}$, the inequality of \eqref{ineq:dis:rho:dis} holds for 
some $t \in[n\varepsilon, (n+1)\varepsilon)$, we have 
\begin{eqnarray}
P\Big[\sup_{0\leq n \leq \frac{\tau}{\varepsilon}-1} \ \sup_{ n\varepsilon\leq t \leq (n+1)\varepsilon  } 
\| \boldsymbol{\theta}_{L}(t)-\boldsymbol{\theta}_{L}(n\varepsilon)\|\leq \eta_{2} \Big]
\nonumber \\ \
\geq   \Big(1- e^{-\frac{c\sqrt{L}\eta_{2}}{2}} \Big)^{\frac{\tau}{\varepsilon} }.
\label{ineq:fin:prob:error}
\end{eqnarray}
Since $\varepsilon=\lceil L^{-(1-\zeta)} \rceil$, it follows that
\begin{eqnarray}
\lim_{L\to \infty}    \Big(1- e^{-\frac{c\sqrt{L}\eta_{2}}{2}} \Big)^{\frac{\tau}{\varepsilon} }=1.
\label{ineq:p:lim:one}
\end{eqnarray}
Using this limit and the bound in \eqref{ineq:fin:prob:error}, we obtain the statement in the lemma, as required.

\end{proof}


\begin{lemma}\label{lemma:x:cond}
Under the assumptions in Lemma~\ref{lemma:t1},
there exists a constant $z_{0}$ such that following holds for $z>z_{0}$. There exists a $t_{1}$ independent of $z$, where 
$$t_{0}\leq t_{1}<2,$$
 such that
$\mathbf{x}(t)$ uniquely exists over the interval $[t_{0},t_{1}]$, and
for all $t\in[t_{0},t_{1}]$, we have $\mathbf{x}(t)\in\mathcal{D}_{w}$. 
Moreover, we have
  \begin{align}
    0<c_{x_{1},1}'<x_{1}(t_{1})<c_{x_{1},2}'<0.5 ,\label{ineq:init:x1:t1:l}
  \end{align}
where $c_{x_{1},1}'$ and $c_{x_{1},2}'$ are constants independent of $z$, and for $2\leq i \leq 4$,
\begin{align}
  x_{i}(t_{1})=O(z^{-1}).
\label{ineq:init:x2:4:t1:l}
\end{align}
In addition,
\begin{align}
&  \Big|x_{2}(t_{1})-\frac{1}{3z} c_{R}\delta_{\mathbf{x}}(t_{1})^{3}\Big|=O(z^{-2}),
\label{eq:x2:t1:l}
\\ &
  \Big|x_{3}(t_{1})-\frac{2}{3z} c_{R}\delta_{\mathbf{x}}(t_{1})^{3} \Big|=O(z^{-2}).
\label{eq:x3:t1:l}
\\ &
  \Big|x_{4}(t_{1})-\frac{1}{z}\big[x_{1}(t_{1})-c_{R}\delta_{\mathbf{x}}(t_{1})^{3}\big] \Big|=O(z^{-2}). \label{eq:x4:t1:l}
\end{align}

\end{lemma}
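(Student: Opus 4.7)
The plan is to exploit the singular-perturbation structure of the ODE \eqref{eq:diff:cont}: for large $z$ the components $x_2, x_3, x_4$ are fast variables with intrinsic relaxation rates $-3z, -z, -z$, while $x_1$ is slow. They should collapse onto a quasi-steady-state (slow manifold) on the time scale $O(z^{-1}\log z)$, after which $x_1$ tracks the explicit reduced solution $y_1(t) = 0.5 - \alpha(1+\beta(t-t_0))^{-1/2}$ derived in \eqref{eq:y1:exact:sol} up to $O(z^{-1})$ corrections. I would then fix any $t_1 \in (t_0, 2)$ independent of $z$ (say $t_1 = 1.5$) and pick $z_0$ large enough that the transient has decayed long before $t_1$.

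First I would invoke Picard--Lindel\"of (the field $\bfA\bfx + f(\bfx)$ is polynomial, hence locally Lipschitz) for local existence and uniqueness, and extend the solution as long as $\bfx(t) \in \setD_w$. Variation of parameters applied to $\dot{x}_2 = -3zx_2 + c_R\delta_\bfx^3$ together with $\delta_\bfx \in [0,0.5]$ and the hypothesis $x_2(t_0) = O(z^{-1})$ gives $x_2(t) \le C/z$ uniformly, hence $zx_2(t) = O(1)$; analogous estimates for $x_3$ (driven by $2zx_2$) and $x_4$ (driven by $x_1 - c_R\delta_\bfx^3$) yield $zx_3(t), zx_4(t) = O(1)$. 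Substituting into $\dot{x}_1 = -x_1 + 3zx_2 + zx_3 + zx_4$ gives $|\dot{x}_1| \le M$ uniformly in $z$, while $\dot{x}_1 \ge -x_1$ yields the lower envelope $x_1(t) \ge x_1(t_0)e^{-(t-t_0)}$. Together these keep the trajectory in $\setD_w$ on a $z$-independent window, delivering global existence on $[t_0, t_1]$.

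To sharpen the fast-variable estimates I would integrate the variation-of-parameters integrals by parts, using $\tfrac{d}{dt}\delta_\bfx^3 = -3\delta_\bfx^2\dot{x}_1 = O(1)$, to obtain $x_2(t) = \tfrac{1}{3z}c_R\delta_\bfx(t)^3 + O(z^{-2})$, and analogously $x_3(t) = \tfrac{2}{3z}c_R\delta_\bfx(t)^3 + O(z^{-2})$ and $x_4(t) = z^{-1}[x_1(t) - c_R\delta_\bfx(t)^3] + O(z^{-2})$, valid once $t - t_0 \gtrsim z^{-1}\log z$. Plugging these back into the slow equation yields $\dot{x}_1 = \tfrac{2}{3}c_R\delta_\bfx^3 + O(z^{-1})$, matching \eqref{eq:ode:y1} up to $O(z^{-1})$; a Gronwall comparison with the explicit $y_1$ then keeps $x_1$ within $O(z^{-1})$ of $y_1$ on $[t_0, t_1]$, so $x_1(t_1)$ inherits a strictly positive $z$-independent gap both above $0$ and below $0.5$, and \eqref{ineq:init:x1:t1:l}--\eqref{eq:x4:t1:l} follow for $z$ exceeding a threshold $z_0$.

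The main obstacle I anticipate is the singular-perturbation bookkeeping: the initial transient contributes $zx_i(t_0) = O(1)$ to $\dot{x}_1$ over a window of length $O(z^{-1}\log z)$, and I must verify that its cumulative effect on $x_1(t_1)$ is $o(1)$ (a direct integration of the transient piece $zx_2(t_0)e^{-3z(t-t_0)}$ yields $O(z^{-1})$), so that the strict upper bound $x_1(t_1) < c'_{x_1,2} < 0.5$ with a $z$-independent gap survives. A secondary subtlety is that the bound $\dot{x}_1 = O(1)$ used when integrating by parts depends on the very fast-variable estimates being sharpened, so the whole argument must be closed as a single bootstrap/Gronwall loop rather than a sequence of independent bounds.
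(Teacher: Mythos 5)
Your proposal follows essentially the same singular-perturbation route as the paper: variation of parameters to bound the fast variables $x_2,x_3,x_4 = O(z^{-1})$, substitution into the slow equation to get $|\dot{x}_1|=O(1)$, and a sharpening of the fast-variable estimates via integration by parts / Mean-Value Theorem after an $O(z^{-1}\log z)$ transient, closed as a bootstrap. The one place where you take a detour is in controlling $x_1(t_1)$: you propose a Gronwall comparison with the explicit reduced solution $y_1$, whereas the paper uses the cheaper observation that since $|\dot{x}_1|=O(1)$ uniformly in $z$, one can simply pick $t_1 - t_0$ small enough (independently of $z$) that $x_1$ has not drifted out of a strict subinterval of $(0,0.5)$; the Gronwall-against-$y_1$ argument you describe is really the content of the later Lemma~\ref{lemma:t1} and is not needed here. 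Also be a little careful with the order of bounds: the inequality $\delta_{\bfx}\in[0,0.5]$ that you plug into the variation-of-parameters estimate for $x_2$ presupposes $\bfx(t)\in\setD_w$, which is what you are proving; the paper sidesteps the circularity by first establishing, via the integral representation $\bfx(t)=e^{\bfA(t-t_0)}\bfx(t_0)+\int_{t_0}^{t}e^{\bfA(t-t')}f(\bfx(t'))\,dt'$, that $\bfx$ grows at most linearly and so remains in the larger ball $\{\|\bfx\|\le 2\}$ over a $z$-independent interval $I_0$, and only then uses a cruder bound $|\delta_{\bfx}^{3}|\le C$ on $I_0$; you acknowledge this bootstrap subtlety, and it does close, but it deserves the same explicit two-stage treatment.
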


\begin{proof} 

By the assumption in the lemma on $\bfx_{t_{0}}=\bfx(t_{0})$, for large $z$,
\begin{align}
  \|\bfx(t_{0})\|<1 \label{ineq:xt0:assump:l}.
\end{align}
In the rest, since the lemma is stated for $z>z_{0}$, we assume $z_{0}$ is chosen sufficiently large
 so that the above inequality holds for $z>z_{0}$.

Consider the ODE of \eqref{eq:diff:cont} and suppose $z>z_{0}$. 
Since the RHS of this ODE is a locally Lipschitz continuous function of $\bfx$ and continuously differentiable
with respect to $x_{i}$'s, $1\leq i \leq 4$, we have that starting
from the bounded $\mathbf{x}_{t_{0}}$, see \eqref{ineq:xt0:assump:l}, the solution exists and is unique until at least where
we have $\|\bfx(t) \| \geq 2$ for some $t\geq t_{0}$ (see e.g., Theorem 1.9 and Theorem 1.13 in \cite{markley:book}). Define time $t_{b}$ to be the first time after $t_{0}$ that 
$$\| \bfx(t_{b})\|\geq 2.$$
By the definition of $t_{b}$, $\bfx(t)$ uniquely exists over the interval
$[t_{0},t_{b})$, and we have that
\begin{align}
  \| \bfx(t)\|\leq 2, \ t\in[t_{0},t_{b}). \label{ineq:abs:x:tb}
\end{align}
If time $t_{b}$ does not exist, define $t_{b}=\infty$. By \eqref{ineq:xt0:assump:l}, 
we have $$t_{b}>t_{0}.$$ 
By the definition of $f(\bfx)$ in \eqref{eq:diff:cont} and the inequality \eqref{ineq:abs:x:tb}, we also have that
\begin{align}
  |f(\bfx)| \leq c \mathbf{1}_{4\times 1}, \ t\in[t_{0},t_{b}), \label{ineq:abs:fx:tb}
\end{align}
for some constant $c>0$.
By the ODE of \eqref{eq:diff:cont}, we also have that for all $t\in[t_{0},t_{b})$,
\begin{align}
  \bfx(t)=\bfx(t_{0})e^{\bfA(t-t_{0})}+\int_{t_{0}}^{t} e^{\bfA(t-t')} f(\bfx(t'))dt'. \label{eq:bx:sol:int:ex}
\end{align}
Since all eigenvalues of matrix $\bfA$ are real, distinct, and non-positive, we can find a constant $c_{\bfA}>0$ independent of
$z$ such that for $t' \leq t$
\begin{align}
  \left| e^{\bfA(t-t')}\right|\leq c_{\bfA} \mathbf{1}_{4 \times 4}.\label{ineq:eA:x}
\end{align}
Considering a time $t\in [t_{0},t_{b})$, by \eqref{ineq:abs:x:tb}-\eqref{ineq:eA:x}, we have
\begin{align}
  |\bfx(t)| &\leq  2 c_{\bfA} \mathbf{1}_{4 \times 1} \mathbf{1}_{4 \times 4} +\int_{t_{0}}^{t}\left| e^{\bfA(t-t')}\right|
 \left| f(\bfx(t')) \right|dt'
\nonumber \\ 
& \leq 8  c_{\bfA} \mathbf{1}_{4 \times 1} + c  c_{\bfA}(t-t_{0})  \mathbf{1}_{4 \times 4} \mathbf{1}_{4 \times 1} 
\nonumber \\
&
\leq 4 c c_{\bfA} (t-t_{0}+2c^{-1})\mathbf{1}_{4\times 1} . \label{eq:bx:sol:int:ex}
\end{align}

The above inequality states that each element of $\bfx(t)$ independent of $z$ increases at most linearly with time $t$ before reaching
time $t_{b}$. Therefore, by \eqref{ineq:xt0:assump:l}, there must exist 
a time interval 
$$I_{0}=[t_{0},t_{i}]$$
 such that
\begin{align}
  \|\mathbf{x}(t)\| \leq 2, \ t\in I_{0}
\end{align}
where $t_{i}$ can be chosen independent of $z$ such that
\begin{align}
  t_{0}<t_{i}<2.
\end{align}
 As a result, 
\begin{align}
|x_{i}(t)|\leq 2,  \ t\in I_{0}. \label{ineq:xi:less:2:l}
\end{align}
Using the above inequality, we have that 
\begin{align}
\left|  \delta_{\bfx}^{3}(t) \right|\leq 8, \ t\in I_{0} \label{ineq:hat:delta:B:x}
\end{align}
where $\delta_{\bfx}(t)$ is defined in \eqref{def:hat:delta:bfx}.

By the ODE for $x_{2}(t)$, as given by \eqref{eq:diff:cont}, we have
\begin{align}
  \frac{d}{dt} x_{2}=c_{R}\delta_{\bfx}^{3}-3zx_{2}. \label{eq:diff:x2:lem:in}
\end{align}
Using the above ODE, \eqref{ineq:hat:delta:B:x}, and the assumption in the lemma that \eqref{ineq:assum:xs:lemma:m} holds, 
i.e., that $x_{2}(t_{0})=O(z^{-1})$, 
we can show that
\beqa
\sup_{t\in I_{0}} |x_{2}(t)|=O(z^{-1}) . \label{eq:x2:bness}
\eeqa
Similarly, we can show that
\beqa
\sup_{t\in I_{0} } |x_{3}(t)|=O(z^{-1}), \ \ \sup_{t\in I_{0}} |x_{4}(t)|=O(z^{-1}). \label{eq:x3:4:bness}
\eeqa

Using \eqref{ineq:xi:less:2:l}, \eqref{eq:x2:bness}, and \eqref{eq:x3:4:bness} in the ODE for $x_{1}(t)$ given in
\eqref{eq:diff:cont}, we 
have that
\begin{align}
\sup_{t\in I_{0}}\left|\frac{d}{dt}x_{1} \right|=O(1),\ \text{ as $z\to \infty$}.   \label{ineq:x1:bound:der:l}
\end{align}
 Hence, for any $\eps_{1}>0$, we can find a sub-interval
 \begin{align}
   I_{1}=[t_{0},t_{i}'] \subset I_{0} \label{eq:def:I1:I0}
 \end{align}
 with 
$$t_{0}< t_{i}' <t_{i},$$
 and $t_{i}'$ independent of $z$ such that 
$$|x_{1}(t)-x_{1}(t_{0})|<\eps_{1}, \  t \in I_{1}.$$
In particular, since by assumption in the lemma
$$0<c_{x_{1},1}<x_{1}(t_{0})<c_{x_{1},2}<0.5,$$
we can choose $\eps_{1}$ sufficiently small such that for all $t\in I_{1}$
\begin{align}
  0<c_{x_{1},1}'<x_{1}(t)<c_{x_{1},2}'<0.5,  \label{ineq:x1:I1:B}
\end{align}
where $c_{x_{1},1}'$ and $c_{x_{1},2}'$ are constants independent of $z$.

 Using \eqref{ineq:x1:I1:B}, the definition of $\delta_{\bfx}(t)$ given in \eqref{def:hat:delta:bfx}, 
the assumption in the lemma that $x_{2}(t_{0})=O(z^{-1})$, and the ODE
for $x_{2}(t)$, given in \eqref{eq:diff:x2:lem:in}, we 
 can show that
 \begin{align}
&   0<x_{2}(t), \ t\in I_{1}, \nonumber \\
& \sup_{t\in I_{1}} x_{2}(t)=O(z^{-1}). \label{ineq:obj:x2}
 \end{align}
Similarly, for $3\leq i \leq 4$, we obtain that
\begin{align}
&   0<x_{i}(t),  \ t\in I_{1}, \nonumber \\
& \sup_{t\in I_{1}}x_{i}(t)=O(z^{-1}). \label{ineq:obj:x4}
\end{align}
Considering \eqref{ineq:x1:I1:B}, \eqref{ineq:obj:x2}, and \eqref{ineq:obj:x4}, for $z>z_{0}$,
we have found an interval $I_{1}\subset I_{0}\subset [t_{0},t_{b})$
independent of $z$ within which $\mathbf{x}(t)$ uniquely exists and is well-defined, i.e., we have 
that 
\begin{align}
  \mathbf{x}(t)\in\mathcal{D}_{w}  , \ t\in I_{1}. \label{eq:x:in:Dw:l}
\end{align}

We next show that \eqref{eq:x2:t1:l}-\eqref{eq:x4:t1:l} hold. Here, we provide the proof for \eqref{eq:x2:t1:l}. The proofs for 
\eqref{eq:x3:t1:l} and \eqref{eq:x4:t1:l} follow from similar
lines. Consider the ODE for $x_{2}(t)$ given in \eqref{eq:diff:x2:lem:in}, which can be used to express $x_{2}(t)$
as a function of $c_{R}\delta_{\bfx}^{3}(t)$ for $t\in I_{1}$ as
\begin{align}
  x_{2}(t)=x_{2}(t_{0})e^{-3zt}+\int_{t_{0}}^{t}e^{-3z(t-t')} c_{R}\delta_{\bfx}^{3}(t') dt'. \label{eq:int:x4:lem}
\end{align}

Recall that $I_{1}=[t_{0},t_{i}']$ where $t_{i}'$, $t_{i}'>t_{0}$, is chosen independently of $z$. This allows us to choose a sufficiently small
  constant 
$\eta>0$ and choose $z_{0}$ sufficiently large so that for $z>z_{0}$ and $k=\sqrt{z}$, we have
\begin{align}
t_{0}+\eta+ \frac{k}{z}<t_{i}'. \label{ineq:t0:eta:ti'}
\end{align}
Consider a time $t\in(t_{0}+\eta+ \frac{k}{z},t_{i}']\subset I_{1}$. Using \eqref{eq:int:x4:lem}, we have
\begin{align}
  x_{2}(t)=\int_{t-\frac{k}{z}}^{t}e^{-3z(t-t')} c_{R}\delta_{\bfx}^{3}(t')  dt' +e_{x_{2}}(t), \label{eq:int:x2:error:l}
\end{align}
where by \eqref{ineq:xt0:assump:l} and \eqref{ineq:hat:delta:B:x}, we have
\begin{align}
  e_{x_{2}}(t)&= x_{2}(t_{0})e^{-3z(t-t_{0})}+\int_{t_{0}}^{t-\frac{k}{z}}e^{-3z(t-t')} c_{R}\delta_{\bfx}^{3}(t') dt' 
\nonumber \\ 
&<e^{-3z(t-t_{0})}+\frac{8c_{R}}{3z}\big(e^{-3k}-e^{-3z(t-t_{0})} \big). \label{eq:error:int:boound:l}
\end{align}

By Mean-Value Theorem, we have that
\begin{align}
&  \int_{t-\frac{k}{z}}^{t}e^{-3z(t-t')} c_{R}\delta_{\bfx}^{3}(t')  dt'
\nonumber \\ &=\int_{t-\frac{k}{z}}^{t}e^{-3z(t-t')} \Big[c_{R} \delta_{\bfx}^{3}(t) + 
(t'-t) c_{R} \frac{d}{dt} \big[\delta_{\bfx}^{3}(t'')\big]  \Big] dt',
\label{eq:int:delta:m:l}
\end{align}
where $t''$ is a function of $t'$ and $t$, and we have that 
$$t''\in(t',t) \subset I_{1}\subset I_{0}.$$
 Using \eqref{def:hat:delta:bfx}, \eqref{ineq:hat:delta:B:x}, 
and \eqref{ineq:x1:bound:der:l}, we have that
\begin{align}
\sup_{t\in I_{0}}\left|  \frac{d}{dt} \delta_{\bfx}^{3}(t) \right| =O(1),\  \text{as $z\to \infty$},
\end{align}
which can be used along with \eqref{eq:int:delta:m:l} to show that
\begin{align}
& \sup_{t\in (t_{0}+\eta+ \frac{k}{z},t_{i}']} \Big|  \int_{t-\frac{k}{z}}^{t}e^{-3z(t-t')} c_{R}\delta_{\bfx}^{3}(t')  dt' -\frac{1}{3z} c_{R}\delta_{\bfx}^{3}(t) \Big|
\nonumber \\ & \qquad \ 
<O(1) e^{-3k}\Big[z^{-1}+k z^{-2}\Big]+ O(1)(z)^{-2}. \label{ineq:x2:delta3:bound}
\end{align}

Since $k=\sqrt{z}$, combining \eqref{eq:int:x2:error:l}, \eqref{eq:error:int:boound:l}, and \eqref{ineq:x2:delta3:bound}, we obtain
that
\begin{align}
  \sup_{t\in(t_{0}+\eta+\frac{k}{z},t_{i}']} \left| x_{2}(t)-\frac{1}{3z}c_{R}\delta_{\bfx}^{3}(t) \right|=O(z^{-2}). \label{eq:x4:bound:lf}
\end{align}
Since $t_{i}'$ does not depend on $z$, we can choose $\eta$ sufficiently small and $z_{0}$ sufficiently large such that for $z>z_{0}$, we have
$$t_{1}=t_{0}+2\eta \in (t_{0}+\eta+\frac{k}{z},t_{i}']. $$

Hence, by \eqref{eq:x4:bound:lf}, we have that \eqref{eq:x2:t1:l} holds for $t_{1}$
as defined above. Similarly, we have that
\eqref{eq:x3:t1:l} and \eqref{eq:x4:t1:l} hold. Moreover, by the definition of $t_{1}$, we have
that for $z>z_{0}$
$$[t_{0}, t_{1}]\subset I_{1}.$$
 Hence, by \eqref{ineq:x1:I1:B}-\eqref{ineq:obj:x4}, we have
\eqref{ineq:init:x1:t1:l}-\eqref{ineq:init:x2:4:t1:l} hold. Moreover, since $[t_{0}, t_{1}]\subset I_{1}$, 
by \eqref{eq:x:in:Dw:l}, we have that for $z>z_{0}$, $\bfx(t)$ is uniquely defined over the interval $[t_{0},t_{1}]$, and that 
for all $t\in[t_{0},t_{1}] $, $\bfx(t)\in \setD_{w}$,
completing the proof.
\end{proof}



\begin{lemma}\label{lemma:yh:xh}
Suppose $\bfy(t)$ and $\bfx_{h}(t)$ are such that we have
\begin{align}
  \frac{d}{dt}\mathbf{y}&=\bfA\mathbf{y}+h(\mathbf{y})+\mathbf{e}_{\mathbf{y}},
\nonumber \\
    \frac{d}{dt}\mathbf{x}_{h}&=\bfA\mathbf{x}_{h}+h(\mathbf{x}_{h}),
\end{align}
with the initial condition at time $t_1<\tau$ that 
$$\bfx_{h}(t_{1})=\bfy(t_{1})=\bfy_{t_{1}}$$
for some constant vector $\bfy_{t_{1}}$, and where
\begin{align}
 \sup_{t\in [t_{1},\tau]}  \| \mathbf{e}_{\mathbf{y}}(t) \|=O(z^{-1}). \label{eq:ey:B:assum}
\end{align}

Moreover, suppose the function $h(\cdot)$ is a Lipschitz continuous function with the property that for any $\mathbf{x}_{1}$
and $\bfx_{2}$
\begin{eqnarray}
h(\mathbf{x}_{1}) -h(\mathbf{x}_{2})=\mathbf{L}_{h}(\mathbf{x}_{1} ,\mathbf{x}_{2} ) \big[\mathbf{x}_{1} -\mathbf{x}_{2} \big]
  \label{eq:diff:lip:h}
\end{eqnarray}
where $\mathbf{L}_{h}(\mathbf{x}_{1} ,\mathbf{x}_{2} )$ is a $4\times 4$ matrix with elements $[\mathbf{L}_{h} (\mathbf{x}_{1} ,\mathbf{x}_{2} )]_{ij}$ such that
\begin{align}
   |[\mathbf{L}_{h}(\mathbf{x}_{1} ,\mathbf{x}_{2} )]_{ij}| \leq L_{h},
\label{ineq:Lm:b}
\end{align}
for some constant $L_{h}>0$. Let
 \begin{eqnarray}
  \mathbf{e}_{h}(t)=\mathbf{y}(t)-\mathbf{x}_{h}(t). \label{eq:def:eg}
\end{eqnarray}
We then have that
\begin{align}
\sup_{t\in[t_{1},\tau]}  \| \mathbf{e}_{h}(t) \| =O(z^{-1}).
\end{align}
\end{lemma}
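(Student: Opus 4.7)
The plan is to mirror the analysis already used for $\bfx_g(t)$ in the proof of Lemma~\ref{lemma:t1}: derive a Volterra-type integral equation for $\mathbf{e}_h(t)$, use a size bound on $e^{\bfA(t-s)}$ that is uniform in $z$, and then close the estimate with Gronwall's inequality over the compact interval $[t_1,\tau]$.

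First I would subtract the two ODEs and use the quantitative Lipschitz representation \eqref{eq:diff:lip:h} to obtain a linear ODE for the error:
\begin{align}
\frac{d}{dt}\mathbf{e}_{h}(t)=\bigl[\bfA+\mathbf{L}_{h}(\mathbf{y}(t),\mathbf{x}_h(t))\bigr]\mathbf{e}_{h}(t)+\mathbf{e}_{\mathbf{y}}(t),\qquad \mathbf{e}_{h}(t_1)=\mathbf{0}.\nonumber
\end{align}
Writing this in the variation-of-parameters form gives
\begin{align}
\mathbf{e}_{h}(t)=\int_{t_1}^{t} e^{\bfA(t-s)}\Bigl[\mathbf{L}_{h}(\mathbf{y}(s),\mathbf{x}_h(s))\,\mathbf{e}_{h}(s)+\mathbf{e}_{\mathbf{y}}(s)\Bigr]\,ds.\nonumber
\end{align}

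Next I would invoke the same bound on the matrix exponential that was already used in the proof of Lemma~\ref{lemma:x:cond}, namely that there exists a constant $c_{\bfA}>0$ \emph{independent of $z$} such that $|e^{\bfA(t-s)}|\le c_{\bfA}\mathbf{1}_{4\times 4}$ for $s\le t$. Combining this with the componentwise Lipschitz bound \eqref{ineq:Lm:b} and the uniform assumption \eqref{eq:ey:B:assum}, and taking norms, yields a scalar inequality of the form
\begin{align}
\|\mathbf{e}_{h}(t)\|\ \le\ C_1\!\int_{t_1}^{t}\!\|\mathbf{e}_{h}(s)\|\,ds\ +\ C_2\,(t-t_1)\,O(z^{-1}),\nonumber
\end{align}
where $C_1,C_2>0$ depend only on $c_{\bfA}$, $L_h$, and the dimension, but are independent of $z$. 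A direct application of Gronwall's inequality on the compact interval $[t_1,\tau]$ then gives
\begin{align}
\sup_{t\in[t_1,\tau]}\|\mathbf{e}_{h}(t)\|\ \le\ C_2(\tau-t_1)\,e^{C_1(\tau-t_1)}\cdot O(z^{-1})\ =\ O(z^{-1}),\nonumber
\end{align}
which is exactly the claim of the lemma.

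The main obstacle is not Gronwall itself but making sure the constants in the exponential bound on $e^{\bfA(t-s)}$ do not blow up with $z$, since $\bfA$ contains entries of size $z$. This is already resolved in the paper: as noted in the derivation leading to \eqref{ineq:eA:x}, the eigenvalues of $\bfA$ are real, distinct, and non-positive, and one checks directly from the eigenvalue structure that $\bfA$ admits a diagonalization $\bfA=\mathbf{U}\mathbf{D}\mathbf{U}^{-1}$ whose conditioning is bounded uniformly in $z$, giving a $z$-independent $c_{\bfA}$. Once that bound is in hand, the rest of the argument is a routine Gronwall closure on a bounded time horizon, and the only care needed is to keep the constants $C_1, C_2$ explicitly independent of $z$ so that the $O(z^{-1})$ rate of $\mathbf{e}_\mathbf{y}$ is inherited by $\mathbf{e}_h$.
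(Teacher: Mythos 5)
Your proof is correct and takes a genuinely different, and in fact cleaner, route than the paper's. The paper subtracts the two ODEs exactly as you do, but then treats the full time-varying coefficient $\bfA+\mathbf{L}_h(t)$ via its fundamental matrix and writes $\mathbf{e}_h(t)=\int_{t_1}^t e^{\int_{t'}^t[\bfA+\mathbf{L}_h(r)]dr}\mathbf{e}_\mathbf{y}(t')\,dt'$, after which it splits this matrix exponential of a sum into the product $e^{(t-t')\bfA}e^{\int_{t'}^t\mathbf{L}_h(r)dr}$ and bounds each factor. That factorization is not valid in general because $\bfA$ and $\int\mathbf{L}_h$ need not commute, so the paper's intermediate step is formally questionable even though the final estimate is right. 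You instead treat $\mathbf{L}_h\mathbf{e}_h$ as part of the forcing, use the genuine variation-of-parameters identity $\mathbf{e}_h(t)=\int_{t_1}^t e^{\bfA(t-s)}[\mathbf{L}_h(s)\mathbf{e}_h(s)+\mathbf{e}_\mathbf{y}(s)]\,ds$, and close with Gronwall on the compact interval. This sidesteps the noncommutativity issue entirely, needs only the single $z$-uniform bound $|e^{\bfA(t-s)}|\le c_\bfA\mathbf{1}_{4\times 4}$ already established in the paper (Lemma~\ref{lemma:x:cond}), and produces the same $O(z^{-1})$ rate with constants visibly independent of $z$. Both arguments ultimately rest on the claim that the diagonalization of $\bfA$ is uniformly well conditioned in $z$ (the eigenvalues being $0,-z,-3z,-(1+z)$); that claim is asserted rather than proved in the paper and your proposal inherits it in the same way, so you are not on weaker ground there.
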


\begin{proof}
Using \eqref{eq:diff:lip:h} and \eqref{eq:def:eg}, we have that
\begin{eqnarray}
  \label{eq:diff:lip:fg}
h(\mathbf{y}(t)) -h(\mathbf{x}_{h}(t))=\mathbf{L}_{h}(t)\mathbf{e}_{h}(t)
\end{eqnarray}
where by \eqref{ineq:Lm:b}, the matrix $\mathbf{L}_{h}(t)$ is such for the absolute value of its element $[\mathbf{L}_{h}(t)]_{ij}$, we have
\begin{align}
  \big| [\mathbf{L}_{h}(t)]_{ij} \big| \leq L_{h}. \label{ineq:lh:ij}
\end{align}

Subtracting the ODE for $\bfx_{g}(t)$ from that of $\bfy(t)$, as given in the lemma, we obtain
\begin{eqnarray}
    \frac{d}{dt}\mathbf{e}_{h}=\bfA\mathbf{e}_{h}+\mathbf{L}_{h}  \mathbf{e}_{h}+\mathbf{e}_{\mathbf{y}}=
\big[\bfA+\mathbf{L}_{h} \big] \mathbf{e}_{h}+\mathbf{e}_{\mathbf{y}}.
\label{eq:diff:e:g:L}
\end{eqnarray}
By the initial conditions in the lemma, we have 
$$\bfx_{h}(t_{1})=\bfy(t_{1}),$$
 and hence,
  $$\mathbf{e}_{h}(t_{1})=0.$$
Using this initial condition for $\mathbf{e}_{h}(t)$ and the ODE in \eqref{eq:diff:e:g:L}, we have
\begin{align}
  \mathbf{e}_{h}(t)= \bfV(t)\int_{t_{1}}^{t} \bfV(t')^{-1} \mathbf{e}_{\mathbf{y}}(t') dt'
\label{eq:eg:exa:sol}
\end{align}
where $\bfV(t)$ is the fundamental matrix solution for the ODE in \eqref{eq:diff:e:g:L} and is given by
\begin{align}
  \bfV(t)= e^{\int_{t_{1}}^{t} [\bfA+\mathbf{L}_{h}(r)] dr}.
\end{align}
Plugging this into \eqref{eq:eg:exa:sol}, we have
\begin{align}
  \mathbf{e}_{h}(t)=  \int_{t_{1}}^{t} e^{\int_{t'}^{t} [\bfA+\mathbf{L}_{h}(r)] dr } \mathbf{e}_{\mathbf{y}}(t') dt'  
\label{eq:eg:detailed}.
\end{align}

Since all eigenvalues of matrix $\bfA$ are real, distinct, and non-positive, we can find a constant $c_{\bfA}>0$ independent 
of $z$ such that for $t'\leq t$
\begin{align}
   \left| e^{(t-t') \bfA } \right| \leq c_{\bfA} \mathbf{1}_{4\times 4}
\end{align}
where the inequality is component-wise.
Using the assumption in \eqref{eq:ey:B:assum}, we then have that
\begin{align}
& \sup_{t\in[t_{1},\tau]} \left| \int_{t_{1}}^{t} e^{\int_{t'}^{t} \bfA dr } \mathbf{e}_{y}(t') dt' \right| 
\nonumber \\ & 
\qquad \qquad \leq
 \sup_{t\in[t_{1},\tau]} \int_{t_{1}}^{t} \left| e^{(t-t') \bfA } \right| \left| \mathbf{e}_{y}(t') \right| dt'
\nonumber \\ & \qquad \qquad 
\leq (\tau-t_{1})O(z^{-1})\mathbf{1}_{4\times 1}=O(z^{-1})\mathbf{1}_{4\times 1}.
\label{eq:int:eA}
\end{align}

By \eqref{ineq:lh:ij}, we also have that
\begin{align}
  \big|\mathbf{L}_{h}(t)\big| \leq L_{h}\mathbf{1}_{4\times 4},
\end{align}
which can be used to show that independent of $z$
\begin{align}
\sup_{t\in[t_{1},\tau]}|  e^{\int_{t'}^{t} \mathbf{L}_{h}(r) dr } | &\leq e^{\int_{t'}^{\tau}  L_{h}\mathbf{1}_{4\times 4} dr }
=e^{(\tau-t') L_{h}\mathbf{1}_{4\times 4}}
\nonumber \\ & 
\leq c_{h} e^{4L_{h}(\tau-t')}\mathbf{1}_{4\times 4}
\end{align}
where $c_{h}>0$ is a constant independent of $z$.
Therefore, by \eqref{eq:ey:B:assum}, we have that
\begin{align}
&\sup_{t\in[t_{1},\tau]} \left|  \int_{t_{1}}^{t} e^{\int_{t'}^{t} \mathbf{L}_{h}(t) dr } \mathbf{e}_{y}(t') dt' \right|
\nonumber \\
& \qquad \qquad \leq 
c_{h} \int_{t_{1}}^{\tau}  e^{4L_{h}(\tau-t')}\mathbf{1}_{4\times 4}
  \left| \mathbf{e}_{y}(t')  \right| dt'
\nonumber \\ & \qquad \qquad 
=O(z^{-1})e^{4L_{h}(\tau-t_1)}  \mathbf{1}_{4\times 1}
\nonumber \\ & \qquad \qquad 
=O(z^{-1})\mathbf{1}_{4\times 1}.
\label{eq:int:Lh}
\end{align}

Using \eqref{eq:eg:detailed}, \eqref{eq:int:eA}, and \eqref{eq:int:Lh}, we obtain that
\begin{align}
\sup_{t\in[t_{1},\tau]}  \left| \mathbf{e}_{h}(t) \right|=O(z^{-1})\mathbf{1}_{4\times 1},
\end{align}
which implies the statement in the lemma, completing the proof.
 
\end{proof}


\begin{lemma}\label{lemma:y:diff:B}
  Suppose for some $t_{1}<\tau$, we have $\bfy(t_{1}) \in \setD_{w}$ such that
  \begin{align}
    0<c_{y_{1},1}'<y_{1}(t_{1})<c_{y_{1},2}'<0.5 ,\label{ineq:init:x1:t1}
  \end{align}
where $c_{y_{1},1}'$ and $c_{y_{1},2}'$ are constants independent of $z$. In addition, suppose
for $I=\{t_{1}\}$, we have
\begin{align}
 & \sup_{t\in I } \ \max\big[y_{2}(t), y_{3}(t),y_{4}(t)\big]=O(z^{-1}),
\label{ineq:init:y2:4:t1:lemma}
\\
&  \sup_{t\in I } \Big|y_{2}(t_{1})-\frac{1}{3z} c_{R}\delta_{\mathbf{y}}^{3}(t) \Big|=O(z^{-2}),
\label{eq:y2:t1:lemma}
\\ &
\sup_{t\in I }   \Big|y_{3}(t)-2 y_{2}(t) \Big|=O(z^{-2}).
\label{eq:y3:t1:lemma}
\\ &
\sup_{t\in I }   \Big|y_{4}(t)-\frac{1}{z}\big[y_{1}(t)-c_{R}\delta_{\mathbf{y}}(t)^{3}\big] \Big|=O(z^{-2}). \label{eq:y4:t1:lemma}
\end{align}
We then have that $\mathbf{y}(t)\in \mathcal{D}_{w}$, for $t\in[t_{1},\tau]$, and 
\begin{align}
  \sup_{t\in[t_{1},\tau]}\left | \frac{d}{dt} \bfy \right |=O(1)\mathbf{1}_{4\times 1}.
\end{align}
Moreover, we have that
 \eqref{ineq:init:y2:4:t1:lemma}-\eqref{eq:y4:t1:lemma} hold for
$$I= [t_{1},\tau].$$
\end{lemma}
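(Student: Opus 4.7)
The proof will exploit the triangular structure of system \eqref{eq:diff:y}-\eqref{eq:def:tildef}: the $y_1$ equation decouples completely from the others and is explicitly solvable, while each of $y_2,y_3,y_4$ satisfies a stable scalar linear ODE with decay rate of order $z$, driven by quantities built only from $y_1$ (and, in the case of $y_3$, also from $y_2$). This is the setup for a slow--fast / quasi--steady--state argument: the slow variable $y_1$ evolves on an $O(1)$ timescale independent of $z$, while the fast variables $y_2,y_3,y_4$ relax in time $O(1/z)$ to quasi--steady--state values obtained by balancing forcing and damping.

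First I would solve $\dot y_1=\tfrac{2}{3}c_R(0.5-y_1)^3$ in closed form. Separation of variables yields exactly the formula used in Lemma~\ref{lemma:t1}, $y_1(t)=0.5-\alpha(1+\beta(t-t_1))^{-1/2}$ with $\alpha=0.5-y_1(t_1)$ and $\beta=\tfrac{4c_R}{3}\alpha^2$. Hypothesis \eqref{ineq:init:x1:t1} pins $\alpha$ strictly between two positive constants independent of $z$, so $y_1(t)$ is monotonically increasing on $[t_1,\tau]$ and lies in $(c_{y_1,1}',0.5)$; moreover $\delta_{\bfy}(t)=0.5-y_1(t)$ is bounded and bounded below by a constant depending only on $\tau$, and both $\dot y_1=\tfrac{2}{3}c_R\delta_{\bfy}^3$ and $\tfrac{d}{dt}\delta_{\bfy}^3=-2c_R\delta_{\bfy}^5$ are $O(1)$ uniformly in $z$.

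Next I would introduce the quasi--steady--state deviations
\[
\tilde y_2=y_2-\frac{c_R\delta_{\bfy}^3}{3z},\qquad
\tilde y_3=y_3-2y_2,\qquad
\tilde y_4=y_4-\frac{y_1-c_R\delta_{\bfy}^3}{z},
\]
chosen so that the raw right--hand sides in \eqref{eq:diff:y} collapse to $\dot y_i=-z_i\tilde y_i$ with $z_2=3z$ and $z_3=z_4=z$. A short computation then gives $\dot{\tilde y}_i=-z_i\tilde y_i+r_i(t)$, where the forcing $r_i$ is the time--derivative of the subtracted quasi--steady--state term, namely a multiple of $\tfrac{1}{z}\tfrac{d}{dt}\delta_{\bfy}^3$ for $r_2$, of $\dot y_2$ for $r_3$, and of $\tfrac{1}{z}(\dot y_1-\tfrac{d}{dt}c_R\delta_{\bfy}^3)$ for $r_4$. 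Processing the coordinates in the order $y_2,y_3,y_4$, Step~1 makes each of these $O(z^{-1})$ uniformly on $[t_1,\tau]$ (in particular $\dot y_2=-3z\tilde y_2=O(z^{-1})$ once $\tilde y_2$ is controlled). The hypotheses \eqref{eq:y2:t1:lemma}-\eqref{eq:y4:t1:lemma} give $|\tilde y_i(t_1)|=O(z^{-2})$. Integrating each scalar linear ODE (or applying Gr\"onwall) gives
\[
|\tilde y_i(t)|\;\le\;|\tilde y_i(t_1)|\,e^{-z_i(t-t_1)}+\frac{\sup_{s\in[t_1,t]}|r_i(s)|}{z_i}\;=\;O(z^{-2})
\]
uniformly on $[t_1,\tau]$, which is exactly the claimed extension of \eqref{eq:y2:t1:lemma}-\eqref{eq:y4:t1:lemma} from $\{t_1\}$ to $[t_1,\tau]$. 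Combined with Step~1 this yields $y_i(t)=O(z^{-1})\ge 0$ for $i\ge 2$ and $y_1(t)\in(0,0.5)$, so for $z$ large enough $\bfy(t)\in\setD_w$; reading $\dot{\bfy}$ off \eqref{eq:diff:y} and using these estimates then gives $|\dot{\bfy}(t)|=O(1)\mathbf{1}_{4\times 1}$ uniformly.

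The main technical obstacle is the mild coupling created by $\dot y_2$ appearing inside $r_3$: one cannot treat the three fast coordinates in parallel, because the $O(z^{-1})$ bound on $\dot y_j$ itself depends on having first established the $O(z^{-2})$ tracking of $\tilde y_j$. The fix is to propagate the estimates sequentially in the order $y_2\to y_3\to y_4$, which creates a short chain of implications that must be kept straight. Once that bookkeeping is in place, the rest is a routine singular--perturbation tracking estimate enabled by the explicit solvability of the $y_1$ equation; no large deviations or mean--field machinery is needed beyond what is already encoded in the closed--form solution for $y_1$.
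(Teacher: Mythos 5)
Your argument is correct and reaches the paper's conclusions through what is at bottom the same calculation, but you package it through a cleaner decomposition. The paper applies variation of parameters directly to $y_4$, expands the integrand $y_d(t')$ around $t$ by the Mean Value Theorem, and then bounds two residual integrals separately to reach $O(z^{-2})$; you instead change variables to the quasi--steady--state deviations $\tilde y_i$, observe that each satisfies $\dot{\tilde y}_i=-z_i\tilde y_i+r_i$ with $r_i$ the time--derivative of the subtracted quasi--steady--state term, and read off $|\tilde y_i(t)|\le e^{-z_i(t-t_1)}|\tilde y_i(t_1)|+\sup_s|r_i(s)|/z_i=O(z^{-2})$ in one Gr\"onwall line. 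This buys two things. First, the $O(z^{-2})$ bound, which the paper obtains for $y_4$ and then asserts ``follows from similar lines'' for $y_2,y_3$, is handled uniformly for all three coordinates. Second, and more usefully, your framing surfaces the sequential dependence $y_2\to y_3\to y_4$ explicitly: the forcing $r_3=-2\dot y_2=6z\tilde y_2$ cannot be bounded by $O(z^{-1})$ until the $\tilde y_2$ tracking bound is in hand, an ordering the paper leaves implicit. One step you gloss over: you write ``$y_i(t)=O(z^{-1})\ge 0$'' for $i\ge 2$, but a big-$O$ bound carries no sign. Nonnegativity requires the extra observation that the quasi--steady--state value (e.g.\ $c_R\delta_{\bfy}(t)^3/(3z)$) is bounded \emph{below} by a positive multiple of $z^{-1}$ uniformly on $[t_1,\tau]$, because $\delta_{\bfy}(\tau)>0$ is $z$-independent, so the $O(z^{-2})$ deviation cannot flip the sign once $z$ is large; for $y_4$ one also needs $y_1-c_R\delta_{\bfy}^3\ge 0$ to hold on the interval, which deserves a sentence. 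The paper asserts positivity with no more detail, so this is a shared blemish rather than a defect specific to your route.
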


\begin{proof}

Using the ODE for $y_{1}(t)$ in \eqref{eq:diff:y}, we obtain
 \begin{eqnarray}
   y_{1}(t)=0.5-\al' (1+\be' (t-t_{1}))^{-\frac{1}{2}},
   \label{eq:y1:exact:sol:lemm}
 \end{eqnarray}
where
\begin{align}
  \al'=0.5-y_{1}(t_{1}), \label{eq:al:lemm}
\end{align}
and 
\begin{align}
  \be'=\frac{4c_{R}}{3}(\al')^{2}.\label{eq:be:lemm}
\end{align}

 Having determined $y_{1}(t)$, we then can use \eqref{eq:diff:y} and $y_{1}(t)$ to find $y_{2}(t)$, $y_{3}(t)$, and $y_{4}(t)$. Since
by \eqref{eq:y1:exact:sol:lemm},
\begin{align}
  0< y_{1}(t)<0.5, \label{ineq:y1:const}
\end{align}
 we can use the constraints on the initial conditions given by \eqref{ineq:init:y2:4:t1:lemma} at time $t=t_{1}$ to show
 that $y_{2}(t)$, $y_{3}(t)$, and $y_{4}(t)$  are all positive
 for $t\in[t_{1},\tau]$, and that \eqref{ineq:init:y2:4:t1:lemma} holds for $I=[t_{1},\tau]$.
Therefore, $\mathbf{y}(t)\in \mathcal{D}_{w}$, for $t\in[t_{1},\tau]$. Having \eqref{ineq:y1:const} and that 
\eqref{ineq:init:y2:4:t1:lemma} holds for $I=[t_{1},\tau]$, it follows from \eqref{eq:diff:y} that
\begin{align}
  \sup_{t\in[t_{1},\tau]}\left | \frac{d}{dt} \bfy \right |=O(1)\mathbf{1}_{4\times 1}. \label{eq:diffy:O:lim}
\end{align}

We next prove that \eqref{eq:y4:t1:lemma} holds for $I=[t_{1},\tau]$. The proof for \eqref{eq:y2:t1:lemma} and \eqref{eq:y3:t1:lemma} 
follows from similar steps. By \eqref{eq:diffy:O:lim}, we have
\begin{align}
  \sup_{t\in[t_{1},\tau] } \Big|\frac{d}{dt} y_{1}(t)\Big|=O(1), \ \text{as $z\to \infty$}\label{eq:der:bound:y1}.
\end{align}  
Now consider the ODE for $y_{4}(t)$ given by \eqref{eq:diff:y}. We have that
\begin{eqnarray}
  \frac{d}{dt}y_{4}=\big[y_{1}-c_{R}[0.5-y_{1}]^{3} \big]-zy_{4}. \label{eq:diff:ode:y4}
\end{eqnarray}
Define $y_d(t)$ as
\begin{align}
  y_d(t)=y_{1}-c_{R}[0.5-y_{1}]^{3}.
\end{align}
By \eqref{ineq:y1:const} and \eqref{eq:der:bound:y1}, we have that
\begin{align}
\sup_{t\in[t_{1},\tau] } \left|  \frac{d}{dt}y_{d}\right |=O(1), \ \text{as $z \to \infty$} \label{eq:der:bound:yd}.
\end{align}
By Mean-Value Theorem, for $t\in[t_{1},\tau]$, we have
\begin{align}
  y_{d}(t)=y_{d}(t_{1})+(t-t_{1}) \frac{d}{dt}y_{d}(t_{i}) \label{eq:mvt:yd}
\end{align}
where $t_{i}$ is a function of $t_{1} $ and $t $  such that $t_{i}\in (t_{1},t)$.

 Knowing $y_{1}(t)$ and using \eqref{eq:diff:ode:y4}, we can find 
$y_{4}(t)$ as
\begin{align}
  y_{4}(t)=y_{4}(t_{1})e^{-z(t-t_{1})}+\int_{t_{1}}^{t}e^{-z(t-t')} y_{d}(t')dt'. \label{eq:int:x4}
\end{align}
To obtain the statement in the lemma, we use \eqref{eq:int:x4}. By \eqref{eq:mvt:yd} and \eqref{eq:int:x4}, we have
\begin{align}
  y_{4}(t)& =y_{4}(t_{1})e^{-z(t-t_{1})} +\int_{t_{1}}^{t}e^{-z(t-t')} y_{d}(t)dt' + e_{y_{4},1}(t)
\label{eq:y4:exp}
\end{align}
where
\begin{align}
e_{y_{4},1}(t)=  \int_{t_{1}}^{t}e^{-z(t-t')}  
(t'-t) \frac{d}{dt}y_{d}(t'') dt',
\end{align}
in which $t''$ is a function of $t'$ and $t$ such that $t''\in(t',t)$. Using \eqref{eq:der:bound:yd}, we have
that
\begin{align}
\sup_{t\in[t_{1},\tau] }\big|e_{y_{4},1}(t)\big|=O(z^{-2}). \label{eq:bound:yd:t:t1}
\end{align}

We also can write 
\begin{align}
&  y_{4}(t_{1})e^{-z(t-t_{1})} +\int_{t_{1}}^{t}e^{-z(t-t')} y_{d}(t)dt'
\nonumber \\ & \qquad \qquad = e^{-z(t-t_{1})} \big[ y_{4}(t_{1}) - \frac{1}{z}y_{d}(t)\big]+\frac{1}{z}y_{d}(t).
\label{eq:y4:int:i1}
\end{align}
Moreover, using \eqref{eq:der:bound:yd} and \eqref{eq:mvt:yd}, and noting that $xe^{-xz}\leq z^{-1}e^{-1}$, we have
that
\begin{align}
& \sup_{t\in[t_{1},\tau]} \left|  e^{-z(t-t_{1})} \frac{1}{z}y_{d}(t)-  e^{-z(t-t_{1})} \frac{1}{z}y_{d}(t_{1}) \right|
\nonumber \\ & \qquad \qquad \leq \sup_{t\in[t_{1},\tau]} e^{-z(t-t_{1})}  \frac{1}{z} (t-t_{1}) O(1)=O(z^{-2}).
\label{eq:yd:i1}
\end{align}

Using \eqref{eq:y4:exp}, \eqref{eq:bound:yd:t:t1}, \eqref{eq:y4:int:i1}, and \eqref{eq:yd:i1}, we have
\begin{align}
&\sup_{t\in[t_{1},\tau]}\left|  y_{4}(t)-\frac{1}{z}y_{d}(t)\right| \nonumber \\
& \qquad  \leq \sup_{t\in[t_{1},\tau]} e^{-z(t-t_{1})} \Big[ y_{4}(t_{1}) - \frac{1}{z}y_{d}(t_{1})\Big] +O(z^{-2}).
\end{align}
Using the above and the assumption in the lemma given in \eqref{eq:y4:t1:lemma} for $I=\{t_{1}\}$, we then have that
\begin{align}
\sup_{t\in[t_{1},\tau]}   \left|  y_{4}(t)-\frac{1}{z}y_{d}(t)\right|=O(z^{-2}),
\end{align}
as required.
  
\end{proof}

\begin{lemma}\label{lemma:B:eps}
Consider the lattice $G_{L}$, attempt rate $z>0$, and a given time $t$, $0<t<\infty$. We have, w.p.1,
  \begin{align}
\frac{\sum_{\setC\in \setC_{L}^{(nd)}(t,z)} \ell(\setC)}{\sqrt{2}L} \geq \delta_{L}(t)-\eta(L)-5\theta_{L,h}(t)
  \nonumber
  \end{align}
where $\eta(L)>0$ and 
$$\lim_{l\to \infty} \eta(L)=0,$$
and $\theta_{L,h}(t)$ is the density of links that are inactive and sense the channel as idle at time 
$t$, as defined in Appendix~\ref{sec:all:densities}.
\end{lemma}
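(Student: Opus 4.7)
The strategy is to bound $\sum_{\setC\in\setC_L^{(nd)}(t,z)}\ell(\setC)$ from below by counting inactive links of the dominating parity that have at least one active interfering neighbor, and then to relate that count to $L\delta_L(t)$. By the symmetry of $G_L$ I may assume without loss of generality that even is the dominating type, so that non-dominating clusters consist of active odd links. Writing $\theta_L(t)=\theta^{(e)}_L(t)+\theta^{(o)}_L(t)$, the number of inactive even links is $L/2-L\theta^{(e)}_L(t)+O(1)\ge L\delta_L(t)-L\eta(L)$, with $\eta(L)=O(1/L)$ absorbing the parity imbalance of $G_L$. Splitting these by whether they sense the channel idle (at most $L\theta_{L,h}(t)$ of them do) or busy, at least $L\delta_L(t)-L\theta_{L,h}(t)-L\eta(L)$ inactive even links sense busy; each has at least one active odd interfering neighbor, which necessarily lies in a non-dominating cluster.

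The second step translates this count into a boundary bound by double counting. For each even-link position $l_e$, set
\[
B(l_e)\;=\;\sum_{\setC\in\setC_L^{(nd)}(t,z)}\#\{\text{edges of }\partial\setC\text{ incident to }l_e\}.
\]
Since every boundary edge has length $\sqrt{2}$ and has its two endpoints at even-link positions, I obtain $\sum_{l_e}B(l_e)=\sqrt{2}\sum_\setC\ell(\setC)$. A short case analysis on the cyclic $(N,E,S,W)$ arrangement of the four odd interfering neighbors of $l_e$ shows $B(l_e)\in\{0,2,4,\ldots\}$, with $B(l_e)\ge 2$ in every case except two: either $l_e$ has no active odd neighbor, or all four of its odd neighbors are active and lie in a single cluster (the \emph{interior} case). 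Combining yields
\[
\sqrt{2}\sum_\setC\ell(\setC)\;\ge\;2\bigl(L\delta_L(t)-L\theta_{L,h}(t)-L\eta(L)-I^{\mathrm{tot}}(t)\bigr),
\]
where $I^{\mathrm{tot}}(t)$ counts the inactive even links that are interior to some non-dominating cluster.

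The main obstacle is controlling $I^{\mathrm{tot}}(t)/L$, and it is what produces the coefficient $5=1+4$ of $\theta_{L,h}(t)$ in the statement. An interior $l_e$ itself senses busy (its four odd neighbors are active), so it is \emph{not} counted by $\theta_{L,h}(t)$; the bound must be indirect. I would implement a local charging argument: any interior configuration forces a locally thick non-dominating cluster pattern, and failures of this thickness necessarily create nearby sensing-idle links. By charging each interior position to a nearby sensing-idle link via the five-link interference neighborhood of each sensing-idle position (the link together with its four interfering neighbors), with each sensing-idle link charged at most four times, one obtains $I^{\mathrm{tot}}(t)/L\le 4\theta_{L,h}(t)+O(\eta(L))$. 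Combining with the direct $\theta_{L,h}(t)$ loss from the inactive-sensing-idle split and dividing by $\sqrt{2}L$ then gives the claimed inequality. The geometric charging argument is the delicate part; the rest is elementary double counting.
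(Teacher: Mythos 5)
Your approach is genuinely different from the paper's. The paper proves this lemma as a two-line corollary of Lemma~\ref{lemma:area} (relating the uncovered area $A^c_L(t)$ to $\delta_L(t)$) and Lemma~\ref{lemma:bound:Ac} (upper-bounding $A^c_L(t)$ by $\sqrt{2}\sum_\setC\ell(\setC)+10L\theta_{L,h}(t)$ via a greedy saturation of sensing-idle links). Your route is purely combinatorial: double-count boundary edges at even-link positions via the local quantity $B(l_e)$ and a parity observation on the N--E--S--W cycle. The parity claim about $B(l_e)$ (even, and $\geq 2$ unless no odd neighbor is active or all four are) is correct.

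However, the specific step you flag as delicate --- the charging bound $I^{\mathrm{tot}}(t)/L\le 4\theta_{L,h}(t)+O(\eta(L))$ --- is actually false, so that part of the plan must be abandoned. Counterexample: partition the lattice so that an $\alpha$-fraction of it is in the all-odd-active configuration and a $(1-\alpha)$-fraction is all-even-active (with a thin transition strip), choosing $\alpha<1/2$ so even dominates. Then $\delta_L(t)\approx 0$, $I^{\mathrm{tot}}(t)/L\approx \alpha/2$ is a constant, but $\theta_{L,h}(t)=O(1/\sqrt{L})$ since sensing-idle links occur only near the transition strip. The proposed bound fails by an unbounded factor. (The lemma's conclusion still holds there, but only because $\delta_L(t)\approx 0$ makes the right-hand side non-positive; so your chain of inequalities would terminate in a vacuous statement rather than proving the lemma in the regime where it has content.)

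The correct fix is simpler than a charging argument, and comes from not discarding the $\theta^{(o)}_L(t)$ term in your first step. The number of inactive even links is $L\bigl(\delta_L(t)+\theta^{(o)}_L(t)\bigr)+O(\sqrt{L})$ (not merely $\geq L\delta_L(t)-L\eta(L)$), since $0.5-\theta^{(e)}_L=\delta_L+\theta^{(o)}_L$. Meanwhile, each interior $l_e$ has four active odd neighbors, each active odd link has at most four even neighbors, so double counting gives $I^{\mathrm{tot}}(t)\leq L\theta^{(o)}_L(t)+O(\sqrt{L})$. Subtracting, the $\theta^{(o)}_L$ term cancels exactly, and you obtain
\begin{align}
\#\{l_e:\ B(l_e)\geq 2\}\ \geq\ L\delta_L(t)-L\theta_{L,h}(t)-O(\sqrt{L}),
\nonumber
\end{align}
hence $\sum_\setC\ell(\setC)/(\sqrt{2}L)=\sum_{l_e}B(l_e)/(2L)\geq \delta_L(t)-\theta_{L,h}(t)-O(1/\sqrt{L})$. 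This is not only sufficient, it gives a sharper constant ($1$ in place of $5$ on $\theta_{L,h}$), and replaces the one delicate step with an elementary count. With that repair your argument is a valid and arguably cleaner alternative to the paper's area-based route.
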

\begin{proof}
  The lemma follows from Lemma~\ref{lemma:area} and Lemma~\ref{lemma:bound:Ac}.
\end{proof}


\begin{lemma}\label{lemma:area}
Consider the lattice interference graph $G_{L}$, attempt rate $z>0$, and a time $t$ where $0<t<\infty$. Recall that $\theta_{L}(t)$ represent both density and the set 
of active links at time $t$. Let
$$A^{c}_{L}(t)=n^{2}-\sum_{l\in\theta_{L}(t)} A_{l}$$
 be the total area of the lattice not covered by the union of the coverage areas of
 active links at time $t$, where $A_{l}$ is defined in Appendix~\ref{sec:definitions}. Then, w.p.1,
\beqa
\left| \frac{A^{c}_{L}(t)} {2 L}-\delta_{L}(t) \right| <\eta(L)
\label{eq:lim:leps:Ac}
\eeqa
where $ \eta(L)$ only depends on $L$, and
$$ \lim_{L\to \infty } \eta(L)=0.$$
\end{lemma}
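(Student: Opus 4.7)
The plan is to prove the bound by exploiting two geometric facts: (i) since the set of active links at time $t$ is a valid schedule (no two active links interfere), their coverage areas $\{\setA_l\}_{l\in\theta_L(t)}$ have pairwise disjoint interiors in $\mathbb{R}^2$, and therefore $\sum_{l\in\theta_L(t)}A_l$ equals the total \emph{area} of the union $\cup_{l\in\theta_L(t)}\setA_l$; and (ii) an interior link has coverage area exactly $2$, while a boundary link has coverage area in $(0,2)$. The error between $\frac{A_L^c(t)}{2L}$ and $\delta_L(t)$ will then come entirely from boundary effects and from the difference between $n^2$ and $2L$, both of which are $O(1/\sqrt{L})$.

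Concretely, I would proceed as follows. First, partition the active links into interior active links (those not in $\partial G_L$) and boundary active links, and let $N_a^{int}$ and $N_a^{bnd}$ denote their counts. By the definition of $A_l$ in Appendix~\ref{sec:definitions}, every interior link satisfies $A_l=2$ exactly, so
\begin{align*}
  2\,N_a^{int} \;\le\; \sum_{l\in\theta_L(t)} A_l \;\le\; 2\,N_a^{int} + 2\,N_a^{bnd} \;=\; 2L\,\theta_L(t).
\end{align*}
Since the total number of boundary links in the $(n{+}1)\times(n{+}1)$ lattice is $4n$, we have $N_a^{bnd}\le 4n$, and hence also
\begin{align*}
  \sum_{l\in\theta_L(t)} A_l \;\ge\; 2\,N_a^{int} \;\ge\; 2L\,\theta_L(t) - 8n.
\end{align*}

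Second, I would plug these two-sided bounds into the definition $A^c_L(t)=n^2-\sum_{l\in\theta_L(t)}A_l$ to obtain
\begin{align*}
  n^2 - 2L\,\theta_L(t) \;\le\; A_L^c(t) \;\le\; n^2 - 2L\,\theta_L(t) + 8n.
\end{align*}
Dividing by $2L$ and subtracting $\delta_L(t) = 0.5 - \theta_L(t)$ gives
\begin{align*}
  \Big|\tfrac{A_L^c(t)}{2L} - \delta_L(t)\Big| \;\le\; \Big|\tfrac{n^2}{2L} - \tfrac{1}{2}\Big| + \tfrac{4n}{L}.
\end{align*}
Since $L=(n+1)^2$, the first term equals $\frac{2n+1}{2(n+1)^2}=O(1/n)$ and the second equals $\frac{4n}{(n+1)^2}=O(1/n)$, so setting $\eta(L)$ equal to the sum of these two deterministic quantities (which depends only on $L$) yields $\eta(L)=O(1/\sqrt L)\to 0$ as $L\to\infty$, as required.

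There is no real obstacle: the proof is a direct accounting argument, and the only subtlety is the pairwise disjoint interior property of coverage areas, which follows because two non-interfering links $(i,j),(i',j')\in\setL$ satisfy $|i-i'|+|j-j'|\ge 2$, so the open $\ell_1$-balls of radius $1$ centered at them (which are exactly the open interiors of $\setA_l$ and $\setA_{l'}$ for interior links, and a subset thereof for boundary links) cannot overlap. Because this argument is pathwise and uses only the deterministic geometry of valid schedules, the conclusion holds with probability one, and $\eta(L)$ is a universal deterministic function of $L$ independent of $z$, $t$, and the realization.
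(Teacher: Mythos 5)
Your proof is correct and takes essentially the same route as the paper's: separate active links into interior (with $A_l=2$ exactly) and boundary links (at most $4n$ of them, each with $A_l<2$), then absorb the boundary contribution and the $n^2$-versus-$2L$ mismatch into a deterministic $O(1/\sqrt{L})$ error term. The only cosmetic difference is that you carry explicit two-sided bounds while the paper reasons in big-$O$, and you add a (not actually needed) remark that the coverage areas have disjoint interiors; both are fine.
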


\begin{proof} 
By the definition of $A^{c}_L(t)$, the area
covered by active links at time $t$ is $n^{2}-A^{c}_{L}(t)$. Separating the contribution
 of active links on the boundary $\partial G_{L}$ of the lattice $G_L$ from those of active links
inside the lattice with $A_{l}=2$, we obtain
\begin{align}
&  n^{2}-A^{c}_{L}(t) =\sum_{l\in \big(\partial G_{L} \cap \theta_{L}(t)\big)  }A_{l}
+\sum_{l: \ l\notin \partial G_{L} \cap l\in\theta_{L}(t) } \!\!\!\! A_{l}
\nonumber \\
&= O(\sqrt{L})+2 \Big|\big\{l: \ l \notin \partial G_{L} \cap l\in\theta_{L}(t)\big\}\Big|
 .\label{eq:inter:lemma:len}
\end{align}
The term $O(\sqrt{L})$ accounts for the contribution of links in $\partial G_{L}$, whose total number is less than $4\sqrt{L}$, and
whose coverage area is less than $2$.
Similarly, we have that 
\beqa
 \theta_{L}(t)=\frac{1}{L}\Big|\big\{l\notin \setB(G_{L}) \cap l \in \theta_{L}(t)\big\}\Big|+O(1/\sqrt{L})  \label{eq:inter:2:lemma:len}
\eeqa
 Dividing
\eqref{eq:inter:lemma:len} by $2L$, and using \eqref{eq:def:delta:l:n} and \eqref{eq:inter:2:lemma:len}, we have
\begin{align}
&\Big| \frac{1}{2L} \big(n^{2}-A^{c}_{L}(t)\big) - \theta_{L}(t)\Big|=
\nonumber \\ & \qquad \qquad 
\Big| \frac{1}{2L} \big(n^{2}-A^{c}_{L}(t)\big) - 0.5+\delta_{L}(t)\Big| = O(1/\sqrt{L}).
\end{align}
Noting that $ n^{2}/L =1-O(1/\sqrt{L})$ since $L=(n+1)^{2}$, from the above, we obtain 
$$\Big| \frac{ A^{c}_{L}(t)}{2L} -\delta_{L}(t)\Big| = O(1/\sqrt{L}), $$
which implies the statement in the lemma, as required.
\end{proof}


\begin{lemma}\label{lemma:bound:Ac} 
Consider the lattice $G_{L}$, attempt rate $z>0$, and a given time $t$, $0<t<\infty$. We have, w.p.1,
  \begin{align}
     A^{c}_{L}(t) \leq  \sqrt{2} \sum_{\setC\in \setC_{L}^{(nd)}(t,z)}\ell(\setC)   +10 L\theta_{L,h}(t) \nonumber
  \end{align}
where $A^{c}_{L}(t)$ is defined in Lemma~\ref{lemma:area}, $\setC_{L}^{(nd)}(t,z)$ is the set of non-dominating
clusters at time $t$ as defined in Section~\ref{sec:assumptionss}, and $\theta_{L,h}(t)$ is the density of links that are inactive and sense the channel as idle at time 
$t$, as defined in Appendix~\ref{sec:all:densities}.
\end{lemma}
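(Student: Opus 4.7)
The plan is to bound the uncovered area $A^c_L(t)$ by charging the diamond of each inactive link to either a non-dominating cluster boundary or to a link in the inactive-idle set (those counted by $\theta_{L,h}(t)$).

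First, I would use the fact that the diamonds $\setA_l$ of interior links partition $[0,n]^2$, with only an $O(\sqrt{L})$ overcount coming from the clipping at $\partial G_L$, so that
$$A^c_L(t) \;\leq\; \sum_{l\text{ inactive}} A_l \,+\, O(\sqrt{L}) \;\leq\; 2|\mathcal{J}_1| + 2|\mathcal{J}_2| \,+\, O(\sqrt{L}),$$
where $\mathcal{J}_2$ denotes the set of inactive links with no active interfering neighbour (so $|\mathcal{J}_2| = L\,\theta_{L,h}(t)$) and $\mathcal{J}_1$ is the complementary set of inactive links with at least one active neighbour. The $O(\sqrt{L})$ correction is ultimately absorbed by the $\eta(L)$ of Lemma~\ref{lemma:area} when the two bounds are combined in Lemma~\ref{lemma:B:eps}.

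Next, I would exploit the bipartite parity structure of $G_L$ to split $\mathcal{J}_1$. Assuming for concreteness that even is the dominating parity, every non-dominating cluster consists of odd active links; hence any $l \in \mathcal{J}_1$ adjacent to a non-dominating cluster must be an even inactive link (call this subset $\mathcal{J}_{1,n}$), and any $l \in \mathcal{J}_1$ whose active neighbours lie only in the dominating cluster must be an odd inactive link (call this subset $\mathcal{J}_{1,d}$). For $\mathcal{J}_{1,n}$, each $l$ contributes at least one $\sqrt{2}$-length edge to the boundary of some $\setC \in \setC^{(nd)}_L(t,z)$, which gives
$$\sqrt{2}\,|\mathcal{J}_{1,n}| \;\leq\; \sum_{\setC \in \setC^{(nd)}_L(t,z)} \ell(\setC),$$
and handles the $\sqrt{2}$-perimeter term in the target bound.

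For $\mathcal{J}_{1,d}$ I would apply a local charging argument: any $l \in \mathcal{J}_{1,d}$ that has at least one inactive even neighbour $m$ must (again by parity) have $m \in \mathcal{J}_{1,n} \cup \mathcal{J}_2$, and since $m$ has at most four odd neighbours we obtain $|\mathcal{J}_{1,d}^{\mathrm{nbr}}| \leq 4(|\mathcal{J}_{1,n}| + |\mathcal{J}_2|)$ for the subset with an inactive neighbour. Combining these pieces produces
$$A^c_L(t) \;\leq\; 2|\mathcal{J}_2| + 2|\mathcal{J}_{1,n}| + 2|\mathcal{J}_{1,d}| \;\leq\; \sqrt{2}\sum_{\setC} \ell(\setC) \,+\, 10\,L\,\theta_{L,h}(t),$$
where the constant $10 = 2 + 2\cdot 4$ reflects each $\mathcal{J}_2$ link absorbing itself plus up to four $\mathcal{J}_{1,d}$ neighbours.

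The main obstacle is the residual ``isolated-hole'' case, in which an $l \in \mathcal{J}_{1,d}$ has all four neighbours active (necessarily even) and hence no inactive neighbour to charge to. Such a configuration forces, via the CSMA exclusion constraint on each of the four surrounding active even links, that the second neighbourhood of $l$ contains many further odd inactive links whose own neighbourhoods can then be shown to include at least one link of $\mathcal{J}_2$. I expect the careful book-keeping of this redistribution --- ensuring that no $\mathcal{J}_2$ link is charged by more than a bounded number of isolated holes --- to be the most delicate step, and it is where the specific constant $10$ (rather than something smaller) is pinned down.
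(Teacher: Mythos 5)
Your route is genuinely different from the paper's, which saturates the schedule by greedily activating each link in $\setL^{(h)}(t)$ that can still be activated, charges the resulting drop in $A^c_L(t)$ to $2L\theta_{L,h}(t)$, charges the resulting change in non-dominating boundary length to $4\sqrt{2}\,L\theta_{L,h}(t)$, and then bounds the uncovered area that remains after saturation by a strip of width $\tfrac{\sqrt{2}}{2}$ around the new cluster boundaries. Your area-charging idea is not inherently hopeless, but as written it has several gaps. The opening step is wrong: the diamonds $\setA_l$ of all interior links do not partition $[0,n]^2$, they \emph{double-cover} it (the even diamonds tile $[0,n]^2$, and separately the odd diamonds do). Consequently $A^c_L(t)\le\sum_{l\text{ inactive}}A_l$ is far too lossy: if almost all even links are active, a sparse $1\%$ of even links are inactive, and all odd links are inactive, then $A^c_L\approx 0.01L$, $\sum_{\setC}\ell(\setC)=0$, and $\theta_{L,h}\approx 0.005$, so the lemma's right-hand side is $\approx 0.05L$ and the lemma holds --- but $2|\mathcal{J}_1|+2|\mathcal{J}_2|\approx L$, and the subsequent charging cannot recover a factor of $20$. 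You need the single-parity tiling, $A^c_L\le\sum_{l\text{ even inactive}}\bigl|\setA_l\setminus\bigcup_{\text{active}}\setA_{l'}\bigr|$, which at once removes $\mathcal{J}_{1,d}$ (odd links) from the ledger and eliminates the slack.

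Second, the central inequality $\sqrt{2}\,|\mathcal{J}_{1,n}|\le\sum_{\setC}\ell(\setC)$ is false: an even inactive link whose four odd neighbours are all active and in the same non-dominating cluster lies in $\mathcal{J}_{1,n}$ but is incident to no boundary edge of $\partial\setC$, since the four neighbouring odd diamonds cover $\setA_l$ without creating a hole. For a diamond-shaped odd cluster of $L^1$-radius $k$, $|\mathcal{J}_{1,n}|\sim k^2$ whereas $\sum_{\setC}\ell(\setC)\sim k$, so the inequality fails by an unbounded factor. (Such links also contribute zero to $A^c_L$, so the fix is to exclude them from $\mathcal{J}_{1,n}$ before charging and observe that each of the remaining links is incident to at least two boundary edges; but the proposal as stated does not do this.) Third, even granting every intermediate claim, the arithmetic does not close: substituting $|\mathcal{J}_{1,d}^{\mathrm{nbr}}|\le 4(|\mathcal{J}_{1,n}|+|\mathcal{J}_2|)$ back in yields a coefficient near $5\sqrt{2}$, not $\sqrt{2}$, on $\sum\ell(\setC)$. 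Finally, the $O(\sqrt{L})$ clipping term cannot be ``absorbed into the $\eta(L)$ of Lemma~\ref{lemma:area}''; Lemma~\ref{lemma:bound:Ac} is an exact deterministic inequality and is used as such, and boundary clipping is already handled exactly by the definition of $A_l$ for $l\in\partial G_L$.
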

\begin{proof}

Define the set $\setL^{(h)}(t)=\{l_{h,i},\  1\leq i\leq i_{h} \}$ to be the set of all inactive links
 that sense the channel as idle at time $t$. By definition,
$L\theta_{L,h}(t)$ is the total number of these links, and hence
$$i_{h}=L\theta_{L,h}(t).$$ 

Consider the following process. First, suppose $i_{h}>0$; we later also consider the case of $i_{h}=0$. Start from the first link $l_{h,1}$ in $\setL^{(h)}(t)$ and 
make it active, which results to a new set of active links. We then consider the next link $l_{h,2}$. If this link does not cause interference
to the new set of active links, we make link $l_{h,2}$ active. We continue the same process
for all $i=3,..., i_{h}$. By this process, 1) we reduce the uncovered area $A_{L}^{c}(t)$ by at most 
\begin{align}
2L\theta_{L,h}(t)  \label{eq:area:rem:p}
\end{align}
since link coverage area is at most $2$, 2) we might have a new set of clusters, which we denote by 
$$\setC_{L}'(t,z)=\{\setC'_i, 1\leq i\leq i_{max}'\},$$
and 3) the length of cluster 
boundaries changes at most by $4\sqrt{2}L\theta_{L,h}(t)$ since the coverage area of each newly added link can change or contribute at most
$4\sqrt{2}$ to the length of cluster boundaries. Therefore,
\begin{align}
  \Big| \sum_{\setC'\in \setC_{L}^{(s)}(t,z)} \ell(\setC') -\sum_{\setC\in \setC_{L}^{(nd)}(t,z)} \ell(\setC) \Big|\leq 4\sqrt{2}L\theta_{L,h}(t)
\label{ineq:p:n:boundC}
\end{align}
where $\setC_{L}^{(s)}(t,z)$ is a subset of clusters in $\setC_{L}'(t,z)$ that have the same type as the non-dominating clusters in 
$\setC_{L}^{(nd)}(t,z)$.
For the case where $i_{h}=0$, we have that $\theta_{L,h}(t)=0$, and the above inequality trivially holds by letting the new set of clusters 
to be the same as the original set of clusters at time $t$, i.e, by letting $ \setC_{L}'(t,z)=\setC_{L}(t,z)$.

After the above process, we cannot add any further links to the set of active links, and
the remaining uncovered area will be the union of areas around new cluster boundaries.
 For instance, consider the uncovered area between the two clusters in Fig.~\ref{fig:lattice_net:1}. By inspection, we find that
 this area is less than
 \begin{align}
 c_{u}\frac{\sqrt{2}}{2} \ell(\setC)  \label{eq:up:area:2cs}
 \end{align}
 where $c_{u}$ is a constant such that $0<c_u<2$, and
$\setC$ can be either of the clusters in the figure.

Considering the area removed in the defined process and the area between new cluster boundaries along with 
\eqref{eq:area:rem:p} and \eqref{eq:up:area:2cs}, we
have that
  \begin{align}
     A^{c}_{L}(t) \leq  \sqrt{2} \sum_{\setC' \in \setC_{L}^{(s)}(t,z) }  \ell(\setC')   +2 L\theta_{L,h}(t). \nonumber
  \end{align}
Using this inequality and \eqref{ineq:p:n:boundC}, we obtain the inequality in lemma.
\end{proof}



 \end{document}